\newtheorem*{proposition*}{Proposition}
\newtheorem*{theorem*}{Theorem}
\newtheorem*{conjecture*}{Conjecture}
\newtheorem*{claim*}{Claim}
\newtheorem*{lemma*}{Lemma}
\newtheorem*{corollary*}{Corollary}
\newtheorem{theorem}{Theorem}[section]
\newtheorem{thm}{Theorem}
\newtheorem{proposition}[theorem]{Proposition}
\newtheorem{lemma}[theorem]{Lemma}
\newtheorem{corollary}[theorem]{Corollary}
\newtheorem*{definition*}{Definition}
\newtheorem{definition}{Definition}[section]
\newtheorem*{assumption*}{Assumption}
\newtheorem*{remark*}{Remark}
\newtheorem{remark}{Remark}[section]
\newcommand{\R}{\mathbb{R}}
\newcommand{\s}{\mathbb{S}}
\newcommand{\Z}{\mathbb{Z}}
\newcommand{\N}{\mathbb{N}}
\DeclareMathOperator{\tr}{\textnormal{tr}}
\newcommand{\snabla}{\slashed{\nabla}}
\newcommand{\sg}{\slashed{g}}
\numberwithin{equation}{section}
\begin{document}
\title{Linear waves in the interior of extremal black holes II}
\author{Dejan Gajic}
\address{University of Cambridge, Department of Applied Mathematics and Theoretical Physics, Wilberforce Road, Cambridge CB3 0WA, United Kingdom}
\email{D.Gajic@damtp.cam.ac.uk}
\date{}
\begin{abstract}
We consider solutions to the linear wave equation in the interior region of extremal Kerr black holes. We show that axisymmetric solutions can be extended continuously beyond the Cauchy horizon and moreover that, if we assume suitably fast polynomial decay in time along the event horizon, their local energy is finite. We also extend these results to non-axisymmetric solutions on slowly rotating extremal Kerr--Newman black holes. These results are the analogues of results obtained in [D. Gajic, \emph{Linear waves in the interior of extremal black holes I}, arXiv:1509.06568] for extremal Reissner--Nordstr\"om and stand in stark contrast to previously established results for the subextremal case, where the local energy was shown to generically blow up at the Cauchy horizon.
\end{abstract}
\maketitle
\tableofcontents
\section{Introduction}
In the precursor \cite{gajic2015} of this paper, we established the following results for the linear wave equation,
\begin{equation}
\label{eq:waveqkerr}
\square_g\phi=0,
\end{equation}
in the black hole interior of extremal Reissner--Nordstr\"om spacetimes, which describe maximally charged $(e^2=M^2)$, stationary, spherically symmetric black holes (see \cite{Hawk1973} for an overview of the geometry of Reissner--Nordstr\"om spacetimes):
\begin{itemize}
\item[(A)] Uniform boundedness and extendibility of $\phi$ in $C^0$ across the Cauchy horizon (Theorem 1 of \cite{gajic2015}).
\item [(B)] Extendibility of $\phi$ in $H^1_{\textnormal{loc}}$ across the Cauchy horizon (Theorem 2 and 3 of \cite{gajic2015}).
\item [(C)] Extendibility of $\phi$ in $C^{0,\alpha}$, with $0<\alpha<1$, across the Cauchy horizon (Theorem 5 of \cite{gajic2015}).
\item [(D)] Extendibility of spherically symmetric $\phi$ in $C^1$ across the Cauchy horizon (Theorem 4 of \cite{gajic2015}).
\item [(E)] Extendibility of spherically symmetric $\phi$ in $C^2$ across the Cauchy horizon (Theorem 6 of \cite{gajic2015}).
\end{itemize}

For result (A), we considered Cauchy initial data for $\phi$ on an asymptotically flat spacelike hypersurface intersecting the event horizon, which decay suitably fast towards spacelike infinity. For results (B), (C) and (D) we imposed stronger decay estimates in affine time on $\phi$ and its tangential derivatives along the event horizon than those that had previously been established in \cite{Aretakis2011a,Aretakis2011} for $\phi$ arising from Cauchy data. The required decay estimates have since been obtained in \cite{Angelopoulos2015} for suitable Cauchy data. For result (E) we assumed more precise asymptotics of $\phi$ along the event horizon, which are motivated by the numerical results in \cite{Lucietti2013} and have not yet been shown to hold for $\phi$ arising from generic, suitably decaying Cauchy data in a mathematically rigorously setting.

In this paper, \textbf{we shall prove the analogues of \textnormal{(A)}, \textnormal{(B)} and \textnormal{(C)} for \underline{axisymmetric} solutions $\phi$ to \textnormal{(\ref{eq:waveqkerr})} in the black hole interior of extremal Kerr--Newman spacetimes}; see Theorem \ref{thm:linftyboundv1}--\ref{thm:c11boundphiv1} below. The Kerr--Newman spacetimes are a three-parameter family, characterised by a mass $M$, a rotation parameter $a$ and a charge $e$ \cite{Newman1965}. Extremal Kerr--Newman spacetimes constitute a two-parameter subfamily of spacetimes, satisfying the constraint $M^2=a^2+e^2$; they can be viewed as a continuous family that connects the extremal Reissner--Nordstr\"om solutions ($a^2=0$) to the extremal Kerr solutions ($a^2=M^2$). For an overview of the geometry of Kerr--Newman spacetimes, see \cite{Carter1968}.

In \cite{are6}, polynomial decay in affine time of axisymmetric $\phi$ and its tangential derivatives was shown to hold along the event horizon of extremal Kerr ($a^2=M^2$) for suitably decaying Cauchy initial data. To obtain the analogue of (A) for axisymmetric $\phi$ in the extremal Kerr interior, we will assume the decay rates that follow from \cite{are6}. For the analogues (A) in extremal Kerr--Newman spacetimes for which $a^2<M^2$ and moreover, for the analogues of (B) and (C) in \emph{any} extremal Kerr--Newman spacetime, we assume polynomial decay in time of $\phi$ along the event horizon that is \emph{conjectured}, but not yet \emph{proved}, to hold.

Note that the methods involved in proving results (D) and (E) rely fundamentally on the spherical symmetry of $\phi$ and the background spacetime. For this reason, they do not carry over to extremal Kerr--Newman.

In addition, \textbf{we will show that we can drop the axisymmetry assumption on $\phi$ and prove the analogues of \textnormal{(A)}, \textnormal{(B)} and \textnormal{(C)} in extremal Kerr--Newman spacetimes that are sufficiently close to extremal Reissner--Nordstr\"om}, i.e.\ with a sufficiently small rotation parameter $a$; see Theorem \ref{thm:linftyboundv2}--\ref{thm:c11boundphiv2} below. We refer to this subfamily of extremal Kerr--Newman as \emph{slowly rotating} extremal Kerr--Newman. We \emph{assume}, again, the decay for $\phi$ along the event horizon that is expected to hold for suitably decaying Cauchy initial data in this setting. This assumption is now also necessary for the analogue of (A), as the required polynomial decay has not yet been proved to hold for $\phi$ (without axisymmetry) along the event horizon of slowly rotating extremal Kerr--Newman.

The analogue of (A) has recently been obtained for the wave equation in \emph{sub}extremal Reissner--Nordstr\"om ($e^2<M^2$) \cite{Franzen2014} and subextremal Kerr ($a^2<M^2$) \cite{franz2} by Franzen (see also the results of Hintz \cite{Hintz2015} for the very slowly rotating case, where $a^2\ll M^2$), whereas the analogue of (B) has been shown to \emph{fail} in subextremal Reissner--Nordstr\"om for generic Cauchy data \cite{Luk2015} by Luk--Oh. See also related results concerning instabilities in subextremal Kerr \cite{Dafermos2015e,Luk2015a}.

The results of this paper are related to Christodoulou's formulation of the strong cosmic censorship conjecture \cite{Christodoulou2009}. For further background and motivation, see the introduction of \cite{gajic2015}.

\subsection{Linear waves in the exterior region of extremal Kerr}
We will review in this section several results for the wave equation (\ref{eq:waveqkerr}) in the exterior region of extremal Kerr.

Aretakis considered in \cite{are6} axisymmetric solutions $\phi$ to (\ref{eq:waveqkerr}) in the exterior region of extremal Kerr, arising from Cauchy data on a spacelike hypersurface $\Sigma$ intersecting the event horizon $\mathcal{H}^+$; see Figure \ref{fig:fullspacetime}. He established polynomial decay in time for $\phi$ everywhere in the exterior, including along $\mathcal{H}^+$. 

In \cite{are4}, he moreover proved the existence of conserved quantities, the \emph{Aretakis constants}, along $\mathcal{H}^+$ for solutions $\phi$ that need not be axisymmetric. If non-vanishing, these constants constitute an obstruction to the decay of either $\phi$ itself or its transversal derivative. Since axisymmetric solutions $\phi$ have been shown to decay along $\mathcal{H}^+$, this means that, generically, their transversal derivatives \emph{cannot} decay. Furthermore, higher-order transversal derivatives will generically blow up in infinite time along $\mathcal{H}^+$. These non-decay and blow-up results have been dubbed ``the Aretakis instability'' in the literature \cite{Lucietti2013}.

Aretakis extended the above results in \cite{are5} to show non-decay and blow-up of higher-order derivatives of $\phi$ even in the case of data with vanishing Aretakis constants. There is still no proof of pointwise and energy boundedness or decay for \uline{non}-axisymmetric $\phi$ in the exterior region of extremal Kerr (cf.\ a complete picture of the boundedness and decay properties of the linear wave equation in the full \emph{sub}extremal range of Kerr(--Newman) spacetimes has recently been obtained in \cite{Dafermos2014c,Civin2014}).

Lucietti--Reall generalised the Aretakis constants to higher-spin equations in extremal Kerr in \cite{lure1}. In particular, they showed that conserved quantities also form an obstruction to the decay of solutions to the Teukolsky equation, which governs the evolution of perturbations of certain components of the curvature tensor in the context of the linearised Einstein equations.

\subsection{Linear waves in the interior region of extremal Kerr--Newman}
\label{sec:roughversionsresults}
In this section, we will give an overview of the main theorems proved in this paper; we will state more detailed versions of the theorems in Section \ref{sec:mainresults}. In Section \ref{sec:geometry} we will give the precise definitions of the spacetime regions of interest in extremal Kerr--Newman that are mentioned in the paragraphs below and we will present the construction of double-null coordinates that cover these regions.

In this paper, we will restrict to a spacetime rectangle ${\mathcal{D}_{u_0,v_0}}$, which is a subset of $\mathcal{M}\cup \mathcal{CH}^+$, where $\mathcal{M}$ denotes the extremal Kerr--Newman manifold and $\mathcal{CH}^+$ is the inner horizon of extremal Kerr--Newman. We take ${\mathcal{D}_{u_0,v_0}}$ to be the intersection of the causal future of the event horizon segment $\mathcal{H}^+\cap \{v\geq v_0\}$ and the causal past of the inner horizon segment \\$\mathcal{CH}^+\cap\{u\leq u_0\}$, with respect to the manifold-with-boundary $\mathcal{M}\cup \mathcal{CH}^+$, where $v_0$ and $u_0$ are chosen suitably, such that restriction of ${\mathcal{D}_{u_0,v_0}}$ to the interior region is entirely contained within the domain of the $(u,v)$ Eddington--Finkelstein-type double-null coordinates; see Figure \ref{fig:fullspacetime}. Note that we have defined ${\mathcal{D}_{u_0,v_0}}$ to include a segment of $\mathcal{CH}^+$.

We can employ an ingoing null coordinate $U(u)$ in $\mathcal{M}\cap {\mathcal{D}_{u_0,v_0}}$, which can be extended across $\mathcal{H}^+$, and an outgoing null coordinate $\widetilde{V}(v)$, which can be extended beyond $\mathcal{CH}^+$, to express ${\mathcal{D}_{u_0,v_0}}$ as the following set:
\begin{equation*}
{\mathcal{D}_{u_0,v_0}}=\{0\leq U \leq U(u_0),\:\widetilde{V}(v_0)\leq \widetilde{V}\leq 0,\:(U,V)\neq (0,0)\},
\end{equation*}
where $U=0$ at at $\mathcal{H}^+$ and $\widetilde{V}=0$ at $\mathcal{CH}^+$.
\begin{figure}[h!]
\begin{center}
\includegraphics[width=2.3in]{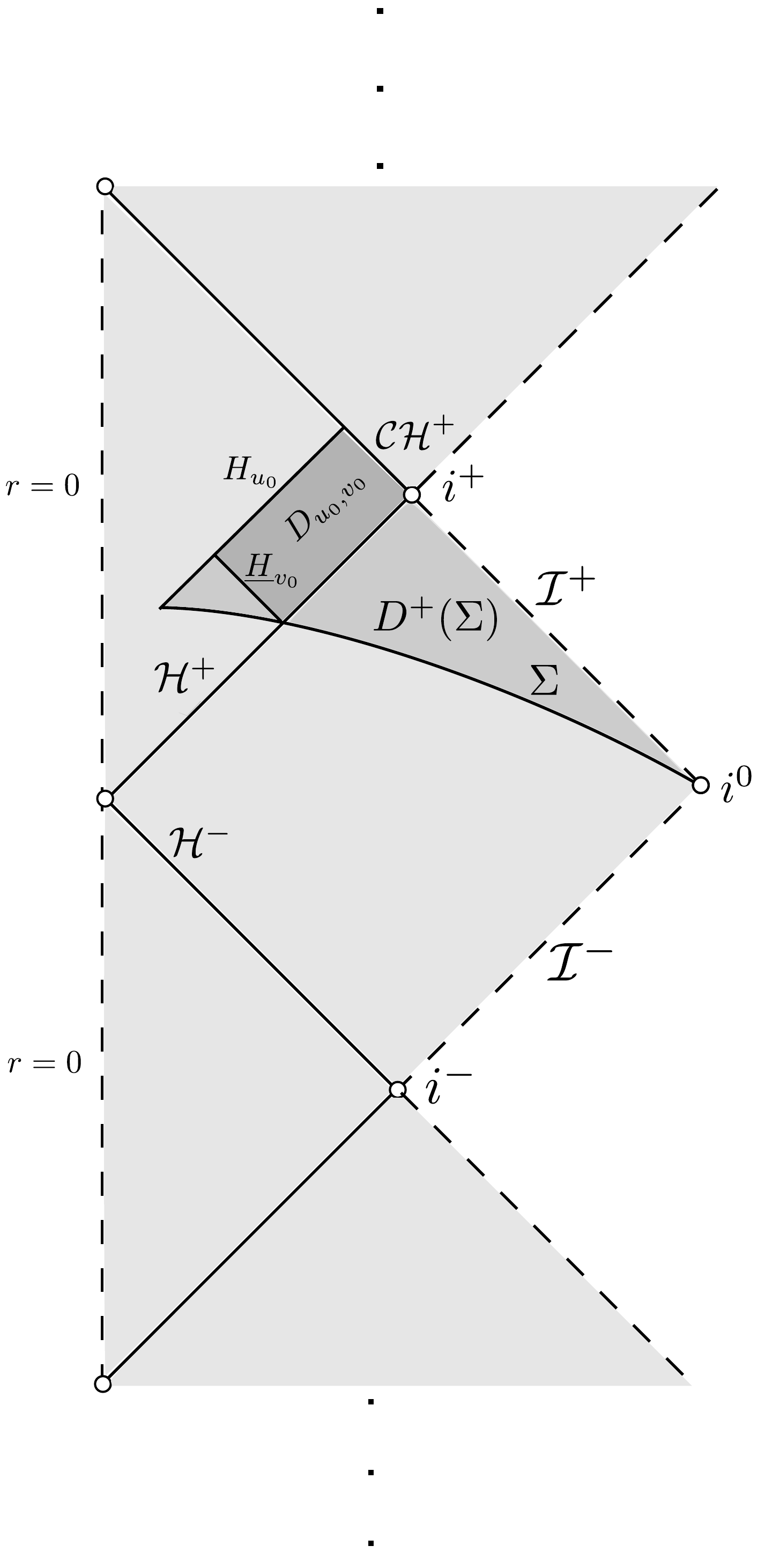}
\caption{\label{fig:fullspacetime} The Penrose diagram of the maximally analytic extension of extremal Kerr--Newman}
\end{center}
\end{figure}

We equip $\mathcal{H}^+$ and $\underline{H}_{v_0}$, the ingoing null hypersurface in ${\mathcal{D}_{u_0,v_0}}$ which is a subset of $\{v=v_0\}$, with characteristic initial data for the wave equation (\ref{eq:waveqkerr}). 

We can also consider solutions $\phi$ arising from Cauchy initial data for (\ref{eq:waveqkerr}) on an asymptotically flat spacelike hypersurface $\Sigma$ in extremal Kerr--Newman. It is natural to choose a hypersurface $\Sigma$ that has a non-trivial intersection with the black hole interior; see the discussion in \cite{Aretakis2010}. As a consequence of the geometry of the interior of extremal Kerr--Newman, $\Sigma$ must be incomplete; see Figure \ref{fig:fullspacetime}. We restrict to the future domain of dependence of $\Sigma$, which we denote by $D^+(\Sigma)$. The inner horizon $\mathcal{CH}^+$ contains part of the boundary of $D^+(\Sigma)$, so we will sometimes refer to $\mathcal{CH}^+$ as the \emph{Cauchy horizon}.

By choosing $u_0$ and $v_0$ appropriately, the rectangle ${\mathcal{D}_{u_0,v_0}}$ is a subset of $D^+(\Sigma)\cup \mathcal{CH}^+$. The characteristic data on $\mathcal{H}^+\cup \underline{H}_{v_0}$ can therefore be taken to be \emph{compatible} with the decay of $\phi$ and its tangential derivatives along $\mathcal{H}^+\cup \underline{H}_{v_0}$, that is expected to hold generically for $\phi$ arising from suitable Cauchy initial data on $\Sigma$.

In Section \ref{sec:subsubaxisymm} below we only consider axisymmetric solutions to (\ref{eq:waveqkerr}) on extremal Kerr-Newman. In Section~ \ref{sec:subsubslowlyrot} we instead consider slowly rotating extremal Kerr--Newman spacetimes, i.e.\ extremal Kerr--Newman spacetimes with a rotation parameter $0\leq a^2< a_c$, where $a_c=a_c(M)>0$ is chosen suitably small. In this case, we \emph{do not} impose axisymmetry on the solutions to (\ref{eq:waveqkerr}). See Section~ \ref{sec:mainsteps} for the definition of $a_c$.

\subsubsection{Axisymmetric solutions}
\label{sec:subsubaxisymm}
We first formulate an analogue of Theorem 1 of \cite{gajic2015} for axisymmetric solutions $\phi$ to (\ref{eq:waveqkerr}) in extremal Kerr--Newman, where $\phi$ arises from characteristic initial data along $\mathcal{H}^+\cup \uline{H}_{v_0}$.

\begin{thm}[$L^{\infty}$-boundedness and $C^0$-extendibility for axisymmetric solutions]
\label{thm:linftyboundv1}
Let $\phi$ be an axisymmetric solution to (\ref{eq:waveqkerr}) in extremal Kerr--Newman arising from suitably regular characteristic initial data on $\uline{H}_{v_0}\cup \mathcal{H}^+$, such that
for $\epsilon>0$ arbitrarily small,
\begin{align*}
\sum_{|k|\leq 2}\int_{S^2_{-\infty,v}}|\snabla^k\phi|^2<&\infty,\\
\sum_{0\leq j_1+j_2\leq 4}\int_{\mathcal{H}^+\cap\{v\geq v_0\}}v^{1+\epsilon}|\snabla^{j_1}\partial_v^{j_2+1}\phi|^2+|\snabla^{j_1+1} \partial_v^{j_2}\phi|^2<&\infty.
\end{align*}
where $\snabla$ denotes derivatives tangential to 2-spheres $S^2_{-\infty, v}$ that foliate $\mathcal{H}^+$. Then there exists a constant $C=C(M,a,\epsilon)>0$ and a natural norm $D_0>0$ on initial data for $\phi$, such that
\begin{equation*}
|\phi|\leq CD_0,
\end{equation*}
everywhere in $\mathcal{M}\cap \mathcal{D}_{u_0,v_0}$. Moreover, $\phi$ admits a $C^0$ extension beyond $\mathcal{CH}^+$.
\end{thm}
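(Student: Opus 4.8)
The plan is to reduce the wave equation for axisymmetric $\phi$ in the interior of extremal Kerr--Newman to a $1{+}1$-dimensional problem for each angular mode, exploiting the hidden structure that, as in the Reissner--Nordstr\"om case, the double-null wave operator in the interior degenerates slowly enough near $\mathcal{CH}^+$ that a well-chosen weighted energy closes. First I would write the wave equation (\ref{eq:waveqkerr}) in the Eddington--Finkelstein-type double-null coordinates $(u,v)$ introduced in Section~\ref{sec:geometry}, separating off the (compact, fixed) angular factor: for axisymmetric $\phi$ the dangerous term $\tfrac{a^2}{\Delta}\partial_\varphi^2$ drops out, so the operator takes the form $\partial_u\partial_v\phi + (\text{lower order})\,\partial\phi + (\text{potential})\,\phi + \Omega^2\,\slashed{\Delta}\phi = 0$, where $\Omega^2=\Omega^2(u,v)$ is the null lapse, which is comparable to $|r-r_+|^2/\text{(something)}$ and decays toward $\mathcal{CH}^+$. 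The key point, carried over from \cite{gajic2015}, is that along the ingoing direction $\Omega^2$ is integrable in $v$ up to $\mathcal{CH}^+$.

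Next I would set up the energy estimate. Define the natural $T$-energy flux (with $T$ the stationary Killing field, which is timelike enough in the relevant region after the double-null foliation is fixed) through outgoing and ingoing null hypersurfaces, and commute the equation with $\partial_v$ and with angular derivatives $\slashed{\nabla}$ (which commute with $\square_g$ on the round-sphere factor up to controllable error) to get estimates on $\sum_{|k|\le 2}\|\slashed{\nabla}^k\phi\|$, $\sum\|\slashed{\nabla}^{j}\partial_v^{i}\phi\|$. Integrating the divergence identity over $\mathcal{D}_{u_0,v_0}$ and using a Gr\"onwall argument in $u$ — the $v$-integrals of the error terms being finite precisely because of the decay hypotheses $\int_{\mathcal{H}^+} v^{1+\epsilon}(\cdots) < \infty$ together with the integrability of $\Omega^2$ — one propagates the initial flux from $\mathcal{H}^+\cup\underline{H}_{v_0}$ to every outgoing cone, with a constant depending only on $M,a,\epsilon$ and the data norm $D_0$. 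From the resulting control on $\phi$, $\partial_v\phi$ and up to two angular derivatives on each outgoing cone, a Sobolev embedding on $S^2$ (two angular derivatives in $L^2$ controls $L^\infty$ on the $2$-sphere) combined with the fundamental theorem of calculus in $v$ from $\underline{H}_{v_0}$ gives the pointwise bound $|\phi|\le CD_0$ throughout $\mathcal{M}\cap\mathcal{D}_{u_0,v_0}$.

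For the $C^0$-extendibility across $\mathcal{CH}^+$, I would show that $\phi(u,\cdot)$ is Cauchy in $v$ as $v\to\infty$ (i.e.\ as $\widetilde V\to 0$), uniformly in $u$ and in the angular variables: one estimates $|\phi(u,v_2,\omega)-\phi(u,v_1,\omega)| \le \int_{v_1}^{v_2}|\partial_v\phi|\,dv'$ and bounds the right-hand side using the flux estimate and Cauchy--Schwarz, the decisive gain being a factor like $\big(\int_{v_1}^{v_2}\Omega^{-2}\cdot\Omega^2\,dv'\big)^{1/2}$ or a direct $v^{-(1+\epsilon)/2}$-type decay of $\partial_v\phi$ extracted from the commuted energy; this yields a uniform modulus of continuity and hence a continuous extension of $\phi$ to $\{\widetilde V = 0\}$, which one then checks is independent of the choice of extending coordinate. **The main obstacle** I anticipate is \emph{not} the angular part — axisymmetry kills the only genuinely new difficulty relative to \cite{gajic2015} — but rather controlling the error terms generated by the first-order (friction-type) coefficients in the interior double-null form of $\square_g$ on extremal Kerr--Newman, which are $r$-dependent and do not have the clean sign structure of the Reissner--Nordstr\"om case; closing the Gr\"onwall loop requires showing these coefficients are integrable (in the right variable) up to $\mathcal{CH}^+$ and that the zeroth-order potential term is likewise harmless, which is where the extremality constraint $M^2=a^2+e^2$ (hence the precise rate at which $\Omega^2$ degenerates) must be used in an essential way.
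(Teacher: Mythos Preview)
Your overall architecture is right---weighted energy estimate, commutation, Sobolev on spheres, then a Cauchy-sequence argument for the $C^0$ extension---and you correctly identify that axisymmetry kills the torsion term (the paper's $K^{N_{p,q}}_{\textnormal{mixed}}$). But there are two genuine gaps.

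\textbf{The multiplier.} You propose the $T$-energy, asserting $T$ is ``timelike enough''. It is not: even in the interior of extremal Kerr--Newman there is a region where $T$ fails to be causal, so the $T$-flux is not positive-definite and the divergence identity gives you nothing. The paper instead uses the weighted null multiplier $N_{p,q}=|u|^p\underline{L}+v^qL$ with $p=2$; the crucial structural fact (specific to extremality) is the expansion $4\Omega\omega=-2(v+|u|)^{-1}+\ldots$, which produces a cancellation in $K^{N_{2,2}}_{\textnormal{angular}}$ at leading order. This is where the decay $\Omega^2\sim(v+|u|)^{-2}$ is actually used, not merely as ``integrability of $\Omega^2$''.

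\textbf{Commutation with angular derivatives.} You write that $\slashed{\nabla}$ ``commutes with $\square_g$ on the round-sphere factor up to controllable error''. This is exactly what fails when $a\neq 0$: the spheres $S^2_{u,v}$ of the double-null foliation are not the orbits of any isometry group, and commuting $\square_g$ with $\slashed{\nabla}$ or $\partial_{\theta_*}$ produces error terms that are \emph{not} controllable by the methods of the base energy estimate (the paper says this explicitly in Section~1.3, Part~2). The paper's workaround is to use the Carter operator $Q$ together with $T$ and $\Phi$, which do commute with $\square_g$; these control second angular derivatives on the \emph{Boyer--Lindquist} spheres, and to transfer to the $S^2_{u,v}$ spheres one additionally commutes with $L$ and $\underline{L}$ (whose commutators with $\square_g$ \emph{are} controllable). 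Without this detour through $Q$ your higher-order estimate does not close, and you never reach the Sobolev step.

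A minor point: the paper obtains the pointwise bound by integrating $\underline{L}\phi$ in $u$ from $\mathcal{H}^+$ (where the decay hypotheses live), not $L\phi$ in $v$ from $\underline{H}_{v_0}$; your direction would also work once $q>1$, but the $u$-integration is what naturally matches the weighted ingoing flux $\int_{\underline{H}_v}|u|^2(\underline{L}\phi)^2$ that the $N_{2,q}$ estimate delivers.
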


Theorem \ref{thm:pointwiseboundkn} and \ref{thm:C0extension} together form a more precise version of Theorem \ref{thm:linftyboundv1}.

In view of the decay results along $\mathcal{H}^+$ in \cite{are6} for $\phi$ in extremal Kerr arising from Cauchy initial data on a spacelike hypersurface $\Sigma$, we can reformulate Theorem \ref{thm:linftyboundv1} if we restrict to the subfamily of extremal Kerr spacetimes, where we consider suitably regular Cauchy data along $\Sigma$ in accordance with the results of \cite{are6}. 

\begin{thm}[$L^{\infty}$-boundedness and $C^0$-extendibility for axisymmetric solutions in extremal Kerr]
\label{thm:linftyboundv0}
Let $\phi$ be an axisymmetric solution to (\ref{eq:waveqkerr}) in extremal Kerr arising from suitably regular and decaying data on $\Sigma$. Then there exists a constant $C=C(M,\Sigma)>0$ and a natural norm $D_0>0$ on initial data for $\phi$, such that
\begin{equation*}
|\phi|\leq CD_0,
\end{equation*}
everywhere in $D^+(\Sigma)$. Moreover, $\phi$ admits a $C^0$ extension beyond $\mathcal{CH}^+$.
\end{thm}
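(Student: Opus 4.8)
The plan is to reduce Theorem \ref{thm:linftyboundv0} to Theorem \ref{thm:linftyboundv1}, which is already available as a black box: it suffices to verify that an axisymmetric solution $\phi$ arising from suitably regular and decaying Cauchy data on $\Sigma$ generates, by restriction to the characteristic hypersurfaces $\mathcal{H}^+\cup\underline{H}_{v_0}$, characteristic data satisfying the two hypotheses of Theorem \ref{thm:linftyboundv1}. Since $\Sigma$ is an asymptotically flat spacelike hypersurface intersecting $\mathcal{H}^+$, one first chooses $v_0$ large enough that the portion $\mathcal{H}^+\cap\{v\ge v_0\}$ lies to the future of $\Sigma$, and $u_0$ suitably negative so that $\mathcal{D}_{u_0,v_0}\subset D^+(\Sigma)\cup\mathcal{CH}^+$, exactly as described in the discussion preceding the theorem. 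The restriction of $\phi$ to $\underline{H}_{v_0}$ and to the initial sphere is then controlled by local energy estimates and finite-speed-of-propagation from the (smooth, compactly-controlled) Cauchy data; this gives the first hypothesis, $\sum_{|k|\le 2}\int_{S^2_{-\infty,v}}|\snabla^k\phi|^2<\infty$, and the finiteness of the corresponding data norm $D_0$, with no subtlety beyond standard hyperbolic theory on the regular region.

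The substantive input is the second hypothesis: the weighted spacetime bound
\[
\sum_{0\le j_1+j_2\le 4}\int_{\mathcal{H}^+\cap\{v\ge v_0\}}v^{1+\epsilon}\bigl(|\snabla^{j_1}\partial_v^{j_2+1}\phi|^2+|\snabla^{j_1+1}\partial_v^{j_2}\phi|^2\bigr)<\infty
\]
along the event horizon. Here I would invoke the decay results of Aretakis \cite{are6}: for axisymmetric $\phi$ on extremal Kerr arising from suitably decaying Cauchy data, $\phi$ and its tangential ($\snabla$) and $\partial_v$ derivatives decay polynomially in $v$ along $\mathcal{H}^+$, and commuting the wave equation with the axisymmetric Killing field and with the sphere-tangential derivatives preserves these decay rates up to the finite order ($j_1+j_2\le 4$) needed here. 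The point is that \cite{are6} provides pointwise (or integrated) decay of the schematic form $|\snabla^{j_1}\partial_v^{j_2}\phi|(v)\lesssim v^{-p}$ on $\mathcal{H}^+$ with $p$ strictly larger than what is required to make $v^{1+\epsilon}$ times the square of a $v$-derivative integrable; one then picks $\epsilon$ small enough (depending on the decay rate $p$ from \cite{are6}) that $1+\epsilon-2(p+1)<-1$, or the analogous inequality for the $\snabla$-term, so the $v$-integral converges. This is the only place where the specific arithmetic of the Aretakis decay rates enters, and it is where the hypothesis "$\epsilon>0$ arbitrarily small" in Theorem \ref{thm:linftyboundv1} is used.

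With both hypotheses of Theorem \ref{thm:linftyboundv1} verified, that theorem yields $|\phi|\le CD_0$ on $\mathcal{M}\cap\mathcal{D}_{u_0,v_0}$ together with a $C^0$ extension of $\phi$ beyond $\mathcal{CH}^+$; since the choice of $u_0,v_0$ can be made for any point of $D^+(\Sigma)$ and the region between $\Sigma$ and $\underline{H}_{v_0}$ is a compact piece of the regular manifold on which $\phi$ is bounded by local theory, the bound propagates to all of $D^+(\Sigma)$, with $C=C(M,\Sigma)$ absorbing the geometric constants of this intermediate region. The main obstacle is purely bookkeeping rather than conceptual: one must check that commuting the wave equation the requisite number of times with the Killing vector fields and angular derivatives genuinely stays within the scope of the estimates proved in \cite{are6} — i.e.\ that \cite{are6} controls high enough order derivatives with sufficiently strong rates — and translate the weighted energy decay statements there into precisely the integrated form appearing in the hypothesis of Theorem \ref{thm:linftyboundv1}. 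If \cite{are6} is stated only for $\phi$ itself and its first derivatives, one would additionally need to note that the commuted quantities again solve (\ref{eq:waveqkerr}) with lower-order source terms that are themselves controlled, and iterate; this is routine but is the step requiring the most care.
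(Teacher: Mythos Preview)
Your proposal is correct and follows essentially the same approach as the paper: the paper also treats Theorem \ref{thm:linftyboundv0} as a corollary of Theorem \ref{thm:linftyboundv1}, noting that the required weighted integral condition along $\mathcal{H}^+$ (equation (\ref{eq:requireddecayext}) with $q=1+\epsilon$) follows from the exterior decay results of Aretakis \cite{are6}. The paper is slightly more specific than you are about which statements from \cite{are6} are invoked---namely, boundedness of a non-degenerate energy together with $\tau^{-1-\epsilon}$-decay of the degenerate $T$-energy (Theorems 2 and 3 of \cite{are6})---rather than pointwise decay rates, but the reduction is the same.
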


Furthermore, we obtain the analogue of Theorem 2 of \cite{gajic2015}.
\begin{thm}[$H^1_{\textnormal{loc}}$-extendibility for axisymmetric solutions]
\label{thm:H1boundv1}
Let $\phi$ be an axisymmetric solution to (\ref{eq:waveqkerr}) in extremal Kerr--Newman arising from suitably regular characteristic initial data on $\uline{H}_{v_0}\cup \mathcal{H}^+$, such that
\begin{equation}
\label{eq:initialdecH+}
\int_{\mathcal{H}^+\cap\{v\geq v_0\}} v^2(\partial_v\phi)^2+|\snabla\phi|^2<\infty.
\end{equation}
Then $\phi$ admits a $H^1_{\textnormal{loc}}$ extension beyond $\mathcal{CH}^+$.
\end{thm}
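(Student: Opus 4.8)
The strategy is to establish a weighted energy estimate for $\phi$ in the double-null coordinates $(u,v)$ on $\mathcal{D}_{u_0,v_0}$, propagating the decay hypothesis \eqref{eq:initialdecH+} from $\mathcal{H}^+$ towards $\mathcal{CH}^+$, and then to re-express the resulting bounds in the regular coordinates $(U,\widetilde V)$, in which the extremal Kerr--Newman metric extends smoothly across $\mathcal{CH}^+$. Since only $H^1_{\textnormal{loc}}$-regularity is required, no commutation of the wave equation with derivatives is needed: for a suitable multiplier vector field $N$ the energy current $J^N_\mu=\T_{\mu\nu}[\phi]N^\nu$ (with $\T_{\mu\nu}[\phi]$ the stress-energy tensor of $\phi$) controls \emph{all} first derivatives of $\phi$, and because $\square_g\phi=0$ carries no zeroth-order term its divergence $\nabla^\mu J^N_\mu=\T_{\mu\nu}[\phi]\,{}^{(N)}\pi^{\mu\nu}$ is quadratic in $\partial\phi$ alone, so the estimate closes without any pointwise control of $\phi$ itself.

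First I would record, from Section~\ref{sec:geometry}, the form of $\square_g\phi=0$ for axisymmetric $\phi$ together with the relevant metric behaviour, the key points being: (i) for axisymmetric $\phi$ the Killing field $K_-$ that is null and normal to $\mathcal{CH}^+$ acts precisely as $\partial_t$, so the structure is essentially that of extremal Reissner--Nordstr\"om treated in \cite{gajic2015}; (ii) the surface gravity of $\mathcal{CH}^+$ vanishes, so $\Omega^2$ (hence $d\widetilde V/dv$) decays only \emph{polynomially} in $v$ as $v\to\infty$, in contrast to the exponential decay in the subextremal case; and (iii) the precise $v$-decay of $\partial_v r$, $\partial_u r$ and of the components of ${}^{(N)}\pi$. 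The conserved flux of the causal Killing field $K_-$ already yields, from finiteness of the data flux on $\mathcal{H}^+\cup\underline{H}_{v_0}$, uniform-in-$v$ bounds on $\int_{\underline{H}_v}|\snabla\phi|^2$ and on the regular-frame null derivative components; what remains to be controlled near $\mathcal{CH}^+$ is $\Omega^{-2}(\partial_v\phi)^2=(\partial_{\widetilde V}\phi)^2$.

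To recover this I would run the energy estimate with a multiplier that interpolates between the regular null frames at the two horizons (equivalently, supplement the conserved $K_-$-estimate with a $\partial_v$-weighted estimate), choosing $v$-weights so that: the flux through the outgoing cones $H_u$ carries weight $\sim v^2$ on $(\partial_v\phi)^2$, matching the borderline weight in \eqref{eq:initialdecH+}; the flux through $\mathcal{H}^+$ reproduces exactly the left-hand side of \eqref{eq:initialdecH+}; and the flux through the ingoing cones $\underline{H}_v$ controls $\int_{\underline{H}_v}(\partial_U\phi)^2+|\snabla\phi|^2$ uniformly in $v$. Integrating the energy identity over $\mathcal{D}_{u_0,v_0}\cap\{v'\le v\}$ writes these fluxes as the data flux plus a spacetime bulk term, and a Gr\"onwall argument in $v$ then gives uniform boundedness of all fluxes, together with $\int\!\!\int\Omega^{-2}(\partial_v\phi)^2\,du\,dv\,d\omega<\infty$ and the analogous bounds for $(\partial_U\phi)^2$ and $|\snabla\phi|^2$. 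Passing to the $(U,\widetilde V)$ coordinates, under which $\Omega^{-2}(\partial_v\phi)^2\,dv=(\partial_{\widetilde V}\phi)^2\,d\widetilde V$ and $\{\widetilde V(v_0)\le\widetilde V\le 0\}$ is a bounded interval, shows that $\partial_U\phi$, $\partial_{\widetilde V}\phi$ and $\snabla\phi$ lie in $L^2$ of $\mathcal{D}_{u_0,v_0}$ against the regular volume form; $L^2$-control of $\phi$ itself (in fact boundedness) follows from the $v^2$-weighted flux on $H_u$ by Cauchy--Schwarz along outgoing rays. Hence $\phi\in H^1$ up to and including $\mathcal{CH}^+\cap\mathcal{D}_{u_0,v_0}$, and since the wave operator extends regularly across $\mathcal{CH}^+$ in these coordinates, $\phi$ extends to an $H^1_{\textnormal{loc}}$ solution beyond $\mathcal{CH}^+$, e.g.\ by solving the characteristic initial value problem with the now well-defined $H^1$ data induced on $\mathcal{CH}^+$ and on a transversal ingoing cone.

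The main obstacle is the spacetime bulk term: one must show that the contributions of ${}^{(N)}\pi$ and of the first-order terms of $\square_g\phi=0$ do not destroy boundedness of the energy as $v\to\infty$. This is precisely where extremality is essential --- in the subextremal case the analogous estimate fails owing to the exponential blueshift at $\mathcal{CH}^+$ (cf.\ \cite{Luk2015}), whereas here the polynomial degeneration of $\Omega^2$ makes the error coefficients integrable in $v$ (or costs only an $\epsilon$-power that is absorbed by the weight), so the Gr\"onwall functional stays finite. Making the flux weights, the $v$-decay of the ${}^{(N)}\pi$-coefficients and the borderline $v^2$-weight in \eqref{eq:initialdecH+} mutually compatible is the delicate bookkeeping at the heart of the argument.
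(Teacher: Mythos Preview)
Your proposal is essentially correct and follows the same route as the paper: apply the divergence theorem with the multiplier $N_{2,2}=|u|^2\underline{L}+v^2L$ (your ``interpolation between the regular null frames at the two horizons''), estimate the bulk using $\Omega^2\sim(v+|u|)^{-2}$, close by Gr\"onwall, and then invoke \eqref{est:energytoH1} to pass from the weighted fluxes to the $H^1$ norm in the regular $(U,\widetilde V)$ chart. Two points deserve sharpening.

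First, the role of axisymmetry is not that the Killing field $K_-$ ``acts as $\partial_t$'' or that its flux is automatically coercive. In the paper's language the Hawking field $H=T+\omega_{\mathcal H^+}\Phi$ is \emph{not} causal in the interior once $|a|\ge a_c$, so your preliminary ``conserved $K_-$-flux'' step would fail for extremal Kerr. What axisymmetry actually buys is that the torsion cross-term $K^{N_{p,q}}_{\mathrm{mixed}}[\phi]=2\bigl(v^qL\phi-|u|^p\underline L\phi\bigr)\zeta^{\varphi_*}\partial_{\varphi_*}\phi$ vanishes identically, which is exactly the term that would otherwise obstruct the $N_{2,2}$ estimate (see \eqref{eq:errorterms3}). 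The paper never uses a Killing multiplier in the axisymmetric case; it runs the $N_{2,2}$ estimate directly.

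Second, the Gr\"onwall argument is genuinely two-variable (Lemma~\ref{lm:gronwall}): the bulk $K^{N_{2,2}}_{\mathrm{null}}+K^{N_{2,2}}_{\mathrm{angular}}$ is bounded by integrable weights in \emph{both} $u$ and $v$ times the fluxes through $H_u$ and $\underline H_v$. In particular the choice $p=2$ is forced (not merely convenient) by the sign of the leading angular term near $\mathcal H^+$, and the borderline choice $p=q=2$ produces a cancellation in \eqref{eq:KNangularkerrn2} that leaves only a $\log(v+|u|)\,\mathcal O((v+|u|)^{-2})$ remainder. Your final paragraph correctly anticipates that this is where the bookkeeping lives; just be aware that the $|u|$-weight is as essential as the $v$-weight.
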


In \cite{gajic2015} we reformulated Theorem 2 of \cite{gajic2015} by imposing Cauchy data on a spacelike hypersurface instead of characteristic data on the event horizon to obtain Theorem 3 of \cite{gajic2015}. We made use of the improved decay results along the event horizon of extremal Reissner--Nordstr\"om that were proved in \cite{Angelopoulos2015}. However, as the decay estimates for $\phi$ along $\mathcal{H}^+$ that are necessary for (\ref{eq:initialdecH+}) to hold have not yet been obtained for suitable data on $\Sigma$ in any extremal Kerr--Newman spacetime with $a\neq 0$, we do not reformulate Theorem \ref{thm:H1boundv1} above by imposing Cauchy data on $\Sigma$. Theorem \ref{thm:H1boundv1} follows from Theorem \ref{thm:eestaxisymm} after applying the estimates (\ref{est:fundthmcalcCS}) and (\ref{est:energytoH1}).

We can further conclude that $\phi$ can be extended beyond $\mathcal{CH}^+$ in the H\"older space $C^{0,\alpha}$ with $\alpha<1$. This result is the analogue of Theorem 5 of \cite{gajic2015}.

\begin{thm}[$C^{0,\alpha}$-extendibility of axisymmetric solutions]
\label{thm:c11boundphiv1}
Let $\alpha<1$. Let $\phi$ be an axisymmetric solution to (\ref{eq:waveqkerr}) in extremal Kerr--Newman arising from suitably regular and decaying characteristic initial data on $\uline{H}_{v_0}\cup \mathcal{H}^+$. Then $\phi$ admits a $C^{0,\alpha}$ extension beyond $\mathcal{CH}^+$.
\end{thm}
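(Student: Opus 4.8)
The strategy is to deduce $C^{0,\alpha}$-extendibility from a quantitative, weighted bound on the first derivatives of $\phi$ near $\mathcal{CH}^+$, obtained by running the scheme behind Theorems~\ref{thm:linftyboundv1} and \ref{thm:H1boundv1} one derivative order further. Fix $\alpha<1$. Working in the double-null coordinates of Section~\ref{sec:geometry}, one uses near $\mathcal{CH}^+$ the regular coordinates $(U,\widetilde V,\theta)$, in which $\{\widetilde V=0\}=\mathcal{CH}^+$, the directions $\partial_U$ and $\snabla$ are tangential to $\mathcal{CH}^+$ and carry no blue-shift degeneration there, and $\widetilde V\to 0$ corresponds to $v\to\infty$ under a change of coordinates in which $\partial_v$ degenerates; consequently a decay bound $|\partial_v\phi|\lesssim v^{-p}$ translates into a blow-up bound $|\partial_{\widetilde V}\phi|\lesssim|\widetilde V|^{p-2}$ near $\mathcal{CH}^+$, which is sub-critical for $C^{0,\alpha}$ as soon as $p>1+\alpha$. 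I would then estimate the Hölder seminorm of $\phi$ by splitting the variation between two nearby points into the variation in the $(U,\theta)$-directions, tangential to $\mathcal{CH}^+$, and the variation in the transversal $\widetilde V$-direction.

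For the tangential directions, commuting the boundedness argument of Theorem~\ref{thm:linftyboundv1} and the energy estimate of Theorem~\ref{thm:eestaxisymm} with $\snabla$ and with $\partial_U$ --- good, non-degenerate directions up to $\mathcal{CH}^+$ --- propagates the suitably regular characteristic data to a bound $|\partial_U\phi|+|\snabla\phi|\lesssim D_0$ on $\mathcal{M}\cap\mathcal{D}_{u_0,v_0}$, which gives (locally) Lipschitz, hence more than $C^{0,\alpha}$, control of $\phi$ in the $U$- and $\theta$-directions.

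The transversal direction is the heart of the argument. Commuting (\ref{eq:waveqkerr}) with a rescaled $\partial_v$ produces a wave-type equation for $\psi=\partial_v\phi$ with forcing built from $\phi$, $\snabla\phi$ and lower-order terms, while the hypothesis of ``suitably decaying'' data --- now strictly stronger than (\ref{eq:initialdecH+}), carrying an additional power of $v$ --- endows $\psi$ with data on $\mathcal{H}^+\cup\underline{H}_{v_0}$ decaying fast enough in $v$. Running the energy-estimate and fundamental-theorem-of-calculus scheme of Theorem~\ref{thm:eestaxisymm} for $\psi$, commuting once more with $\snabla$ and $\partial_U$, and invoking Sobolev embedding in the $(U,\theta)$-variables, one obtains a pointwise bound $|\partial_v\phi|(u,v)\lesssim D_0\,v^{-1-\alpha}$ on $\mathcal{M}\cap\mathcal{D}_{u_0,v_0}$, hence $|\partial_{\widetilde V}\phi|\lesssim D_0\,|\widetilde V|^{\alpha-1}$ near $\mathcal{CH}^+$. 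Integrating in $\widetilde V$ and using $\big||\widetilde V_1|^{\alpha}-|\widetilde V_2|^{\alpha}\big|\le|\widetilde V_1-\widetilde V_2|^{\alpha}$ yields $\alpha$-Hölder continuity of $\phi$ in $\widetilde V$ up to $\{\widetilde V=0\}$; together with the previous paragraph this produces a uniform $C^{0,\alpha}$ bound for $\phi$ on $\mathcal{M}\cap\mathcal{D}_{u_0,v_0}$, and hence a $C^{0,\alpha}$ extension of $\phi$ beyond $\mathcal{CH}^+$, obtained exactly as in the corresponding result of \cite{gajic2015}.

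The main obstacle is the propagation of the $v$-decay of $\psi=\partial_v\phi$ from $\mathcal{H}^+$ into the interior without loss of rate: one must show that the contributions from $\mathcal{H}^+$ and $\underline{H}_{v_0}$ carry the assumed $v^{-1-\alpha}$ rate and that the forcing term --- which couples in $\phi$ and $\snabla\phi$ with coefficients that degenerate as one approaches the inner horizon --- is integrable against that rate. This is precisely where extremality enters: the inner horizon of extremal Kerr--Newman is degenerate, so there is no exponential blue-shift amplification, and the weights appearing near $\mathcal{CH}^+$ cost at most an arbitrarily small loss of $v$-decay rather than destroying it, in contrast with the subextremal case \cite{Luk2015}. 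Axisymmetry is used, as in Theorem~\ref{thm:eestaxisymm}, to make the underlying energy estimates close, the analogous estimate without axisymmetry being available only in the slowly rotating regime. A secondary point is the bookkeeping of the commutations with $\partial_U$ and $\snabla$, so that the decay bound for $\partial_v\phi$ is uniform in $(U,\theta)$ up to $\mathcal{CH}^+$.
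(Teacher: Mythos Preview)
Your overall strategy---reduce $C^{0,\alpha}$-extendibility to a pointwise bound $|L\phi|\lesssim v^{-2+\epsilon}$, which in the regular $\widetilde V$-coordinate becomes $|\partial_{\widetilde V}\phi|\lesssim|\widetilde V|^{-\epsilon}$ and hence yields Hölder control in the transversal direction---matches the paper's. The tangential part is also fine: boundedness of $|\underline L\phi|$ and $|\snabla\phi|$ up to $\mathcal{CH}^+$ indeed follows from higher-order versions of Theorem~\ref{thm:linftyboundv1} (cf.\ Proposition~\ref{pointwiseboundkn} with $n\ge 1$).

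The gap is in the transversal step. You propose to obtain $|\partial_v\phi|\lesssim v^{-1-\alpha}$ by commuting with $\partial_v$, treating $\psi=\partial_v\phi$ as a solution to an inhomogeneous wave equation, and then ``running the energy-estimate and fundamental-theorem-of-calculus scheme of Theorem~\ref{thm:eestaxisymm}.'' But those energy estimates only propagate \emph{boundedness} of weighted fluxes such as $\int_{H_u}v^q(L\psi)^2$; they do not give pointwise $v$-decay of $\psi$ in the interior, no matter how fast $\psi|_{\mathcal H^+}$ decays. If you try to recover decay via the fundamental theorem of calculus along ingoing null generators, writing $\psi(u,v)=\psi|_{\mathcal H^+}(v)+\int_{-\infty}^u\underline L\psi\,du'$, the ingoing energy bound $\int_{\underline H_v}|u|^2(\underline L\psi)^2\le C$ only gives, via Cauchy--Schwarz, an error of size $|u|^{-1/2}$ with no $v$-decay whatsoever. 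So as written the scheme cannot produce $v^{-2+\epsilon}$.

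The paper's mechanism is genuinely different and is the content of Propositions~\ref{improvedpointwisedecaykn}--\ref{prop:c0alphaextendibility}. One first subtracts the data, setting $\psi:=\phi-\phi_{\mathcal H^+}$ (not $\partial_v\phi$), and proves via weighted energy estimates that $\psi$ and its angular derivatives decay in $|u|$ (Proposition~\ref{improvedpointwisedecaykn}, Proposition~\ref{cor:improvedpointbound}). Then one does \emph{not} run energy estimates for $L\phi$, but instead reads the wave equation as a transport equation $\underline L\bigl((\det\slashed g)^{1/4}L\phi\bigr)=(\det\slashed g)^{1/4}\bigl[-\Omega\tr\chi\,\underline L\phi+\snabla\Omega^2\cdot\snabla\phi+\Omega^2\slashed\Delta\phi\bigr]$ and integrates in $u$. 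The right-hand side carries the coefficient $(v+|u|)^{-2}$, and the angular terms (after subtracting $\phi_{\mathcal H^+}$) carry an additional $|u|^{-s}$; the identity $\int_{-\infty}^u|u'|^{-s}(v+|u'|)^{-2}\,du'\lesssim v^{-1-s+\epsilon}$ (and for axisymmetric $\phi$, $s=1$, a $\log$) is what converts $|u|$-decay into $v$-decay of $L\phi$. This conversion step is the heart of the argument and is absent from your proposal.

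A secondary issue: in Kerr--Newman one cannot simply ``commute with $\snabla$'' to control angular derivatives, since the angular momentum operators are not all Killing. The paper handles this by commuting with $L$, $\underline L$, $\Phi$ and the Carter operator $Q$ (Propositions~\ref{prop:ellipticsphere1}, \ref{commutedenergyestkn}); you would need to incorporate this as well.
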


The precise necessary initial decay requirements along $\mathcal{H}^+$ appear in Theorem \ref{thm:alphaholdercont}.

\subsubsection{Slowly rotating extremal Kerr--Newman}
\label{sec:subsubslowlyrot}
We now restrict to the slowly rotating subfamily of extremal Kerr--Newman spacetimes, satisfying $0 \leq |a|<a_c$, where $a_c$ is the parameter described above. In particular, this subfamily excludes extremal Kerr. We will state analogues of the results from Section~ \ref{sec:subsubaxisymm} in slowly rotating extremal Kerr--Newman \underline{without} the restriction to axisymmetric solutions of (\ref{eq:waveqkerr}).

In slowly rotating extremal Kerr--Newman we can obtain $L^{\infty}$-boundedness and $C^0$-ex\-ten\-di\-bi\-li\-ty without an axisymmetry assumption on $\phi$.

\begin{thm}[$L^{\infty}$-boundedness and $C^0$-extendibility in slowly rotating extremal Kerr--New\-man]
\label{thm:linftyboundv2}
Let $\phi$ be a solution to (\ref{eq:waveqkerr}) in extremal Kerr--Newman arising from suitably regular and decaying characteristic data on $\uline{H}_{v_0}\cup \mathcal{H}^+$. Then there exists a constant $$C=~C(M,a,\Sigma)>0$$ and a natural norm $D_0>0$ that on initial data for $\phi$, such that
\begin{equation*}
|\phi|\leq CD_0,
\end{equation*}
everywhere in ${\mathcal{D}_{u_0,v_0}}$. Moreover, $\phi$ admits a $C^0$ extension beyond $\mathcal{CH}^+$.
\end{thm}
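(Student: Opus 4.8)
The plan is to reduce the non-axisymmetric problem on slowly rotating extremal Kerr--Newman to the axisymmetric machinery already developed for Theorem \ref{thm:linftyboundv1}, paying the price of a loss in the rotation parameter $a$. The starting point is the separation of variables: decompose $\phi$ into azimuthal modes $\phi=\sum_{m}\phi_m e^{im\varphi}$, where $\varphi$ is the azimuthal angle adapted to the axisymmetry Killing field of the background. The $m=0$ mode is precisely an axisymmetric solution, handled verbatim by Theorem \ref{thm:linftyboundv1}. For $m\neq 0$ the wave operator $\square_g$ acting on $\phi_m$ differs from the axisymmetric operator by terms proportional to $m^2$ and $m$, and — crucially in the extremal interior — these extra terms come with a \emph{coercive sign} on the relevant degenerate weights when $a$ is small. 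So the strategy is: (i) set up the double-null energy estimate of the type proved for the axisymmetric case, (ii) track the error terms generated by the non-axisymmetric part of the operator, and (iii) show that for $|a|<a_c$ these error terms are either absorbable by the good bulk/boundary terms or have a favourable sign, uniformly in $m$.

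Concretely, I would first fix the double-null gauge $(U,\widetilde V)$ and the spacelike foliation described in Section~\ref{sec:geometry}, and write the wave equation for each mode $\phi_m$ as a perturbation of the equation used in Theorem~\ref{thm:H1boundv1}. Second, I would run the vector-field multiplier argument — a combination of a redshift-type vector field near $\mathcal{H}^+$, the stationary Killing multiplier, and the degenerate multiplier that captures the extremal structure near $\mathcal{CH}^+$ — and collect the commutator and lower-order terms into an error $\mathcal{E}_m$. The key algebraic input is that the dangerous part of $\mathcal{E}_m$ carries a factor of $a^2$ (or $|a|$ times an angular derivative, controlled by Cauchy--Schwarz against the $m^2$-coercive term), so choosing $a_c=a_c(M)$ small enough makes the bad contributions strictly subordinate to the coercive terms coming from the principal part, \emph{with constants independent of $m$}. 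Third, summing the resulting $m$-by-$m$ energy bounds using Plancherel in $\varphi$, together with commuting with $\partial_\varphi$ and the angular Laplacian to get enough derivatives, recovers a full (non-axisymmetric) energy estimate of exactly the form fed into the axisymmetric proof. Fourth, I would feed this energy estimate into the same Sobolev-type argument on the $2$-spheres (or rather the $(\theta,\varphi)$ cross-sections of the double-null foliation) used in Theorems~\ref{thm:linftyboundv1} and \ref{thm:H1boundv1}: bound $\sup|\phi|$ by the energy plus angular derivatives via a trace/Sobolev inequality, integrate the energy flux in $\widetilde V$ up to $\mathcal{CH}^+$, and deduce both the uniform pointwise bound $|\phi|\le CD_0$ on $\mathcal{D}_{u_0,v_0}$ and, from the $\widetilde V$-integrability of $\partial_{\widetilde V}\phi$, the existence of a continuous limit along ingoing null rays — i.e. the $C^0$ extension across $\mathcal{CH}^+$.

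The main obstacle I anticipate is step (iii): verifying that the error terms generated by the non-axisymmetric part of $\square_g$ really do have the right $a$-smallness \emph{uniformly in the azimuthal mode number $m$} and uniformly up to the Cauchy horizon, where several geometric coefficients degenerate. Unlike in the exterior problem, one cannot invoke a Carter-type separation into an honest eigenvalue problem with a spectral gap; instead one must work directly at the level of energies and control terms like $m^2 \times (\text{degenerate weight})$ against $|a|^2 m^2 \times(\text{same weight})$, ensuring no logarithmic or polynomial-in-$m$ loss sneaks in through the redshift or the near-$\mathcal{CH}^+$ multiplier. A secondary technical point is that the characteristic data hypothesis is stated schematically here (``suitably regular and decaying''), so part of the work is to identify, via Plancherel, the precise mode-summed norm $D_0$ — essentially the $H^1$-type flux on $\mathcal{H}^+\cup\underline H_{v_0}$ with the weight $v^2$ as in \eqref{eq:initialdecH+}, augmented by finitely many angular derivatives — that closes the estimate. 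Once step (iii) is in place, the remaining steps are routine adaptations of the arguments already established for the axisymmetric theorems and for extremal Reissner--Nordstr\"om in \cite{gajic2015}.
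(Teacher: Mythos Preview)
Your proposal has a genuine gap in step (iii), and the overall mechanism you invoke for why $|a|<a_c$ suffices is not the one that actually closes the estimate.

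First, the paper does \emph{not} decompose into azimuthal modes. It works directly with the full solution and isolates the single obstruction to the axisymmetric argument: the torsion term $K^{N_{p,q}}_{\textnormal{mixed}}=2[v^q(L\phi)-|u|^p(\underline{L}\phi)]\zeta^{\varphi_*}\partial_{\varphi_*}\phi$, with $\Omega^2\zeta^{\varphi_*}\sim a(v+|u|)^{-2}$. Your plan is to absorb this via Cauchy--Schwarz against an ``$m^2$-coercive'' term, using that the error carries a factor of $a$. This does not work: the axisymmetric $N_{p,q}$ estimate (Proposition \ref{prop:mainenergyestimate}) barely closes at $p=q=2$ --- the leading-order contributions to $K^{N_{p,q}}_{\textnormal{angular}}$ cancel exactly and there is \emph{no} strictly coercive bulk term left over to absorb anything against. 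If you try to absorb $K_{\textnormal{mixed}}$ into the fluxes instead, the resulting Gr\"onwall weight is $|u|^{\eta-1}$ (or worse), which is not integrable as $|u|\to\infty$ toward $\mathcal{H}^+$. Smallness of $a$ only shrinks the constant in front of a divergent integral; it cannot make it converge.

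The paper's actual mechanism is different and more geometric. The restriction $|a|<a_c$ is \emph{not} perturbative: $a_c$ is defined precisely as the threshold below which the Hawking vector field $H=T+\omega_{\mathcal{H}^+}\Phi$ is timelike throughout the interior near the horizons (Section \ref{sec:hawkvf}). One then uses the multiplier $Y_p=|u|^pH$ in an intermediate region $\mathcal{B}$ bounded by spacelike hypersurfaces $\underline{\gamma}_\alpha$ and $\gamma_\beta$; since $H$ is Killing and timelike there, $K^{Y_p}\geq 0$ automatically and no absorption is needed (Proposition \ref{prop:energyestimatehawking}). Near $\mathcal{H}^+$ and $\mathcal{CH}^+$ one works in the finite-spacetime-volume regions $\underline{\mathcal{A}}$, $\mathcal{A}$ where the $N_{p,q}$ multiplier suffices because the torsion error is integrable there (Proposition \ref{prop:finitevolumeenergyestimates}). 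Gluing these regions costs an $\epsilon$ in the weights (Proposition \ref{energyestsmallakn}). For the pointwise bound, the paper does not commute with the full angular Laplacian (the angular momentum operators $O_i$ are not Killing when $a\neq 0$); it commutes with $L$, $\underline{L}$, $\Phi$ and the Carter operator $Q$, which together control $\snabla^2\phi$ via an elliptic estimate on $S^2_{u,v}$ (Proposition \ref{prop:ellipticsphere1}).
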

See Theorem \ref{thm:pointwiseboundkn} and Theorem \ref{thm:C0extension} in Section~ \ref{sec:mainresults} for precise requirements for the initial data on $\uline{H}_{v_0}\cup \mathcal{H}^+$. As there are presently no decay results available for non-axisymmetric solutions in the exterior region of slowly rotating extremal Kerr--Newman, we do not reformulate Theorem \ref{thm:linftyboundv2} by imposing Cauchy data on $\Sigma$ instead of characteristic data on $\uline{H}_{v_0}\cup \mathcal{H}^+$.

We also obtain an analogue of Theorem \ref{thm:H1boundv1} for $\phi$, without the assumption of axisymmetry, in slowly rotating extremal Kerr--Newman.
\begin{thm}[$H^1_{\textnormal{loc}}$-extendibility in slowly rotating Kerr--Newman]
\label{thm:H1boundv2}
Let $\phi$ be a solution to (\ref{eq:waveqkerr}) in extremal Kerr--Newman arising from suitably regular and decaying characteristic data on $\uline{H}_{v_0}\cup \mathcal{H}^+$. Then $\phi$ admits a $H^1_{\textnormal{loc}}$ extension beyond $\mathcal{CH}^+$.
\end{thm}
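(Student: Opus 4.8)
\subsection*{Proof strategy for Theorem~\ref{thm:H1boundv2}}

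The plan is to follow the same two-step scheme that yields Theorem~\ref{thm:H1boundv1} in the axisymmetric case: first establish an energy estimate analogous to Theorem~\ref{thm:eestaxisymm} — a bound on the renormalized energy flux of $\phi$ through the ingoing and outgoing null foliations of $\mathcal{D}_{u_0,v_0}$ that is \emph{uniform up to $\mathcal{CH}^+$} — but now valid for $|a|<a_c$ \emph{without} the axisymmetry assumption, and then feed this estimate into (\ref{est:fundthmcalcCS}) and (\ref{est:energytoH1}) to convert it into an $H^1_{\textnormal{loc}}$ extension across $\mathcal{CH}^+$. The characteristic data hypothesis plays the role of the bound (\ref{eq:initialdecH+}): the quadratic $v$-weight on $\partial_v\phi$ together with $\snabla\phi\in L^2$ on $\mathcal{H}^+$ is exactly what is needed to control the event-horizon portion of the boundary flux, and since we only seek an $H^1_{\textnormal{loc}}$ statement, no pointwise decay (as in Theorem~\ref{thm:linftyboundv2}) is required.

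For the energy estimate I would write $\square_g\phi=0$ in the double-null coordinates $(U,\widetilde V,\cdot)$ of Section~\ref{sec:geometry}, separating the principal, favourably signed part — the $\partial_U\partial_{\widetilde V}$ term and the nonnegative term built from $|\snabla\phi|^2$ — from the remaining contributions, \emph{all of which carry an explicit factor of $a$ or $a^2$}: the cross term between the stationary and axial Killing fields, and the $a^2\sin^2\theta$-weighted angular terms that are annihilated in the axisymmetric reduction. Applying the multiplier vector field used in the extremal Reissner--Nordstr\"om argument (an appropriately rescaled combination of the null generators, whose energy current is coercive in $\partial_{\widetilde V}\phi$ and $\snabla\phi$ near $\mathcal{CH}^+$ and in $\partial_U\phi$ near $\mathcal{H}^+$), forming the divergence identity and integrating over the truncated rectangle $\mathcal{D}_{u_0,v_0}\cap\{v\le v'\}$, the event-horizon flux is absorbed using the data assumption, and the flux on $\{v=v'\}$ is precisely the quantity to be bounded uniformly in $v'$. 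One will most likely also need to commute (\ref{eq:waveqkerr}) with $\partial_v$, and possibly with the axial Killing field, in order to close the estimate when a derivative falls in the unfavourable direction; the commutator terms are again $O(|a|)$ and handled in the same way.

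The main obstacle is the absorption of these new, non-axisymmetric bulk terms. They involve at most first derivatives of $\phi$ — exactly the quantities appearing in the energy — but they are multiplied by coefficients that, a priori, could be amplified by the metric factors that degenerate at $\mathcal{CH}^+$ in the extremal case. The crucial point to verify is that the degenerating factors enter these error terms only in combinations dominated by the coercive part of the current (for the genuinely top-order errors) or else by lower-order terms amenable to a Gr\"onwall argument in $v$; granting this, one chooses the threshold $a_c=a_c(M)>0$ small enough that the $O(|a|)$ prefactors beat the implied constants, so that all error and commutator terms are absorbed and the estimate closes. Making this smallness uniform all the way up to $\mathcal{CH}^+$, where the coefficients degenerate, is the delicate part; everything else is a perturbation of the extremal Reissner--Nordstr\"om computation underlying Theorem~\ref{thm:H1boundv1}, to which the argument reduces as $a\to 0$. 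Once the uniform flux bound is in hand, (\ref{est:fundthmcalcCS}) and (\ref{est:energytoH1}) give, in coordinates regular across $\mathcal{CH}^+$, finiteness of $\int(|\partial_{\widetilde V}\phi|^2+|\snabla\phi|^2)$ on slices approaching the Cauchy horizon, hence the claimed $H^1_{\textnormal{loc}}$ extension.
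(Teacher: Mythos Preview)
Your proposal has a genuine gap: the smallness-of-$a$ absorption argument does not close, and in fact the paper's $a_c$ is \emph{not} a perturbative smallness parameter at all.

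The torsion error term $K^{N_{p,q}}_{\textnormal{mixed}}$ arising from the non-axisymmetric part of $\phi$ does indeed carry a factor of $a$ (through $\Omega^2\zeta^{\varphi_*}\sim a(v+|u|)^{-2}$), but smallness of $a$ only shrinks the overall constant; it does not improve the $(u,v)$-decay of the coefficient. After Cauchy--Schwarz, the angular piece of $K_{\textnormal{mixed}}$ contributes a bulk term of size $|u|^{1+\eta}v^q\,\Omega^2|\snabla\phi|^2$ (or its $u\leftrightarrow v$ analogue), and neither $|u|^{1+\eta}$ nor the corresponding $v$-weight is integrable in the intermediate region where $|u|\sim v$. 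The Gr\"onwall lemma therefore cannot be applied there regardless of how small $|a|$ is, so the estimate does not close by perturbation off Reissner--Nordstr\"om. This is exactly why the paper splits $\mathcal{D}_{u_0,v_0}$ into finite-volume neighbourhoods of $\mathcal{H}^+$ and $\mathcal{CH}^+$ (where $N_{p,q}$ works because the spacetime integral is finite) and an intermediate region $\mathcal{B}$ where a \emph{different} multiplier is used: $Y_p=|u|^p H$ with $H$ the Hawking vector field. The threshold $a_c$ is defined precisely so that $H$ is timelike in $\mathcal{B}$, which makes $K^{Y_p}\ge 0$ and eliminates the problematic bulk term altogether; it is determined by the algebraic condition $2a^2(3M^2-a^2)^2<(M^2+a^2)^2$ and is not small.

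There is a second issue you do not anticipate. Even with the Hawking multiplier, splicing the three regions together forces a loss: one only obtains $v^{q-\epsilon}$-weighted control of $L\phi$ rather than the $v^2$ weight needed for $H^1_{\textnormal{loc}}$ via (\ref{est:energytoH1}). The paper recovers this loss by a separate mechanism (Theorem~\ref{thm:improvedestoutgoingenergy}): one proves pointwise $v$-decay of $\|L\phi\|_{L^2(S^2_{u,v})}$ by treating the wave equation as a transport equation for $L\phi$ along ingoing null generators, which in turn requires commuting with $L,\underline{L}$ and invoking the Carter operator $Q$ to control angular derivatives. This costs additional derivatives on the data (hence the stronger hypotheses in Theorem~\ref{thm:improvedestoutgoingenergy} compared to Theorem~\ref{thm:eestaxisymm}). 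Your outline, which aims to reproduce the axisymmetric $q=2$ estimate directly, misses both the $\epsilon$-loss and the mechanism that repairs it.
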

Here, we require decay of more derivatives in the initial data along the event horizon, compared to Theorem \ref{thm:H1boundv1}. See Theorem \ref{thm:pointwiseboundkn} and Theorem \ref{thm:improvedestoutgoingenergy} in Section~ \ref{sec:mainresults} for the precise decay rates. Theorem \ref{thm:H1boundv2} follows from Theorem \ref{thm:eestaxisymm}, \ref{thm:improvedestoutgoingenergy} and \ref{thm:pointwiseboundkn}, together with the estimate (\ref{est:energytoH1}).

Finally, we obtain an analogue of Theorem \ref{thm:c11boundphiv1} without the assumption of axisymmetry for $\phi$ in slowly rotating extremal Kerr--Newman.
\begin{thm}[$C^{0,\alpha}$-extendibility of $\phi$ in slowly-rotating extremal Kerr--Newman]
\label{thm:c11boundphiv2}
Let $\alpha<1$. Let $\phi$ be a solution to (\ref{eq:waveqkerr}) in extremal Kerr--Newman arising from suitably regular and decaying characteristic data on $\uline{H}_{v_0}\cup \mathcal{H}^+$. Then $\phi$ admits a $C^{0,\alpha}$ extension beyond $\mathcal{CH}^+$.
\end{thm}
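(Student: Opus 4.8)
The plan is to deduce Theorem~\ref{thm:c11boundphiv2} from the already-established energy estimates for non-axisymmetric $\phi$ in slowly rotating extremal Kerr--Newman, in exactly the same way that Theorem~\ref{thm:c11boundphiv1} follows from the axisymmetric energy estimates. Concretely, I would first recall that in the slowly rotating case we have at our disposal both the pointwise bound of Theorem~\ref{thm:linftyboundv2} (i.e.\ Theorem~\ref{thm:pointwiseboundkn}) and the improved outgoing energy estimate of Theorem~\ref{thm:improvedestoutgoingenergy}, which together are the analogues of the ingredients used in the axisymmetric $C^{0,\alpha}$ argument. The strategy is then to interpolate between the $C^0$ bound and an $H^1_{\textnormal{loc}}$-type statement: the $C^0$ extendibility gives control of $\phi$ itself, while a weighted energy estimate in the outgoing ($\widetilde V$) direction, integrated up to $\mathcal{CH}^+$, controls $\int |\partial_{\widetilde V}\phi|^2$ with a weight that degenerates only mildly as $\widetilde V \to 0$. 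A standard Morrey--Campanato / fundamental-theorem-of-calculus estimate on null segments then upgrades this to a H\"older modulus of continuity with any exponent $\alpha<1$, provided the weight in the energy estimate is taken arbitrarily close to the borderline; the loss in $\alpha$ corresponds exactly to the loss $\epsilon>0$ in the assumed polynomial decay rate of $\phi$ along $\mathcal{H}^+$.

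The key steps, in order, are: (i) fix $\alpha<1$ and choose the weight parameter (equivalently, the decay rate assumed along $\mathcal{H}^+$ in the ``suitably regular and decaying'' hypothesis) so that the relevant weighted outgoing energy $\int_{\mathcal{D}_{u_0,v_0}} |\widetilde V|^{-\beta}|\partial_{\widetilde V}\phi|^2$ is finite for a $\beta$ depending on $\alpha$; this is where Theorem~\ref{thm:improvedestoutgoingenergy} (non-axisymmetric, slowly rotating) is invoked in place of its axisymmetric counterpart. (ii) Combine with the uniform pointwise bound from Theorem~\ref{thm:pointwiseboundkn} and the ingoing/spherical derivative estimates to bound the full spacetime $C^{0,\alpha}$-type quantity
\[
\sup \frac{|\phi(p)-\phi(q)|}{d(p,q)^{\alpha}}
\]
over $\mathcal{D}_{u_0,v_0}$, first along outgoing null segments using the weighted $\partial_{\widetilde V}$-energy and Cauchy--Schwarz, then along ingoing segments and along the spheres using the corresponding already-established estimates, and finally combining these via the triangle inequality. (iii) Conclude that $\phi$ extends continuously, with this H\"older modulus, across $\{\widetilde V = 0\} = \mathcal{CH}^+$, and that the extension lies in $C^{0,\alpha}$ in a neighbourhood on the other side, by the same limiting argument used to prove Theorem~\ref{thm:C0extension}. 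Essentially every estimate needed has a direct analogue in the axisymmetric proof of Theorem~\ref{thm:c11boundphiv1}; the only genuinely new input is that the energy estimates survive the removal of the axisymmetry assumption once $|a|<a_c$, which is precisely the content of the slowly-rotating energy theorems quoted above.

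I expect the main obstacle to be purely bookkeeping rather than conceptual: namely, verifying that dropping axisymmetry does not cost any additional loss in the H\"older exponent. In the axisymmetric setting the angular derivatives $\snabla\phi$ are controlled by the same Aretakis-type conserved/monotone quantities that control $\phi$, and one must check that in the slowly rotating non-axisymmetric case the corresponding control of $\snabla\phi$ near $\mathcal{CH}^+$ (needed to estimate $\phi$-differences between nearby points on the same sphere) is provided with the same weights by Theorem~\ref{thm:improvedestoutgoingenergy} and Theorem~\ref{thm:pointwiseboundkn}, rather than with a weaker weight that would force $\alpha$ strictly below some threshold $<1$. Since those theorems are stated (per the excerpt) with the requisite number of derivatives and weights for exactly this purpose, the argument closes; but the careful matching of weights, the number of derivatives consumed, and the geometry of the null segments near the corner $(U,\widetilde V)=(0,0)$ is the step that requires attention. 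Once this is in place, the proof is a line-by-line transcription of the proof of Theorem~\ref{thm:c11boundphiv1} with the axisymmetric energy estimates replaced by their slowly-rotating analogues.
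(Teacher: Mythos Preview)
Your proposal has a genuine gap at step~(i). You invoke Theorem~\ref{thm:improvedestoutgoingenergy} to produce a weighted outgoing energy of the form $\int |\widetilde V|^{-\beta}(\partial_{\widetilde V}\phi)^2$ with $\beta>0$ depending on $\alpha$, and then plan to upgrade this to a $C^{0,\alpha}$ modulus via Morrey/Cauchy--Schwarz along null segments. But Theorem~\ref{thm:improvedestoutgoingenergy} does \emph{not} provide any such $\beta>0$: it gives exactly $\int_{H_u}v^2(L\phi)^2<\infty$, which, since $\partial_{\widetilde V}\sim v^2 L$ and $dv\sim v^2\,d\widetilde V$, translates to the unweighted bound $\int_{H_u}(\partial_{\widetilde V}\phi)^2\,d\widetilde V<\infty$, i.e.\ $\beta=0$. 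The restriction $q\le 2$ on the weight in the $N_{p,q}$ energy estimates is a hard geometric constraint coming from the sign of $K^{N_{p,q}}_{\textnormal{angular}}$ (see the proof of Proposition~\ref{prop:mainenergyestimate}); it cannot be pushed beyond $q=2$ by assuming faster decay of the data on $\mathcal H^+$. With only $\beta=0$, the interpolation/Morrey argument along outgoing null segments yields at best $C^{0,1/2}$, not $C^{0,\alpha}$ for $\alpha$ close to~$1$.

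The paper's route is different in kind: rather than relying on weighted \emph{energy} bounds along $H_u$, it proves \emph{pointwise} decay of $L\phi$. One treats the wave equation as a transport equation for $(\det\slashed g)^{1/4}L\phi$ along ingoing null generators (Proposition~\ref{prop:decayLphi}), feeding in the $|u|$-decay of the shifted function $\psi=\phi-\phi_{\mathcal H^+}$ from Propositions~\ref{improvedpointwisedecaykn}--\ref{cor:improvedpointbound}. After commuting with $L,\underline L$ and applying the elliptic/Sobolev estimates on $S^2_{u,v}$ (Proposition~\ref{prop:c0alphaextendibility}), this yields $|L\phi|\lesssim v^{-2+\epsilon}$ pointwise, hence $|\partial_{\widetilde V}\phi|\lesssim |\widetilde V|^{-\epsilon}$, from which $C^{0,\alpha}$ for any $\alpha<1$ follows directly by integrating along outgoing segments. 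The $\epsilon$-loss in the non-axisymmetric energy estimates (the one you correctly flag) is absorbed here into the pointwise decay rate, not into a weighted $L^2$ norm. So the missing ingredient in your sketch is precisely this transport-equation argument for $L\phi$; the $H^1_{\textnormal{loc}}$-level information alone is too weak.
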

The precise necessary initial decay requirements along $\mathcal{H}^+$ appear in Theorem \ref{thm:alphaholdercont}.

\subsection{Main ideas in the proofs of Theorem \ref{thm:linftyboundv1}--\ref{thm:c11boundphiv2}}
\label{sec:mainsteps}
In this section, we will outline the main steps in the proofs of Theorem \ref{thm:linftyboundv1}--\ref{thm:c11boundphiv2}. We will restrict to the region ${\mathcal{D}_{u_0,v_0}}$ in extremal Kerr--Newman with $0\leq |a|\leq M$ by default, unless specifically mentioned otherwise, and consider appropriate characteristic initial data for $\phi$ on $\underline{H}_{v_0}\cup \mathcal{H}^+$. We will highlight new difficulties that arise in extremal Kerr--Newman when $a\neq 0$, compared to extremal Reissner--Nordstr\"om (where $a=0$), which was treated in \cite{gajic2015}.

\subsubsection{Part 0: Constructing a double-null foliation}Before doing any estimates involving the wave equation, we first construct a suitable double-full foliation of the interior region of extremal Kerr--Newman. As Kerr--Newman spacetimes with $a\neq 0$ are not spherically symmetric, in contrast with Reissner--Nordstr\"om spacetimes, the existence of global double-null coordinates in the interior region is not immediate. In \cite{Gajic2015b}, a suitable global double-null foliation of extremal Kerr--Newman is constructed, which covers both the exterior and interior regions, following ideas of \cite{Pretorius1998}. We will use the results of \cite{Gajic2015b} here.

\subsubsection{Part 1: Vector field multipliers and energy estimates (Theorem \ref{thm:H1boundv1})} We obtain uniform bounds on weighted $L^2$ norms of $\phi$ along null hypersurfaces by means of energy estimates. Energy estimates are derived by using the vector field method; see for example \cite{Klainerman2010} for a general overview and the discussion in Section~ \ref{sec:vfmethod} for further particulars. Energy estimates for axisymmetric $\phi$ are obtained very similarly to the energy estimates in extremal Reissner--Nordstr\"om in \cite{gajic2015}; we use the following vector field multiplier:
\begin{equation*}
N_{p,q}=u^p\partial_u+v^q\partial_v,
\end{equation*}
with $p=q=2$, where $u$ and $v$ are double-null coordinates obtained in Part 0 that are akin to the Eddington--Finkelstein double-null coordinates in extremal Reissner--Nordstr\"om. See Section~ \ref{sec:estimatesmetricomp} for an overview of the construction and main properties of the Eddington--Finkelstein-type double-null coordinates $u$ and $v$ in extremal Kerr--Newman, and see Section~ \ref{sec:vfmethod} for more details regarding $N_{p,q}$.

As in extremal Reissner--Nordstr\"om, the energy estimates rely crucially on the following polynomial decay rate of the $g_{uv}$ component of the metric in Eddington--Finkelstein-type double-null coordinates:
\begin{equation*}
g_{uv}\sim (v+|u|)^{-2};
\end{equation*}
see Section~ \ref{sec:estimatesmetricconn} for the corresponding estimates. The above bounds play an important role in the proof of Theorem \ref{thm:H1boundv1}; see Section~ \ref{sec:energyestimatesaxisymm}.

If we drop the axisymmetry assumption on $\phi$, we have to take into account additional error terms in the energy estimates; most notably, extra error terms arise that involve the non-vanishing \emph{torsion} of the double-null foliation, denoted by $\zeta$. The torsion can be expressed as a commutator,
\begin{equation*}
\zeta=\frac{1}{4}\Omega^{-2}[L,\underline{L}],
\end{equation*}
where $L$ and $\underline{L}$ are vector fields that are tangent to null generators of the outgoing and ingoing null hypersurfaces, respectively. See Section~ \ref{sec:estimatesmetricconn}. 

In the $a=0$ case, $L$ and $\underline{L}$ are coordinate vector fields, so they commute, and $\zeta$ vanishes everywhere. If $a\neq 0$, $\zeta$ does not vanish. It turns out, however, that axisymmetric $\phi$ still satisfy $\zeta(\phi)=0$ if $a\neq 0$, so the error terms involving $\zeta$ do not form an obstruction for axisymmetric $\phi$.

In the case of non-axisymmetric $\phi$, we can estimate the error terms involving $\zeta$ by invoking the \emph{Hawking vector field}, which we denote by $H$. In Eddington--Finkelstein-type $(u,v)$ coordinates, we can express
\begin{equation*}
H=\frac{1}{2}(\partial_u+\partial_v).
\end{equation*}

The vector field $H$ is Killing and null along $\mathcal{H}^+$ and $\mathcal{CH}^+$. Moreover, it extends as a timelike vector field near the horizons, if $|a|<a_c<M$; in fact, it is precisely the requirement of a timelike $H$ near the horizons that determines $a_c$. See Section~ \ref{sec:hawkvf} for more details. The timelike character of $H$ implies that the error terms appearing in energy estimates with respect to the weighted vector field
\begin{equation*}
Y_p=|u|^pH,
\end{equation*}
with $p\geq 0$, have a good sign. We can still employ $N_{p,q}$ in suitable neighbourhoods of the $\mathcal{H}^+$ and $\mathcal{CH}^+$ that have a finite spacetime volume, but we use $Y_p$ in their complement in ${\mathcal{D}_{u_0,v_0}}$. This method gives rise to an $\epsilon$-loss in the exponents of the $u$ and $v$ weights that appear in the energies, compared to the case of axisymmetric $\phi$, which prevents us from directly inferring Theorem \ref{thm:H1boundv2}. See Section~ \ref{sec:energyestimatesslowlyrot}.

\subsubsection{Part 2: Commutation vector fields and pointwise estimates  (Theorem \ref{thm:linftyboundv1} and \ref{thm:linftyboundv2})}
We subsequently use the uniformly bounded weighted $L^2$ norms from Part 1 to obtain a uniform bound for the $L^{\infty}$ norm of $\phi$ everywhere in the interior and to prove continuous extendibility across $\mathcal{CH}^+$. For this purpose, we apply standard Sobolev inequalities on the spheres $S^2_{u,v}$ corresponding to the double-null foliation, i.e.\ we can estimate
\begin{equation*}
||\phi||_{L^{\infty}(S^2_{u,v})}\leq C\sum_{|k|\leq 2}||\snabla^k\phi||_{L^2(S^2_{u,v})},
\end{equation*}
where $\snabla$ denotes the covariant derivative restricted to $S^2_{u,v}$. Moreover, we apply the fundamental theorem of calculus along the null generators of ingoing null hypersurfaces, together with a Cauchy--Schwarz inequality, to arrive at the following estimate:
\begin{equation}
\label{est:fundthmcalcCS}
\int_{S^2_{u,v}}\phi^2\,d\mu_{\slashed{g}}\leq \int_{S^2_{-\infty,v}}\phi^2\,d\mu_{\slashed{g}}+\int_{-\infty}^u|u'|^{-p}\,du'\int_{-\infty}^u\int_{S^2_{u',v}}|u'|^p(\underline{L}\phi)^2\,d\mu_{\slashed{g}}\,du',
\end{equation}
with $p>1$. The second term on the right-hand side of the inequality can be controlled by a weighted energy along an ingoing null hypersurface. 

In order to estimate $||\snabla^k\phi||_{L^2(S^2_{u,v})}$ with $k\geq 1$, we also need to consider appropriately weighted energies for angular derivatives of $\phi$. Replacing $\phi$ by $\snabla^k \phi$ (or $\partial^k_{\vartheta^A}\phi$, where $\vartheta^A$, with $A=1,2$, are coordinates on the spheres $S^2_{u,v}$) in the estimates from Part 1 results in error terms that cannot be controlled using the methods mentioned in Part 1.

Obtaining estimates for angular derivatives of $\phi$ in $L^2(S^2_{u,v})$ turns out not to be a problem in extremal Reissner--Nordstr\"om, as the spacetime is spherically symmetric, which means that the angular momentum operators, Killing vector fields $O_i$ corresponding to the isometries of spherical symmetry, where $i=1,2,3$, control all derivatives tangential to the spheres of the double-null foliation; see also Section~ 2.1 of \cite{gajic2015} for explicit expressions of $O_i$ with respect to spherical polar coordinates. Since the vector fields $O_i$ are Killing, they commute with the operator $\square_g$, so the functions $O_i(\phi)$ are also solutions to (\ref{eq:waveqkerr}). Any energy estimate for $\phi$ therefore automatically holds for $O_i(\phi)$. 

In extremal Kerr--Newman with $a\neq 0$, however, the only angular momentum operator that remains a Killing vector field is $\Phi$, the generator of rotations about the axis of symmetry. Fortuitously, there exists a \emph{second-order} operator $Q$, the \emph{Carter operator}, which also commutes with $\square_g$. This operator is closely related to the conserved \emph{Carter constant}; see \cite{Carter1977}. See also Andersson--Blue \cite{andblue1}, for example, for more details on the Carter constant and operator, and for applications of the commutation property of $Q$.

The operator $Q$, together with the vector fields $\Phi$ and $T$, the Killing vector field corresponding to time-translation symmetry, control the derivatives of $\phi$ that are tangent to the spheres of the Boyer--Lindquist foliation of Kerr-Newman. To obtain control over derivatives tangent to the spheres $S^2_{u,v}$, (which do \uline{not} coincide with the Boyer--Lindquist spheres if $a\neq 0$), we need to additionally commute $\square_g$ with $L$ and $\underline{L}$. 

In contrast with the error terms arising from commuting $\square_g$ with $\snabla$, or $\partial_{\vartheta^A}$, the error terms corresponding to a commutation with $L$ and $\underline{L}$ \emph{can} be controlled via the methods of Part 1 by using profusely the Killing property of $Q$, $\Phi$ and $T$; see the estimates in Section~ \ref{sec:commutatorestimates}. As a result, we are able to prove Theorem \ref{thm:linftyboundv1} and \ref{thm:linftyboundv2} if we restrict to extremal Kerr ($|a|=M$); see Section~ \ref{sec:uniformboundednessphi} and \ref{sec:extendibilityphiC0}.

\subsubsection{Part 3: Decay estimates (Theorem \ref{thm:c11boundphiv1}, \ref{thm:H1boundv2} and \ref{thm:c11boundphiv2})}
In the final step, we consider the difference function $\psi=\phi-\phi|_{\mathcal{H}^+}$, such that $\psi$ vanishes along $\mathcal{H}^+$. The function $\psi$ has the advantage that it can be shown to decay uniformly in $u$. By treating the wave equation as a transport equation for $L\phi$ along ingoing null generators, we can use the $u$-decay of $\psi$ to obtain $v$-decay of $||L\phi||_{L^2(S^2_{u,v})}$ with the rate $v^{-2+\epsilon}$ , for any $\epsilon>0$. If $\phi$ is axisymmetric, we can in fact improve this decay rate to $v^{-2}\log(v)$. By commuting further with $L$ and $\underline{L}$ and applying standard Sobolev inequalities on $S^2_{u,v}$, this allows one to obtain \emph{pointwise} decay for $|L\phi|$ with the rates $v^{-2+\epsilon}$ and $v^{-2}\log(v)$, respectively.

The outgoing derivative corresponding to double-null coordinates that cover the region \emph{beyond} $\mathcal{CH}^+$ in the maximal analytic extension of extremal Kerr--Newman, denoted by $\partial_{\widetilde{V}}$, is related to $L$ as follows:
\begin{equation*}
|\partial_{\widetilde{V}}\phi|\sim v^2 |L\phi|.
\end{equation*}
Since we cannot remove the $\epsilon$ in the decay rate $v^{-2+\epsilon}$ of $|L\phi|$, we are unable to infer boundedness of $\partial_{\widetilde{V}}\phi$ at $\mathcal{CH}^+$ or $C^1$-extendibility of $\phi$ at $\mathcal{CH}^+$. We \emph{can} nevertheless infer that $\phi$ is extendible as a $C^{0,\alpha}$ function beyond $\mathcal{CH}^+$, for any $\alpha<1$, if the initial data along $\uline{H}_{v_0}\cup \mathcal{H}^+$ are suitably regular and decaying, thereby proving Theorem \ref{thm:c11boundphiv1} and Theorem \ref{thm:c11boundphiv2}; see Section~ \ref{sec:decayLphi}.

Moreover, we can integrate the $v$-decaying $L^2(S^2_{u,v})$ norm of $L\phi$ in the $v$-direction, in slowly rotating extremal Kerr--Newman, to obtain boundedness of $\int_{H_u}v^2(L\phi)^2$ and also $\int_{\underline{H}_v} v^2\Omega^2|\snabla\phi|^2$. In this way we get rid of some of the $\epsilon$-loss in the weights that was present in the energy estimates of Part 1 and arose from the obstruction of $\zeta$ to the energy estimates for $\phi$ without the axisymmetry assumption. This improvement comes at the expense of requiring decay of higher order derivatives in the initial data, compared to the estimates in Part 1. See Section~ \ref{sec:decayLphi} for more details. In particular, we can infer Theorem \ref{thm:H1boundv2}.
 
\subsection{Outline}
In Section~ \ref{sec:geometry} we introduce some notation and state estimates relating to the double-null foliation of the interior of extremal Kerr--Newman (Part 0 of Section~ \ref{sec:mainsteps}) that are relevant in the rest of the paper. We state the theorems that are proved in the paper in Section~ \ref{sec:mainresults}. We prove energy estimates for axisymmetric solutions $\phi$ to (\ref{eq:waveqkerr}) in extremal Kerr--Newman in Section~ \ref{sec:energyestimatesaxisymm}. Subsequently, we prove energy estimates in slowly rotating extremal Kerr--Newman in Section~ \ref{sec:energyestimatesslowlyrot}, completing Part 1 of Section~ \ref{sec:mainsteps}. In Section~ \ref{sec:higherorderenergyestimates}, we commute with $L$ and $\underline{L}$ to arrive at energy estimates for higher-order derivatives of $\phi$. Finally, we use the higher-order energy estimates to prove pointwise estimates of $\phi$ (Part 2 of Section~ \ref{sec:mainsteps}). Moreover, we obtain pointwise decay in $v$ of $L\phi$ in Section~ \ref{sec:poinwiseestimates} by making use of higher-order energy estimates, completing Part 3 of Section~ \ref{sec:mainsteps}.

\subsection{Acknowledgments}
I thank Mihalis Dafermos for his guidance and support throughout the completion of this project and Jonathan Luk and Harvey Reall for a multitude of discussions and insights. This work was supported by the European Research Council grant no. ERC-2011-StG 279363-HiDGR.
\section{The geometry of extremal Kerr--Newman}
\label{sec:geometry}
We will first introduce the extremal Kerr--Newman spacetimes in Boyer--Lindquist and Kerr-star coordinates and subsequently present more convenient double-null coordinates, by foliating the spacetime with suitable ingoing and outgoing null hypersurface, covering both the exterior and interior regions of extremal Kerr--Newman. Section \ref{sec:estimatesmetricomp} and \ref{sec:estimatesmetricconn} are based on a more elaborate discussion on double-null foliations of Kerr--New\-man that can be found in \cite{Gajic2015b}.

\subsection{Boyer--Lindquist and Kerr-star coordinates}
Fix the mass parameter $M>0$ and the rotation parameter $a\in \R$, such that $|a|\leq M$, and let moreover the charge parameter $e$ satisfy $e^2=M^2-a^2$. 

We define the \emph{exterior region of extremal Kerr--Newman} as a manifold $\mathcal{M}_{\textnormal{ext}}$, together with a metric $g$, where $\mathcal{M}_{\textnormal{ext}}=\R\times (M,\infty)\times \s^2$ can be equipped with the Boyer--Lindquist coordinate chart $(t,r,\theta,\varphi)$, with $t\in \R$, $r\in (M,\infty)$, $\theta \in(0,\pi)$ and $\varphi \in (0,2\pi)$. In these coordinates, the metric $g$ in $\mathcal{M}_{\rm ext}$ is given by 
\begin{equation}
\label{eq:metricbl}
g=-\left(1-\frac{2Mr}{\rho^2}\right)dt^2+\frac{\rho^2}{\Delta}dr^2+\rho^2 d\theta^2+R^2\sin^2\theta d\varphi^2-2\frac{(2Mr-e^2)a\sin^2\theta}{\rho^2}dtd\varphi.
\end{equation}
Here,
\begin{align*}
\Delta&:=r^2-2Mr+a^2+e^2=(M-r)^2,\\
\rho^2&:=r^2+a^2\cos^2\theta,\\
R^2&:=r^2+a^2+a^2\frac{(2Mr-e^2)\sin^2\theta}{\rho^2}.
\end{align*}

We define the \emph{interior region of extremal Kerr--Newman} as the manifold \\$\mathcal{M}_{\textnormal{int}}=\R\times (0,M)\times \s^2$ equipped with a metric that we also denote by $g$, which can similarly be covered by \emph{Boyer--Lindquist coordinates} $(t,r,\theta,\varphi)$, where now $t\in \R$, $r\in (0,M)$, $\theta \in(0,\pi)$ and $\varphi\in (0,2\pi)$. The components of the metric $g$ in $\mathcal{M}_{\rm int}$ with respect to Boyer--Lindquist coordinates are also given by (\ref{eq:metricbl}).

We define the ingoing \emph{Kerr-star coordinates} $(t_{KS})_*$ and $(\varphi_{KS})_*$ on $\mathcal{M}_{\textnormal{ext}}$ or $\mathcal{M}_{\rm int}$ in the following way:
\begin{align}
\label{eq:deftstar}
(t_{KS})_*(t,r)=\:&t+(r_{KS})_*(r),\\
\label{eq:defvarphistar}
(\varphi_{KS})_*(r,\varphi)=\:&\varphi+\int_{r_0}^r \frac{a}{\Delta(r')}\,dr',
\end{align}
where
\begin{equation}
\label{eq:defrefstar}
(r_{KS})_*(r)=\int_{r_0}^r \frac{{r'}^2+a^2}{\Delta(r')}\,dr',
\end{equation}
and $r_0>0$ is a constant.

We can change from Boyer--Lindquist coordinates to ingoing Kerr-star coordinates
\begin{equation*}
((t_{KS})_*,r,\theta,(\varphi_{KS})_*)
\end{equation*}
to show that the spacetime $\mathcal{M}_{\textnormal{int}}$ can be smoothly patched to the spacetime $\mathcal{M}_{\textnormal{ext}}$, such that $\mathcal{M}_{\rm int}$ embeds as the region $\{0<r<M\}$ of the patched spacetime, and $\mathcal{M}_{\rm ext}$ embeds as the region $\{r>M\}$. The boundary of $\mathcal{M}_{\textnormal{ext}}$ and $\mathcal{M}_{\rm int}$ inside the patched spacetime is given by the level set $\{r=M\}$. This boundary is called the \emph{event horizon} and is denoted by $\mathcal{H}^+$. It lies in the causal past of $\mathcal{M}_{\rm int}$; see also Figure \ref{fig:fullspacetime2}. We denote the patched manifold by $\mathcal{M}:=\mathcal{M}_{\textnormal{int}}\cup \mathcal{M}_{\textnormal{ext}}\cup \mathcal{H}^+$. We can write $$\mathcal{M}=\R\times (0,\infty)\times \s^2,$$ where $(t_{KS})_*\in \R$, $r\in (0,\infty)$, $\theta\in (0,\pi)$ and $(\varphi_{KS})_* \in ((\varphi_{KS})_*(r,0),(\varphi_{KS})_*(r,0)+2\pi)$.

\begin{figure}[h!]
\begin{center}
\includegraphics[width=2.3in]{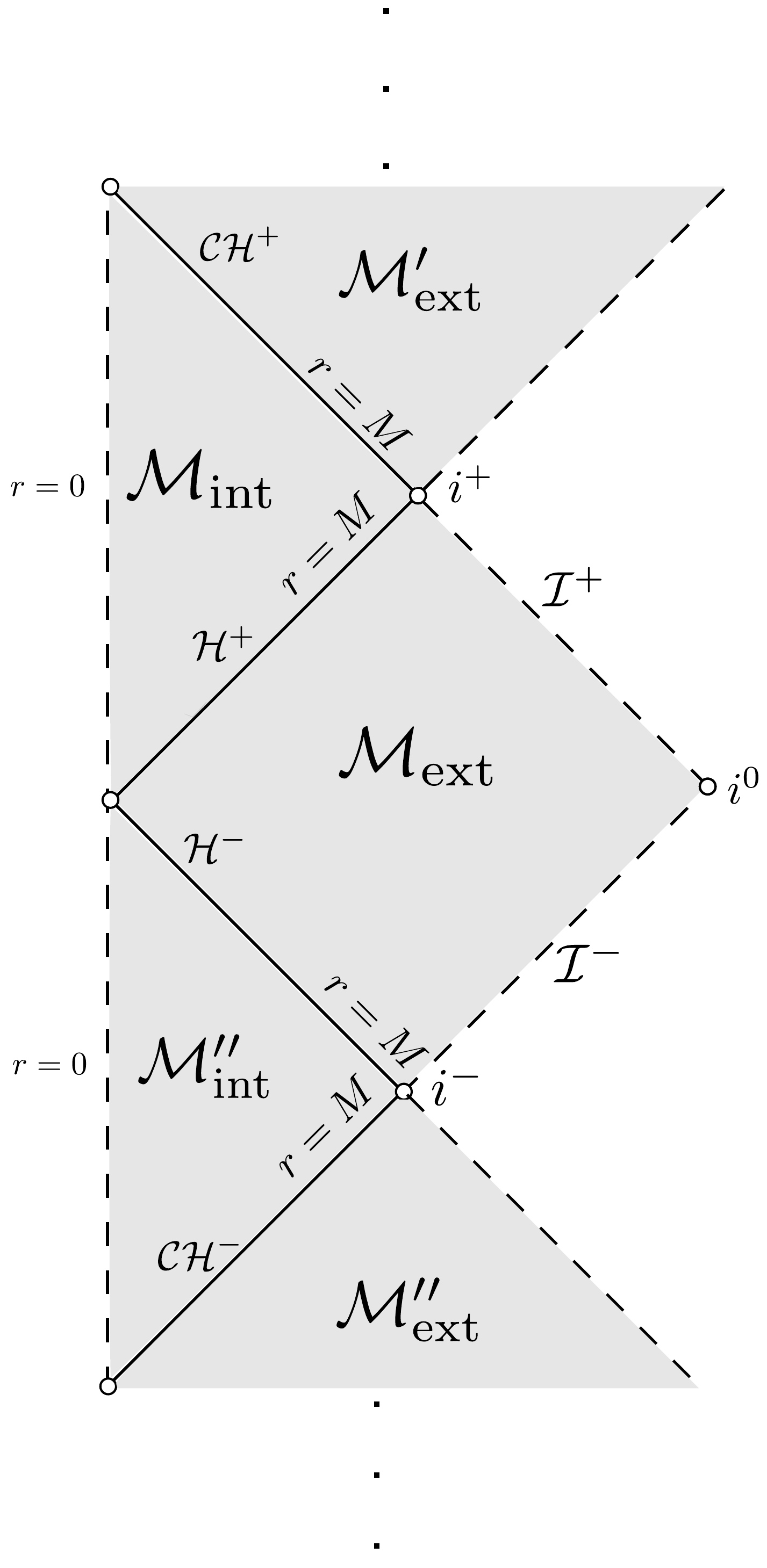}
\caption{\label{fig:fullspacetime2} The Penrose diagram of the maximally analytic extension of extremal Kerr--Newman}
\end{center}
\end{figure}

We can similarly introduce outgoing Kerr-star coordinates $((\tilde{t}_{KS})_{*},r,\theta,(\varphi_{KS})_*)$, where
\begin{equation*}
(\tilde{t}_{KS})_{*}=(t_{KS})_*-2(r_{KS})_*.
\end{equation*}
In these coordinates it is easy to see that $\mathcal{M}_{\textnormal{int}}$ can be smoothly embedded into a bigger spacetime $\mathcal{M}'$, by patching $\mathcal{M}_{\rm int}$ to a spacetime $\mathcal{M}_{\rm ext}'$ that is isometric to $\mathcal{M}_{\rm ext}$. The manifold $\mathcal{M}_{\rm in}$ is embedded in the patched spacetime as the region $\{0<r<M\}$ and $\mathcal{M}'_{\rm ext}$ is embedded as the region $\{r>M\}$. The corresponding boundary $\{r=M\}$ of $\mathcal{M}_{\rm int}$ and $\mathcal{M}'_{\rm ext}$ in the patched spacetime lies in the causal future of $\mathcal{M}_{\rm int}$ and is denoted by $\mathcal{CH}^+$. We refer to this boundary as the \emph{inner horizon}. We can write $\mathcal{M}'=\mathcal{M}_{\textnormal{int}}\cup \mathcal{M}'_{\textnormal{ext}}\cup \mathcal{CH}^+$, or $$\mathcal{M}'=\R\times (0,\infty)\times \s^2,$$ where $(\tilde{t}_{KS})_*\in \R$, $r\in (0,\infty)$, $\theta\in (0,\pi)$ and $(\varphi_{KS})_* \in ((\varphi_{KS})_*(r,0),(\varphi_{KS})_*(r,0)+2\pi)$.

As $\mathcal{M}_{\rm ext}'$ is isometric to $\mathcal{M}_{\rm ext}$, we can repeat the above procedure ad infinitum to extend the manifold $\mathcal{M}\cup \mathcal{M}'$ further and form an infinite sequence of patched manifolds containing regions isometric to either $\mathcal{M}_{\rm ext}$ or $\mathcal{M}_{\rm int}$, glued across horizons. The resulting spacetime $\widetilde{\mathcal{M}}$ is called \emph{maximal analytically extended extremal Kerr--Newman}, and it is depicted in Figure \ref{fig:fullspacetime2}. For the remainder of this paper we will, however, mainly direct our attention to the subset $\mathcal{M}\cup \mathcal{CH}^+$.

\subsection{Double-null coordinates}
\label{sec:estimatesmetricomp}
In the sections below, we will consider energy fluxes along ingoing and outgoing null hypersurfaces in $\mathcal{M}$. It is therefore more natural to work in double-null coordinates in $\mathcal{M}$ rather than Kerr-star coordinates. 

We first consider $\mathcal{M}_{\rm ext}$, covered by Kerr-star coordinates. If we can construct a \emph{tortoise function} $r_*$ to be of the form $r_*(r,\theta)$, such that the functions
\begin{align*}
2v=\:&t+r_*,\\
2u=\:&t-r_*,
\end{align*}
satisfy the eikonal equations
\begin{equation*}
\label{eq:eikonal}
g^{\alpha\beta}\partial_{\alpha}u\partial_{\beta}u=0\:\:\textnormal{and}\:\:g^{\alpha\beta}\partial_{\alpha}v\partial_{\beta}v=0,
\end{equation*}
then the level sets $\{u=\textnormal{constant}\}$ and $\{v=\textnormal{constant}\}$ are null hypersurfaces. We will follow a construction of $r_*$ that was introduced by Pretorius--Israel in \cite{Pretorius1998} and allows for suitable, double-null coordinates. We will assume that $a\neq 0$. In the $a=0$ case we consider Eddington--Finkelstein double-null coordinates; see Section 2 of \cite{gajic2015}. 

In \cite{Gajic2015b} the construction of $r_*$ from \cite{Pretorius1998} is used to extend the local double-null coordinates in $\mathcal{M}_{\rm ext}$ to a smooth, \emph{global} Eddington--Finkelstein-type double-null foliation of $\mathcal{M}_{\textnormal{ext}}$, such that the 2-surfaces 
\begin{equation*}
S^2_{u',v'}=\{u=u'\}\cap \{v=v'\}
\end{equation*}
are diffeomorphic to 2-spheres and we moreover obtain quantitative bounds on the metric components in double-null coordinates.

The metric $g$ on $\mathcal{M}_{\rm ext}$ can then be written as
\begin{equation}
\label{eq:metrickerrdnull2}
g=-4\Omega^2dudv+\sg_{AB}(d\vartheta^A-b^Adv)(d\vartheta^B-b^Bdv),
\end{equation}
where $2u=t-r_*$, $2v=t+r_*$, $\vartheta^1=\theta_*$ and $\vartheta^2={\varphi_*}$, with $u,v\in \R$, $\theta_*\in (0,\pi)$ and $\varphi_*\in (0,2\pi)$. The metric components in (\ref{eq:metrickerrdnull2}) are given by
\begin{align*}
\Omega^2=\:&\Delta R^{-2},\\
b^{\theta_*}=\:&0,\quad b^{{\varphi_*}}=\frac{4M a r}{\rho^2 R^2}-\frac{4M a r}{\rho^2 R^2}\bigg|_{r=M},\\
g_{\theta_*\theta_*}=\:&f_1^2f_2^2(\partial_{\theta_*}F)^2 R^{-2}+(\partial_{\theta_*}f_4)^2R^2\sin^2\theta,\\
g_{\theta_* {\varphi_*}}=\:&(\partial_{\theta_*}f_4)R^2\sin^2\theta,\\
g_{{\varphi_*}{\varphi_*}}=\:&R^2 \sin^2 \theta;
\end{align*}
they are of the same form as the metric components with respect to the double-null coordinates considered in \cite{Dafermos2013b}. We have that
\begin{equation*}
F(r(u,v,\theta_*),\theta(u,v,\theta_*),\theta_*)=\int_{0}^{\theta} f_1^{-1}(\theta',\theta_*)\,d\theta'+\int_r^{\infty} f_2^{-1}(r',\theta_*)\,dr'+\frac{1}{\sin \theta_*\cos\theta_*}\frac{df_3}{d\theta_*}(\theta_*),
\end{equation*}
with
\begin{align*}
f_1(\theta_*,\theta(u,v,\theta_*))=\:&\sqrt{a^2(\sin^2\theta_*-\sin^2\theta)},\\
f_2(\theta_*,r(u,v,\theta_*))=\:&\sqrt{(r^2+a^2)^2-a^2\Delta \sin^2\theta_*},\\
f_3(\theta_*)=\:&-2a^{-1}\int_0^{\theta_*}\sqrt{\sin^2\theta_*-\sin^2 x}\,dx,
\end{align*}
and $f_4$ a function satisfying
\begin{equation*}
\partial_v f_4(u,v,\theta_*)=-\partial_u f_4(u,v,\theta_*)=-\frac{2Mar}{\rho^2 R^2}.
\end{equation*}
In the above expressions we are treating $r$ and $\theta$ as functions of $(u,v,\theta_*)$, via implicit relations.

The angular coordinate $\varphi_*$ is related to the Boyer--Lindquist angular coordinate $\varphi$ in the following way:
\begin{equation*}
\varphi_*=\varphi-f_4(r_*,\theta_*)-\frac{4M a r}{\rho^2R^2}\Big|_{r=M}v,
\end{equation*}
where $f_4$ can be chosen such that $\varphi_*\in (0,2\pi)$.

We can moreover express:
\begin{equation}
\label{eq:detslashedg}
\det \slashed{g}=f_1^2f_2^2(\partial_{\theta_*}F)^2\sin^2\theta,
\end{equation}
which does not depend on the function $f_4$.

In the formal limit $a\to 0$ the double-null coordinates $(u,v,\theta_*,\varphi_*)$ become simply Ed\-ding\-ton--Fin\-kel\-stein double-null coordinates on extremal Reissner--Nordstr\"om.

As we approach $\mathcal{H}^+$, the coordinate $u$ goes to infinity, whereas $v$ remains finite. We can, however, introduce a rescaled ingoing null coordinate in order to further extend the double-null coordinates and additionally cover the region $(\mathcal{M}_{\textnormal{int}}\cup\mathcal{H}^+)\cap\{r>\frac{e^2}{2M}\}\subset \mathcal{M}$.\footnote{If $e\neq 0$ (i.e.\ $0<|a|<M$), we cannot cover the entire region $\mathcal{M}_{\rm int}$ by the double-null coordinates introduced in this section, as the corresponding null generators form caustics in the region $\{0<r<\frac{e^2}{2M}\}$.}

Fix $v_0\in \R$ and define the function $U: \R \to (0,\infty)$ by $U(u)=M-r(u,v_0,\theta_*=\frac{\pi}{2})$. We can interpret $U$ as a smooth, negative function $U:\mathcal{M}_{\rm ext}\to (0,\infty)$. We have that
\begin{equation*}
\frac{du}{dU}=-(\partial_u r)^{-1}|_{v=v_0,\,\theta_*=\frac{\pi}{2}}=(\partial_{r_*} r)^{-1}|_{v=v_0,\,\theta_*=\frac{\pi}{2}}=r^{-2}\Omega^{-2}|_{v=v_0,\,\theta_*=\frac{\pi}{2}}
\end{equation*}

The function $U: \mathcal{M}_{\rm ext}\to\R$ extends smoothly with respect to Kerr-star coordinates to the bigger manifold $\mathcal{M}\cap \{r> \frac{e^2}{2M}\}$, such that $U=0$ along $\mathcal{H}^+$ and $U>0$ in $\mathcal{M}_{\rm int}\cap \{r> \frac{e^2}{2M}\}$.

In \cite{Gajic2015b} it is shown that the metric is well-defined and non-degenerate with respect to the chart $(U,v,\theta_*,\varphi_*)$ on $\mathcal{M}\cap \{r> \frac{e^2}{2M}\}$:
\begin{equation*}
g=-4\frac{\Omega^2(u,v,\theta_*)}{(r^{2}\Omega^{2})\left(u,v=v_0,\theta_*=\frac{\pi}{2}\right)}dUdv+\sg_{AB}(d\vartheta^A-b^Adv)(d\vartheta^B-b^Bdv).
\end{equation*}
Consequently, $(U,v,\theta_*,\varphi_*)$ defines a smooth coordinate chart on $\mathcal{M}\cap \{r> \frac{e^2}{2M}\}$.

In $\mathcal{M}_{\rm int}\cap \{r>\frac{e^2}{2M}\}$, we can moreover revert back to a function $u:\mathcal{M}_{\rm int}\cap~\{r>\frac{e^2}{2M}\}\to \R$, defined by $u=t-v$. Consequently, we can also cover $\mathcal{M}_{\textnormal{int}}\cap~\{r>\frac{e^2}{2M}\}$ by Ed\-ding\-ton--Finkelstein-type double-null coordinates $(u,v,\theta_*,\varphi_*)$, such that the metric is given by (\ref{eq:metrickerrdnull2}), where now $r\in (\frac{e^2}{2M},M)$.

Consider the metric (\ref{eq:metrickerrdnull2}) in $\mathcal{M}_{\textnormal{int}}\cap\{r>\frac{e^2}{2M}\}$. We will shift the angular coordinate $\varphi$ by introducing a function $f_5(r_*,\theta_*)$, such that
\begin{equation*}
\partial_v f_5=-\partial_u f_5=\frac{2Mar}{\rho^2 R^2}.
\end{equation*}
Note that there is still freedom left in the choice of $\partial_{\theta_*}f_5$. Define the new angular coordinate $\widetilde{\varphi}_*\in (0,2\pi)$ as follows:
\begin{equation*}
\widetilde{\varphi}_*=\varphi_*+f_4(r_*,\theta_*)-f_5(r_*,\theta_*)+\frac{4M a r}{\rho^2 R^2}\bigg|_{r=M}r_*.
\end{equation*}
Then the metric can be written in $(\widetilde{u},\tilde{v},\widetilde{\theta}_*,\widetilde{\varphi}_*)$ coordinates, 
\begin{equation}
\label{eq:metrickerrdnull3}
g=-4\Omega^2(\widetilde{u},\tilde{v})d\widetilde{u}d\tilde{v}+\widetilde{\sg}_{AB}(d\widetilde{\vartheta}^A-\tilde{b}^Ad\widetilde{u})(d\widetilde{\vartheta}^B-\tilde{b}^Bd\widetilde{u}),
\end{equation}
where $2\widetilde{u}=2u=t-r_*$, $2\tilde{v}=2v=t+r_*$, $\widetilde{\vartheta}^1=\widetilde{\theta}_*$, $\widetilde{\vartheta}^2=\widetilde{\varphi}_*$ and moreover,
\begin{align*}
\Omega^2(\widetilde{u},\tilde{v})=\:&\Omega^2(r_*)=\Delta R^{-2},\\
\tilde{b}^{\widetilde{\theta}_*}=\:&0,\quad \tilde{b}^{\widetilde{\varphi}_*}=\frac{4M a r}{\rho^2 R^2}-\frac{4M a r}{\rho^2 R^2}\bigg|_{r=M},\\
\widetilde{\sg}_{\widetilde{\theta}_*\widetilde{\theta}_*}=\:&f_1^2f_2^2(\partial_{\theta_*}F)^2 R^{-2}+(\partial_{\theta_*}f_5)^2R^2\sin^2\theta,\\
\widetilde{\sg}_{\widetilde{\theta}_* \widetilde{\varphi}_*}=\:&(\partial_{\theta_*}f_5)R^2\sin^2\theta,\\
\widetilde{\sg}_{\widetilde{\varphi}_*\widetilde{\varphi}_*}=\:&R^2 \sin^2 \theta.
\end{align*}
To distinguish these coordinates from the previous double-null coordinates, we have denoted them with tildes $(\widetilde{u},\tilde{v},\widetilde{\vartheta})$, where $\tilde{v}=v$, $\widetilde{u}=u$ and $\widetilde{\theta}_*=\theta_*$.

Now, fix $u_0\in \R$ and define the function $\widetilde{V}:\mathcal{M}_{\rm int}\cap \{r>\frac{e^2}{2M}\}$ by $\widetilde{V}=(r-M)(u_0,v,\theta_*=\frac{\pi}{2})$. Then
\begin{equation*}
\frac{d\tilde{v}}{d\widetilde{V}}=(\partial_v r)^{-1}|_{u=u_0,\theta_*=\frac{\pi}{2}}=(\partial_{r_*} r)^{-1}|_{u=u_0,\,\theta_*=\frac{\pi}{2}}=r^{-2}\Omega^{-2}|_{u=u_0,\,\theta_*=\frac{\pi}{2}}
\end{equation*}

We can extend $\widetilde{V}$ as a smooth function to the bigger manifold $\mathcal{M}'\cap \{r> \frac{e^2}{2M}\}$, such that $\widetilde{V}=0$ along $\mathcal{CH}^+$ and $\widetilde{V}>0$ in $\mathcal{M}'_{\rm ext}$.

In \cite{Gajic2015b} it is shown that the metric is well-defined and non-degenerate with respect to the chart $(\widetilde{u},\widetilde{V},\widetilde{\theta}_*,\widetilde{\varphi}_*)$ on $\mathcal{M}'\cap \{r>\frac{e^2}{2M}\}$:
\begin{equation*}
g=-4\frac{\Omega^2(\widetilde{u},\tilde{v},\widetilde{\theta}_*)}{(r^{2}\Omega^{2})\left(\widetilde{u}=u_0,\tilde{v},\widetilde{\theta}_*=\frac{\pi}{2}\right)}d\widetilde{u}d\widetilde{V}+\sg_{AB}(d\widetilde{\vartheta}^A-\tilde{b}^Ad\widetilde{u})(d\widetilde{\vartheta}^B-\tilde{b}^Bd\widetilde{u}).
\end{equation*}

We will use the notation $(u,v,\theta_*,\varphi_*)=(-\infty,v_0,\theta_*,\varphi_*)$ and $(u,v,\theta_*,\varphi_*)=(u_0,\infty,\theta_*,\varphi_*)$, with $u_0,v_0<\infty$, for points on $\mathcal{H}^+$ and $\mathcal{CH}^+$, respectively, for the sake of convenience. These points lie in the domain of either the $(U,v)$ or $(\widetilde{u},\widetilde{V})$ double-null coordinates.

In $\mathcal{M}_{\textnormal{int}}\cap\left\{r> \frac{e^2}{2M}\right\}\cup \mathcal{H}^+\cup \mathcal{CH}^+$ we restrict to the region
\begin{equation*}
\begin{split}
{\mathcal{D}_{u_0,v_0}}=&\:\Bigg\{x\in\mathcal{M}_{\textnormal{int}}\cap\left\{r> \frac{e^2}{2M}\right\}\cup \mathcal{H}^+\cup \mathcal{CH}^+\::\: U(x)\in[0,U(u_0)],\:\widetilde{V}(x)\in[\widetilde{V}(v_0),0],\\
&\:(U(x),\widetilde{V}(x))\neq (0,0)\Bigg\}.
\end{split}
\end{equation*}

Let $v'\in[v_0,\infty)$ and $u'=[-\infty,u_0]$. We will consider the following null hypersurfaces:
\begin{align*}
\uline{H}_{v'}&:=\{x\in \mathcal{M}\::\:U(x)\in [0,U(u_0)],\: v(x)=v'\},\\
H_{u'}&:=\{x\in \mathcal{M}\::\:U(x)=U(u'),\: v(x)\in[v_0,\infty)\},
\end{align*}
and we refer to the hypersurfaces $\uline{H}_{v'}$ and $H_{u'}$ as ingoing and outgoing null hypersurfaces, respectively.

\textbf{We will fix $|u_0|$ and $v_0$ to be suitable large such that $$\uline{H}_{v_0}\cup H_{u_0}\subset \left(\mathcal{M}_{\textnormal{int}}\cap\left\{r>\frac{e^2}{2M}\right\}\right)\cup \mathcal{H}^+.$$}

Consider the null vector fields $L$ and $\underline{L}$, which are tangent to the generators of the outgoing and ingoing null hypersurfaces respectively and satisfy $Lv=1$ and $Lu=1$.

The vector field $\underline{L}$ can be naturally expressed in the chart $(u,v,\theta_*,\varphi_*)$. Indeed,
\begin{align*}
L=\:&\partial_v+b^A\partial_{\vartheta^A},\\
\underline{L}=\:&\partial_u.
\end{align*}

The vector field $L$ can similarly be expressed in the chart $(\widetilde{u},\widetilde{V},\widetilde{\theta}_*,\widetilde{\varphi}_*)$:
\begin{align*}
L=\:&\partial_{\tilde{v}},\\
\underline{L}=\:&\partial_{\widetilde{u}}+\tilde{b}^A\partial_{\widetilde{\vartheta}_A}.
\end{align*}

\subsection{Estimates for metric components and connection coefficients in double-null coordinates}
\label{sec:estimatesmetricconn}
In this section, we will present an overview of relevant estimates for the metric components $g_{\alpha \beta}$ in Eddington--Finkelstein-type double-null coordinates, their derivatives and components (and derivatives) of the Jacobian matrix relating Eddington--Finkelstein-type double-null coordinates to Boyer--Lindquist double-null coordinates. All these estimates are obtained in \cite{Gajic2015b}.

We first define the following notation to separate out leading-order terms in $v+|u|$.
\begin{definition}
Let $f: \mathcal{M}\cap {\mathcal{D}_{u_0,v_0}}\to \R$ be a $C^0$ function. We say that $f\in \mathcal{O}((v+|u|)^{-l})$, where $u$ and $v$ are Eddington--Finkelstein-type double-null coordinates in \\$\mathcal{M}_{\textnormal{int}}\cap\{r>r_0>\frac{e^2}{2M}\}$, if there exists a constant $C=C(M,a,r_0,\Sigma)>0$, such that
\begin{equation*}
|f|(u,v,\theta_*,\varphi_*)\leq C (v+|u|)^{-l}.
\end{equation*}
\end{definition}

We obtain in \cite{Gajic2015b} the following estimates for the metric components $g_{\alpha \beta}$ in \\$\mathcal{M}_{\textnormal{int}}\cap\{r>r_0>\frac{e^2}{2M}\}$ in the Eddington--Finkelstein-type coordinates $(u,v,\theta_*,\varphi_*)$ introduced above:

\begin{theorem}[Estimates for metric components in double-null coordinates, \cite{Gajic2015b}]
\label{thm:estmetric}
Let $r_0>\frac{e^2}{2M}$ and consider $\mathcal{M}_{\textnormal{int}}\cap\{r>r_0\}$ covered by the double-null coordinates $(u,v,\theta_*,\varphi_*)$ introduced in Section \ref{sec:estimatesmetricomp}. 
\begin{itemize}
\item[(i)]
There exist constants $c=c(r_0,a,M)>0$ and $C=C(N,r_0,a,M)>0$, such that for all $n \in \Z$, with $n\leq N$, where $N\in \N_0$,
\begin{align*}
|\partial_{r_*}^n\slashed{g}_{\theta_*\theta_*}|\leq \:& C(v-u)^{-2n},\\
|\partial_{r_*}^n\slashed{g}_{\theta_*\varphi_*}|\leq \:& C(v-u)^{-2n},\\
|\partial_{r_*}^n\slashed{g}_{\varphi_*\varphi_*}|\leq \:& C\sin^2\theta(v-u)^{-2n},\\
|\partial_{\theta_*}\slashed{g}_{\theta_*\theta_*}|\leq \:&C,\\
|\partial_{\theta_*}\slashed{g}_{\theta_*{\varphi_*}}|\leq \:&C\sin^2\theta,\\
|\partial_{\theta_*}\slashed{g}_{{\varphi_*}{\varphi_*}}|\leq \:& C\sin\theta,\\
c\sin^2\theta\leq \textnormal{det}\,\slashed{g}\leq \:& C\sin^2\theta.
\end{align*}
\item[(ii)]
We can expand
\begin{align*}
v+|u|=\:&r_*(r,\theta)=\frac{a^2+M^2}{M-r}+2M\log(M-r)+\mathcal{O}(1),\\
\Omega^{-2}=\:&\frac{1}{M^2+a^2\cos^2\theta}\left[(v+|u|)^2+4M(v+|u|)\log(M-r)\right]+\mathcal{O}(v+|u|),\\
b^{\varphi_*}=\:&\frac{4Ma}{(M^2+a^2)^2}(3M^2-a^2)(v+|u|)^{-1}+\log(v+|u|)\mathcal{O}((v+|u|)^{-2})
\end{align*}
and estimate for $n\leq N$, with $N\in \N$,
\begin{align*}
|\partial_{r_*}^{n+1}b^{\varphi_*}|\leq C(v+|u|)^{-2(n+1)},\\
|\partial_{r_*}^n\partial_{\theta_*}b^{\varphi_*}|\leq C(v+|u|)^{-2-2n}\log(v+|u|),\\
|\partial_{r*}\Omega^2|\leq C(v+|u|)^{-2},\\
|\partial_{r*}\partial_{\theta_*}\Omega^2|\leq C(v+|u|)^{-2},\\
\end{align*}
where $C=C(N,r_0,a,M)>0$.

\item[(iii)]
There exist $c=c(r_0,a,M)>0$ and $C=C(r_0,a,M)>0$ such that for all $n\in \Z$, with $n\leq N$, where $N\in \N_0$:
\begin{align}
\label{eq:estJac4}
|\partial_{r_*}\theta|\leq \:& C(v-u)^{-2}\sin \theta,\\
\label{eq:estJac5}
|\partial_{r_*}r|\leq \:& C (v-u)^{-2},\\
\label{eq:estJac1}
c(v-u)^{-2n}\leq |\partial_{r_*}^n\partial_{\theta}\theta_*|\leq \:& C(v-u)^{-2n},\\
\label{eq:estJac2}
|\partial_{r_*}^n\partial_{\theta}^2\theta_*|\leq \:& C(v-u)^{-2n},\\
\label{eq:estJac3}
|\partial_{r_*}^n\partial_{\theta}r_*|\leq \:& C\sin\theta (v-u)^{-2n}\leq C \sin \theta_*(v-u)^{-2n},\\
\label{eq:estJac6}
|\partial_{\theta}^2r_*|\leq \:& C.
\end{align}
\end{itemize}
\end{theorem}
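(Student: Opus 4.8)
This collection of estimates is established in \cite{Gajic2015b}; I sketch the plan one would follow, which amounts to a direct computation from the explicit formulas for the metric and for the profile functions $f_1,\dots,f_5,F$ recorded above.

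First I would pin down the \emph{tortoise dictionary}, since everything reduces to the behaviour of $r_*$ as $r\to M$. Writing $s=M-r$, using $\Delta=(M-r)^2=s^2$ and (up to the $\theta_*$-dependent corrections built into the Pretorius--Israel construction of Section~\ref{sec:estimatesmetricomp}) $\frac{dr_*}{dr}\sim\frac{r^2+a^2}{\Delta}=\frac{M^2+a^2}{s^2}-\frac{2M}{s}+1$, one integrates to
\[
v+|u|=r_*(r,\theta)=\frac{M^2+a^2}{M-r}+2M\log(M-r)+\mathcal{O}(1)
\]
in $\mathcal{M}_{\textnormal{int}}\cap\{r>r_0\}$ (recall $v+|u|=v-u=r_*$ there); inverting gives $M-r=(M^2+a^2)(v+|u|)^{-1}\bigl(1+\mathcal{O}((v+|u|)^{-1}\log(v+|u|))\bigr)$. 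This relation --- which is itself the first line of part (ii) --- converts every $(r,\theta)$-expression into a power of $v+|u|$, and is what the rest rests on.

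With the dictionary in hand I would obtain part (ii) by Taylor expansion about $r=M$: $\Omega^2=\Delta R^{-2}$ and $b^{\varphi_*}$ are explicit rational--trigonometric functions of $(r,\theta)$, regular for $r>0$, so expanding $R^2$ and $\rho^2$ near $r=M$, using $e^2=M^2-a^2$ (hence $2Mr-e^2\big|_{r=M}=M^2+a^2$ and $R^2\big|_{r=M}=\tfrac{(M^2+a^2)^2}{M^2+a^2\cos^2\theta}$), and substituting the dictionary yields the stated expansions. For the derivative bounds, $\partial_{r_*}=(\partial_{r_*}r)\partial_r+(\partial_{r_*}\theta)\partial_\theta$ with $\partial_{r_*}r=\frac{\Delta}{r^2+a^2}(1+\cdots)\sim(v+|u|)^{-2}$ up to constants, so each $\partial_{r_*}$ gains a factor $(v+|u|)^{-2}$; the bounds on $\partial_{r_*}^n\Omega^2$, $\partial_{r_*}^n\partial_{\theta_*}b^{\varphi_*}$, etc.\ then follow by induction on $n$, carrying along the extra $\log(v+|u|)$ from the logarithmic term in $r_*$. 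Part (i) runs the same machine on $\slashed{g}_{\theta_*\theta_*}=f_1^2f_2^2(\partial_{\theta_*}F)^2R^{-2}+(\partial_{\theta_*}f_4)^2R^2\sin^2\theta$ and its analogues, built from $R^{\pm2}$, $\sin\theta$, and the profile functions: the $(v-u)^{-2n}$ weights on $\partial_{r_*}^n$-derivatives come from the same gain and the $\partial_{\theta_*}$-bounds from direct differentiation.

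The step I expect to be the main obstacle is the two-sided bound $c\sin^2\theta\le\det\slashed{g}\le C\sin^2\theta$. From $\det\slashed{g}=f_1^2f_2^2(\partial_{\theta_*}F)^2\sin^2\theta$ one must show $f_1^2f_2^2(\partial_{\theta_*}F)^2$ is bounded above and below uniformly in $(r,\theta)$, down to $r=M$ and up to the poles $\theta\in\{0,\pi\}$. Here $f_2=\sqrt{(r^2+a^2)^2-a^2\Delta\sin^2\theta_*}$ is harmless --- comparable to a constant --- but $f_1=\sqrt{a^2(\sin^2\theta_*-\sin^2\theta)}$ \emph{degenerates} on the turning set $\{\theta=\theta_*\}$, which is exactly the set approached at $\mathcal{H}^+$ and $\mathcal{CH}^+$, so $\partial_{\theta_*}F$ must carry a compensating blow-up. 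The plan is to differentiate
\[
F=\int_0^{\theta}f_1^{-1}(\theta',\theta_*)\,d\theta'+\int_r^{\infty}f_2^{-1}(r',\theta_*)\,dr'+\frac{1}{\sin\theta_*\cos\theta_*}\frac{df_3}{d\theta_*}
\]
in $\theta_*$ and to analyse the two resulting singular integrals near their turning points --- which is really the heart of the Pretorius--Israel construction \cite{Pretorius1998} --- so as to show that $f_1\,\partial_{\theta_*}F$ stays bounded above and below; keeping the constants uniform as $\theta,\theta_*\to0,\pi$ is the remaining care. Part (iii) is then obtained along similar lines by differentiating the implicit relations defining $\theta_*=\theta_*(r,\theta)$ and $r_*=r_*(r,\theta)$: the eikonal equations give $\partial_{r_*}r\sim(v-u)^{-2}$ and $\partial_{r_*}\theta\sim(v-u)^{-2}\sin\theta$, while $\partial_\theta\theta_*=(\partial_{\theta_*}\theta)^{-1}$ with $\partial_{\theta_*}\theta\to1$ as $r\to M$ (up to $\mathcal{O}((v-u)^{-2})$ corrections) yields $c(v-u)^{-2n}\le|\partial_{r_*}^n\partial_\theta\theta_*|\le C(v-u)^{-2n}$ after $n$ applications of $\partial_{r_*}$, and similarly for the remaining lines, with $\sin\theta$ and $\sin\theta_*$ interchangeable because $\theta$ and $\theta_*$ agree to leading order near $r=M$. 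The full verification is carried out in \cite{Gajic2015b}.
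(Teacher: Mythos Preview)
Your proposal is appropriate: the paper does not prove this theorem at all but simply cites it from \cite{Gajic2015b}, so there is no ``paper's own proof'' to compare against. Your sketch of the strategy --- deriving the tortoise dictionary from integrating $dr_*/dr\sim (r^2+a^2)/\Delta$, then Taylor-expanding the explicit metric components and profile functions about $r=M$, and finally handling the non-degeneracy of $\det\slashed{g}$ via the compensating behaviour of $f_1\,\partial_{\theta_*}F$ at the turning set --- is a faithful outline of what the Pretorius--Israel construction delivers and is consistent with the formulas the paper records in Section~\ref{sec:estimatesmetricomp}.
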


We define the connection coefficients
\begin{align*}
\chi_{AB}&:=g(\nabla_{\partial_{\vartheta_A}} e_4,\partial_{\vartheta_B}),\\
\underline{\chi}_{AB}&:=g(\nabla_{\partial_{\vartheta_A}} e_3,\partial_{\vartheta_B}),\\
\omega&:=-\frac{1}{4}g(\nabla_{e_4}e_3,e_4),\\
\underline{\omega}&:=-\frac{1}{4}g(\nabla_{e_3}e_4,e_3),\\
\zeta_A&:=\frac{1}{2}g(\nabla_{\partial_{\vartheta_A}}e_4,e_3),
\end{align*}
where $A=1,2$ and $e_3=\Omega^{-1}\uline{L}$ and $e_4=\Omega^{-1}L$ are renormalised null vector fields, such that $g(e_3,e_4)=-2$. We have the following relations between connection coefficients and metric derivatives:
\begin{align*}
2\Omega\chi_{AB}=\:&L(\slashed{g}_{AB})+\partial_Ab^C\slashed{g}_{CB}+\partial_Bb^C\slashed{g}_{CA},\\
2\Omega\underline{\chi}_{AB}=\:&\underline{L}(\slashed{g}_{AB}),\\
\Omega \tr \chi=\:&\frac{L(\sqrt{\det \slashed{g}})}{\sqrt{\det \slashed{g}}},\\
\Omega \tr \underline{\chi} =\:& \frac{\underline{L}(\sqrt{\det \slashed{g}})}{\sqrt{\det \slashed{g}}},\\
4\Omega \omega=\:&\Omega^{-2}L(\Omega^2),\\
4\Omega \underline{\omega}=\:&\Omega^{-2}\underline{L}(\Omega^2),\\
\zeta^A=\:&\frac{1}{4}\Omega^{-2}[L,\underline{L}]^A=\Omega^{-2}\partial_vb^A.
\end{align*}

See Appendix \ref{sec:bulkest} for the derivations of the above identities and for further properties the connection coefficients and their expressions in terms of derivatives of $g_{\alpha \beta}$.

\begin{theorem}[Estimates for connection coeffients in double-null coordinates, \cite{Gajic2015b}]
\label{thm:estconncoef}
Let $r_0>\frac{e^2}{2M}$ and consider $\mathcal{M}_{\textnormal{int}}\cap\{r>r_0\}$.
\begin{itemize}
\item[(i)]
Let $A,B=1,2$ and $n\in \Z$, with $n\leq N$, where $N\in \N_0$. There exist a constant $C=C(N,r_0,a,M)>0$, such that
\begin{align*}
|\Omega \hat{\chi}_{AB}|\leq \:& C(v+|u|)^{-2}\log(v+|u|)|\slashed{g}_{AB}|,\\
|\Omega \hat{\underline{\chi}}_{AB}|\leq \:& C(v+|u|)^{-2}|\slashed{g}_{AB}|,\\
0<\Omega \tr \chi\leq \:& C(v+|u|)^{-2},\\
0<-\Omega \tr \underline{\chi}\leq \:& C(v+|u|)^{-2},\\
|\partial_{r_*}^n (\Omega \tr \chi)|\leq \:& C(v+|u|)^{-2-n},
\end{align*}
where we made use the following notation
\begin{align*}
\hat{\chi}_{AB}=&\:\chi_{AB}-\frac{1}{2}\slashed{g}_{AB}\tr \chi,\\
\hat{\uline{\chi}}_{AB}=&\:\uline{\chi}_{AB}-\frac{1}{2}\slashed{g}_{AB}\tr \uline{\chi},\\
\end{align*}
\item[(ii)]
Moreover, we can expand
\begin{align*}
4\Omega \omega =\:&-\frac{2}{v+|u|}+\log(v+|u|)\mathcal{O}((v+|u|)^{-2}),\\
4\Omega \underline{\omega}=\:&\frac{2}{v+|u|}+\log(v+|u|)\mathcal{O}((v+|u|)^{-2}),\\
\Omega^2\zeta^{\varphi_*}=\:&\frac{Ma}{(M^2+a^2)^2}(3M^2-a^2)(v+|u|)^{-2}+\log(v+|u|)\mathcal{O}((v+|u|)^{-3}),\\
\Omega^2 \zeta^{\theta_*}=\:&0.
\end{align*}
\end{itemize}
\end{theorem}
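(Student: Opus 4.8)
The estimates in Theorem \ref{thm:estconncoef} are established in \cite{Gajic2015b}; the route I would take to reprove them is to substitute the asymptotic expansions of Theorem \ref{thm:estmetric} into the identities listed just above, which express each connection coefficient as a first-order differential expression in the metric components $\Omega^2$, $\slashed{g}_{AB}$ and $b^A$ with respect to $L=\partial_v+b^A\partial_{\vartheta^A}$ and $\underline{L}=\partial_u$. The key structural observation is that, by stationarity, every metric component depends on $(u,v)$ only through the combination $r_*=v+|u|$ (and on $\theta_*$), so that $L$ and $\underline{L}$, applied to metric components, reduce up to the manifestly lower-order piece $b^A\partial_{\vartheta^A}$ to $\pm\partial_{r_*}$; differentiating the expansions of Theorem \ref{thm:estmetric} then produces the claimed decay rates essentially mechanically.

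Concretely, for the shear I would use $2\Omega\chi_{AB}=L(\slashed{g}_{AB})+\partial_Ab^C\slashed{g}_{CB}+\partial_Bb^C\slashed{g}_{CA}$ and $2\Omega\underline{\chi}_{AB}=\underline{L}(\slashed{g}_{AB})$, subtract the trace parts $\tfrac12\slashed{g}_{AB}\tr\chi$ resp.\ $\tfrac12\slashed{g}_{AB}\tr\underline{\chi}$, and invoke $|\partial_{r_*}\slashed{g}_{AB}|\lesssim (v-u)^{-2}|\slashed{g}_{AB}|$, $|\partial_{\theta_*}\slashed{g}_{AB}|\lesssim |\slashed{g}_{AB}|$ together with the bounds on $\partial_{r_*}^{n+1}b^{\varphi_*}$ and $\partial_{r_*}^n\partial_{\theta_*}b^{\varphi_*}$ from Theorem \ref{thm:estmetric}(ii); the logarithm in the bound for $\Omega\hat{\chi}_{AB}$ is inherited precisely from the $\partial_{\theta_*}b^{\varphi_*}\sim(v+|u|)^{-2}\log(v+|u|)$ term, while $\Omega\hat{\underline{\chi}}_{AB}$ carries none because $\underline{L}(\slashed{g}_{AB})=\partial_u\slashed{g}_{AB}$ has no $b$-contribution. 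For the traces I would rewrite the identities as $\Omega\tr\chi=L(\log\sqrt{\det\slashed{g}})$ and $\Omega\tr\underline{\chi}=\underline{L}(\log\sqrt{\det\slashed{g}})$ and combine the formula (\ref{eq:detslashedg}) for $\det\slashed{g}$ with the two-sided bound $c\sin^2\theta\leq\det\slashed{g}\leq C\sin^2\theta$: the sign statements come from the monotonicity of $\det\slashed{g}$ in $r_*$, and the $(v+|u|)^{-2}$ rate (with extra powers for higher $r_*$-derivatives) from the $(v-u)^{-2n}$ decay of $\partial_{r_*}^n\slashed{g}_{AB}$.

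For $\omega$ and $\underline{\omega}$ I would use $4\Omega\omega=\Omega^{-2}L(\Omega^2)$, $4\Omega\underline{\omega}=\Omega^{-2}\underline{L}(\Omega^2)$ and the expansion $\Omega^{-2}=(M^2+a^2\cos^2\theta)^{-1}[(v+|u|)^2+4M(v+|u|)\log(M-r)]+\mathcal{O}(v+|u|)$: differentiating in $r_*$ and multiplying by $\Omega^{-2}$ gives the leading terms $\mp\tfrac{2}{v+|u|}$, with the remainder $\log(v+|u|)\mathcal{O}((v+|u|)^{-2})$ coming from the explicit $\log(M-r)$ factor and from $\partial_{r_*}\log(M-r)=-(M-r)(r^2+a^2)^{-1}$. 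Finally $\zeta^{\theta_*}=0$ is immediate from $b^{\theta_*}=0$, and $\Omega^2\zeta^{\varphi_*}$ follows by differentiating the expansion of $b^{\varphi_*}$ from Theorem \ref{thm:estmetric}(ii) in $r_*$ and multiplying by the expansion of $\Omega^{-2}$, keeping only the leading product $\propto(v+|u|)^2\cdot(v+|u|)^{-2}$ and collecting the log-improved error.

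The only real obstacle is bookkeeping: one must carry the expansions of $\Omega^{-2}$, $b^{\varphi_*}$ and $\det\slashed{g}$ one order beyond leading, differentiate, and multiply the resulting asymptotic series carefully enough to pin down both the logarithmic corrections and the exact rational prefactors --- such as $\frac{Ma}{(M^2+a^2)^2}(3M^2-a^2)$ --- discarding only terms that are genuinely subleading after multiplication. No further geometric input beyond Theorem \ref{thm:estmetric} and the connection-coefficient identities is needed; the complete verification is carried out in \cite{Gajic2015b}.
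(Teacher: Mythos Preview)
Your proposal is correct and follows precisely the route the paper implicitly takes: the theorem is not proved in this paper but cited from \cite{Gajic2015b}, and your outline---plugging the asymptotics of Theorem~\ref{thm:estmetric} into the connection-coefficient identities listed just above the statement, using that metric components depend on $(u,v)$ only through $r_*=v+|u|$---is exactly the mechanism behind those estimates. One small point: when you write $\Omega\tr\chi=L(\log\sqrt{\det\slashed{g}})$, note that the general identity in Appendix~\ref{sec:bulkest} carries an extra $-\partial_Cb^C$ term, which vanishes here only because $b^{\theta_*}=0$ and $b^{\varphi_*}$ is independent of $\varphi_*$; it is worth making that cancellation explicit.
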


\subsection{Killing vector fields}
\label{sec:hawkvf}

The vector field $T=\frac{\partial}{\partial {(t_{KS})_*}}$ in $\mathcal{M}$, as expressed in Kerr-star coordinates ($T=\partial_t$ in Boyer--Lindquist coordinates on $\mathcal{M}_{\rm int}$), is a Killing vector field; it corresponds to time-translation symmetry in extremal Kerr--Newman. Note that $T$ is not causal everywhere in $\mathcal{M}$. The subset of $\mathcal{M}_{\textnormal{ext}}$ in which $T$ is not causal is called the \emph{ergoregion}. Similarly, there is a subset of $\mathcal{M}_{\textnormal{int}}$ in which $T$ fails to be causal everywhere (cf.\ $T$ is timelike \emph{everywhere} away from the horizons in extremal Reissner--Nordstr\"om, where $a=0$).

We denote the Killing vector field corresponding to axial symmetry in extremal Kerr--New\-man by $\Phi$. In Kerr-star coordinates, we can write $\Phi=\frac{\partial}{\partial {(\varphi_{KS})_*}}$. However, we can also write $\Phi=\partial_{\varphi_*}$ in Eddington-Finkelstein-type double-null coordinates, or $\Phi=\partial_{\varphi}$ in Boyer--Lindquist coordinates.

The Carter operator is a second order differential operator that can be expressed as follows:
\begin{equation*}
Q={\Delta}_{\s^2}+(a^2\sin\theta) T^2-\Phi^2.
\end{equation*}
Since $T$ and $\Phi$ are Killing vector fields, we have that
\begin{equation*}
[\square_g,T]=[\square_g,\Phi]=0. 
\end{equation*}
It turns out that the Carter operator also commutes with the wave operator:
\begin{equation*}
[\square_g,Q]=0.
\end{equation*}
See \cite{andblue1} for a derivation of the above commutator.

We can define the \emph{Hawking vector field} $H$ in ${\mathcal{D}_{u_0,v_0}}$ by
\begin{equation*}
H=\frac{1}{2}(\partial_u+\partial_v)=\frac{1}{2}(\underline{L}+L-b^{{\varphi_*}}\Phi).
\end{equation*}
We can also express $H$ by as a linear combination of the Killing vector fields $T$ and $\Phi$,
\begin{equation*}
H=T+\omega_{\mathcal{H}^+}\Phi,
\end{equation*}
where
\begin{equation*}
\begin{split}
\omega_{\mathcal{H}^+}&:=\frac{2M a r}{\rho^2 R^2}\bigg|_{r=r_+}=\frac{2aM^2}{(M^2+a^2)^2}.
\end{split}
\end{equation*}
Indeed, in $\mathcal{M}_{\rm int}\cap \{r> \frac{e^2}{2M}\}$ we can write
\begin{equation*}
\begin{split}
T=\partial_t=\:&\partial_tu\partial_u+\partial_tv \partial_v+\partial_t{\varphi_*}\partial_{{\varphi_*}}\\
=\:&\frac{1}{2}\left[\partial_u+\partial_v-2\omega_{\mathcal{H}^+}\Phi\right].
\end{split}
\end{equation*}
In the literature, the constant $\omega_{\mathcal{H}^+}$ is commonly referred to as the \emph{angular velocity of the Kerr--Newman black hole}.

In order for the energy fluxes with respect to $H$ along null hypersurfaces to be non-negative definite, we need $H$ to be causal. We have that
\begin{equation*}
g(H,H)=g(L+\underline{L}-b^{{\varphi_*}}\partial_{{\varphi_*}},L+\underline{L}-b^{{\varphi_*}}\partial_{{\varphi_*}})=-4\Omega^{-2}+(b^{{\varphi_*}})^2R^2\sin^2\theta.
\end{equation*}
The maximum value of $R^2\sin^2\theta$ is obtained at $\theta=\frac{\pi}{2}$,
\begin{equation*}
R^2\sin^2\theta|_{\theta=\frac{\pi}{2}}=\frac{(M^2+a^2)^2}{M^2}+\mathcal{O}((v+|u|)^{-1}).
\end{equation*}
Consequently, by applying the estimates in Theorem \ref{thm:estmetric}, we obtain
\begin{equation*}
\begin{split}
g(H,H)|_{\theta=\frac{\pi}{2}}=\:&\left[-8+R^2\left(\frac{4Ma}{(M^2+a^2)^2}(3M^2-a^2)\right)^2\right](v+|u|)^{-2}\\
&+\log(v+|u|)\mathcal{O}((v+|u|)^{-3})\\
=\:&\left[-8+\left(\frac{4a}{M^2+a^2}(3M^2-a^2)\right)^2\right](v+|u|)^{-2}+\log(v+|u|)\mathcal{O}((v+|u|)^{-3}).
\end{split}
\end{equation*}
Therefore, $g(H,H)\leq 0$ everywhere for $v+|u|$ suitably large, or equivalently, $M-r$ suitably small, if
\begin{equation*}
2a^2(3M^2-a^2)^2-(M^2+a^2)^2< 0.
\end{equation*}
We rescale $x=\left(\frac{a}{M}\right)^2$, with $x\in[0,1]$, to obtain an equivalent inequality:
\begin{equation*}
2x(3-x)^2-(1+x)^2<0.
\end{equation*}
One can solve the above cubic equation to obtain $0<a_{c}(M)<M$, such that $g(H,H)< 0$ for all $0\leq |a|< a_c$ and $v+|u|$ suitably large.

We define \emph{slowly rotating extremal Kerr--Newman spacetimes} to be the subfamily of extremal Kerr--Newman spacetimes satisfying $0\leq |a|< a_c$. Note that extremal Kerr ($|a|=M$) is \emph{not} a slowly rotating extremal Kerr--Newman spacetime.

\subsection{The divergence theorem and integration norms}
\label{sec:vfmethod}
In this section, we will introduce some basic notation regarding integration in $\mathcal{M}\cap {\mathcal{D}_{u_0,v_0}}$. We will state the divergence theorem, which is the main ingredient of the vector field method; see also the discussion in Section~ 1.3 of \cite{gajic2015}.

Let $V$ be a vector field in a Lorentzian manifold $(\mathcal{N},g)$. We consider the stress-energy tensor $\mathbf{T}[\phi]$ corresponding to (\ref{eq:waveqkerr}), with components
\begin{equation*}
\mathbf{T}_{\alpha \beta}[\phi]=\partial_{\alpha}\phi \partial_{\beta} \phi-\frac{1}{2}g_{\alpha \beta} \partial^{\gamma} \phi \partial_{\gamma} \phi.
\end{equation*}

Let $J^V[\phi]$ denote the energy current corresponding to $V$, which is obtained by applying $V$ as a \emph{vector field multiplier}, i.e.\ in components
\begin{equation*}
J^V_{\alpha}[\phi]=\mathbf{T}_{\alpha \beta}[\phi] V^{\beta}.
\end{equation*}

An energy flux is an integral of $J^V[\phi]$ contracted with the normal to a hypersurface with the natural volume form corresponding to the metric induced on the hypersurface. We apply the divergence theorem to relate the energy flux along the boundary of a spacetime region to the spacetime integral of the divergence of the energy current $J^V$. If the boundary has a null segment, there is no natural volume form or normal; these are assumed compatible with the divergence theorem.

That is to say, if we take $-\infty\leq u_1<u_2\leq u_0$ and $v_0\leq v_1<v_2\leq \infty$, the divergence theorem in the open rectangle $\{u_1<u<u_2,\:v_1<v<v_2\}$ in $\mathcal{M}_{\rm int}\cap \{r>\frac{e^2}{2M}\}$ gives the following identity:
\begin{equation}
\label{divergenceidentity}
\begin{split}
\int_{\{u_1<u<u_2,\:v_1<v<v_2\}} \textnormal{div}J^V[\phi]&=-\int_{H_{u_2}\cap \{v_1\leq v\leq v_2\}}J^V[\phi]\cdot L+\int_{H_{u_1}\cap \{v_1\leq v\leq v_2\}}J^V[\phi]\cdot L\\
&\quad - \int_{\uline{H}_{v_2}\cap \{u_1\leq u\leq u_2\}}J^V[\phi]\cdot \uline{L}+\int_{\uline{H}_{v_1}\cap \{u_1\leq u\leq u_2\}}J^V[\phi]\cdot \uline{L}.
\end{split}
\end{equation}

Here, we introduced the following notation:
\begin{equation*}
J^V[\phi]\cdot W=\mathbf{T}(V,W),
\end{equation*}
for vector fields $V$ and $W$. Moreover, in the notation on the left-hand side of (\ref{divergenceidentity}), we integrate over spacetime with respect to the standard volume form, i.e.\ let $f: \mathcal{M}\cap {\mathcal{D}_{u_0,v_0}}\to \R$ be a suitably regular function and $U$ an open subset of $\mathcal{M}$, then
\begin{equation*}
\int_{U}f:=\int_{U}f(u,v,\theta_*,\varphi_*)\,\sqrt{-\det g}d\theta_* d\varphi_*dudv= \int_{U}f(u,v,\theta_*,\varphi_*)\,2\Omega^2\sqrt{\det \slashed{g}}d\theta_*d\varphi_*dudv,
\end{equation*}
where $\det \slashed{g}$ is expressed in (\ref{eq:detslashedg}).

When integrating over $H_u$ and $\uline{H}_{v}$ we used the following convention in the notation on the right-hand side of (\ref{divergenceidentity}):
\begin{align*}
\int_{\uline{H}_v}f&:=\int_{-\infty}^{u_0}\int_{S^2_{u,v}}f\,\sqrt{\det \slashed{g}}d\theta_* d\varphi_*du=\int_{0}^{U(u_0)}\int_{S^2_{U,v}}f\frac{du}{dU}\,\sqrt{\det \slashed{g}}d\theta_* d\varphi_*dU,\\
\int_{{H}_u}f&:=\int_{v_0}^{\infty}\int_{S^2_{u,v}}f\sqrt{\det \slashed{g}}d\theta_* d\varphi_*dv=\int_{\widetilde{V}(v_0)}^0\int_{S^2_{u,\widetilde{V}}}f\frac{d\tilde{v}}{d\widetilde{V}}\,\sqrt{\det \slashed{g}}d{\theta}_* d{\varphi}_*d\widetilde{V}.
\end{align*}

In the notation of \cite{Christodoulou2008} we decompose the divergence term appearing in (\ref{divergenceidentity}) in the following way:
\begin{equation*}
\textnormal{div} J^V[\phi]=K^V[\phi]+\mathcal{E}^V[\phi],
\end{equation*}
where
\begin{align*}
K^V[\phi]&:=\mathbf{T}^{\alpha \beta}[\phi]\nabla_{\alpha}V_{\beta},\\
\mathcal{E}^V[\phi]&:=V(\phi)\square_g\phi.
\end{align*}
In particular, $\mathcal{E}^V[\phi]=0$ if $\phi$ is a solution to (\ref{eq:waveqkerr}). We can also replace $\phi$ by $W\phi$, where $W$ is a vector field that is referred to as a \emph{commutation vector field}. The expression $\mathcal{E}^V[W \phi]$ now does not need to vanish.

Furthermore, we can write
\begin{align*}
\int_{0}^{U(u_0)}\int_{S^2_{U,v}}(\partial_Uf)^2\,\sqrt{\det \slashed{g}}d\theta_*d\varphi_{*}dU=&\: \int_{-\infty}^{u_0}\int_{S^2_{u,v}}\frac{du}{dU}(\partial_uf)^2\,\sqrt{\det \slashed{g}}d\theta_*d\varphi_{*}du=\int_{\uline{H}_v}\frac{du}{dU}(\partial_uf)^2,\\
\int_{\widetilde{V}(v_0)}^0\int_{S^2_{u,v}}(\partial_{\widetilde{V}}f)^2\,\sqrt{\det \slashed{g}}d\theta_*d\varphi_{*}d\widetilde{V}= &\:\int_{v_0}^{\infty}\int_{S^2_{u,v}}\frac{d\tilde{v}}{d\widetilde{V}}(\partial_vf)^2\,\sqrt{\det \slashed{g}}d\theta_*d\varphi_{*}dv=\int_{{H}_u}\frac{dv}{d\widetilde{V}}(\partial_vf)^2.
\end{align*}

We can estimate in $\mathcal{M}\cap {\mathcal{D}_{u_0,v_0}}$, with $|u_0|,v_0\geq 1$ without loss of generality,
\begin{align*}
C_1 u^2\leq \frac{du}{dU}\leq C_2 u^2,\\
C_1 v^2\leq\frac{d\tilde{v}}{d\widetilde{V}}\leq C_2 v^2,
\end{align*}
for $C_1=C_1(a,M,u_0,v_0)>0$ and $C_2=C(a,M,u_0,v_0)>0$ uniform constants.

We rewrite the estimates above by using the following notation:
\begin{align}
\label{est:utoU}
\frac{du}{dU}\sim u^2,\\
\label{est:vtoV}
\frac{d\tilde{v}}{d\widetilde{V}}\sim v^2,
\end{align}
so that
\begin{align*}
\int_{\uline{H}_v}u^2(\partial_uf)^2 &\sim \int_{0}^{U(u_0)}\int_{S^2_{U,v}}(\partial_Uf)^2\,\sqrt{\det \slashed{g}}d\theta_*d\varphi_{*}dU,\\
\int_{{H}_u}v^2(\partial_vf)^2 & \sim \int_{\widetilde{V}(v_0)}^0\int_{S^2_{u,v}}(\partial_{\widetilde{V}}f)^2\,\sqrt{\det \slashed{g}}d\theta_*d\varphi_{*}d\widetilde{V}.
\end{align*}
Let use introduce the following natural $L^2$ norms:
\begin{align*}
||f||^2_{L^2(S^2_{u,v})}:=&\:\int_{S^2_{u,v}}f^2\,d\mu_{\slashed{g}}, \: \textnormal{where}\: d\mu_{\slashed{g}}:=\sqrt{\det \slashed{g}}d\theta_*d\varphi_{*},\\
||f||^2_{L^2(H_u)}:=&\:\int_{H_u}\frac{du}{dU}f^2,\\
||f||^2_{L^2(\uline{H}_v)}:=&\:\int_{\uline{H}_v}\frac{d\tilde{v}}{d\widetilde{V}}f^2.
\end{align*}
Now consider a compact subset $\mathcal{K}\subset \mathcal{M}'\cap\{r>\frac{e^2}{2M}\}$, such that moreover $\mathcal{K}\subset {\mathcal{D}_{u_0,v_0}}$. Then we define the following spacetime $L^2$ norms:
\begin{align*}
||f||^2_{L^2(\mathcal{K})}&:=\int_{\mathcal{K}\cap \mathcal{M}_{\rm int}} f^2,\\
||\partial f||^2_{L^2(\mathcal{K})}&:=\int_{\mathcal{K}\cap \mathcal{M}_{\rm int}} (\partial_{\widetilde{V}}f)^2+(\partial_U f)^2+|\snabla f|^2,\\
\end{align*}
where $\snabla$ denotes the induced covariant derivative on $S^2_{u,v}$.

We can in particular estimate
\begin{equation}
\label{est:energytoH1}
\begin{split}
||\partial f||_{L^2(\mathcal{K})}^2&=\int_{\mathcal{K}\cap \mathcal{M}_{\rm int}}(\partial_{\widetilde{V}}f)^2+(\partial_U f)^2+|\snabla f|^2\\
&\leq \int_{u_{\mathcal{K}}}^{u_0}\int_{\widetilde{V}_{\mathcal{K}}}^0\int_{\s^2}\left[(\partial_{\widetilde{V}}f)^2+(\partial_U f)^2+|\snabla f|^2\right]\,2\Omega^2\partial_{\widetilde{V}} \tilde{v} d\mu_{\slashed{g}}d\widetilde{V}du\\
&\leq C|\widetilde{V}_{\mathcal{K}}||u_{\mathcal{K}}|^2\sup_{\tilde{v}(\widetilde{V}_{\mathcal{K}})\leq v<\infty} \int_{\uline{H}_v\cap\{|u|\leq |u_{\mathcal{K}}|\}}(\partial_u f)^2+\Omega^2v^2|\snabla f|^2\\
&\quad+ C(u_0-u_{\mathcal{K}})^2\sup_{u_{\mathcal{K}}\leq u<u_0}\int_{{H}_u}v^2(\partial_vf)^2,
\end{split}
\end{equation}
where $\mathcal{K}\subset [u_{\mathcal{K}},u_0]\times[\widetilde{V}_{\mathcal{K}},0]\times \s^2$, with $-\infty<u_{\mathcal{K}}<u_0$, $\widetilde{V}(v_0)<\widetilde{V}_{\mathcal{K}}<0$ and $C=C(u_0,v_0)>0$. 

We define the weighted null-directed vector field $N_{p,q}$ in $\mathcal{M}_{\rm int}\cap {\mathcal{D}_{u_0,v_0}}$ as follows:
\begin{equation*}
N_{p,q}=|u|^p\underline{L}+v^qL=|u|^p\partial_u+v^q(\partial_v+b^A\partial_{\vartheta_A})=|\widetilde{u}|^p(\partial_{\widetilde{u}}+b^A\partial_{\widetilde{\vartheta}_A})+\tilde{v}^p\partial_{\tilde{v}},
\end{equation*}
with $0\leq p,q\leq 2$. In particular, in $(U,v,\vartheta)$ coordinates, we can express
\begin{equation*}
N_{p,q}=|u(U)|^p(r^2\Omega^2)\big|_{v=v_0,\,\theta_*=\frac{\pi}{2}}\partial_U+v^q(\partial_v+b^A\partial_{\vartheta_A}).
\end{equation*}
If $p\leq 2$, $N_{p,q}$ can be extended as a smooth vector field across $\mathcal{H}^+$ into $\mathcal{M}_{\rm ext}$. In $(\widetilde{u},\widetilde{V},\widetilde{\vartheta})$ coordinates, we have that
\begin{equation*}
N_{p,q}=|\widetilde{u}|^p(\partial_{\widetilde{u}}+b^A\partial_{\widetilde{\vartheta}_A})+\tilde{v}^q(\widetilde{V})(r^2\Omega^2)\big|_{u=u_0,\,\theta_*=\frac{\pi}{2}}\partial_{\widetilde{V}}.
\end{equation*}
If $q\leq 2$, $N_{p,q}$ can be extended as a smooth vector field beyond $\mathcal{CH}^+$ in $\mathcal{M}'_{\rm ext}$. The energy currents with respect to the constant $u$ and constant $v$ null hypersurfaces are given by
\begin{align*}
J^{N_{p,q}}[\phi]\cdot L=v^q\mathbf{T}(L,L)+|u|^p\mathbf{T}(L,\underline{L})=v^q(L\phi)^2+|u|^p\Omega^2|\snabla\phi|^2,\\
J^{N_{p,q}}[\phi]\cdot \underline{L}=v^q\mathbf{T}(L,\underline{L})+|u|^p\mathbf{T}(\underline{L},\underline{L})=|u|^p(\underline{L}\phi)^2+v^q\Omega^2|\snabla\phi|^2,
\end{align*}
where we inserted the expressions for $\mathbf{T}_{\alpha\beta}$ from Appendix \ref{sec:bulkest}.

In Appendix \ref{sec:bulkest} we show that the current $K^{N_{p,q}}$, compatible to $J^{{N}_{p,q}}$, is given by
\begin{equation*}
K^{N_{p,q}}[\phi]=K^{N_{p,q}}_{\textnormal{null}}[\phi]+K^{N_{p,q}}_{\textnormal{angular}}[\phi]+K^{N_{p,q}}_{\textnormal{mixed}}[\phi],
\end{equation*}
with
\begin{align}
\label{eq:errorterms1}
K^{N_{p,q}}_{\textnormal{null}}[\phi]=\:&\frac{1}{2}\Omega^{-2}(v^q\Omega\tr \chi+|u|^{p}\Omega \tr \underline{\chi})L\phi\underline{L}\phi,\\
\label{eq:errorterms2}
K^{N_{p,q}}_{\textnormal{angular}}[\phi]=\:&-\frac{1}{2}\left[-p|u|^{p-1}+qv^{q-1}+4\Omega(v^q\omega+|u|^p\underline{\omega})\right]|\snabla\phi|^2\\\nonumber
&+\left[v^q\Omega \hat{\chi}^{AB}+|u|^p\Omega\hat{\underline{\chi}}^{AB}\right](\partial_A\phi)(\partial_B\phi),\\
\label{eq:errorterms3}
K^{N_{p,q}}_{\textnormal{mixed}}[\phi]=\:&2[v^q(L\phi)-|u|^p(\underline{L}\phi)]\zeta^{\varphi_*}\partial_{\varphi_*}\phi
\end{align}

In extremal Kerr--Newman spacetimes with $|a|<a_c$, we consider moreover the vector field $Y_p$ in the region $\mathcal{M}_{\rm int}\cap (\{v\geq v_1\}\cup\{u\leq u_1\})$, which is defined by
\begin{equation*}
Y_p=|u|^pH.
\end{equation*}
From Section~ \ref{sec:hawkvf} it follows that $H$ is timelike in $\mathcal{M}_{\rm int}\cap (\{v\geq v_1\}\cup\{u\leq u_1\})$, if $|u_1|$ and $v_1$ are chosen suitably large. We use moreover that $H$ is a Killing vector field to easily obtain an expression for $K^{Y_p}$,
\begin{equation*}
\begin{split}
K^{Y_p}[\phi]=\:&g^{\alpha\beta}\nabla_{\beta}(J^{Y_p}_{\alpha})=g^{\alpha\beta}\nabla_{\beta}(|u|^p)J^H_{\alpha}[\phi]+|u|^pK^H[\phi]\\
=\:&\frac{p}{2}\Omega^{-2}|u|^{p-1}J^H[\phi]\cdot L \geq 0,
\end{split}
\end{equation*}
where non-negativity, in the case that $\phi$ is not axisymmetric, follows from the timelike character of $H$.

\section{Precise statements of the main theorems}
\label{sec:mainresults}
In this section we present more precise versions of the main results proved in this paper, which are stated in Section~ \ref{sec:roughversionsresults}. Let $\mathcal{M}$ denote extremal Kerr--Newman with $0\leq |a|\leq M$, unless otherwise stated. We first give a formulation of the standard global existence and uniqueness for the characteristic initial value problem for (\ref{eq:waveqkerr}) in $\mathcal{M}\cap {\mathcal{D}_{u_0,v_0}}$.

\begin{proposition}
\label{prop:wellposedness}
Let $\phi$ be a continuous function on the union of null hypersurfaces 
\begin{equation*}
\left(\mathcal{H}^+\cap\{v\geq v_0\}\right)\cup \uline{H}_{v_0},
\end{equation*}
such that the restriction to $\mathcal{H}^+$ and the restriction to $\uline{H}_{v_0}$ are smooth functions. Then there exists a unique, smooth extension of $\phi$ to $\mathcal{M}_{\rm int}\cup\mathcal{H}^+\cap {\mathcal{D}_{u_0,v_0}}$ that satisfies (\ref{eq:waveqkerr}) in extremal Kerr--Newman. We also denote this extension by $\phi$. We refer to the restriction $\phi|_{\left(\mathcal{H}^+\cap\{v\geq v_0\}\right)\cup \uline{H}_{v_0}}$ as characteristic initial data. If $\phi|_{\left(\mathcal{H}^+\cap\{v\geq v_0\}\right)\cup \uline{H}_{v_0}}$ is axisymmetric, the extension $\phi$ to $\mathcal{M}_{\rm int}\cup \mathcal{H}^+\cap {\mathcal{D}_{u_0,v_0}}$ must also be axisymmetric.
\end{proposition}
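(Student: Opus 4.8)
The plan is to treat this as a Goursat (characteristic initial value) problem for a linear, strictly hyperbolic second-order equation, combining classical local existence near the corner sphere with an energy-estimate continuation argument; the axisymmetry assertion will then follow from uniqueness together with the fact that $\Phi$ commutes with $\square_g$.

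First I would write $\square_g\phi=0$ in the double-null chart $(U,v,\theta_*,\varphi_*)$ of Section~\ref{sec:estimatesmetricomp} covering $\mathcal{M}\cap\{r>\frac{e^2}{2M}\}$, in which $\mathcal{H}^+=\{U=0\}$ and $\uline{H}_{v_0}=\{v=v_0\}$. In these coordinates the wave operator has principal part proportional to $\partial_U L$, with $L=\partial_v+b^A\partial_{\vartheta^A}$, plus the Laplacian $\slashed{\Delta}$ on the spheres $S^2_{U,v}$, and all of its coefficients are smooth functions built from $\Omega^2$, $\slashed{g}_{AB}$, $b^A$ and their derivatives. By the choice of $u_0,v_0$ one has $r\geq r_0>\frac{e^2}{2M}$ on all of ${\mathcal{D}_{u_0,v_0}}$ (since $v+|u|\geq v_0+|u_0|$ is large there), so after multiplying through by $\Omega^2$ the equation takes the form $\partial_U L\phi=\Omega^2(\slashed{\Delta}\phi+\dots)$ with coefficients that are smooth and, by Theorems~\ref{thm:estmetric}--\ref{thm:estconncoef}, uniformly bounded on every finite sub-rectangle $\mathcal{R}_V=\{0\leq U\leq U(u_0),\ v_0\leq v\leq V\}\times\s^2$, the level sets of $U$ and of $v$ being the characteristics. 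I would then invoke classical local existence for the linear characteristic initial value problem (Rendall's reduction to the Cauchy problem, or the Leray theory of the causal Green's function) to produce a smooth solution in a neighbourhood of the corner sphere $S^2_{-\infty,v_0}=\{U=0\}\cap\{v=v_0\}$ whose traces on the two regular characteristic hypersurfaces $\mathcal{H}^+$ and $\uline{H}_{v_0}$ are the prescribed data.

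Next I would promote this to a solution on the whole strip $\{0\leq U\leq U(u_0),\ v\geq v_0\}\times\s^2$, which is precisely ${\mathcal{D}_{u_0,v_0}}\cap(\mathcal{M}_{\rm int}\cup\mathcal{H}^+)$. For a fixed finite $V>v_0$ the rectangle $\mathcal{R}_V$ is the domain of dependence of the compact characteristic data set $(\mathcal{H}^+\cap\{v_0\leq v\leq V\})\cup(\uline{H}_{v_0}\cap\{0\leq U\leq U(u_0)\})$, and a weighted $L^2$ energy estimate obtained by integrating $\mathrm{div}\,J^{N_{0,0}}[\phi]$ over sub-rectangles of $\mathcal{R}_V$ (the multiplier $\partial_U+L$ being future timelike) and applying Gr\"onwall in $v$ both rules out breakdown at finite $v$ and yields uniqueness on $\mathcal{R}_V$; commuting with $\partial_U$, $L$ and angular derivatives supplies the higher-order bounds needed for smoothness. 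Since by uniqueness the solutions on the $\mathcal{R}_V$ agree on overlaps, they patch to a unique smooth $\phi$ on the full strip, i.e. on $\mathcal{M}_{\rm int}\cup\mathcal{H}^+\cap{\mathcal{D}_{u_0,v_0}}$. For the axisymmetry statement, if $\phi|_{\mathcal{H}^+\cup\uline{H}_{v_0}}$ is axisymmetric I would set $\psi:=\Phi\phi=\partial_{\varphi_*}\phi$; since $\Phi$ is Killing, $[\square_g,\Phi]=0$ and hence $\square_g\psi=\Phi\square_g\phi=0$, while $\psi$ is smooth on each of $\mathcal{H}^+$ and $\uline{H}_{v_0}$ (to which $\Phi$ is tangent) and vanishes there. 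The uniqueness just established then forces $\psi\equiv0$, so $\phi$ is axisymmetric.

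The step I expect to be the main obstacle, modest as it is, is the passage to the $v$-unbounded strip: one must check that the equation does not degenerate as $v\to\infty$ (it does not, since $\Omega^2\sim(v+|u|)^{-2}$ stays strictly positive and, after dividing through, the coefficients in fact decay), so that local existence, the continuation criterion and uniqueness can each be applied on a finite $\mathcal{R}_V$ with $V$-dependent constants and the results then glued across overlaps. Everything else — the reduction to the Cauchy problem, the energy estimate and its higher-order versions — is the standard machinery for linear wave equations in a double-null gauge, and I would only sketch it.
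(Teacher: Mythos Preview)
The paper does not actually give a proof of Proposition~\ref{prop:wellposedness}; it is stated as ``a formulation of the standard global existence and uniqueness for the characteristic initial value problem'' and is simply taken for granted. Your sketch is a correct and standard way to substantiate it: local existence near the corner (via Rendall's reduction or equivalent), energy estimates with a causal multiplier on compact rectangles to continue and to prove uniqueness, and the axisymmetry conclusion from $[\square_g,\Phi]=0$ together with uniqueness applied to $\Phi\phi$. There is nothing to compare against in the paper itself.
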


Observe that Proposition \ref{prop:wellposedness} does not provide any information about the asymptotic behaviour of $\phi$ towards $\mathcal{CH}^+$. We will state in the subsections below further quantitative and qualitative properties of $\phi$, relating to boundedness and extendibility of $\phi$ and its derivatives beyond $\mathcal{CH}^+$, under the assumption of suitable additional decay requirements along $\mathcal{H}^+$.

\subsection{Energy estimates along null hypersurfaces}
Consider solutions $\phi$ to (\ref{eq:waveqkerr}) that arise from the characteristic initial data in Proposition \ref{prop:wellposedness}. We will first show that we can prove boundedness of weighted $L^2$ norms for $\phi$ along null hypersurfaces, under additional assumptions on suitable initial $L^2$ norms along $\mathcal{H}^+$. We will treat separately the case of axisymmetric solutions $\phi$ on extremal Kerr--Newman with $0\leq |a|\leq M$, and the case of general solutions $\phi$ on slowly rotating extremal Kerr--Newman, with $0\leq |a|<a_c$. In the next section, we will give an overview of the theorems regarding $L^{\infty}$ estimates for $\phi$. Unless specified differently, we consider (\ref{eq:waveqkerr}) on an extremal Kerr--Newman background with $0\leq |a|\leq M$.

\begin{theorem}
\label{thm:eestaxisymm}
Take $0<q\leq 2$. Let $\phi$ be a solution to (\ref{eq:waveqkerr}) corresponding to axisymmetric initial data from Proposition \ref{prop:wellposedness} satisfying
\begin{equation*}
E_q[\phi]:=\int_{\mathcal{H}^+\cap\{v\geq v_0\}} v^q(L\phi)^2+|\snabla \phi|^2+ \int_{\underline{H}_{v_0}} |u|^2(\uline{L}\phi)^2+\Omega^2|\snabla\phi|^2<\infty.
\end{equation*}

Then there exists a constant $C=C(a,M,u_0,v_0,q)>0$ such that for all $H_u$ and $\underline{H}_v$
\begin{equation*}
\begin{split}
&\int_{H_u}v^q(L\phi)^2+|u|^2|\snabla \phi|^2 +\int_{\underline{H}_v}  |u|^2(\uline{L}\phi)^2+\Omega^2v^q|\snabla\phi|^2\leq CE_q[\phi].
\end{split}
\end{equation*}
\end{theorem}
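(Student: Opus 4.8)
\textbf{Overall approach.} The plan is to run the vector field method with the multiplier $N_{2,q}=|u|^2\underline{L}+v^qL$ on the rectangle ${\mathcal{D}_{u_0,v_0}}$, using the divergence identity (\ref{divergenceidentity}). Since $\phi$ is axisymmetric, $\Phi\phi=\partial_{\varphi_*}\phi=0$, so the mixed error term $K^{N_{p,q}}_{\textnormal{mixed}}[\phi]$ in (\ref{eq:errorterms3}), which is the only place the torsion $\zeta$ enters, vanishes identically. This is the feature that makes the axisymmetric case essentially as clean as extremal Reissner--Nordstr\"om. The boundary terms of (\ref{divergenceidentity}) for the current $J^{N_{2,q}}[\phi]$ are precisely the fluxes appearing in the statement: $J^{N_{2,q}}[\phi]\cdot L = v^q(L\phi)^2+|u|^2\Omega^2|\snabla\phi|^2$ and $J^{N_{2,q}}[\phi]\cdot\underline{L}=|u|^2(\underline{L}\phi)^2+v^q\Omega^2|\snabla\phi|^2$, all manifestly non-negative. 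So the identity reads (schematically) ``flux on $H_u$ + flux on $\underline{H}_v$ = flux on $\mathcal{H}^+$ + flux on $\underline{H}_{v_0}$ $-\int\int K^{N_{2,q}}[\phi]$'', and the task is to show the spacetime bulk term $\int\int K^{N_{2,q}}[\phi]$ can be absorbed.

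\textbf{Key steps in order.} First, set up the divergence identity on the truncated rectangle $\{u_1<u<u_2,\ v_0<v<v_2\}$, taking $u_1\downarrow -\infty$ and $v_2\uparrow\infty$ at the end using that the initial data norms $E_q[\phi]$ are finite and that the fluxes are monotone in the rectangle (by non-negativity of the bulk term once it is controlled). Second, analyze the bulk term $K^{N_{2,q}}[\phi]=K^{N_{2,q}}_{\textnormal{null}}[\phi]+K^{N_{2,q}}_{\textnormal{angular}}[\phi]$ using the connection coefficient estimates of Theorem \ref{thm:estconncoef}. The key cancellations/signs are: in $K^{N_{2,q}}_{\textnormal{angular}}$, the leading terms are $-\tfrac12[-2|u|+qv^{q-1}+4\Omega(v^q\omega+|u|^2\underline{\omega})]|\snabla\phi|^2$; using the expansions $4\Omega\omega = -\tfrac{2}{v+|u|}+\cdots$ and $4\Omega\underline{\omega}=\tfrac{2}{v+|u|}+\cdots$ from Theorem \ref{thm:estconncoef}(ii), one finds $-2|u|\cdot 1 + |u|^2\cdot\tfrac{2}{v+|u|} = -2|u|\tfrac{v}{v+|u|}\le 0$ and similarly $qv^{q-1} - v^q\cdot\tfrac{2}{v+|u|}$; for $q\le 2$ and $v,|u|$ large this combination has a favorable (negative) sign after accounting for $q v^{q-1}\le 2v^{q-1}\le v^{q}\cdot(2/v)$, with the subleading $\log$-type errors being integrable. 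Third, handle $K^{N_{2,q}}_{\textnormal{null}}[\phi]=\tfrac12\Omega^{-2}(v^q\Omega\tr\chi+|u|^2\Omega\tr\underline{\chi})L\phi\,\underline{L}\phi$: since $\Omega\tr\chi>0$ and $\Omega\tr\underline{\chi}<0$ are both $\mathcal O((v+|u|)^{-2})$, one bounds $|K^{N_{2,q}}_{\textnormal{null}}|$ by Cauchy--Schwarz $\lesssim \Omega^{-2}(v+|u|)^{-2}(v^q(L\phi)^2+|u|^2(\underline{L}\phi)^2)$ — wait, more carefully, split so that the weight matches the flux densities; using $\Omega^{-2}\sim (v+|u|)^2$ from Theorem \ref{thm:estmetric}(ii), the factor $\Omega^{-2}(v+|u|)^{-2}\sim 1$, so this term is $\lesssim v^q(L\phi)^2 + |u|^2(\underline{L}\phi)^2$ pointwise, and one applies a Gr\"onwall/bootstrap argument in $u$ (or $v$) since $\int^{u_0} \cdots du$ against these flux densities closes. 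The residual $K^{N_{2,q}}_{\textnormal{angular}}$ error terms involving $\Omega\hat\chi$, $\Omega\hat{\underline\chi}$ are $\mathcal O((v+|u|)^{-2}\log(v+|u|))|\snabla\phi|^2$, hence absorbed into the non-positive principal angular term for $v+|u|$ large, or into a Gr\"onwall argument.

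\textbf{Main obstacle.} The delicate point — and the one I would spend the most care on — is pinning down the sign of the principal part of $K^{N_{2,q}}_{\textnormal{angular}}[\phi]$ uniformly for all $0<q\le 2$, including the borderline $q=2$, and verifying that the subleading corrections (the $\log(v+|u|)\mathcal O((v+|u|)^{-2})$ pieces in $\omega,\underline\omega,\hat\chi,\hat{\underline\chi}$ and the $\mathcal O(1)$ error in $v+|u|=r_*+\mathcal O(1)$) are genuinely integrable against $|\snabla\phi|^2$ weighted by the flux density. Concretely, the combination $-2|u|\tfrac{v}{v+|u|} + qv^{q-1} - v^q\tfrac{2}{v+|u|} + (\text{log errors})$ must be shown $\le 0$ once $|u_0|,v_0$ are large enough, or at worst $\le C(v+|u|)^{-1-\delta}(|u|^2+v^q)$ so that a Gr\"onwall argument in a null direction closes; this is exactly the mechanism already used in \cite{gajic2015} for $a=0$, and the Kerr--Newman estimates of Theorem \ref{thm:estmetric} and Theorem \ref{thm:estconncoef} are engineered to have the same leading behaviour, so the argument transplants, but the bookkeeping of the error hierarchy is where the real work lies. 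Everything else — the divergence theorem, the limiting procedure $u_1\to-\infty$, $v_2\to\infty$, and translating back to $(U,v)$ and $(\widetilde u,\widetilde V)$ coordinates near the horizons — is routine given the setup in Section \ref{sec:vfmethod}.
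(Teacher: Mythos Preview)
Your overall plan is exactly the paper's (Proposition~\ref{prop:mainenergyestimate}): apply the divergence identity with the multiplier $N_{2,q}$, use axisymmetry to kill $K^{N_{2,q}}_{\textnormal{mixed}}$, analyse the sign of $K^{N_{2,q}}_{\textnormal{angular}}$ via the expansions for $\Omega\omega,\Omega\underline{\omega}$, bound $K^{N_{2,q}}_{\textnormal{null}}$ by Cauchy--Schwarz, and close with the Gr\"onwall lemma (Lemma~\ref{lm:gronwall}). Your identification of the angular cancellation is correct in spirit: at $p=q=2$ the leading bracket vanishes identically, and for $q<2$ it is non-positive for $v_0,|u_0|$ large. (Your one-line justification $qv^{q-1}\le 2v^{q-1}=v^q\cdot(2/v)$ does not actually prove the sign, since $2/v\ge 2/(v+|u|)$ goes the wrong direction; you must combine the $|u|$- and $v$-pieces of the bracket together, as the paper does in (\ref{eq:KNangularkerrn2}).)

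The real gap is in your treatment of $K^{N_{2,q}}_{\textnormal{null}}$. An unweighted Cauchy--Schwarz on $\Omega^2 K^{N_{2,q}}_{\textnormal{null}}\lesssim (v+|u|)^{-2}(v^q+|u|^2)|L\phi||\underline{L}\phi|$ produces cross-terms such as $(v+|u|)^{-2}|u|^2(L\phi)^2$. To feed this into Lemma~\ref{lm:gronwall} against the flux density $v^q(L\phi)^2$ on $H_u$ you would need $h(u)\ge \sup_{v\ge v_0}(v+|u|)^{-2}|u|^2v^{-q}$, and the right-hand side is $\sim v_0^{-q}$ for large $|u|$, hence not integrable in $u$. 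The paper instead uses a \emph{weighted} Cauchy--Schwarz: for small $\eta>0$,
\begin{equation*}
v^q(v+|u|)^{-2}|L\phi||\underline{L}\phi|\;\le\; C\,v^{-1-\eta}\,|u|^2(\underline{L}\phi)^2\;+\;C\,|u|^{q+\eta-5}\,v^q(L\phi)^2,
\end{equation*}
together with the symmetric estimate for the $|u|^2(v+|u|)^{-2}$ piece. This yields genuinely integrable $h(u)=|u|^{-1-\eta}+|u|^{q+\eta-5}+\cdots$ and $k(v)=v^{-1-\eta}+v^{\eta-q-1}+\cdots$, and is precisely where the restriction $q>0$ enters. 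Your phrase ``split so that the weight matches the flux densities'' points in the right direction, but the concrete weighted splitting is the missing ingredient.
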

Theorem \ref{thm:eestaxisymm} is proved in Proposition \ref{prop:mainenergyestimate}. Theorem \ref{thm:H1boundv1} follows immediately by using moreover Theorem \ref{thm:pointwiseboundkn} below and the estimate (\ref{est:energytoH1}).

\begin{theorem}
\label{thm:energyestsmallakn}
Let $\phi$ be a solution to (\ref{eq:waveqkerr}), with $|a|<a_c$, corresponding to initial data from Proposition \ref{prop:wellposedness} satisfying
\begin{equation*}
E_{q}[\phi]:=\int_{\mathcal{H}^+\cap\{v\geq v_0\}} v^q(L\phi)^2+|\snabla \phi|^2+ \int_{\underline{H}_{v_0}} |u|^2(\uline{L}\phi)^2+\Omega^2|\snabla\phi|^2<\infty,
\end{equation*}
where $0< q\leq 2$.

Let $0\leq p<2$ and let $\epsilon>0$ be arbitrarily small. Then there exists a constant \\$C=C(a,M,u_0,v_0,p,q,\epsilon)>0$, such that for all $H_u$ and $\underline{H}_v$,
\begin{equation*}
\begin{split}
&\int_{H_u}v^{q-\epsilon}(L\phi)^2+|u|^p|\snabla \phi|^2 +\int_{\underline{H}_v}  |u|^p(\uline{L}\phi)^2+\Omega^2v^{q-\epsilon}|\snabla\phi|^2\leq CE_{q}[\phi].
\end{split}
\end{equation*}
\end{theorem}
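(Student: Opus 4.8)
The plan is to run the vector field method with two different multipliers in two different spacetime regions, following the strategy outlined in Part 1 of Section~\ref{sec:mainsteps}. The obstruction to using $N_{p,q}$ globally when $\phi$ is not axisymmetric is the mixed error current $K^{N_{p,q}}_{\textnormal{mixed}}[\phi]$ in \eqref{eq:errorterms3}, which involves the torsion $\zeta^{\varphi_*}$ and cannot be absorbed into the good terms of $K^{N_{p,q}}_{\textnormal{null}}$ and $K^{N_{p,q}}_{\textnormal{angular}}$ by the methods of the axisymmetric case. The key observation is that $H=\frac12(\partial_u+\partial_v)$ is timelike in $\mathcal{M}_{\rm int}\cap(\{v\geq v_1\}\cup\{u\leq u_1\})$ once $|u_1|,v_1$ are chosen large (this is where $|a|<a_c$ enters), so the current $K^{Y_p}[\phi]=\tfrac p2\Omega^{-2}|u|^{p-1}J^H[\phi]\cdot L\geq 0$ is nonnegative even without axisymmetry. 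Since $H$ is Killing, $\mathcal{E}^{Y_p}[\phi]=0$ too. So the idea is: cover ${\mathcal{D}_{u_0,v_0}}\cap\mathcal{M}_{\rm int}$ by a "near-horizon" region $\mathcal{R}_{\rm near}=\{v\geq v_1\}\cup\{u\leq u_1\}$, where we use $Y_p$, and a "bounded" region $\mathcal{R}_{\rm bdd}=\{v\leq v_1,\ u\geq u_1\}$ of finite spacetime volume, where we use $N_{p,q}$ and simply absorb all error terms (including the mixed one) crudely using the finiteness of the volume together with a Gr\"onwall/bootstrap argument in the characteristic rectangle.

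Concretely, first I would fix $u_1,v_1$ large enough that Theorem~\ref{thm:estmetric} and Theorem~\ref{thm:estconncoef} give the sharp decay $\Omega\tr\chi,\ -\Omega\tr\underline\chi,\ \Omega^2\zeta^{\varphi_*}=\mathcal{O}((v+|u|)^{-2})$ with small enough implied constants, and $g(H,H)<0$, throughout $\mathcal{R}_{\rm near}$. Step one: in $\mathcal{R}_{\rm near}$, apply the divergence identity \eqref{divergenceidentity} to $Y_p$ on subrectangles. The boundary fluxes $J^{Y_p}[\phi]\cdot L$ and $J^{Y_p}[\phi]\cdot\underline L$ are, because $H$ is timelike, comparable to $|u|^p[(L\phi)^2+\Omega^2|\snabla\phi|^2+\ldots]$ and $|u|^p[(\underline L\phi)^2+\Omega^2|\snabla\phi|^2+\ldots]$ respectively — i.e.\ coercive in exactly the quantities we want to bound, but with $|u|^p$ weights in both the $L$ and $\underline L$ slots. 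Since $K^{Y_p}\geq 0$, the identity gives monotonicity: the flux on outgoing $H_u$ and ingoing $\underline H_v$ segments inside $\mathcal{R}_{\rm near}$ is bounded by the flux on the "initial" parts of the boundary of $\mathcal{R}_{\rm near}$, namely $\mathcal{H}^+\cap\{v\geq v_0\}$, $\underline H_{v_0}$, and the two interfaces $\{u=u_1\}$ and $\{v=v_1\}$ with $\mathcal{R}_{\rm bdd}$. Step two: in the bounded region $\mathcal{R}_{\rm bdd}$, which has finite spacetime volume and on which all metric/connection quantities are bounded, use $N_{p,q}$ (or just $N_{0,0}$ plus angular-derivative control): the error terms $K^{N_{p,q}}_{\textnormal{null}},K^{N_{p,q}}_{\textnormal{angular}},K^{N_{p,q}}_{\textnormal{mixed}}$ are all pointwise bounded by $C(|L\phi|^2+|\underline L\phi|^2+\Omega^2|\snabla\phi|^2)$, so a standard Gr\"onwall argument in $u$ and $v$ over the compact rectangle $\mathcal{R}_{\rm bdd}$ propagates the flux from $\{v=v_0\}\cup\{u=u_1\}$-type data across $\mathcal{R}_{\rm bdd}$ with a constant depending only on $a,M,u_0,v_0$. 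Step three: patch the two regions. The flux through the interface $\{u=u_1,\ v\leq v_1\}$ feeding into $\mathcal{R}_{\rm near}$ is controlled by step two in terms of $E_q[\phi]$; conversely the data for step two on $\{v=v_1,\ u\geq u_1\}$ is controlled by step one. One has to check there is no circularity — but the regions can be ordered causally (e.g.\ foliate by $v$, doing $\mathcal{R}_{\rm bdd}\cap\{v\leq v_1\}$ first, then moving into $\{v\geq v_1\}$), so the estimates chain without a loop.

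Step four is where the $\epsilon$-loss and the restriction $p<2$ come in. In the $Y_p$ estimate, the coercive flux contains $|u|^p$ in the $L$-direction, whereas we want $v^{q-\epsilon}$ in front of $(L\phi)^2$ on $H_u$. To convert, I would interpolate: note $|u|^p(L\phi)^2$ on $\underline H_v$ controls, after integrating in $u$ against the measure and using $\int^{u_0}|u|^{-p}\,du<\infty$ precisely when $p<2$... — more to the point, the bound $\int_{H_u}v^{q-\epsilon}(L\phi)^2$ is obtained by combining the $Y_p$ flux bound (which gives $\int_{H_u}|u|^p\cdot(\text{stuff})$ weighted control but in the wrong variable) with a second application of $N_{0,q}$-type estimates where the error terms are now absorbed using the already-established $Y_p$ bounds, paying an $\epsilon$ to close the weight exponents against the borderline $(v+|u|)^{-2}$ decay of $\omega,\underline\omega,\tr\chi,\tr\underline\chi$. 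This is exactly the borderline logarithmic divergence that forces $q\mapsto q-\epsilon$ and bars $p=2$. I expect this interpolation/weight-juggling step, together with verifying that the interface fluxes between $\mathcal{R}_{\rm near}$ and $\mathcal{R}_{\rm bdd}$ are mutually controlled without circular reasoning, to be the main technical obstacle; the rest is a direct adaptation of the axisymmetric argument of Proposition~\ref{prop:mainenergyestimate} with $K^{N_{p,q}}_{\textnormal{mixed}}$ now handled by the timelike $H$ rather than by $\zeta(\phi)=0$.
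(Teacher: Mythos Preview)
Your decomposition is essentially the reverse of the paper's, and this leads to a genuine obstruction. You propose to use $Y_p=|u|^pH$ in the region $\mathcal{R}_{\rm near}=\{v\geq v_1\}\cup\{u\leq u_1\}$ abutting the horizons, and $N_{p,q}$ in the compact rectangle $\mathcal{R}_{\rm bdd}$. But $Y_p$ cannot be used from $\mathcal{H}^+$: the flux $J^{Y_p}[\phi]\cdot L$ contains $|u|^p(L\phi)^2$, and since $|u|\to\infty$ on $\mathcal{H}^+$, the initial energy $\int_{\mathcal{H}^+}J^{Y_p}[\phi]\cdot L$ is infinite for $p>0$ (equivalently, $|u|^pL$ does not extend regularly across $\mathcal{H}^+$). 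A symmetric problem occurs near $\mathcal{CH}^+$: in $\{v\geq v_1\}$ with $u$ bounded, the $Y_p$ flux only yields $|u|^p$-weights on $(L\phi)^2$, and there is no mechanism to convert these to the desired $v^{q-\epsilon}$-weights since $|u|$ and $v$ are completely decoupled there. Your Step~4 acknowledges this mismatch but the proposed fix --- running $N_{0,q}$ again and absorbing errors with the $Y_p$ bound --- does not close: the mixed term $K^{N_{0,q}}_{\rm mixed}\sim (v+|u|)^{-2}v^q|L\phi||\partial_{\varphi_*}\phi|$ still requires $v^q\Omega^2|\snabla\phi|^2$ control, which neither $Y_p$ nor $N_{0,q}$ supplies in that region.

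The paper's decomposition goes the other way. It introduces spacelike hypersurfaces $\underline{\gamma}_\alpha=\{|u|=v^\alpha\}$ and $\gamma_\beta=\{v=|u|^\beta\}$ with $\alpha,\beta>1$, splitting ${\mathcal{D}_{u_0,v_0}}$ into: (i) $\underline{\mathcal{A}}=J^-(\underline{\gamma}_\alpha)$, a \emph{parabolically thin} neighbourhood of $\mathcal{H}^+$ of finite spacetime volume; (ii) $\mathcal{A}=J^+(\gamma_\beta)$, a similar thin neighbourhood of $\mathcal{CH}^+$; and (iii) the intermediate region $\mathcal{B}$. In $\underline{\mathcal{A}}$ and $\mathcal{A}$ one uses $N_{p,q}$ --- the mixed term is estimated by Cauchy--Schwarz and the relation $v\lesssim|u|^{1/\alpha}$ (resp.\ $|u|\lesssim v^{1/\beta}$) makes the resulting weight integrable precisely when $\alpha,\beta>1$. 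In $\mathcal{B}$, one uses $Y_{p'}$; the point is that on $\underline{\gamma}_\alpha$ and $\gamma_\beta$ the coordinates $|u|$ and $v$ are \emph{polynomially comparable}, so the $N_{2,q''}$-flux on $\underline{\gamma}_\alpha$ controls the $Y_{p'}$-flux there (requiring $q''\geq p'\alpha$), and the $Y_{p'}$-flux on $\gamma_\beta$ controls the $N_{p,q}$-flux there (requiring $p'\geq q\beta$, $p'\geq p$). Chaining the constraints forces $p\leq 2/\alpha<2$ and $q\leq 2/(\alpha\beta)<2$, which is exactly the $\epsilon$-loss and the restriction $p<2$ in the statement. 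The compact region $\{v\leq v_1\}\cap\mathcal{B}$ is handled trivially by $N_{p,q}$ with Gr\"onwall.
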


Theorem \ref {thm:energyestsmallakn} is proved in Proposition \ref{energyestsmallakn}. 

We can remove the $\epsilon$ in Theorem \ref {thm:energyestsmallakn} at the cost of losing derivatives on the right-hand side of the estimate.
\begin{theorem}
\label{thm:improvedestoutgoingenergy}
Let $\phi$ be a solution to (\ref{eq:waveqkerr}), with $|a|<a_c$, corresponding to initial data from Proposition \ref{prop:wellposedness} and denote
\begin{equation*}
D=||\partial_U\phi||^2_{L^{\infty}\left(\uline{H}_{v_0}\right)}+||\snabla \phi||^2_{L^{\infty}\left(\uline{H}_{v_0}\right)}.
\end{equation*}

Assume further that
\begin{equation*}
\begin{split}
E_{\rm extra; \eta}[\phi]:=&\: \int_{\mathcal{H}^+\cap \{v\geq v_0\}}v^2(L\phi)^2\\
&+\sum_{0\leq j_1+j_2\leq 4}\int_{\mathcal{H}^+\cap\{v\geq v_0\}} v^{\eta} \left(|\snabla^{j_1} L^{j_2+1}\phi|^2+ |\snabla^{j_1+1} L^{j_2}\phi|^2+|\snabla^{j_1+2} L^{j_2}\phi|^2\right)<\infty,
\end{split}
\end{equation*}
for $\eta>0$ arbitrarily small. Then there exists a constant $C=C(a,M,v_0,u_0,\eta)>0$ such that,
\begin{equation*}
\int_{H_u}v^2(L\phi)^2+u^2\Omega^2|\snabla\phi|^2+\int_{\underline{H}_v}v^2\Omega^2|\snabla\phi|^2\leq C(D+E_{\rm extra; \eta}[\phi]).
\end{equation*}

\end{theorem}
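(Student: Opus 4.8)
The plan is to upgrade the $\epsilon$-lossy estimate of Theorem \ref{thm:energyestsmallakn} to a sharp one by first proving pointwise decay in $v$ of $\|L\phi\|_{L^2(S^2_{u,v})}$, and then feeding this decay back into the energy identity. The starting point is the observation (exploited already in Part 3 of Section \ref{sec:mainsteps}) that the wave equation, written in double-null form, is a transport equation for $L\phi$ along the ingoing null generators $\partial_u$: schematically
\begin{equation*}
\underline{L}(L\phi) + (\textnormal{connection coefficients})\,L\phi + (\textnormal{connection coefficients})\,\underline{L}\phi = (\textnormal{angular Laplacian term}).
\end{equation*}
Since $\psi := \phi - \phi|_{\mathcal{H}^+}$ vanishes on $\mathcal{H}^+$, the fundamental theorem of calculus along $\partial_u$ combined with Cauchy--Schwarz (as in (\ref{est:fundthmcalcCS})) gives uniform-in-$u$ smallness of $\|\psi\|_{L^2(S^2_{u,v})}$, and iterating this argument after commuting with $L$ and $\underline{L}$ — which is permissible because the commutator error terms from $[\square_g,L]$ and $[\square_g,\underline{L}]$ are controllable by the methods of Part 1, using the Killing property of $Q$, $\Phi$ and $T$ to close the angular derivatives — yields pointwise-in-$u$ control of $\|\snabla^{j_1}L^{j_2}\phi\|_{L^2(S^2_{u,v})}$ from the data norm $E_{\rm extra;\eta}[\phi]$ with one derivative to spare. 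Integrating the transport equation for $L\phi$ in $u$ from $\mathcal{H}^+$, one then extracts $v$-decay: the leading weights $\Omega\tr\underline\chi \sim (v+|u|)^{-2}$ and the $v^2(L\phi)^2$-weighted data term on $\mathcal{H}^+$ (which controls $\|L\phi\|_{L^2(S^2_{-\infty,v})}\lesssim v^{-1}$) combine to give
\begin{equation*}
\|L\phi\|_{L^2(S^2_{u,v})}^2 \lesssim v^{-4+\epsilon}\cdot(\textnormal{data}),
\end{equation*}
i.e.\ $\int_{S^2_{u,v}} v^2(L\phi)^2 \lesssim v^{-2+\epsilon}$ uniformly in $u$, which is integrable in $v$.

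With this pointwise $v$-decay of the weighted angular $L^2$ norm of $L\phi$ in hand, the remainder is a clean energy argument. I would run the divergence identity (\ref{divergenceidentity}) with the multiplier $N_{2,2}$ on the axisymmetric part and with $Y_p = |u|^p H$ (for appropriate $p$, using that $K^{Y_p}[\phi]\ge 0$ when $H$ is timelike, which holds since $|a|<a_c$) in the region where $N_{2,2}$'s bulk terms are not obviously signed. The only error terms that obstructed the sharp weight in Theorem \ref{thm:energyestsmallakn} are the torsion terms $K^{N_{p,q}}_{\textnormal{mixed}}[\phi] = 2[v^q(L\phi) - |u|^p(\underline L\phi)]\zeta^{\varphi_*}\partial_{\varphi_*}\phi$, with $\Omega^2\zeta^{\varphi_*} \sim (v+|u|)^{-2}$. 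In the $v^q(L\phi)$ piece, I now have an honest extra factor of $v$-decay from the pointwise estimate just proved: integrating $\int v^2 |L\phi|\,|\zeta^{\varphi_*}|\,|\Phi\phi|$ over spacetime, the $v$-integral $\int v^{-2+\epsilon}\,dv$ (or better) converges rather than producing a logarithmic or power divergence, so this term is absorbable with no $\epsilon$-loss in the final weight, at the cost of the higher-order data norm $E_{\rm extra;\eta}$. The $|u|^p(\underline L\phi)$ piece of the torsion term is handled by Cauchy--Schwarz against the $\int_{\underline H_v} |u|^p(\underline L\phi)^2$ flux already controlled by Theorem \ref{thm:energyestsmallakn}, together with the spacetime integrability of the $\zeta^{\varphi_*}\Phi\phi$ factor. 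A Grönwall argument in $u$ (or $v$) then closes the estimate for $\int_{H_u} v^2(L\phi)^2 + u^2\Omega^2|\snabla\phi|^2$ and $\int_{\underline H_v} v^2\Omega^2|\snabla\phi|^2$ in terms of $D + E_{\rm extra;\eta}[\phi]$; the term $D$ enters through the $\underline H_{v_0}$ boundary contribution and the pointwise bound on $\phi$ from Theorem \ref{thm:pointwiseboundkn}, which controls the zeroth-order factors appearing in the commuted transport estimates.

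The main obstacle is the first half: producing genuinely sharp (no $\epsilon$) pointwise $v$-decay of $\|L\phi\|_{L^2(S^2_{u,v})}$ — or at least decay fast enough that, after multiplying by the $v^2$ weight and by the $(v+|u|)^{-2}$ from $\zeta^{\varphi_*}$, the resulting $v$-integral over the spacetime slab converges. Because the $\zeta$ error term couples $L\phi$ to the angular derivative $\Phi\phi$, one cannot treat $L\phi$ alone; the transport estimate must be run simultaneously for $\phi$, $\Phi\phi$ (hence, via $Q$, all angular derivatives up to the order allowed by $E_{\rm extra;\eta}$), and their $\underline L$-commutators, and the bookkeeping of which weight each commuted quantity decays with — and verifying that the commutator terms never cost more decay than one has to spend — is the delicate part. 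The residual $\eta$ in the hypothesis is precisely the price one pays in that simultaneous transport argument, and the theorem is stated with $\eta>0$ arbitrarily small for exactly this reason; the point is that this $\eta$-loss lands only on the \emph{data} side, leaving the spacetime weight $v^2$ sharp.
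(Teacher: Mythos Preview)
Your first half --- deriving pointwise $v$-decay of $\|L\phi\|_{L^2(S^2_{u,v})}$ by writing the wave equation as a transport equation for $L\phi$ along $\underline{L}$, subtracting $\phi_{\mathcal{H}^+}$ to form $\psi$, and feeding in the commuted higher-order energy estimates --- is exactly what the paper does (Propositions \ref{improvedpointwisedecaykn}--\ref{prop:decayLphi}). You even note that $\int_{S^2_{u,v}} v^2(L\phi)^2 \lesssim v^{-2+\epsilon}$ is integrable in $v$, which is the crux.

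Your second half diverges from the paper. You propose to re-run the divergence identity with $N_{2,2}$ and $Y_p$, using the improved $L\phi$ decay to absorb the torsion error $K^{N_{2,2}}_{\textnormal{mixed}}$ and closing by Gr\"onwall. The paper (Corollary \ref{cor:improvedestoutgoingenergy}) does something more elementary: having the pointwise-in-$(u,v)$ bound $\|L\phi\|^2_{L^2(S^2_{u,v})}\lesssim v^{-4+\epsilon}\cdot(\textnormal{data})$ from Proposition \ref{prop:decayLphi}, it simply multiplies by $v^2$ and integrates in $v$ to obtain $\int_{H_u} v^2(L\phi)^2$. For the angular fluxes $\int_{H_u} u^2\Omega^2|\snabla\phi|^2$ and $\int_{\underline{H}_v} v^2\Omega^2|\snabla\phi|^2$ it splits $\snabla\phi=\snabla\psi+\snabla\phi_{\mathcal{H}^+}$ and uses the $|u|$-decay of $\|\snabla\psi\|_{L^2(S^2)}$ from Proposition \ref{cor:improvedpointbound}, together with the $\mathcal{H}^+$ data for $\snabla\phi_{\mathcal{H}^+}$. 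No energy identity is re-run at all in this final step.

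Your route is plausible but more circuitous, and the step where you handle the $|u|^2(\underline{L}\phi)\,\zeta^{\varphi_*}\Phi\phi$ piece by ``Cauchy--Schwarz against the flux already controlled by Theorem \ref{thm:energyestsmallakn}'' is delicate: that theorem only gives the ingoing flux with weight $|u|^p$ for $p<2$, so with $N_{2,2}$ this term must instead be absorbed into the very flux you are bootstrapping, and the residual $\Phi\phi$ factor then needs exactly the $\psi$-decay input you would have used in the paper's direct argument anyway. The paper's approach buys simplicity by avoiding this circularity entirely.
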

Theorem \ref{thm:improvedestoutgoingenergy} is proved in Corollary \ref{cor:improvedestoutgoingenergy}. Theorem \ref{thm:H1boundv2} now follows from Theorem \ref{thm:improvedestoutgoingenergy}, combined with Theorem \ref{thm:pointwiseboundkn} below and the estimate (\ref{est:energytoH1}).

\subsection{Pointwise estimates and continuous extendibility beyond $\mathcal{CH}^+$}
We can use the energy estimates in the subsection above to obtain $L^{\infty}$ estimates in $\mathcal{M}\cap {\mathcal{D}_{u_0,v_0}}$, and we can moreover show that $\phi$ is continuously extendible beyond $\mathcal{CH}^+$. Here, we treat the restriction to axisymmetric $\phi$ and the restriction to slowly rotating extremal Kerr--Newman simultaneously.
\begin{theorem}
\label{thm:pointwiseboundkn}
Either take $0<p<2$ and let $\phi$ be a solution to (\ref{eq:waveqkerr}), with $|a|<a_c$, corresponding to initial data from Proposition \ref{prop:wellposedness} without any symmetry assumptions, or take $0\leq p\leq 2$ and let $\phi$ be a solution to (\ref{eq:waveqkerr}), with $0\leq |a|\leq M$, corresponding to axisymmetric initial data from Proposition \ref{prop:wellposedness}.

Assume that, for $\epsilon>0$ arbitrarily small,
\begin{align*}
\sum_{|k|\leq 2}\int_{S^2_{-\infty,v}}|\snabla^k\phi|^2<&\infty,\\
\sum_{0\leq j_1+j_2\leq 4}\int_{\mathcal{H}^+\cap\{v\geq v_0\}}v^{\epsilon}|\snabla^{j_1}L^{j_2+1}\phi|^2+|\snabla^{j_1+1} L^{j_2}\phi|^2<&\infty.
\end{align*}
Then there exists a constant $C=C(a,M,v_0,u_0,\epsilon)>0$ such that
\begin{equation*}
\begin{split}
\phi^2(u,v,\theta_*,{\varphi_*})\leq &\: \sum_{|k|\leq 2}\int_{S^2_{-\infty,v}}|\snabla^k\phi|^2+C|u|^{1-p}\sum_{0\leq j_1+j_2\leq 4}\int_{\mathcal{H}^+\cap\{v\geq v_0\}}v^{\epsilon}|\snabla^{j_1}L^{j_2+1}\phi|^2+|\snabla^{j_1+1} L^{j_2}\phi|^2.
\end{split}
\end{equation*}
\end{theorem}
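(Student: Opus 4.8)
The plan is to reduce the pointwise bound to $L^2$-estimates on the double-null spheres $S^2_{u,v}$ via a Sobolev inequality, and then to control those $L^2$-norms by propagating from $\mathcal{H}^+$ along the generators of the ingoing hypersurfaces $\underline{H}_v$, feeding in the weighted energy estimates of Theorem \ref{thm:eestaxisymm} in the axisymmetric case and of Theorem \ref{thm:energyestsmallakn} in the slowly rotating case. First I would apply the standard Sobolev inequality on $S^2_{u,v}$; since the induced metric $\slashed{g}$ obeys uniform two-sided bounds by Theorem \ref{thm:estmetric}(i), its Sobolev constant is uniform in $(u,v)$, so that
\begin{equation*}
\phi^2(u,v,\theta_*,\varphi_*)\leq C\sum_{|k|\leq 2}\int_{S^2_{u,v}}|\snabla^k\phi|^2\,d\mu_{\slashed{g}},
\end{equation*}
and it suffices to bound $\int_{S^2_{u,v}}|\snabla^k\phi|^2$ for $0\leq k\leq 2$.

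Next I would integrate along the generators of $\underline{H}_v$. Integrating $\partial_u\bigl(f^2\sqrt{\det\slashed{g}}\bigr)$ from $\mathcal{H}^+$, absorbing the factor $\partial_u\sqrt{\det\slashed{g}}=\sqrt{\det\slashed{g}}\,\Omega\tr\underline{\chi}$ via the bound on $\Omega\tr\underline{\chi}$ in Theorem \ref{thm:estconncoef}(i), and then applying Cauchy--Schwarz, yields the estimate (\ref{est:fundthmcalcCS}): for any $1<p'\leq 2$,
\begin{equation*}
\int_{S^2_{u,v}}f^2\,d\mu_{\slashed{g}}\leq \int_{S^2_{-\infty,v}}f^2\,d\mu_{\slashed{g}}+C|u|^{1-p'}\int_{\underline{H}_v}|u'|^{p'}(\underline{L}f)^2,
\end{equation*}
where I used $\int_{-\infty}^u|u'|^{-p'}\,du'\leq C|u|^{1-p'}$. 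Since $|u|\geq|u_0|\geq 1$, one has $|u|^{1-p'}\leq|u|^{1-p}$ for all $p\leq p'$, so it is enough to take $p'=2$ when Theorem \ref{thm:eestaxisymm} is available, which covers $0\leq p\leq 2$ in the axisymmetric case, and $p'$ slightly below $2$ when only Theorem \ref{thm:energyestsmallakn} is available, which covers $0<p<2$ in the slowly rotating case; this dichotomy is the sole source of the two ranges of $p$ in the statement.

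It remains, with $f=\snabla^k\phi$ and $k\leq 2$, to bound the flux $\int_{\underline{H}_v}|u'|^{p'}(\underline{L}\snabla^k\phi)^2$ uniformly in $v$, after which the initial-sphere term contributes exactly $\sum_{|k|\leq 2}\int_{S^2_{-\infty,v}}|\snabla^k\phi|^2$ and the data on $\underline{H}_{v_0}$ is controlled by the smoothness of the characteristic data. This is where the genuinely extremal-Kerr--Newman difficulty arises: when $a\neq 0$ the spheres $S^2_{u,v}$ do not coincide with the Boyer--Lindquist spheres, so derivatives tangential to $S^2_{u,v}$ are not directly controlled by Killing vector fields. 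Following Part 2 of Section \ref{sec:mainsteps} and the commutator estimates of Section \ref{sec:commutatorestimates}, I would recover these derivatives from the operators $\Phi$, $T$ and the Carter operator $Q$, all of which commute with $\square_g$, together with the null derivatives $L$ and $\underline{L}$. Commuting with $\Phi$, $T$, $Q$ is free, so every energy estimate for $\phi$ holds verbatim for $\Phi^aQ^b\phi$; commuting with $L$ and $\underline{L}$ produces error terms built from the connection coefficients $\hat{\chi}$, $\hat{\underline{\chi}}$, $\omega$, $\underline{\omega}$, $\zeta$ and their derivatives, which by the sharp rates of Theorems \ref{thm:estmetric}--\ref{thm:estconncoef} can be absorbed into the energy scheme of Part 1, using the multipliers $N_{p,q}$ in neighbourhoods of the horizons and, in the non-axisymmetric case, the timelike Hawking multiplier $Y_p$ in the bulk (which is where the $\epsilon$-loss appears). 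Applying the resulting higher-order energy estimates --- Theorem \ref{thm:eestaxisymm} with $q=\epsilon$, respectively Theorem \ref{thm:energyestsmallakn} --- to $\Phi^aQ^bL^c\underline{L}^d\phi$, with $a+2b+c+d$ bounded by a fixed number matching the order of the derivatives in the hypotheses, and converting between $\snabla^k$ and the coordinate derivatives $\partial_{\vartheta^A}$ using the metric bounds, yields
\begin{equation*}
\int_{\underline{H}_v}|u'|^{p'}(\underline{L}\snabla^k\phi)^2\leq C\sum_{0\leq j_1+j_2\leq 4}\int_{\mathcal{H}^+\cap\{v\geq v_0\}}v^{\epsilon}\left(|\snabla^{j_1}L^{j_2+1}\phi|^2+|\snabla^{j_1+1}L^{j_2}\phi|^2\right)
\end{equation*}
uniformly in $v$. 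Summing over $k\leq 2$, inserting into the transport estimate and then into the Sobolev inequality of the first step gives the claimed bound, the weight $|u|^{1-p}$ originating entirely from the transport step.

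The step I expect to be the main obstacle is the commutation with $L$ and $\underline{L}$: unlike the Killing operators $\Phi$, $T$, $Q$, these do not commute with $\square_g$, so one must carefully track every connection-coefficient error term and check that its weight --- schematically of the type $(v+|u|)^{-2}\log(v+|u|)$ --- is compatible with the $N_{p,q}$- and $Y_p$-energy estimates; in the non-axisymmetric case the torsion term $\zeta$ makes this particularly delicate and forces the use of the timelike $H$, which is exactly what introduces the $\epsilon$ in the hypotheses. More minor technical points are the bookkeeping between $\snabla$ and the coordinate vector fields $\partial_{\vartheta^A}$ and the treatment of the volume factor $\sqrt{\det\slashed{g}}$, both of which are routine given Theorem \ref{thm:estmetric}.
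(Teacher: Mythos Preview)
Your proposal is correct and follows the same route as the paper (Proposition \ref{pointwiseboundkn}): Sobolev on $S^2_{u,v}$, the transport estimate (\ref{est:fundthmcalcCS}) along ingoing generators, the elliptic sphere estimate converting angular derivatives to $Q,\Phi,T,L,\underline{L}$, and the higher-order energy estimates of Section \ref{sec:higherorderenergyestimates}. The one place your ordering differs from the paper is worth flagging: you apply the transport step to $f=\snabla^k\phi$ first and then seek to bound $\int_{\underline{H}_v}|u'|^{p'}(\underline{L}\snabla^k\phi)^2$, but this quantity is not directly an $N_{p',q}$-flux --- the angular piece of $J^{N_{p',q}}\cdot\underline{L}$ carries the weight $v^q\Omega^2$, not $|u'|^{p'}$. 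The paper instead applies the elliptic estimate of Proposition \ref{prop:ellipticsphere1} \emph{before} the transport step, converting $\int_{S^2_{u,v}}|\snabla^k\phi|^2$ into sphere-integrals of $(\Gamma\phi)^2$ and $(L^{j_1}\underline{L}^{j_2}\phi)^2$ with $\Gamma\in\{\textnormal{id},\Phi,\Phi^2,T^2,Q\}$, and only then runs the fundamental-theorem-of-calculus argument on each of these; the resulting ingoing flux $\int_{\underline{H}_v}|u'|^{p'}(\underline{L}f)^2$ with $f=L^{j_1}\underline{L}^{j_2}\Gamma\phi$ is then precisely what the commuted energy estimates of Corollary \ref{cor:commeestimate} bound. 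Your order can be made to work by commuting $\underline{L}$ through $\snabla^k$ and applying Proposition \ref{prop:ellipticsphere1} pointwise along $\underline{H}_v$, but the paper's order is cleaner.
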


Theorem \ref{thm:pointwiseboundkn} follows from Proposition \ref{pointwiseboundkn}.  

\begin{theorem}
\label{thm:C0extension}
Let $\phi$ be a solution to (\ref{eq:waveqkerr}), with $|a|<a_c$, corresponding to initial data from Proposition \ref{prop:wellposedness} without any symmetry assumptions, or let $\phi$ be a solution to (\ref{eq:waveqkerr}), with $0\leq |a|\leq M$, corresponding to axisymmetric initial data from Proposition \ref{prop:wellposedness}.

Assume furthermore that
\begin{equation}
\label{eq:requireddecayext}
\sum_{0\leq j_1+j_2\leq 4}\int_{\mathcal{H}^+\cap\{v\geq v_0\}}v^{q}|\snabla^{j_1}L^{j_2+1}\phi|^2+|\snabla^{j_1+1} L^{j_2}\phi|^2<\infty,
\end{equation}
for some $q>1$.

Then $\phi$ can be extended as a $C^0$ function beyond $\mathcal{CH}^+$.
\end{theorem}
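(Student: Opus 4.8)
The plan is to show that $\phi$, restricted to the outgoing null generators, is uniformly Cauchy as one approaches $\mathcal{CH}^+$; this produces a continuous limit on $\mathcal{CH}^+$ that is attained uniformly, and the continuous extension then follows. I work near $\mathcal{CH}^+$ in the chart $(\widetilde{u},\widetilde{V},\widetilde{\theta}_*,\widetilde{\varphi}_*)$ of Section~\ref{sec:estimatesmetricomp}, in which $\mathcal{CH}^+=\{\widetilde{V}=0\}$, the metric takes the form (\ref{eq:metrickerrdnull3}), and $L=\partial_{\tilde{v}}$, so that the generators of the $H_{\widetilde{u}}$ are the curves $\{\widetilde{u},\widetilde{\theta}_*,\widetilde{\varphi}_*=\textnormal{const}\}$, which reach $\mathcal{CH}^+$ as $\tilde{v}\to\infty$. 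By the fundamental theorem of calculus along such a generator, with $v_0\leq \tilde{v}'\leq \tilde{v}''<\infty$ and writing $\widetilde{\vartheta}=(\widetilde{\theta}_*,\widetilde{\varphi}_*)$,
\[
\phi(\widetilde{u},\tilde{v}'',\widetilde{\vartheta})-\phi(\widetilde{u},\tilde{v}',\widetilde{\vartheta})=\int_{\tilde{v}'}^{\tilde{v}''}(L\phi)(\widetilde{u},\tilde{v},\widetilde{\vartheta})\,d\tilde{v}.
\]
The right-hand side will be estimated by Cauchy--Schwarz against the weight $v^{-q}$; the point is that $\int^{\infty}v^{-q}\,d\tilde{v}<\infty$ precisely because $q>1$, and this integrability is the only place where the strengthened hypothesis (\ref{eq:requireddecayext}), as opposed to the arbitrarily small weight allowed in Theorem~\ref{thm:pointwiseboundkn}, is used.

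First I would obtain pointwise-on-$S^2_{\widetilde{u},\tilde{v}}$ control of $L\phi$. Running the same commutation-and-energy-estimate scheme used to prove Theorem~\ref{thm:pointwiseboundkn} --- commuting $\square_g$ with up to two angular derivatives $\snabla^k$ in the axisymmetric case, and with the Carter operator $Q$, the Killing fields $\Phi$, $T$ and $L,\underline{L}$ in the slowly rotating case, and feeding the resulting equations into the higher-order energy estimates of Section~\ref{sec:higherorderenergyestimates} built on Theorem~\ref{thm:eestaxisymm} (resp.\ Theorem~\ref{thm:energyestsmallakn}, which costs an arbitrarily small $\delta>0$ in the weight) --- but now with the weight $v^{q}$, $q>1$, which is admissible by (\ref{eq:requireddecayext}), one obtains for every $H_{\widetilde{u}}$
\[
\sum_{|k|\leq 2}\int_{H_{\widetilde{u}}}v^{q-\delta}|\snabla^k L\phi|^2\leq C\cdot E,
\]
with $\delta=0$ in the axisymmetric case and $0<\delta<q-1$ otherwise, where $E$ is the finite data quantity on $\mathcal{H}^+\cup \uline{H}_{v_0}$ controlled by (\ref{eq:requireddecayext}) together with the smoothness of the data on the compact hypersurface $\uline{H}_{v_0}$, and where $C$ is uniform over $\widetilde{u}\in(-\infty,u_0]$. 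The Sobolev inequality on $S^2_{\widetilde{u},\tilde{v}}$ gives $\sup_{S^2_{\widetilde{u},\tilde{v}}}(L\phi)^2\lesssim\sum_{|k|\leq 2}||\snabla^k L\phi||^2_{L^2(S^2_{\widetilde{u},\tilde{v}})}$, so integrating in $\tilde{v}$ yields $\int_{\tilde{v}_1}^{\infty}v^{q-\delta}\sup_{S^2_{\widetilde{u},\tilde{v}}}(L\phi)^2\,d\tilde{v}\leq C\cdot E$ for all $\tilde{v}_1\geq v_0$ and all $\widetilde{u}$, and the same bound holds along $\mathcal{H}^+$ itself, directly from (\ref{eq:requireddecayext}).

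Combining this with Cauchy--Schwarz applied to the displayed identity, for all $\tilde{v}_1\leq \tilde{v}'\leq \tilde{v}''<\infty$,
\[
\sup_{\widetilde{u},\,\widetilde{\vartheta}}\ \big|\phi(\widetilde{u},\tilde{v}'',\widetilde{\vartheta})-\phi(\widetilde{u},\tilde{v}',\widetilde{\vartheta})\big|^2\leq\Big(\int_{\tilde{v}_1}^{\infty}v^{-(q-\delta)}\,d\tilde{v}\Big)\,C\,E=\frac{\tilde{v}_1^{\,1-q+\delta}}{q-\delta-1}\,C\,E,
\]
which tends to $0$ as $\tilde{v}_1\to\infty$ since $q-\delta>1$. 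Hence, for each fixed $(\widetilde{u},\widetilde{\vartheta})$ the map $\tilde{v}\mapsto\phi(\widetilde{u},\tilde{v},\widetilde{\vartheta})$ is uniformly Cauchy as $\widetilde{V}\to 0^-$, so the limit $\phi_*(\widetilde{u},\widetilde{\vartheta}):=\lim_{\widetilde{V}\to 0^-}\phi$ exists and $\phi\to\phi_*$ uniformly on $\mathcal{M}_{\textnormal{int}}\cap{\mathcal{D}_{u_0,v_0}}$. Since $\phi$ is smooth, hence continuous, on each slice $\{\widetilde{V}=\textnormal{const}<0\}$ and the convergence is uniform, $\phi_*$ is continuous in $(\widetilde{u},\widetilde{\vartheta})$; setting $\phi|_{\mathcal{CH}^+}:=\phi_*$ extends $\phi$ to a continuous function on ${\mathcal{D}_{u_0,v_0}}$, joint continuity at points of $\mathcal{CH}^+$ following from the uniform convergence together with continuity of the slices. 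Finally, extending $\phi$ across $\{\widetilde{V}=0\}$ by, say, $\phi(\widetilde{u},\widetilde{V},\widetilde{\vartheta}):=\phi_*(\widetilde{u},\widetilde{\vartheta})$ for $\widetilde{V}\geq 0$ furnishes a $C^0$ extension into $\mathcal{M}'_{\textnormal{ext}}$ beyond $\mathcal{CH}^+$.

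The main obstacle is the first step: it is the derivatives $\snabla^k L\phi$, not $\phi$ itself, that must be controlled, and these do not solve (\ref{eq:waveqkerr}), so one must push the commuted equations through the full energy machinery and --- crucially once axisymmetry is dropped --- absorb the torsion $\zeta$ and the commutator error terms using the timelike Hawking vector field $H$ and the Killing property of $Q$, $\Phi$, $T$; this is exactly what forces the $\delta$-loss, which is harmless here only because (\ref{eq:requireddecayext}) is assumed with $q>1$ \emph{strictly}. A secondary technical point is to ensure that $C$ stays uniform as $\widetilde{u}\to-\infty$, so that the Cauchy estimate degenerates neither towards $\mathcal{H}^+$ nor at the excluded corner $\{(U,\widetilde{V})=(0,0)\}$; this is handled by the uniform bounds of Section~\ref{sec:estimatesmetricconn} and the global character of the double-null foliation.
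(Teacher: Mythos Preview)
Your approach is correct and takes essentially the same route as the paper's proof (Proposition~\ref{prop:C0extension}): both argue via the fundamental theorem of calculus along null generators, Cauchy--Schwarz against the weight $v^{-q}$ (whose integrability for $q>1$ is the crux), Sobolev on $S^2_{u,v}$, and the commuted energy estimates of Section~\ref{sec:higherorderenergyestimates}. The only cosmetic difference is that the paper verifies that $\phi(x_k)$ is Cauchy for an arbitrary sequence $x_k\to x_{\mathcal{CH}^+}$ by splitting the difference $\phi(x_l)-\phi(x_k)$ in each of the four coordinates (so it also uses the ingoing flux for the $u$-direction), whereas you establish uniform Cauchy along the outgoing generators and then invoke the uniform-limit-of-continuous-functions theorem; the analytic input is the same.

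One small point of phrasing: the displayed estimate $\sum_{|k|\leq 2}\int_{H_{\widetilde u}}v^{q-\delta}|\snabla^k L\phi|^2\le CE$ is not literally what the outgoing flux gives, since $J^{N_{p,q}}[\psi]\cdot L$ controls $v^q(L\psi)^2$ and only $|u|^p\Omega^2|\snabla\psi|^2$ along $H_u$. What you actually need (and what the scheme you cite produces) is to first pass through the elliptic estimate of Proposition~\ref{prop:ellipticsphere1} to replace $\snabla^k L\phi$ by combinations of $L^{j_1}\underline L^{j_2}\Gamma\phi$, commute the outermost $L$ to the front, and then read off the $v^{q-\delta}(L\,\cdot)^2$ part of the commuted fluxes from Corollary~\ref{cor:commeestimate}. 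This is precisely the mechanism in the paper as well, so the argument goes through unchanged.
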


Theorem \ref{thm:C0extension} is proved in Proposition \ref{prop:C0extension}. We can infer Theorems \ref{thm:linftyboundv1} and \ref{thm:linftyboundv2} from Theorem \ref{thm:pointwiseboundkn} and Theorem \ref{thm:C0extension}. 

As Theorem \ref{thm:linftyboundv0} is formulated in terms of Cauchy initial data for $\phi$ on an asymptotically flat hypersurface $\Sigma$ in extremal Kerr, we also need to appeal to the decay estimates in the exterior of extremal Kerr. In particular, boundedness of a non-degenerate energy and $\tau^{-1-\epsilon}$-decay of a degenerate $T$-energy for axisymmetric solutions, with respect to a suitable spacelike foliation $\Sigma_{\tau}$ of the extremal Kerr exterior, which are proved in Theorems 2 and 3 of \cite{are6}, are sufficient to show that (\ref{eq:requireddecayext}) holds for suitable Cauchy data for $\phi$, so that Theorem \ref{thm:linftyboundv0} can be viewed as a corollary of Theorem \ref{thm:linftyboundv1}.

\begin{theorem}
\label{thm:alphaholdercont}
Let $\phi$ be a solution to (\ref{eq:waveqkerr}) corresponding to initial data from Proposition \ref{prop:wellposedness} without any symmetry assumptions. Let $k\in \N_0$ and denote
\begin{equation*}
\begin{split}
D_{2k}:=&\:\sum_{j_1+j_2+2j_3+j_4\leq 2k}||\partial_UL^{j_1}\uline{L}^{j_2}Q^{j_3}\Phi^{j_4}\phi||^2_{L^{\infty}\left(\uline{H}_{v_0}\right)}+||\snabla L^{j_1}\uline{L}^{j_2}Q^{j_3}\Phi^{j_4}\phi||^2_{L^{\infty}\left(\uline{H}_{v_0}\right)}\\
&+\sum_{j_1+2j_2\leq n}||\partial_U\Phi^{j_1+1}Q^{j_2}\phi||^2_{L^{\infty}\left(\uline{H}_{v_0}\right)}+||\snabla \Phi^{j_1+1}Q^{j_2}\phi||^2_{L^{\infty}\left(\uline{H}_{v_0}\right)}.
\end{split}
\end{equation*}
Assume that
\begin{equation*}
\int_{S^2_{-\infty,v}}v^4(L\phi)^2+v^2|\snabla\phi|^2+v^2|\snabla^2\phi|^2\,d\mu_{\slashed{g}}<\infty.
\end{equation*}
\begin{itemize}
\item[(i)]
Let $|a|<a_c$ and assume also that
\begin{equation*}
\sum_{0\leq j_1+j_2\leq 4}\int_{\mathcal{H}^+\cap\{v\geq v_0\}} v \left(|\snabla^{j_1} L^{j_2+1}\phi|^2+ |\snabla^{j_1+1} L^{j_2}\phi|^2+|\snabla^{j_1+2} L^{j_2}\phi|^2\right)<\infty.
\end{equation*}
Then we can estimate
\begin{equation*}
\begin{split}
\int_{S^2_{u,v}}&v^4(L\phi)^2(u,v,\theta_*,{\varphi_*})\,d\mu_{\slashed{g}}\\
\leq \:& \int_{S^2_{-\infty,v}}v^4(L\phi)^2\,d\mu_{\slashed{g}}+C\int_{S^2_{-\infty,v}}v^2|\snabla\phi|^2+v^2|\snabla^2\phi|^2\,d\mu_{\slashed{g}}\\
&+Cv^{\epsilon}\left[D_2+\sum_{0\leq j_1+j_2\leq 4}\int_{\mathcal{H}^+\cap\{v\geq v_0\}} v \left(|\snabla^{j_1} L^{j_2+1}\phi|^2+ |\snabla^{j_1+1} L^{j_2}\phi|^2+|\snabla^{j_1+2} L^{j_2}\phi|^2\right)\right].
\end{split}
\end{equation*}
\item[(ii)]
Restrict to axisymmetric data from Proposition \ref{prop:wellposedness} and assume that
\begin{equation*}
\sum_{0\leq j_1+j_2\leq 4}\int_{\mathcal{H}^+\cap\{v\geq v_0\}} v^{1+\epsilon} \left(|\snabla^{j_1} L^{j_2+1}\phi|^2+ |\snabla^{j_1+1} L^{j_2}\phi|^2+|\snabla^{j_1+2} L^{j_2}\phi|^2\right)<\infty,
\end{equation*}
for $\epsilon>0$ arbitrarily small. Then we can estimate
\begin{equation*}
\begin{split}
&\int_{S^2_{u,v}}v^4(L\phi)^2(u,v,\theta_*,{\varphi_*})\,d\mu_{\slashed{g}}\\
\leq \:& \int_{S^2_{-\infty,v}}v^4(L\phi)^2\,d\mu_{\slashed{g}}+C\int_{S^2_{-\infty,v}}v^2|\snabla\phi|^2+v^2|\snabla^2\phi|^2d\mu_{\slashed{g}}\\
&+C\log\left(\frac{v+|u|}{|u|}\right)\\
&\quad\cdot \Bigg[D_2+\sum_{0\leq j_1+j_2\leq 4}\int_{\mathcal{H}^+\cap\{v\geq v_0\}} v^{1+\epsilon} \left(|\snabla^{j_1} L^{j_2+1}\phi|^2+ |\snabla^{j_1+1} L^{j_2}\phi|^2+|\snabla^{j_1+2} L^{j_2}\phi|^2\right)\Bigg].
\end{split}
\end{equation*}
\item[(iii)]
Either restrict to axisymmetric data in from Proposition \ref{prop:wellposedness}, or let $|a|<a_c$. Assume that
\begin{align*}
\int_{S^2_{-\infty,v}}v^4\sum_{j_1+j_2\leq 2}(LL^{j_1}\uline{L}^{j_2}\phi)^2+v^4\sum_{j_1+j_2\leq 1} \sum_{\Gamma \in \{\Phi,\Phi^2,T^2,Q\}}(LL^{j_1}\uline{L}^{j_2} \Gamma \phi)^2\,d\mu_{\slashed{g}}<&\infty,\\
\sum_{j_1+j_2\leq 2}\int_{S^2_{-\infty,v}}v^2\left(|\snabla L^{j_1}\uline{L}^{j_2}\phi|^2+|\snabla^2 L^{j_1}\uline{L}^{j_2}\phi|^2\right)\,d\mu_{\slashed{g}}<&\infty,\\
\sum_{0\leq j_1+j_2\leq 8}\int_{\mathcal{H}^+\cap\{v\geq v_0\}} v \left(|\snabla^{j_1} L^{j_2+1}\phi|^2+ |\snabla^{j_1+1} L^{j_2}\phi|^2+|\snabla^{j_1+2} L^{j_2}\phi|^2\right)<&\infty.
\end{align*}
Then $\phi$ can be extended in $C^{0,\alpha}$, for all $\alpha<1$.
\end{itemize}
\end{theorem}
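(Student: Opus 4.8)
The plan is to deduce part (iii) from parts (i) and (ii), applied not only to $\phi$ but to a sufficiently large family of commuted quantities, so I first describe how I would prove (i) and (ii). First I would rewrite $\square_g\phi=0$ in the Eddington--Finkelstein-type double-null coordinates of Section~\ref{sec:estimatesmetricomp} as a transport equation for $L\phi$ along the ingoing null generators: using the connection-coefficient identities of Section~\ref{sec:estimatesmetricconn} and Appendix~\ref{sec:bulkest}, and absorbing the $\Omega\tr\underline{\chi}$-term into the weight $(\det\sg)^{1/4}$, one gets schematically
\[
\underline L\big((\det\sg)^{1/4}L\phi\big)=(\det\sg)^{1/4}\big(2\Omega^2\sD\phi+\text{(conn.\ coeff.)}\cdot(L\phi,\underline L\phi)+\text{(torsion)}\cdot\Phi\phi\big).
\]
Next I would introduce $\psi=\phi-\phi|_{\mathcal{H}^+}$, extended to be constant along ingoing generators; by Theorem~\ref{thm:pointwiseboundkn}, $\psi$ decays uniformly in $|u|$, and since $\psi$ and $L\psi$ vanish on $\mathcal{H}^+$ the transport equation for $\psi$ can be integrated from $\mathcal{H}^+$. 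Multiplying by $v^4L\phi$, integrating over $S^2_{u,v}$, and running a Gr\"onwall argument in $u$ (the Gr\"onwall factor is harmless because $\Omega\tr\underline{\chi}=\mathcal{O}((v+|u|)^{-2})$ and, crucially by extremality, the mass-type quantity $\partial_{r_*}^2 r$ is $\mathcal{O}((v+|u|)^{-3})$ by the expansions of Theorem~\ref{thm:estmetric}) bounds $\int_{S^2_{u,v}}v^4(L\phi)^2$ by $\int_{S^2_{-\infty,v}}v^4(L\phi)^2$, plus the angular contribution $\int_{S^2_{-\infty,v}}v^2(|\snabla\phi|^2+|\snabla^2\phi|^2)$ coming from $\sD\psi$ on $\mathcal{H}^+$, plus a spacetime error term.

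The spacetime error term involves $\sD\phi$ on the spheres $S^2_{u,v}$, which do \emph{not} coincide with the Boyer--Lindquist spheres; I would control it by re-expressing $\sD\phi$ through the Carter operator $Q$ together with $\Phi$ and $T$, and through commutators with $L$ and $\underline L$, absorbing the Jacobian errors using the $(v-u)^{-2}$-type bounds of Theorem~\ref{thm:estmetric}(iii) and the connection-coefficient bounds of Section~\ref{sec:estimatesmetricconn}. For axisymmetric $\phi$ (part (ii)) the commuted quantities obey the clean energy estimate of Theorem~\ref{thm:eestaxisymm}, and the only loss is the logarithmic factor $\log\!\big(\tfrac{v+|u|}{|u|}\big)$ produced by the borderline $u$-integral. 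Without axisymmetry (part (i), on slowly rotating backgrounds) one instead has to run the energy estimate through the timelike Hawking multiplier $Y_p=|u|^pH$ and Theorem~\ref{thm:energyestsmallakn}, whose $\epsilon$-loss propagates into the final estimate and produces the $v^\epsilon$ factor. I expect this step to be the main obstacle: closing the angular estimates on $S^2_{u,v}$ while keeping track of the torsion $\zeta$-error terms in the non-axisymmetric case, and doing so uniformly up to the bifurcation corner $\mathcal{H}^+\cap\mathcal{CH}^+$ (where the $\log$ and $v^\epsilon$ factors must remain bounded, which is why the $\psi$-decomposition and the uniform $|u|$-decay of Theorem~\ref{thm:pointwiseboundkn} are needed).

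For part (iii) I would apply (i) (if $|a|<a_c$) or (ii) (axisymmetric case) to each commuted solution $L^{j_1}\underline L^{j_2}Q^{j_3}\Phi^{j_4}\phi$ with indices below the threshold fixed by the hypotheses; the assumptions on $D_{2k}$, on the $S^2_{-\infty,v}$-integrals of the commuted quantities, and on the weighted $\mathcal{H}^+$-integrals are precisely what each such application requires. Summing and applying the Sobolev inequality $\|f\|_{L^\infty(S^2_{u,v})}\lesssim\sum_{|k|\le 2}\|\snabla^k f\|_{L^2(S^2_{u,v})}$ to $f=L\phi$ gives $\|v^2L\phi\|_{L^\infty(S^2_{u,v})}\lesssim v^{\epsilon/2}$ (respectively $\lesssim(\log v)^{1/2}$). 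Since in the coordinates $(\widetilde{u},\widetilde{V},\widetilde{\vartheta})$ covering a neighbourhood of $\mathcal{CH}^+$ one has $\partial_{\widetilde{V}}\phi=\tfrac{d\tilde{v}}{d\widetilde{V}}L\phi$ with $\tfrac{d\tilde{v}}{d\widetilde{V}}\sim v^2$ and $|\widetilde{V}|\sim v^{-1}$, integrating along $\widetilde{V}$ and using that $\phi$ already admits a continuous extension to $\mathcal{CH}^+$ (Theorem~\ref{thm:C0extension}) yields, for $\widetilde{V}_1<\widetilde{V}_2\le 0$ (with a routine case distinction according to whether the two points are comparable to their distance from $\mathcal{CH}^+$),
\[
|\phi(\widetilde{u},\widetilde{V}_1,\widetilde{\vartheta})-\phi(\widetilde{u},\widetilde{V}_2,\widetilde{\vartheta})|\lesssim\int_{v_2}^{\infty}\|L\phi\|_{L^\infty(S^2_{u,v'})}\,dv'\lesssim v_2^{-1+\epsilon/2}\lesssim|\widetilde{V}_1-\widetilde{V}_2|^{1-\frac{\epsilon}{2}}.
\]
In the directions $\widetilde{u}$ and $\widetilde{\vartheta}$ the relevant derivatives $\underline L\phi$ and $\snabla\phi$ remain bounded up to $\mathcal{CH}^+$ --- this follows from the energy estimates of Theorems~\ref{thm:eestaxisymm}/\ref{thm:energyestsmallakn} and Theorem~\ref{thm:pointwiseboundkn} applied to $Q,\Phi$-commuted solutions together with the same Sobolev inequality --- so $\phi$ is Lipschitz there. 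Combining the two, $\phi$ extends beyond $\mathcal{CH}^+$ as a $C^{0,\alpha}$ function for every $\alpha<1$, since $\epsilon>0$ was arbitrary.
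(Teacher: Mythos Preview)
Your approach is essentially that of the paper (Propositions \ref{prop:decayLphi} and \ref{prop:c0alphaextendibility}): the transport equation for $(\det\slashed{g})^{1/4}L\phi$, the subtraction $\psi=\phi-\phi_{\mathcal{H}^+}$, the use of $|u|$-decay of $\psi$ to control the angular right-hand side, and Sobolev on $S^2_{u,v}$ after commutation. Two minor remarks. First, once the $\Omega\tr\underline{\chi}$-term has been absorbed into $(\det\slashed{g})^{1/4}$ there is nothing left to Gr\"onwall: the paper simply integrates the resulting pure transport equation from $\mathcal{H}^+$ and applies Cauchy--Schwarz in the $u$-integral, which is what produces the factor $\int_{-\infty}^u|u'|^{-s}(v+|u'|)^{-2}\,du'$ and hence the $\log\!\big(\tfrac{v+|u|}{|u|}\big)$ (for $s=1$, axisymmetric) or $v^\epsilon$ (for $s<1$, slowly rotating) loss. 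Second, for part (iii) you cannot literally ``apply (i)/(ii)'' to $L^{j_1}\underline{L}^{j_2}Q^{j_3}\Phi^{j_4}\phi$, because the $L,\underline{L}$-commuted quantities are not solutions of (\ref{eq:waveqkerr}); the paper instead rewrites the transport equation for $LL^{j_1}\underline{L}^{j_2}\phi$ directly, controls the inhomogeneity $\Omega^2\square_g(L^{j_1}\underline{L}^{j_2}\phi)$ via the commutator estimates of Lemma \ref{lm:waveoperatorcomm}, and verifies that these extra terms carry enough additional $(v+|u|)^{-1}$ decay to be absorbed. This accounts for the higher derivative count in the hypotheses of (iii), and is the only place your sketch is slightly imprecise.
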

Theorem \ref{thm:alphaholdercont} follows from Proposition \ref{prop:decayLphi} and Proposition \ref{prop:c0alphaextendibility} and implies Theorem \ref{thm:c11boundphiv1} and Theorem \ref{thm:c11boundphiv2}.

\section{Energy estimates for axisymmetric solutions}
\label{sec:energyestimatesaxisymm}
We will first restrict to axisymmetric solutions to (\ref{eq:waveqkerr}) on extremal Kerr--Newman spacetimes with $0\leq |a|\leq M$. \textbf{In this section we will always use $\phi$ to denote a solution to} (\ref{eq:waveqkerr})\textbf{, with $0\leq |a|\leq M$, corresponding to \uline{axisymmetric} initial data from Proposition} \ref{prop:wellposedness}.

We will frequently make use of a Gr\"onwall-type lemma.
\begin{lemma}
\label{lm:gronwall}
Let $-\infty \leq u_1<u_2\leq \infty$ and $-\infty \leq v_1<v_2\leq \infty$. Consider continuous, non-negative functions $f,g: [u_1,u_2]\times [v_1,v_2] \to \R$ and continuous, non-negative functions $h: [u_1,u_2]\to \R$ and $k: [v_1,v_2]\to \R$. Suppose
\begin{equation}
\label{eq:gronassumption}
f(u,v)+g(u,v)\leq A+ B\left[\int_{u_1}^u h(u')f(u',v)\,du'+\int_{v_1}^v k(v')g(u,v')\,dv'\right],
\end{equation}
for all $u\in [u_1,u_2]$ and $v\in [v_1,v_2]$, where $A,B>0$ are constants. Then:
\begin{equation}
\label{eq:gronba}
f(u,v)+g(u,v)\leq (1+\eta)A e^{\beta B\left[\int_{u_1}^u h(u')\,du'+\int_{v_1}^v k(v')\,dv'\right]},
\end{equation}
for all $u\in [u_1,u_2]$ and $v\in [v_1,v_2]$, where $\eta>0$ can be taken arbitrarily small and $\beta\geq \frac{2(1+\eta)}{\eta}$.
\end{lemma}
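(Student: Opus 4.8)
The plan is to prove this two-dimensional Gr\"onwall-type inequality by reducing to the standard one-variable Gr\"onwall lemma. First I would introduce the sum $F(u,v) := f(u,v) + g(u,v)$ and observe that from \eqref{eq:gronassumption}, since $f \leq F$ and $g \leq F$ pointwise,
\begin{equation*}
F(u,v) \leq A + B\left[\int_{u_1}^u h(u')F(u',v)\,du' + \int_{v_1}^v k(v')F(u,v')\,dv'\right].
\end{equation*}
This symmetrizes the problem. The difficulty is that this is a genuinely coupled double integral, not a single iterated integral, so one cannot immediately apply one-variable Gr\"onwall in either variable alone. The standard trick is to fix one variable and bootstrap.

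The key step is to slice in $v$. Fix $v \in [v_1,v_2]$ and define $G_v(u) := \sup_{v_1 \leq v' \leq v} F(u,v')$, a non-decreasing function of $v$ for each $u$. Taking the supremum over $v' \in [v_1,v]$ on both sides of the inequality for $F$ — using that the right-hand side is non-decreasing in $v$ and that $\int_{v_1}^{v'} k(w) F(u,w)\,dw \leq \left(\int_{v_1}^{v} k(w)\,dw\right) G_v(u)$ — I would obtain
\begin{equation*}
G_v(u) \leq A + B\left(\int_{v_1}^{v} k(w)\,dw\right) G_v(u) + B\int_{u_1}^u h(u')G_v(u')\,du'.
\end{equation*}
For $v$ close enough to $v_1$ that $B\int_{v_1}^{v} k(w)\,dw \leq \tfrac12$, this absorbs the middle term, giving $G_v(u) \leq 2A + 2B\int_{u_1}^u h(u')G_v(u')\,du'$, and one-variable Gr\"onwall yields $G_v(u) \leq 2A\exp\left(2B\int_{u_1}^u h(u')\,du'\right)$. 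Iterating this over a finite chain of $v$-subintervals $[v_1 = w_0, w_1, \ldots, w_N = v_2]$, each short enough that the $k$-integral over it is at most a small fixed constant, propagates the bound up to $v_2$ with the exponential weight accumulating the $\int k$ contribution; the number of steps $N$ depends only on $B\int_{v_1}^{v_2} k$, so the constant stays finite. The bookkeeping can be arranged so that the final bound takes the clean form \eqref{eq:gronba} with $\beta$ a fixed multiple of $(1+\eta)/\eta$: choosing the subinterval length so that $B\int k \leq \eta/(2(1+\eta))$ on each piece makes the absorption factor $(1 - B\int k)^{-1} \leq 1 + \eta/(1+\eta)$ and the product over $N$ pieces telescopes into the stated exponential, with the $(1+\eta)$ prefactor coming from the endpoint slack.

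The main obstacle is the coupling of the two integral terms, handled by the $v$-slicing and absorption argument above; everything else is routine one-variable Gr\"onwall plus careful constant-tracking. An alternative, perhaps cleaner, route that I would mention as a remark: iterate the inequality for $F$ directly, producing a series $\sum_n B^n (\text{mixed iterated integrals of } h, k)$ bounded term-by-term by $\frac{B^n}{n!}\left(\int h + \int k\right)^n$ via the multinomial-type estimate on nested integrals, which sums to $A\exp\left(B\left(\int h + \int k\right)\right)$ directly; the $(1+\eta)$ and $\beta$ refinements then come from a slightly lossy grouping of terms. Either way the hard point is purely the decoupling, and the hypotheses — continuity and non-negativity of $f,g,h,k$ — are exactly what is needed to make the suprema finite and the integrals well-behaved.
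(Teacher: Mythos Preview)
The paper does not actually prove this lemma here; it simply cites Section~4 of the companion paper \cite{gajic2015}. So there is no in-paper argument to compare against, and your task is really to give a self-contained proof.

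Your overall strategy is sound, and either route can be made to work, but two points deserve care. First, in the slicing approach: once you pass to the second $v$-subinterval $[w_1,w_2]$, the ``source'' term you inherit is $A + \theta\, G_{w_1}(u)$, which is $u$-dependent and not obviously non-decreasing in $u$. The standard one-variable Gr\"onwall with a non-decreasing source does not apply directly; you need the integral-form Gr\"onwall with a general (continuous) source, and then track how the exponential in $\int h$ compounds across the $N$ slices. This is doable but is more than ``routine constant-tracking.'' Second, in your alternative iteration approach, the claimed bound $\tfrac{B^n}{n!}\bigl(\int h+\int k\bigr)^n$ for the $n$-th iterate is not quite right: since $I_h$ and $I_k$ commute, a word with $j$ $h$'s and $n-j$ $k$'s contributes $\tfrac{1}{j!}H^j\cdot \tfrac{1}{(n-j)!}K^{n-j}$, and summing over the $\binom{n}{j}$ such words gives $\sum_j \binom{n}{j}\tfrac{H^jK^{n-j}}{j!(n-j)!}$, which for $n=2$ already exceeds $\tfrac{(H+K)^2}{2!}$. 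Using the crude bound $\binom{n}{j}\le 2^n$ gives $\le \tfrac{(2(H+K))^n}{n!}$, so the series still sums to $Ae^{2B(H+K)}$, which is stronger than the stated conclusion; but your written bound would have given $\beta=1$, which is false.

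A cleaner route that avoids both issues: plug the ansatz $\chi(u,v)=(1+\eta)A\exp\bigl(\beta B[\int_{u_1}^u h+\int_{v_1}^v k]\bigr)$ directly into the right-hand side of the inequality for $F=f+g$, use that $\int_{u_1}^u h(u')e^{\beta B\int_{u_1}^{u'}h}\,du'=\tfrac{1}{\beta B}(e^{\beta B\int_{u_1}^u h}-1)$, and check that the result is strictly below $\chi$ for $\beta$ large enough (indeed $\beta\ge 2$ already suffices, consistent with the stated $\beta\ge 2(1+\eta)/\eta$). A continuity/bootstrap argument then closes immediately, with no slicing or series manipulation.
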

\begin{proof}
See Section~ 4 of \cite{gajic2015}.
\end{proof}

We can use the vector field $N_{p,q}$ with $p=2$, defined in Section~ \ref{sec:vfmethod}, as a vector field multiplier to obtain weighted energy estimates.
\begin{proposition}
\label{prop:mainenergyestimate}
Fix $p=2$ and let $0<q\leq 2$. There exists a constant \\$C=C(a,M,u_0,v_0,q)>0$ such that for all $H_u$ and $\underline{H}_v$ in ${{\mathcal{D}_{u_0,v_0}}}$
\begin{equation}
\label{eq:weightfluxbound}
\begin{split}
\int_{H_u} &J^{N_{2,q}}[\phi]\cdot L+\int_{\underline{H}_v} J^{N_{2,q}}[\phi]\cdot \underline{L}\\
\leq \:& C\left[ \int_{\mathcal{H}^+\cap\{v\geq v_0\}} J^{N_{2,q}}[\phi]\cdot L+ \int_{\underline{H}_{v_0}} J^{N_{2,q}}[\phi]\cdot \underline{L}\right]=:CE_q[\phi].
\end{split}
\end{equation}
\end{proposition}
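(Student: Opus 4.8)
The plan is to run a standard vector-field-multiplier argument with $N_{2,q}$ and close the estimate using the Grönwall-type Lemma~\ref{lm:gronwall}. First I would apply the divergence identity (\ref{divergenceidentity}) with $V=N_{2,q}$ on the rectangle $\{u_1<u<u_2,\ v_1<v<v_2\}$ inside $\mathcal{M}_{\rm int}\cap\{r>\frac{e^2}{2M}\}$, taking the limit $u_1\to-\infty$ (so that the left boundary becomes the segment of $\mathcal{H}^+$, using that $N_{2,q}$ extends smoothly across $\mathcal{H}^+$ since $p=2$) and letting $v_1=v_0$, $v_2=v$, $u_2=u$. The boundary terms on the right are exactly the fluxes $J^{N_{2,q}}[\phi]\cdot L$ along $\mathcal{H}^+\cap\{v\geq v_0\}$ and $H_u$, and $J^{N_{2,q}}[\phi]\cdot\underline L$ along $\uline{H}_{v_0}$ and $\uline{H}_v$, with signs as in (\ref{divergenceidentity}). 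Since $\mathcal{E}^{N_{2,q}}[\phi]=0$ for solutions of (\ref{eq:waveqkerr}), the spacetime term is $\int K^{N_{2,q}}[\phi]$, which I decompose as $K^{N_{2,q}}_{\rm null}+K^{N_{2,q}}_{\rm angular}+K^{N_{2,q}}_{\rm mixed}$ via (\ref{eq:errorterms1})--(\ref{eq:errorterms3}). The key simplification here is that $\phi$ is axisymmetric, so $\Phi\phi=\partial_{\varphi_*}\phi=0$, and therefore $K^{N_{2,q}}_{\rm mixed}[\phi]=0$ identically: the troublesome torsion term $\zeta^{\varphi_*}\partial_{\varphi_*}\phi$ drops out entirely, which is precisely why the axisymmetric case is clean.

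Next I would estimate the two remaining bulk currents. For $K^{N_{2,q}}_{\rm null}$, I use the bounds $0<\Omega\tr\chi\leq C(v+|u|)^{-2}$ and $0<-\Omega\tr\underline\chi\leq C(v+|u|)^{-2}$ from Theorem~\ref{thm:estconncoef}(i), together with $\Omega^{-2}\leq C(v+|u|)^2$ from Theorem~\ref{thm:estmetric}(ii), so that $\Omega^{-2}(v^q\Omega\tr\chi+|u|^p\Omega\tr\underline\chi)\leq C(v^q+|u|^2)$; applying Cauchy--Schwarz to $L\phi\,\underline L\phi$ bounds this contribution by $C[v^q(L\phi)^2\cdot 1 + |u|^2(\underline L\phi)^2\cdot 1]$ up to harmless weight redistribution, which after integration is controlled by the fluxes we are trying to bound. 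For $K^{N_{2,q}}_{\rm angular}$, the dangerous part is the coefficient $-\tfrac12[-p|u|^{p-1}+qv^{q-1}+4\Omega(v^q\omega+|u|^p\underline\omega)]$ of $|\snabla\phi|^2$; here I substitute $p=2$ and the expansions $4\Omega\omega=-\tfrac{2}{v+|u|}+\log(v+|u|)\mathcal{O}((v+|u|)^{-2})$ and $4\Omega\underline\omega=\tfrac{2}{v+|u|}+\log(v+|u|)\mathcal{O}((v+|u|)^{-2})$ from Theorem~\ref{thm:estconncoef}(ii). One checks that $-2|u|+qv^{q-1}(v+|u|)\cdot\frac{1}{v+|u|}\cdots$ — more carefully, the leading terms $4\Omega(v^q\omega+|u|^2\underline\omega)\approx \frac{2}{v+|u|}(|u|^2-v^q)$ combine with $-2|u|^{1}+qv^{q-1}$ so that the net coefficient is either of favorable sign or bounded by integrable weights like $C(v+|u|)^{-1}(|u|+v^{q-1})\log(v+|u|)$ times $|\snabla\phi|^2$; the $\hat\chi,\hat{\underline\chi}$ terms are similarly bounded by $C(v+|u|)^{-2}\log(v+|u|)|\snabla\phi|^2$ using Theorem~\ref{thm:estconncoef}(i). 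In all cases the bulk integrand is bounded pointwise by $h(u)\cdot[v^q(L\phi)^2+|u|^2\Omega^2|\snabla\phi|^2]+k(v)\cdot[|u|^2(\underline L\phi)^2+v^q\Omega^2|\snabla\phi|^2]$ with $\int_{-\infty}^{u_0}h<\infty$ and $\int_{v_0}^\infty k<\infty$ (the relevant integrals being of the form $\int |u|^{-1}\log(|u|)/(v+|u|)$-type, which converge after redistributing one power of the weight — this is the place one must be a little careful about which $\epsilon$'s of room one has, but with $p=2$ exactly the computation closes without loss).

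I would then perform the spacetime integral, writing $\int_{\mathcal{D}}K^{N_{2,q}}[\phi]=\int_{u}\int_{v}(\cdots)\,2\Omega^2\sqrt{\det\slashed g}\,d\theta_*d\varphi_*dv'du'$ and recognizing $\int_{v_1}^{v_2}(\cdots)d v'$ as integrals of the null fluxes along slices. Setting $f(u,v):=\int_{H_u\cap\{v'\leq v\}}J^{N_{2,q}}[\phi]\cdot L$ and $g(u,v):=\int_{\uline H_v\cap\{u'\leq u\}}J^{N_{2,q}}[\phi]\cdot\underline L$, the divergence identity plus the bulk bounds yield exactly an inequality of the form (\ref{eq:gronassumption}) with $A=CE_q[\phi]$ and $h,k$ the integrable weight functions just described. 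Lemma~\ref{lm:gronwall} then gives $f(u,v)+g(u,v)\leq(1+\eta)CE_q[\phi]\exp\big(\beta B[\int_{-\infty}^{u_0}h+\int_{v_0}^\infty k]\big)$, and since the exponent is a finite constant depending only on $a,M,u_0,v_0,q$, taking $v\to\infty$ (and noting $f,g$ are monotone in $v$ and $u$) produces the claimed bound (\ref{eq:weightfluxbound}) with a new constant $C=C(a,M,u_0,v_0,q)$. The main obstacle is the second paragraph: verifying that, with the precise expansions from Theorems~\ref{thm:estmetric} and~\ref{thm:estconncoef} and with $p=2$ exactly, the coefficient of $|\snabla\phi|^2$ in $K^{N_{2,q}}_{\rm angular}$ together with the $K^{N_{2,q}}_{\rm null}$ cross term genuinely assemble into an expression dominated by Grönwall-integrable weights (with no residual bad-sign bulk term) — this delicate cancellation between the $p|u|^{p-1}$, $qv^{q-1}$, and $4\Omega(v^q\omega+|u|^p\underline\omega)$ terms is exactly where the choice $p=2$ and the $(v+|u|)^{-2}$ decay of $g_{uv}$ are used, and it mirrors the corresponding computation in \cite{gajic2015} for extremal Reissner--Nordström.
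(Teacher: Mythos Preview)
Your proposal is correct and follows essentially the same route as the paper: divergence identity with $N_{2,q}$, vanishing of $K^{N_{2,q}}_{\rm mixed}$ by axisymmetry, bulk estimates for $K^{N_{2,q}}_{\rm null}$ and $K^{N_{2,q}}_{\rm angular}$ using Theorems~\ref{thm:estmetric}--\ref{thm:estconncoef}, and closure via Lemma~\ref{lm:gronwall}. The only place your sketch is loose is the treatment of $K^{N_{2,q}}_{\rm null}$: a plain Cauchy--Schwarz on $L\phi\,\underline L\phi$ does \emph{not} produce integrable $h(u),k(v)$ --- you must apply a \emph{weighted} Cauchy--Schwarz that distributes the surviving $(v+|u|)^{-2}$ factor asymmetrically (e.g.\ $v^q(v+|u|)^{-2}|L\phi||\underline L\phi|\le C v^{-1-\eta}|u|^2(\underline L\phi)^2+C|u|^{q+\eta-5}v^q(L\phi)^2$), which is exactly what the paper does and what your phrase ``harmless weight redistribution'' must mean; the angular cancellation you flag for $p=2$ indeed works out as you anticipate, with the leading coefficient of $|\snabla\phi|^2$ having good sign for $q<2$ and cancelling exactly for $q=2$, leaving only $\log(v+|u|)\mathcal O((v+|u|)^{-2})$ remainders.
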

\begin{proof}
By applying the divergence theorem from Section~ \ref{sec:vfmethod} in ${{\mathcal{D}_{u_0,v_0}}}$, we can estimate
\begin{equation}
\label{eq:stokesaxikn}
\begin{split}
\int_{H_u}& J^{N_{2,q}}[\phi]\cdot L+\int_{\underline{H}_v} J^{N_{2,q}}[\phi]\cdot \underline{L}\\
=\:& \int_{\mathcal{H}^+\cap\{v\geq v_0\}} J^{N_{2,q}}[\phi]\cdot L+ \int_{\underline{H}_{v_0}} J^{N_{2,q}}[\phi]\cdot \underline{L}-\int_{{{\mathcal{D}_{u_0,v_0}}}} K^{N_{p,q}}_{\textnormal{null}}[\phi]+K^{N_{p,q}}_{\textnormal{angular}}[\phi]+K^{N_{p,q}}_{\textnormal{mixed}}.
\end{split}
\end{equation}
By the assumption that $\phi$ is axisymmetric, we have that $K^{N_{p,q}}_{\textnormal{mixed}}=0$.

We first consider $K^{N_{p,q}}_{\textnormal{null}}[\phi]$ and apply the estimates of Section~ \ref{sec:estimatesmetricconn} to (\ref{eq:errorterms1}) to find that
\begin{equation*}
\Omega^2|K^{N_{p,q}}_{\textnormal{null}}[\phi]|\leq C (v+|u|)^{-2}(v^q-|u|^p)|L\phi||\underline{L}\phi|.
\end{equation*}
By applying a (weighted) Cauchy--Schwarz inequality, we can further estimate for $\eta>0$,
\begin{equation*}
\begin{split}
v^q(v+|u|)^{-2}|L\phi||\underline{L}\phi|\leq &\: Cv^q(v+|u|)^{-1-q-\eta}|u|^p(\uline{L}\phi)^2+C|u|^{-p}(v+|u|)^{q+\eta-3}v^q(L\phi)^2\\
\leq &\: Cv^q\sup_{u\leq u' \leq u_0}\left[(v+|u'|)^{-1-q-\eta}\right]|u|^p(\underline{L}\phi)^2\\
&+C|u|^{-p}\sup_{v_0\leq v' \leq v}\left[(v'+|u|)^{q+\eta-3}\right]v^q(L\phi)^2\\
\leq &\: C v^{-1-\eta}|u|^p(\underline{L}\phi)^2+C|u|^{q+\eta-p-3}v^q(L\phi)^2,
\end{split}
\end{equation*}
for $\eta<3-q$.

Similarly, by reversing the roles of $u$ and $v$, we obtain
\begin{equation*}
|u|^p(v+|u|)^{-2}|L\phi||\underline{L}\phi|\leq C|u|^{-1-\eta}v^q(L\phi)^2+Cv^{p+\eta-q-3}|u|^p(\underline{L}\phi)^2,
\end{equation*}
for $\eta<3-p$.

We will now estimate $K^{N_{p,q}}_{\textnormal{angular}}$ by applying the estimates of Section~ \ref{sec:estimatesmetricconn} to (\ref{eq:errorterms2}). We obtain
\begin{equation}
\label{eq:KNangularkerrn1}
\begin{split}
K^{N_{p,q}}_{\textnormal{angular}}=\:&-\frac{1}{2}\left[qv^{q-1}-p|u|^{p-1}+4\Omega\omega(v^q-|u|^p)\right]|\snabla\phi|^2\\
&+(v^q+|u|^p)\log(v+|u|)\mathcal{O}((v+|u|)^{-2})|\snabla\phi|^2.
\end{split}
\end{equation}
Recall from (ii) of Theorem \ref{thm:estconncoef} that we can expand
\begin{equation*}
4\Omega \omega=-\frac{2}{v+|u|}+\log (v+|u|)\mathcal{O}((v+|u|)^{-2}).
\end{equation*}
Consequently, we can rewrite (\ref{eq:KNangularkerrn1}) to obtain
\begin{equation}
\label{eq:KNangularkerrn2}
\begin{split}
K^{N_{p,q}}_{\textnormal{angular}}=\:&-\frac{1}{2}\left[\left(q-2\frac{v^{\frac{q}{2}}+|u|^{\frac{p}{2}}}{v+|u|}v^{1-\frac{q}{2}}\right)v^{q-1}+\left(2\frac{v^{\frac{q}{2}}+|u|^{\frac{p}{2}}}{v+|u|}|u|^{1-\frac{p}{2}}-p\right)|u|^{p-1}\right]|\snabla\phi|^2\\
&+(v^q+|u|^p)\log(v+|u|)\mathcal{O}((v+|u|)^{-2})|\snabla\phi|^2.
\end{split}
\end{equation}
First, let $0\leq p<2$. Then the term between square brackets in front of $|\snabla \phi|^2$ will become positive in the region $|u|>v$, as we approach $\mathcal{H}^+$, which means that $K^{N_{p,q}}_{\textnormal{angular}}$ will be negative, and we are not able to control it. We therefore restrict to $p=2$. 

If $p=2$ and $q<2$, the term inside the square brackets is negative for suitably large $v$, so we can estimate
\begin{equation*}
K^{N_{p,q}}_{\textnormal{angular}}\geq Cv^{q-1}|\snabla\phi|^2+\mathcal{O}((v^q+|u|^2)(v+|u|)^{-2}\log(v+|u|))|\snabla\phi|^2.
\end{equation*}

If $p=2$ and $q=2$, a cancellation occurs in the leading-order terms between square brackets, so we can estimate
\begin{equation*}
K^{N_{p,q}}_{\textnormal{angular}}=(v^2+|u|^2)\log(v+|u|)\mathcal{O}((v+|u|)^{-2})|\snabla\phi|^2.
\end{equation*}

If we fix $p=2$, we can therefore estimate for all $0\leq q\leq 2$,
\begin{equation*}
\begin{split}
\Omega^2 K^{N_{p,q}}_{\textnormal{angular}}\geq&\: (v^q+|u|^2)\log(v+|u|)\mathcal{O}((v+|u|)^{-2})\Omega^2|\snabla\phi|^2\\
\geq&\: -C_{\epsilon}|u|^{-2+\epsilon}v^q\Omega^2|\snabla\phi|^2-C_{\epsilon}v^{-2+\epsilon}|u|^2\Omega^2|\snabla\phi|^2,
\end{split}
\end{equation*}
with $\epsilon>0$ arbitrarily small and $C_{\epsilon}=C_{\epsilon}(M,u_0,v_0,\epsilon)>0$. We will fix $0<\epsilon<1$.

We combine the estimates above for $K^{N_{2,q}}_{\textnormal{null}}$ and $K^{N_{2,q}}_{\textnormal{angular}}$ to obtain, for $0\leq q\leq 2$,
\begin{equation*}
\begin{split}
-\Omega^2 (K^{N_{2,q}}_{\textnormal{angular}}+K^{N_{2,q}}_{\textnormal{null}})\leq \:& C\Big[(v^{-1-\eta}+v^{\eta-q-1})|u|^2(\underline{L}\phi)^2+(|u|^{-1-\eta}+|u|^{q+\eta-5})v^q(L\phi)^2\\
&+|u|^{-2+\epsilon}v^q\Omega^2|\snabla\phi|^2+v^{-2+\epsilon}|u|^2\Omega^2|\snabla\phi|^2\Big].
\end{split}
\end{equation*}
Finally, we can apply Lemma \ref{lm:gronwall} with the choices
\begin{align*}
A=\:& \int_{\mathcal{H}^+\cap\{v\geq v_0\}} J^{N_{2,q}}[\phi]\cdot L+ \int_{\uline{H}_{v_0}} J^{N_{2,q}}[\phi]\cdot \underline{L},\\
f(u,v)=\:&\int_{H_{u}}v^q(L\phi)^2+|u|^2\Omega^2|\snabla\phi|^2,\\
g(u,v)=\:&\int_{\underline{H}_v}|u|^2(\underline{L}\phi)^2+v^q\Omega^2|\snabla \phi|^2,\\
h(u)=\:&|u|^{-1-\eta}+|u|^{q+\eta-5}+|u|^{-2+\epsilon},\\
k(v)=\:&v^{-1-\eta}+v^{\eta-q-1}+v^{-2+\epsilon},
\end{align*}
where we use that $h$ and $k$ are integrable for $0<\eta<\min\{q,1\}$ and $0<\epsilon<1$, to arrive at the estimate in the proposition. We therefore need the restriction $q>0$ if $p=2$.
\end{proof}

We have now proved Theorem \ref{thm:eestaxisymm}.

\section{Energy estimates in slowly rotating extremal Kerr--Newman}
\label{sec:energyestimatesslowlyrot}
We now drop the axisymmetry assumptions on solutions to (\ref{eq:waveqkerr}) on extremal Kerr--Newman. We do however restrict to the subfamily of slowly rotating extremal Kerr--Newman spacetimes, with $0\leq |a|<a_c$; see Section~ \ref{sec:vfmethod}.

\textbf{In this section we will always use $\phi$ to denote a solution to} (\ref{eq:waveqkerr})\textbf{, with \\$0\leq |a|<a_c$, corresponding to initial data from Proposition} \ref{prop:wellposedness} \textbf{\uline{without symmetry assumptions}}.

Even without an axisymmetry assumption on $\phi$, we can still obtain energy estimates with respect to vector fields $N_{p,q}$ if we restrict to subsets of ${\mathcal{D}_{u_0,v_0}}$ with a finite spacetime volume. We introduce the hypersurfaces $\underline{\gamma}_{\alpha}$ and $\gamma_{\beta}$, with $\alpha\geq 1$ and $\beta \geq 1$, such that
\begin{align*}
\underline{\gamma}_{\alpha}:=\{(u,v,\theta_*,{\varphi_*})\in {{\mathcal{D}_{u_0,v_0}}}\,:\,f_{\alpha}(u,v)=0\},\\
\gamma_{\beta}:=\{(u,v,\theta_*,{\varphi_*})\in {{\mathcal{D}_{u_0,v_0}}}\,:\,\underline{f}_{\beta}(u,v)=0\}.
\end{align*}
We define $f_{\alpha}(u,v)$ as follows:
\begin{align*}
f_{\alpha}(u,v)=\:&|u|-v^{\alpha},\quad |u|>|u_1|,\\
=\:&h_{\alpha}(u,v),\quad |u|\leq u_1,
\end{align*}
where $|u_1|$ is taken suitably large, such that $-g(df_{\alpha},df_{\alpha})\geq C$, for $|u|>|u_1|$, with $C>0$ a constant. Moreover, we can choose $h_{\alpha}$ such that $f_{\alpha}$ is a smooth function on $(-\infty,u_0]\times[v_0,\infty)$ and for all $(u,v)$ such that $h_{\alpha}(u,v)=0$, we can uniformly bound $-g(dh_{\alpha},dh_{\alpha})(u,v)\geq C$.

\begin{figure}[h!]
\begin{center}
\includegraphics[width=2.5in]{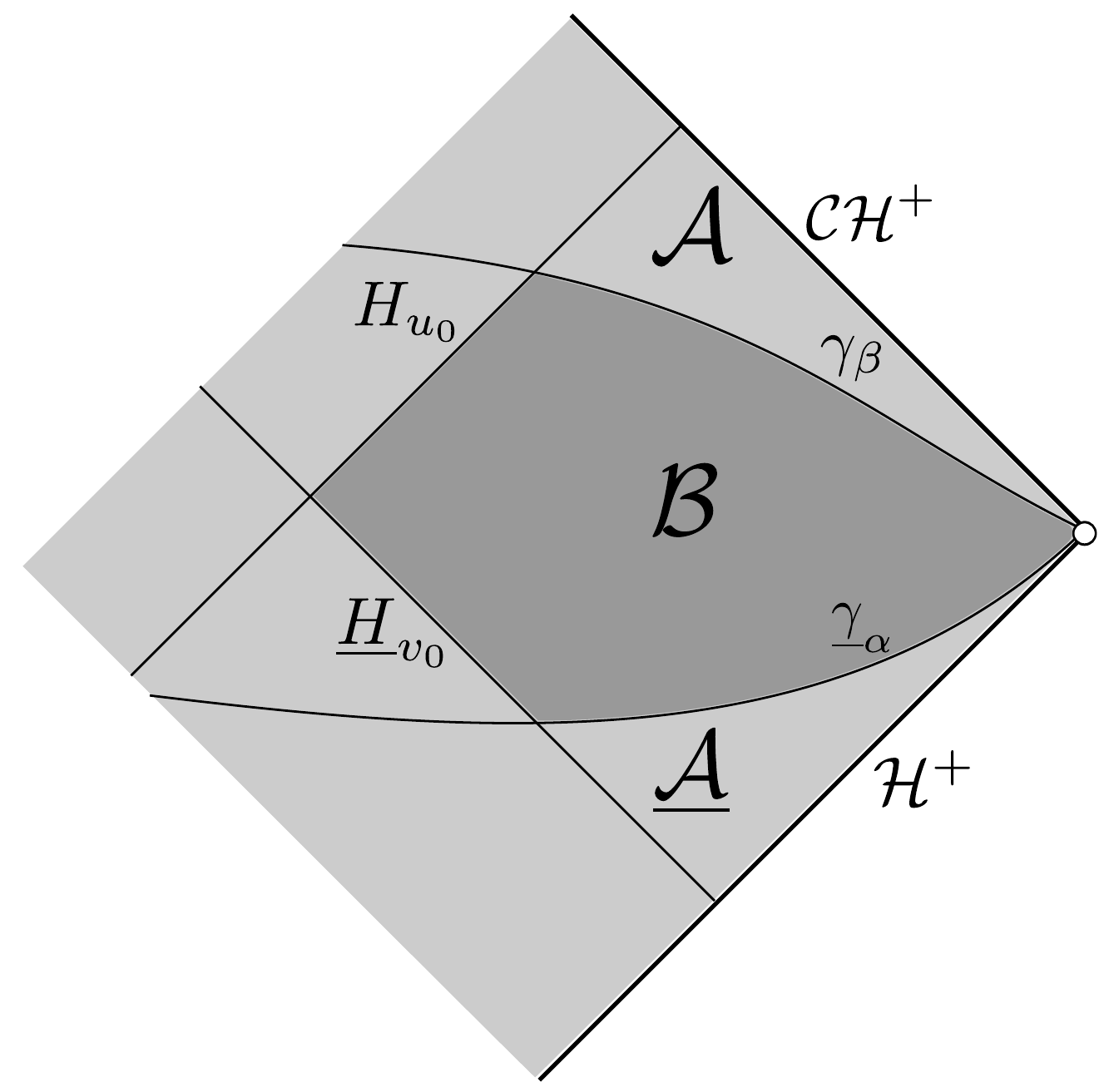}
\caption{\label{fig:finitevolume} }
\end{center}
\end{figure}

We define $\underline{f}_{\beta}(u,v)$ as follows:\\
\\
\begin{align*}
\underline{f}_{\beta}(u,v)=\:&v-|u|^{\beta},\quad v>v_1,\\
=\:&\underline{h}_{\beta}(u,v),\quad v\leq v_1,
\end{align*}
where $v_1$ is taken suitably large, such that $-g(d\underline{f}_{\beta},d\underline{f}_{\beta})\geq C$, for $v>v_1$, with $C>0$ a constant. Moreover, we can choose $\underline{h}_{\beta}$ such that $\underline{f}_{\beta}$ is a smooth function on $(-\infty,u_0]\times[v_0,\infty)$ and for all $(u,v)$ such that $h_{\beta}(u,v)=0$, we can uniformly bound $-g(dh_{\beta},dh_{\beta})(u,v)\geq C$.

Consequently, $\underline{\gamma}_{\alpha}$ and $\gamma_{\beta}$ are spacelike hypersurfaces. Denote
\begin{align*}
\uline{\mathcal{A}}&:=J^-(\underline{\gamma}_{\alpha})\cap {\mathcal{D}_{u_0,v_0}},\\
\mathcal{A}&:=J^+({\gamma}_{\beta})\cap {\mathcal{D}_{u_0,v_0}},\\
\mathcal{B}&:=J^+(\underline{\gamma}_{\alpha})\cap J^-({\gamma}_{\beta})\cap {\mathcal{D}_{u_0,v_0}}.
\end{align*}
It is easy to verify that the spacetime volumes of $\mathcal{A}$ and $\uline{\mathcal{A}}$ are finite, if we take $\alpha>1$ and $\beta>1$.

\subsection{Energy estimates in $\uline{\mathcal{A}}\cup \mathcal{A}$}
We first consider energy estimates with respect to $N_{p,q}$. Since $\phi$ is no longer assumed to be axisymmetric, $K_{\textnormal{mixed}}[\phi]$ does not necessarily vanish. To deal with a non-vanishing $K_{\textnormal{mixed}}[\phi]$, we first consider the regions $\mathcal{A}$ and $\uline{\mathcal{A}}$ of finite spacetime volume.
\begin{proposition}
\label{prop:finitevolumeenergyestimates}
\hspace{1pt}
\begin{itemize}\setlength\itemsep{1em}
\item[(i)]
Let $p=2$ and $0<q<2$. Fix $\alpha>1$. Then there exists a constant \\$C=C(a,M,u_0,v_0,q,\alpha)>0$ such that
\begin{equation}
\label{eq:weightfluxboundfinitevolume1}
\begin{split}
\int_{H_u\cap \uline{\mathcal{A}}}& J^{N_{2,q}}[\phi]\cdot L+\int_{\underline{H}_v\cap \uline{\mathcal{A}}} J^{N_{2,q}}[\phi]\cdot \underline{L}+\int_{\uline{\gamma}_{\alpha}} J^{N_{p,q}}[\phi]\cdot n_{\uline{\gamma}_{\alpha}}\\
\leq \:& C\left[ \int_{\mathcal{H}^+\cap\{v\geq v_0\}} J^{N_{2,q}}[\phi]\cdot n_{\mathcal{H}^+}+ \int_{\underline{H}_{v_0}\cap\{|u|\geq |u_{\underline{\gamma}_{\alpha}}|(v_0)\}} J^{N_{2,q}}[\phi]\cdot \underline{L}\right].
\end{split}
\end{equation}

\item[(ii)]
Let $p=2$ and $0< q\leq 2$, or let $0\leq p<2$ and $0\leq q< 2$. Fix $\beta>1$. Then there exists a constant $C=C(a,M,u_0,v_0,p,q,\beta)>0$ such that
\begin{equation}
\label{eq:weightfluxboundfinitevolume2}
\begin{split}
\int_{H_u\cap \mathcal{A}}& J^{N_{p,q}}[\phi]\cdot L+\int_{\underline{H}_v\cap \mathcal{A}} J^{N_{p,q}}[\phi]\cdot \underline{L}\\
\leq \:& C\left[\int_{\gamma_{\beta}} J^{N_{p,q}}[\phi]\cdot n_{\gamma_{\beta}}+\int_{\underline{H}_{v_0}\cap\{|u|\leq |u_{\gamma_{\beta}}|(v_0)\}} J^{N_{p,q}}[\phi]\cdot \underline{L}\right].
\end{split}
\end{equation}
\end{itemize}
\end{proposition}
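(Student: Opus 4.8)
The plan is to rerun the divergence-theorem argument of Proposition~\ref{prop:mainenergyestimate}, but restricted to the finite-volume regions $\uline{\mathcal{A}}$ and $\mathcal{A}$, and to control the extra bulk term $K^{N_{p,q}}_{\textnormal{mixed}}[\phi]$ which no longer vanishes once axisymmetry is dropped. For (i), I would apply the divergence theorem~(\ref{divergenceidentity}) with the multiplier $N_{2,q}$ in the sub-region $\uline{\mathcal{A}}\cap\{u'\leq u\}\cap\{v'\leq v\}$, whose boundary consists of pieces of $H_u$, $\underline{H}_v$, $\mathcal{H}^+\cap\{v\geq v_0\}$, $\underline{H}_{v_0}\cap\{|u|\geq |u_{\uline{\gamma}_{\alpha}}|(v_0)\}$ and the spacelike hypersurface $\uline{\gamma}_{\alpha}$; for (ii) I would use $N_{p,q}$ in $\mathcal{A}\cap\{u'\leq u\}\cap\{v'\leq v\}$, now bounded to the past by $\gamma_{\beta}$ and $\underline{H}_{v_0}\cap\{|u|\leq |u_{\gamma_{\beta}}|(v_0)\}$. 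Since $g(N_{p,q},N_{p,q})=-4\Omega^2|u|^pv^q<0$, the vector field $N_{p,q}$ is timelike, and it is future-directed, being a positive combination of the future-directed null vector fields $\underline{L}$ and $L$; hence the flux of $J^{N_{p,q}}[\phi]$ through $\uline{\gamma}_{\alpha}$ (a future boundary of $\uline{\mathcal{A}}$) and through $\gamma_{\beta}$ (a past boundary of $\mathcal{A}$) is non-negative by the dominant energy condition for $\mathbf{T}[\phi]$, so the former is kept on the left-hand side and the latter moved to the right-hand side, exactly as in the statement.

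Writing $K^{N_{p,q}}[\phi]=K^{N_{p,q}}_{\textnormal{null}}+K^{N_{p,q}}_{\textnormal{angular}}+K^{N_{p,q}}_{\textnormal{mixed}}$ via (\ref{eq:errorterms1})--(\ref{eq:errorterms3}), the contributions of $K^{N_{p,q}}_{\textnormal{null}}$ and $K^{N_{p,q}}_{\textnormal{angular}}$ are handled precisely as in the proof of Proposition~\ref{prop:mainenergyestimate}: part (i) is literally the case $p=2$, whereas in part (ii) the obstruction for $0\leq p<2$ --- the bad sign that $K^{N_{p,q}}_{\textnormal{angular}}$ acquires in the region $\{|u|>v\}$ near $\mathcal{H}^+$ --- does not arise, since $\mathcal{A}\subseteq\{v\geq |u|^{\beta}\}\subseteq\{v\geq |u|\}$. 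The ranges of $(p,q)$ appearing in (i) and (ii) are exactly those for which these two terms close, and the outcome is a pointwise bound of $\Omega^2\bigl(|K^{N_{p,q}}_{\textnormal{null}}|+|K^{N_{p,q}}_{\textnormal{angular}}|\bigr)$ by a constant times $[h(u)+k(v)]$ times the weighted densities $v^q(L\phi)^2$, $|u|^p(\underline{L}\phi)^2$ and $(v^q+|u|^p)\Omega^2|\snabla\phi|^2$, with $h$ and $k$ integrable over the relevant (thin, but generally unbounded) $u$- and $v$-ranges.

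The genuinely new term is $K^{N_{p,q}}_{\textnormal{mixed}}[\phi]=2\bigl[v^q L\phi-|u|^p\underline{L}\phi\bigr]\zeta^{\varphi_*}\partial_{\varphi_*}\phi$. Using $|\Omega^2\zeta^{\varphi_*}|\lesssim (v+|u|)^{-2}$ from Theorem~\ref{thm:estconncoef}(ii), the bound $(\partial_{\varphi_*}\phi)^2\leq \slashed{g}_{\varphi_*\varphi_*}|\snabla\phi|^2\lesssim |\snabla\phi|^2$ --- valid because $\Phi=\partial_{\varphi_*}$ is tangent to the spheres $S^2_{u,v}$ and $\slashed{g}_{\varphi_*\varphi_*}=R^2\sin^2\theta$ is bounded by Theorem~\ref{thm:estmetric}(i) --- and $\Omega^{-2}\lesssim (v+|u|)^2$ from Theorem~\ref{thm:estmetric}(ii), I find
\begin{equation*}
\Omega^2\bigl|K^{N_{p,q}}_{\textnormal{mixed}}[\phi]\bigr|\lesssim (v+|u|)^{-2}\bigl(v^q|L\phi|+|u|^p|\underline{L}\phi|\bigr)|\snabla\phi|.
\end{equation*}
A weighted Cauchy--Schwarz inequality --- with the weight taken to be a small power $|u|^{\pm\delta}$ or $v^{\pm\delta}$ --- then dominates the right-hand side by weighted multiples of $v^q(L\phi)^2$, $|u|^p(\underline{L}\phi)^2$ and $(v^q+|u|^p)\Omega^2|\snabla\phi|^2$, where one again uses $\Omega^{-2}\lesssim(v+|u|)^2$ to convert $|\snabla\phi|^2$ into $\Omega^2|\snabla\phi|^2$. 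The choice of $\delta$ is where finiteness of the volume enters: in $\uline{\mathcal{A}}$ one has $v\leq|u|^{1/\alpha}$ with $\alpha>1$ (and in $\mathcal{A}$, $|u|\leq v^{1/\beta}$ with $\beta>1$), so one may trade $v^{1+\delta}\lesssim|u|^{(1+\delta)/\alpha}$ and $|u|^{-(1-\delta)}\lesssim v^{-\alpha(1-\delta)}$; taking, say, $0<\delta<\min\{\alpha-1,\,1-\alpha^{-1}\}$ makes the weights that multiply the null fluxes integrable, and symmetrically in $\mathcal{A}$ with $\beta$ in place of $\alpha$.

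Finally, I would collect the estimates and invoke the Gr\"onwall-type Lemma~\ref{lm:gronwall} with
\begin{equation*}
f(u,v)=\int_{H_u\cap\uline{\mathcal{A}}\cap\{v'\leq v\}}J^{N_{2,q}}[\phi]\cdot L,\qquad g(u,v)=\int_{\underline{H}_v\cap\uline{\mathcal{A}}\cap\{u'\leq u\}}J^{N_{2,q}}[\phi]\cdot\underline{L}
\end{equation*}
in case (i) (and the analogous choices in (ii)), with $A$ the sum of the data fluxes on the right-hand side of (\ref{eq:weightfluxboundfinitevolume1}), respectively (\ref{eq:weightfluxboundfinitevolume2}), and $h,k$ the integrable weights produced above, absorbing the non-negative $\uline{\gamma}_{\alpha}$-flux into $f+g$ where necessary. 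The main obstacle is the estimate for $K^{N_{p,q}}_{\textnormal{mixed}}$: unlike in extremal Reissner--Nordstr\"om it genuinely contributes, and unlike in the axisymmetric case one cannot make it vanish by commuting with $\Phi$, so everything rests on extracting --- from the $(v+|u|)^{-2}$ decay of $\Omega^2\zeta^{\varphi_*}$ together with the defining relation $|u|\gtrsim v^{\alpha}$ of $\uline{\mathcal{A}}$ (resp.\ $v\gtrsim|u|^{\beta}$ of $\mathcal{A}$) --- a Cauchy--Schwarz splitting whose weights are simultaneously harmless on the $|\snabla\phi|^2$-terms and integrable on the null-flux terms.
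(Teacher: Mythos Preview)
Your proposal is correct and follows essentially the same approach as the paper: apply the divergence theorem with $N_{p,q}$ in the finite-volume regions $\uline{\mathcal{A}}$ and $\mathcal{A}$, handle $K^{N_{p,q}}_{\textnormal{null}}$ and $K^{N_{p,q}}_{\textnormal{angular}}$ exactly as in Proposition~\ref{prop:mainenergyestimate} (with the observation that in $\mathcal{A}$ the sign obstruction for $p<2$ disappears because $|u|\lesssim v^{1/\beta}$), and then control $K^{N_{p,q}}_{\textnormal{mixed}}$ via the $(v+|u|)^{-2}$ decay of $\Omega^2\zeta^{\varphi_*}$ together with a weighted Cauchy--Schwarz that exploits $v\leq C|u|^{1/\alpha}$ in $\uline{\mathcal{A}}$ (resp.\ $|u|\leq Cv^{1/\beta}$ in $\mathcal{A}$) to produce integrable weights, closing with Lemma~\ref{lm:gronwall}. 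The only cosmetic difference is that the paper carries out the Cauchy--Schwarz splitting for $K^{N_{p,q}}_{\textnormal{mixed}}$ more explicitly, arriving at the precise integrability constraints $0<\eta<\min\{q,1,2-\tfrac{q}{\alpha},\alpha-1\}$ in (i) and the analogous ones in (ii), whereas you summarise this as $0<\delta<\min\{\alpha-1,1-\alpha^{-1}\}$.
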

\begin{proof}
We first apply the divergence theorem in region $\uline{\mathcal{A}}$, to obtain
\begin{equation}
\label{eq:stokespastgammaalpha}
\begin{split}
\int_{H_u\cap \uline{\mathcal{A}}}& J^{N_{2,q}}[\phi]\cdot L+\int_{\underline{H}_v\cap \uline{\mathcal{A}}} J^{N_{2,q}}[\phi]\cdot \underline{L}\\
=\:& \int_{\mathcal{H}^+\cap\{v\geq v_0\}} J^{N_{2,q}}[\phi]\cdot L+ \int_{\underline{H}_{v_0}\cap \{|u|\geq |u_{\underline{\gamma}_{\alpha}}|(v_0)\}} J^{N_{2,q}}[\phi]\cdot \underline{L}\\
&-\int_{ \uline{\mathcal{A}}} K^{N_{p,q}}_{\textnormal{null}}[\phi]+K^{N_{p,q}}_{\textnormal{angular}}[\phi]+K^{N_{p,q}}_{\textnormal{mixed}}[\phi].
\end{split}
\end{equation}
We can estimate $K^{N_{p,q}}_{\textnormal{null}}[\phi]$ and $K^{N_{p,q}}_{\textnormal{angular}}[\phi]$ in exactly the same way as in Proposition \ref{prop:mainenergyestimate}. We are left with estimating $K^{N_{p,q}}_{\textnormal{mixed}}[\phi]$. If we apply the estimates in Section~ \ref{sec:estimatesmetricconn} to (\ref{eq:errorterms3}), we can estimate,
\begin{equation*}
\Omega^2|K^{N_{p,q}}_{\textnormal{mixed}}[\phi]|\leq C(v+|u|)^{-2}(v^q|L\phi||\partial_{{\varphi_*}}\phi|+ |u|^p|\underline{L}\phi||\partial_{{\varphi_*}}\phi|).
\end{equation*}
By applying Cauchy--Schwarz, we can further estimate, for $\eta>0$,
\begin{equation*}
\begin{split}
(v+|u|)^{-2}v^q|L\phi||\partial_{\varphi_*}\phi|\leq &\: C|u|^{-1-\eta}v^q(L\phi)^2+C(v+|u|)^{-2}v^q|u|^{1-p+\eta}|u|^p\Omega^2|\snabla\phi|^2\\
\leq &\: C|u|^{-1-\eta}v^q(L\phi)^2+C|u|^{\frac{q}{\alpha}+\eta-p-1}|u|^p\Omega^2|\snabla\phi|^2,
\end{split}
\end{equation*}
where we used that $v\leq C|u|^{\frac{1}{\alpha}}$ in $J^-(\underline{\gamma}_{\alpha})$, for some constant $C>0$.

Similarly, we apply Cauchy--Schwarz to estimate
\begin{equation*}
\begin{split}
(v+|u|)^{-2}|u|^p|\underline{L}\phi||\partial_{\varphi_*}\phi|\leq &\: C v^{-1-\eta}|u|^p(\underline{L}\phi)^2 +C(v+|u|)^{-2}v^{1+\eta}|u|^p\Omega^2|\snabla\phi|^2\\
\leq &\: C v^{-1-\eta}|u|^p(\underline{L}\phi)^2+ C|u|^{\frac{1+\eta}{\alpha}-2}|u|^p\Omega^2|\snabla\phi|^2.
\end{split}
\end{equation*}

Combined with the estimates from Proposition \ref{prop:mainenergyestimate} for $K^{N_{p,q}}_{\textnormal{angular}}$ and $K^{N_{p,q}}_{\textnormal{null}}$, we can apply Lemma \ref{lm:gronwall} with
\begin{align*}
A=\:& \int_{\mathcal{H}^+\cap\{v\geq v_0\}} J^{N_{2,q}}[\phi]\cdot L+ \int_{\uline{H}_{v_0}} J^{N_{2,q}}[\phi]\cdot \underline{L},\\
f(u,v)=\:&\int_{H_{u}\cap \uline{\mathcal{A}}}v^q(L\phi)^2+|u|^2\Omega^2|\snabla\phi|^2,\\
g(u,v)=\:&\int_{\underline{H}_v\cap \uline{\mathcal{A}}}|u|^2(\underline{L}\phi)^2+v^q\Omega^2|\snabla \phi|^2,\\
h(u)=\:&|u|^{-1-\eta}+|u|^{q+\eta-5}+|u|^{-2+\epsilon}+|u|^{\frac{q}{\alpha}+\eta-3}+|u|^{\frac{1+\eta}{\alpha}-2},\\
k(v)=\:&v^{-1-\eta}+v^{\eta-q-1}+v^{-2+\epsilon},
\end{align*}
where we use that $h$ and $k$ are integrable for $0<\eta<\min\{q,1,2-\frac{q}{\alpha},\alpha-1\}$ and $0<\epsilon<1$. Hence, we need $\alpha>1$ to conclude the estimate in (i).

Consider now the region $\mathcal{A}$. The estimates above can be repeated here, where the roles of $|u|$ and $v$ are replaced when estimating $K_{\textnormal{mixed}}^{N_{p,q}}$:
\begin{equation*} 
\Omega^2|K^{N_{p,q}}_{\textnormal{mixed}}[\phi]|\leq Cv^{-1-\eta}|u|^p(\underline{L}\phi)^2+C|u|^{-1-\eta}v^q(L\phi)^2+C(v^{\frac{p}{\beta}+\eta-q-1}+v^{\frac{1+\eta}{\beta}-2})v^q\Omega^2|\snabla\phi|^2.
\end{equation*}

Furthermore, we can actually improve the estimate for $K_{\textnormal{angular}}^{N_{p,q}}$ from Proposition \ref{prop:mainenergyestimate} when restricted to $\mathcal{A}$, by including the cases $0<p\leq 2$, with $0\leq q<2$. 

Indeed, we can estimate
\begin{equation*}
\begin{split}
-K^{N_{p,q}}_{\textnormal{angular}}=\:& \frac{1}{2(v+|u|)}\left[\left((q-2)v+q|u|\right)v^{q-1}+\left((2-p)|u|-pv\right)|u|^{p-1}\right]|\snabla\phi|^2\\
&+\mathcal{O}((v^q+|u|^p)(v+|u|)^{-2}\log(v+|u|))|\snabla\phi|^2\\
\leq \:& C(v+|u|)^{-1}|u|^p|\snabla\phi|^2+\mathcal{O}((v^q+|u|^p)(v+|u|)^{-2}\log(v+|u|))|\snabla\phi|^2,
\end{split}
\end{equation*}
where in the second inequality we used that $(q-2)v+q|u|<0$ in $\mathcal{A}$ if $v+|u|$ is suitably large and $q<2$, which follows from the inequality $|u|<\frac{2-q}{q}v$, which holds in $\mathcal{A}$ if $v+|u|$ is suitably large and $\beta> 1$.

We can now apply Lemma \ref{lm:gronwall} with
\begin{align*}
A=\:& \int_{\mathcal{H}^+\cap\{v\geq v_0\}} J^{N_{p,q}}[\phi]\cdot L+ \int_{\uline{H}_{v_0}} J^{N_{p,q}}[\phi]\cdot \underline{L},\\
f(u,v)=\:&\int_{H_{u}\cap \mathcal{A}}v^q(L\phi)^2+|u|^p\Omega^2|\snabla\phi|^2,\\
g(u,v)=\:&\int_{\underline{H}_v\cap \mathcal{A}}|u|^p(\underline{L}\phi)^2+v^q\Omega^2|\snabla \phi|^2,\\
h(u)=\:&|u|^{-1-\eta}+|u|^{q+\eta-3-p}+|u|^{-2+\epsilon},\\
k(v)=\:&v^{-1-\eta}+v^{p+\eta-q-3}+v^{-2+\epsilon}+v^{\frac{p}{\beta}+\eta-1-q}+v^{\frac{1+\eta}{\beta}-2},
\end{align*}
where $h$ and $k$ are integrable for for $0<\eta<\min\{q+2-p,p+2-q,2-\frac{p}{\beta},\beta-1\}$, $0<\epsilon<1$. For consistency, we therefore require $\beta>1$, $p<q+2$ and $q<p+2$. In particular, if $p=2$, we need $q>0$ and if $q=2$, we need $p>0$.
\end{proof}

\subsection{Energy estimates in $\mathcal{B}$}

We are left with proving a suitable energy estimate in the region $\mathcal{B}$. In Kerr--Newman spacetimes with $0\leq |a|<a_c$, we can obtain an energy estimate away from $\mathcal{H}^+$ with respect to the vector field $Y_p$, defined by
\begin{equation*}
Y_p=|u|^pH,
\end{equation*}
if we restrict to a region $\{v\geq v_1\}$, where $v_1$ is taken suitably large, such that $M-r$ is sufficiently small, so as to ensure that $Y_p$ is a causal vector field everywhere in $\mathcal{B}\cap \{v\geq v_1\}$; see the discussion in Section~ \ref{sec:hawkvf}.
\begin{proposition}
\label{prop:energyestimatehawking}
Let $0\leq p\leq 2$ and let $v_1>v_0$ be suitably large. Then there exists a constant $C=C(a,M,u_0,v_0,v_1,p)>0$ such that
\begin{equation}
\label{eq:weightfluxboundv2}
\begin{split}
\int_{H_u\cap \mathcal{B}\cap \{v\geq v_1\}}& J^{Y_p}[\phi]\cdot L+\int_{\underline{H}_v\cap \mathcal{B}\cap \{v\geq v_1\}} J^{Y_p}[\phi]\cdot \underline{L}+\int_{\gamma_{\beta}\cap\{v\geq v_1\}}J^{Y_p}[\phi]\cdot n_{\gamma_{\beta}}\\
&+\int_{\mathcal{B}\cap \{v\geq v_1\}}K^{Y_p}[\phi]\\
\leq& \: C\left[ \int_{\underline{\gamma}_{\alpha}\cap\{v\geq v_1\}} J^{Y_p}[\phi]\cdot n_{\mathcal{H}^+}+ \int_{\underline{H}_{v_1}\cap\{|u|\leq u_{\underline{\gamma}_{\alpha}}(v_1)\}} J^{Y_p}[\phi]\cdot \underline{L}\right].
\end{split}
\end{equation}
\end{proposition}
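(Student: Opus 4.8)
The plan is to apply the divergence theorem for the current $J^{Y_p}[\phi]$ in the region $\mathcal{B}\cap\{v\geq v_1\}$, whose boundary consists of the outgoing null piece $H_u\cap\mathcal{B}\cap\{v\geq v_1\}$, the ingoing null piece $\underline{H}_v\cap\mathcal{B}\cap\{v\geq v_1\}$, the spacelike hypersurface $\gamma_\beta\cap\{v\geq v_1\}$ (the future boundary), together with the past boundary pieces $\underline{\gamma}_\alpha\cap\{v\geq v_1\}$ and the ingoing segment $\underline{H}_{v_1}\cap\{|u|\leq u_{\underline{\gamma}_\alpha}(v_1)\}$. The divergence identity then reads, schematically,
\begin{equation*}
\int_{H_u\cap \mathcal{B}\cap\{v\geq v_1\}}J^{Y_p}[\phi]\cdot L+\int_{\underline{H}_v\cap \mathcal{B}\cap\{v\geq v_1\}}J^{Y_p}[\phi]\cdot \underline{L}+\int_{\gamma_\beta\cap\{v\geq v_1\}}J^{Y_p}[\phi]\cdot n_{\gamma_\beta}+\int_{\mathcal{B}\cap\{v\geq v_1\}}K^{Y_p}[\phi]=\int_{\underline{\gamma}_\alpha\cap\{v\geq v_1\}}J^{Y_p}[\phi]\cdot n_{\underline{\gamma}_\alpha}+\int_{\underline{H}_{v_1}\cap\{|u|\leq u_{\underline{\gamma}_\alpha}(v_1)\}}J^{Y_p}[\phi]\cdot \underline{L}.
\end{equation*}
The point is that, unlike the $N_{p,q}$ estimate, essentially no Grönwall argument is needed here: all four terms on the left-hand side are already \emph{non-negative}, so the identity immediately yields the claimed inequality with $C=1$ (or any $C\geq 1$). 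The heart of the matter is therefore to verify the sign of each term.

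First I would record that $J^{Y_p}[\phi]\cdot L$ and $J^{Y_p}[\phi]\cdot \underline{L}$ are non-negative. Since $Y_p=|u|^pH$ with $|u|^p\geq 0$, this reduces to $\mathbf{T}(H,L)\geq 0$ and $\mathbf{T}(H,\underline{L})\geq 0$. Because $L$ and $\underline{L}$ are future-directed null and, on the region $\{v\geq v_1\}$ with $v_1$ large, $H$ is future-directed causal (timelike, in fact, by the computation of $g(H,H)$ in Section~\ref{sec:hawkvf}, valid for $|a|<a_c$ once $v+|u|$ is large enough), the dominant energy condition for the scalar stress tensor $\mathbf{T}_{\alpha\beta}[\phi]$ gives $\mathbf{T}(X,Y)\geq 0$ for any two future-directed causal vectors $X,Y$; this is the standard fact that $\mathbf{T}$ satisfies the DEC. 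The same reasoning applied to $H$ and the future-directed timelike (or null) normal $n_{\gamma_\beta}$ to the spacelike hypersurface $\gamma_\beta$ gives $J^{Y_p}[\phi]\cdot n_{\gamma_\beta}\geq 0$. For the bulk term, I invoke the computation already carried out at the end of Section~\ref{sec:vfmethod}: using that $H$ is Killing, $K^{Y_p}[\phi]=\frac{p}{2}\Omega^{-2}|u|^{p-1}J^H[\phi]\cdot L\geq 0$, again by the DEC, and this holds for all $0\leq p\leq 2$ (the $p=0$ case being trivially zero). Hence every term on the left is $\geq 0$, every term on the right is $\geq 0$, and the inequality follows.

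The one genuine technical point — and the step I expect to require the most care — is justifying the choice of $v_1$ so that $H$ is causal (hence the DEC is applicable with $H$ as one of the slots) \emph{uniformly} throughout $\mathcal{B}\cap\{v\geq v_1\}$, including near $\mathcal{CH}^+$ and near $\mathcal{H}^+$; this is exactly where the restriction $|a|<a_c$ is used, via the cubic inequality $2x(3-x)^2-(1+x)^2<0$ analysed in Section~\ref{sec:hawkvf}. One must check that the estimate $g(H,H)\leq 0$ derived there at $\theta=\frac{\pi}{2}$ in fact holds for all $\theta$ once $M-r$ is small (equivalently $v+|u|$ large), since $R^2\sin^2\theta$ is maximised at $\theta=\frac{\pi}{2}$; this is immediate from the bound $R^2\sin^2\theta\leq R^2\sin^2\theta|_{\theta=\pi/2}+\mathcal{O}((v+|u|)^{-1})$ together with $(b^{\varphi_*})^2=\mathcal{O}((v+|u|)^{-2})$ from Theorem~\ref{thm:estconncoef}. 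A secondary bookkeeping issue is to confirm that $\gamma_\beta\cap\{v\geq v_1\}$ really is the full future boundary of $\mathcal{B}\cap\{v\geq v_1\}$ and that $\underline{\gamma}_\alpha\cap\{v\geq v_1\}$ together with the short segment of $\underline{H}_{v_1}$ is the full past boundary (so that no boundary contribution is omitted from the divergence identity); this follows from the definitions of $\mathcal{B}$, $\gamma_\beta$, $\underline{\gamma}_\alpha$ and Figure~\ref{fig:finitevolume}, but should be stated carefully. Once these geometric facts are in place, the proof is just the divergence theorem plus sign inspection.
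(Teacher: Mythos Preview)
Your proposal is correct and follows essentially the same approach as the paper: apply the divergence theorem in $\mathcal{B}\cap\{v\geq v_1\}$, choose $v_1$ large so that $H$ is causal for $|a|<a_c$ (via Section~\ref{sec:hawkvf}), and then observe that all boundary fluxes and the bulk term $K^{Y_p}[\phi]=\frac{p}{2}\Omega^{-2}|u|^{p-1}J^H[\phi]\cdot L$ are non-negative. The paper's proof is slightly more explicit in one place---it writes out $J^H[\phi]\cdot L=(L\phi)^2+\Omega^2|\snabla\phi|^2-b^{\varphi_*}L\phi\,\partial_{\varphi_*}\phi$ and bounds it below by $C[(L\phi)^2+\Omega^2|\snabla\phi|^2]$ directly rather than invoking the dominant energy condition abstractly---but this is the same content.
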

\begin{proof}
We consider the region $\mathcal{B}\cap \{v\geq v_1\}$, where $v_1$ can be chosen suitably large, such that $H$ is causal for $|a|<a_c$ everywhere in $\mathcal{B}\cup \{v\geq v_1\}$. See Section~ \ref{sec:hawkvf}. 

We use that $H$ is a Killing vector field to easily obtain an expression for $K^{Y_p}$,
\begin{equation*}
\begin{split}
K^{Y_p}[\phi]=\:&g^{\alpha\beta}\nabla_{\beta}(J^{Y_p}_{\alpha})=g^{\alpha\beta}\nabla_{\beta}(|u|^p)J^H_{\alpha}[\phi]+|u|^pK^H[\phi]\\
=\:&\frac{p}{2}\Omega^{-2}|u|^{p-1}J^H[\phi]\cdot L \geq 0,
\end{split}
\end{equation*}
where non-negativity in the case that $\phi$ is not axisymmetric requires that $H$ is causal.

Moreover, there exists a constant $C>0$ such that we can estimate
\begin{equation*}
\begin{split}
J^H[\phi]\cdot L=\:&(L\phi)^2+\Omega^2|\snabla\phi|^2-b^{{\varphi_*}}L\phi\partial_{{\varphi_*}}\phi\\
\geq&\: C\left[(L\phi)^2+\Omega^2|\snabla\phi|^2\right].
\end{split}
\end{equation*}
If we apply Stokes' theorem in the region $\mathcal{B}\cap \{v\geq v_1\}$, the bulk term is therefore of a good sign.
\end{proof}

We can now obtain energy estimates in the entire region ${{\mathcal{D}_{u_0,v_0}}}$ by combining the results from Propositions \ref{prop:finitevolumeenergyestimates} and \ref{prop:energyestimatehawking}.
\begin{proposition}
\label{energyestsmallakn}
Let $0\leq p<2$ and $0\leq q<2$. Then there exist $\alpha=\alpha(p,q)>1$ and $\beta=\beta(p,q)>1$, such that for all $H_u$ and $\underline{H}_v$ in ${{\mathcal{D}_{u_0,v_0}}}$,
\begin{equation*}
\begin{split}
\int_{H_u}& J^{N_{p,q}}[\phi]\cdot L+\int_{\underline{H}_v} J^{N_{p,q}}[\phi]\cdot \underline{L}\\
\leq \:& C\left[ \int_{\mathcal{H}^+\cap\{v\geq v_0\}} J^{N_{2,q\beta \alpha }}[\phi]\cdot L+ \int_{\underline{H}_{v_0}} J^{N_{2,q \beta \alpha}}[\phi]\cdot \underline{L}\right]=:CE_{q\alpha \beta}[\phi],
\end{split}
\end{equation*}
with $0<q\beta \alpha \leq 2$ and $C=C(a,M,u_0,v_0,p,q,\alpha,\beta)>0$.
\end{proposition}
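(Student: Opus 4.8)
The plan is to combine the three energy estimates already established --- Proposition \ref{prop:finitevolumeenergyestimates}(i) in $\uline{\mathcal{A}}$, Proposition \ref{prop:finitevolumeenergyestimates}(ii) in $\mathcal{A}$, and Proposition \ref{prop:energyestimatehawking} in $\mathcal{B}$ --- by chaining them along the decomposition ${\mathcal{D}_{u_0,v_0}} = \uline{\mathcal{A}} \cup \mathcal{B} \cup \mathcal{A}$, whose common interfaces are the spacelike hypersurfaces $\uline{\gamma}_\alpha$ and $\gamma_\beta$. The key point is that the energy flux through $\uline{\gamma}_\alpha$ appears with a good sign on the left-hand side of the estimate in $\uline{\mathcal{A}}$ and can therefore be fed as data into the estimate in $\mathcal{B}$; likewise the flux through $\gamma_\beta$ controls the data for the estimate in $\mathcal{A}$. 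The subtlety is that the three propositions are stated with respect to \emph{different} vector field multipliers ($N_{2,q'}$ in $\uline{\mathcal{A}}$ with a large exponent $q'$, $Y_p$ in $\mathcal{B}$, and $N_{p,q}$ in $\mathcal{A}$), so one must compare the corresponding energy currents across the gluing hypersurfaces using the pointwise metric estimates of Section \ref{sec:estimatesmetricconn}.

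First I would fix the exponents. Given target exponents $0 \leq p < 2$ and $0 \leq q < 2$, I would choose $\alpha > 1$ and $\beta > 1$ close to $1$ so that $q\alpha\beta < 2$ (this is possible precisely because $q<2$, and forces the bookkeeping exponent $q\alpha\beta$ appearing in $E_{q\alpha\beta}[\phi]$), and so that $\alpha, \beta$ lie in the admissible ranges dictated by the integrability conditions in Proposition \ref{prop:finitevolumeenergyestimates}. Then I would run the estimates in the order: (1) in $\uline{\mathcal{A}}$, use Proposition \ref{prop:finitevolumeenergyestimates}(i) with weight $q\alpha\beta$ in the $v$-slot (which is $\leq 2$ and $>0$, hence admissible) to bound the fluxes through $H_u \cap \uline{\mathcal{A}}$, $\uline{H}_v \cap \uline{\mathcal{A}}$, and through $\uline{\gamma}_\alpha$ in terms of $E_{q\alpha\beta}[\phi]$; (2) on $\uline{\gamma}_\alpha$, which is the past boundary for the $\mathcal{B}$-estimate, compare $J^{Y_p}[\phi] \cdot n_{\uline{\gamma}_\alpha}$ with $J^{N_{2,q\alpha\beta}}[\phi] \cdot n_{\uline{\gamma}_\alpha}$: on $\uline{\gamma}_\alpha$ one has $v^\alpha \sim |u|$, so the $u$- and $v$-weights are comparable up to powers, and a Cauchy--Schwarz / pointwise argument gives domination of one current by the other with constants depending only on $\alpha, p, q$; (3) apply Proposition \ref{prop:energyestimatehawking} to bound the fluxes through $H_u \cap \mathcal{B} \cap \{v \geq v_1\}$, $\uline{H}_v \cap \mathcal{B} \cap \{v \geq v_1\}$, and through $\gamma_\beta \cap \{v \geq v_1\}$ in terms of the $\uline{\gamma}_\alpha$-flux plus the flux through $\uline{H}_{v_1}$ (which is itself controlled by step (1)); for the complementary region $\mathcal{B} \cap \{v_0 \leq v \leq v_1\}$, which has compact closure, one uses local energy estimates / the standard well-posedness bounds to control everything by the initial data on $\uline{H}_{v_0} \cup \mathcal{H}^+$; (4) on $\gamma_\beta$ compare $J^{N_{p,q}}[\phi] \cdot n_{\gamma_\beta}$ with $J^{Y_p}[\phi] \cdot n_{\gamma_\beta}$ using $v \sim |u|^\beta$, and finally apply Proposition \ref{prop:finitevolumeenergyestimates}(ii) in $\mathcal{A}$ to bound the fluxes through $H_u \cap \mathcal{A}$ and $\uline{H}_v \cap \mathcal{A}$. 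Summing the contributions over $\uline{\mathcal{A}}$, $\mathcal{B}$, $\mathcal{A}$ for a fixed $H_u$ (resp.\ $\uline{H}_v$) and tracing through the constants yields the claimed bound by $C E_{q\alpha\beta}[\phi]$.

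The main obstacle I anticipate is step (2) and step (4): controlling the $Y_p$-flux through $\uline{\gamma}_\alpha$ by the $N_{2,q\alpha\beta}$-flux, and the $N_{p,q}$-flux through $\gamma_\beta$ by the $Y_p$-flux. Here one must be careful that $Y_p = |u|^p H$ has $H = \tfrac12(\partial_u + \partial_v) - \tfrac12 b^{\varphi_*}\Phi$, so its flux through a spacelike hypersurface mixes $(L\phi)^2$, $(\uline{L}\phi)^2$, $\Omega^2|\snabla\phi|^2$ and a $b^{\varphi_*} L\phi\,\partial_{\varphi_*}\phi$ cross-term; on $\uline{\gamma}_\alpha$, where $v^\alpha \sim |u|$ so $v \ll |u|$ near $\mathcal{H}^+$, one needs the $|u|^p$-weighted $(\uline{L}\phi)^2$ term of $J^{N_{2,q\alpha\beta}}$ (with $p=2$) to absorb both the $(\uline{L}\phi)^2$ and, after Cauchy--Schwarz against $\Omega^2|\snabla\phi|^2$, the cross-term and the angular term; the condition $q\alpha\beta>0$ is exactly what keeps the $v$-weights from degenerating. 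Dually, on $\gamma_\beta$, where $|u| \ll v$, the $N_{p,q}$-flux with $p<2$ has a weaker $|u|^p(\uline{L}\phi)^2$ weight than $Y_p$ would need --- but since we are \emph{dominating} $J^{N_{p,q}}$ by $J^{Y_p}$ this direction is in our favour, and the genuine constraint is simply that $p < 2$ so that $|u|^p \leq |u|^2$ times a decaying factor. Assembling all the constants and verifying that the various $\epsilon, \eta$-losses inside Propositions \ref{prop:finitevolumeenergyestimates} and \ref{prop:energyestimatehawking} do not compound into something uncontrollable is the bulk of the (routine but delicate) work.
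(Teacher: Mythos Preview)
Your approach is essentially the same as the paper's: chain the three region-wise estimates along the spacelike interfaces $\underline{\gamma}_\alpha$ and $\gamma_\beta$, compare the different multiplier currents using the relations $|u|\sim v^\alpha$ and $v\sim|u|^\beta$ on those hypersurfaces, and handle the compact piece $\mathcal{B}\cap\{v\le v_1\}$ separately by a direct $N_{p,q}$-estimate.

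There is one genuine (but easily fixed) gap in your comparison steps. You propose to use $Y_p$ in $\mathcal{B}$ with the \emph{same} exponent $p$ as the target multiplier $N_{p,q}$. On $\gamma_\beta$, where $v\sim|u|^\beta$, the $N_{p,q}$-current contains a term of size $v^q(L\phi)^2\sim|u|^{q\beta}(L\phi)^2$, whereas the $Y_p$-current only supplies $|u|^{p}\bigl[(L\phi)^2+\ldots\bigr]$. Hence $J^{N_{p,q}}\lesssim J^{Y_p}$ on $\gamma_\beta$ requires $q\beta\le p$, which fails whenever $p$ is small (e.g.\ $p=0$, $q>0$). The remedy --- and this is exactly what the paper does --- is to introduce an intermediate exponent $p'$ with $p'\ge\max(p,q\beta)$ and use $Y_{p'}$ in $\mathcal{B}$; then on $\underline{\gamma}_\alpha$ (where $v\sim|u|^{1/\alpha}$) the comparison $J^{Y_{p'}}\lesssim J^{N_{2,q''}}$ forces $q''\ge p'\alpha$. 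Combining the constraints yields $p\le p'\le 2/\alpha$ and $q\le p'/\beta\le 2/(\alpha\beta)$, which is precisely the condition $q\alpha\beta\le 2$ in the statement. Your diagnosis in step~(4) (``the genuine constraint is simply that $p<2$'') misidentifies the issue: the binding constraint is on the \emph{intermediate} exponent $p'$, not on the target $p$ directly.
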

\begin{proof}
We first restrict to the region $\{v\geq v_1\}$. Note that
\begin{equation*}
\begin{split}
\int_{H_u\cap \mathcal{B}\cap \{v\geq v_1\}}& J^{N_{p,q}}[\phi]\cdot L+\int_{\underline{H}_v\cap \mathcal{B}\cap \{v\geq v_1\}} J^{N_{p,q}}[\phi]\cdot \underline{L}\\
\leq \:& C\left[\int_{H_u\cap \mathcal{B}\cap \{v\geq v_1\}} J^{Y_{p'}}[\phi]\cdot L+\int_{\underline{H}_v\cap \mathcal{B}\cap \{v\geq v_1\}} J^{Y_{p'}}[\phi]\cdot \underline{L}\right],
\end{split}
\end{equation*}
for $p' \geq p$ and $p'\geq q\beta$.

Furthermore,
\begin{equation*}
\begin{split}
\int_{H_u\cap \mathcal{B}\cap \{v\geq v_1\}}& J^{Y_p'}[\phi]\cdot L+\int_{\underline{H}_v\cap \mathcal{B}\cap \{v\geq v_1\}} J^{Y_p'}[\phi]\cdot \underline{L}\\
\leq \:& C\left[\int_{H_u\cap \mathcal{B}\cap \{v\geq v_1\}} J^{N_{2,q''}}[\phi]\cdot L+\int_{\underline{H}_v\cap\mathcal{B}\cap \{v\geq v_1\}} J^{N_{2,q''}}[\phi]\cdot \underline{L}\right],
\end{split}
\end{equation*}
for $q''\geq p'\alpha$.

Combining Proposition \ref{prop:finitevolumeenergyestimates} and \ref{prop:energyestimatehawking} we can therefore estimate
\begin{equation*}
\begin{split}
\int_{H_u\cap \mathcal{A}\cap \{v\geq v_1\}}& J^{N_{p,q}}[\phi]\cdot L+\int_{\underline{H}_v\cap \mathcal{A}\cap \{v\geq v_1\}} J^{N_{p,q}}[\phi]\cdot \underline{L}\\
\leq \:& C\int_{\gamma_{\beta}\cap \{v\geq v_1\}} J^{N_{p,q}}[\phi]\cdot n_{\gamma_{\beta}}+\int_{\underline{H}_{v_1}\cap\{|u|\leq |u_{\gamma_{\beta}}(v_0)|\}} J^{N_{p,q}}[\phi]\cdot \underline{L}\\
\leq \:& C\int_{\gamma_{\beta}\cap \{v\geq v_1\}} J^{Y_{p'}}[\phi]\cdot n_{\gamma_{\beta}}+C\int_{\underline{H}_{v_1}\cap\{|u|\leq |u_{\gamma_{\beta}}(v_1)|\}} J^{N_{p,q}}[\phi]\cdot \underline{L}\\
\leq \:& C\Bigg[ \int_{\underline{\gamma}_{\alpha}\cap\{v\geq v_1\}} J^{Y_{p'}}[\phi]\cdot n_{\underline{\gamma}_{\alpha}}+ \int_{\underline{H}_{v_1}\cap\{|u_{\gamma_{\beta}}(v_1)|\leq |u|\leq |u_{\underline{\gamma}_{\alpha}}(v_1)|\}} J^{Y_{p'}}[\phi]\cdot \underline{L}\\
&+\int_{\underline{H}_{v_1}\cap\{|u|\leq |u_{\gamma_{\beta}}(v_1)|\}} J^{N_{p,q}}[\phi]\cdot \underline{L}\Bigg],
\end{split}
\end{equation*}
where we need $\beta>1$, $0\leq p \leq 2$ and $0<q\leq 2$, or $0< p \leq 2$ and $0\leq q\leq 2$. Moreover, we need
\begin{align*}
p'\geq&\: p,\\
p'\geq&\: q\beta.
\end{align*}

Similarly, we can estimate
\begin{equation*}
\begin{split}
\int_{H_u\cap \mathcal{B}\cap \{v\geq v_1\}}& J^{N_{p,q}}[\phi]\cdot L+\int_{\underline{H}_v\cap \mathcal{B}\cap \{v\geq v_1\}} J^{N_{p,q}}[\phi]\cdot \underline{L}\\
\leq \:& C\Bigg[ \int_{\underline{\gamma}_{\alpha}\cap\{v\geq v_1\}} J^{Y_{p'}}[\phi]\cdot n_{\underline{\gamma}_{\alpha}}+ \int_{\underline{H}_{v_1}\cap\{|u_{\gamma_{\beta}}(v_1)|\leq |u|\leq |u_{\underline{\gamma}_{\alpha}}(v_1)|\}} J^{Y_{p'}}[\phi]\cdot \underline{L},
\end{split}
\end{equation*}
where $p'\geq p$ and $p'\geq q\beta$.

Now we apply Proposition \ref{prop:finitevolumeenergyestimates} in the region $\uline{\mathcal{A}}$ to estimate
\begin{equation*}
\begin{split}
\int_{\underline{\gamma}_{\alpha}\cap\{v\geq v_1\}}& J^{Y_{p'}}[\phi]\cdot n_{\underline{\gamma}_{\alpha}}\\
\leq \:& C\left[ \int_{\mathcal{H}^+\cap\{v\geq v_1\}} J^{N_{2,q''}}[\phi]\cdot L+ \int_{\underline{H}_{v_1}\cap\{|u|\geq |u_{\underline{\gamma}_{\alpha}}(v_1)|\}} J^{N_{2,q''}}[\phi]\cdot \underline{L}\right],
\end{split}
\end{equation*}
where we need $\alpha>1$, and we require
\begin{align*}
p'\leq \:& 2,\\
p'\alpha \leq \:& q''\leq 2.
\end{align*}
If we combine the restrictions on $p,q,p',q'$ and $q''$, we obtain
\begin{align*}
p\leq \:& p'\leq \frac{2}{\alpha},\\
q\leq \:& \frac{p'}{\beta}\leq \frac{2}{\alpha \beta}.
\end{align*}

We now consider the region $\{v_0 \leq v\leq v_1\}$. Since the region $\mathcal{B}\cap \{v\leq v_1\}$ is compact, we do not need to appeal to the estimates with respect to the vector fields $Y_p$ from Proposition \ref{prop:energyestimatehawking}. Instead, we use the vector fields $N_{p,q}$, as in Proposition \ref{prop:finitevolumeenergyestimates}, making use of the compactness of $\mathcal{B}\cap \{v\leq v_1\}$ to in particular estimate the previously problematic $K^{N_{p,q}}_{\textnormal{mixed}}[\phi]$ error term. 

We arrive at the estimate:
\begin{equation*}
\begin{split}
\int_{H_u\cap \{v\leq v_1\}}& J^{N_{p,q}}[\phi]\cdot L+\int_{\underline{H}_v\cap \{v\leq v_1\}} J^{N_{p,q}}[\phi]\cdot \underline{L}\\
\leq \:& C\int_{\mathcal{H}^+\cap \{v\leq v_1\}} J^{N_{p,q}}[\phi]\cdot L+C\int_{\underline{H}_{v_0}} J^{N_{p,q}}[\phi]\cdot \underline{L},
\end{split}
\end{equation*}
for any $0\leq p,q\leq 2$.

The estimate in the proposition now follows by adding the estimates in $\{v\geq v_1\}$ and $\{v\leq v_1\}$ together.
\end{proof}

\begin{remark}
For $\epsilon>0$ arbitrarily small we can always choose $\alpha$ and $\beta$ in Proposition \ref{energyestsmallakn} suitably close to 1, so that we can take $p=2-\epsilon$ and $q=2-\epsilon$.
\end{remark}

We have now proved Theorem \ref {thm:energyestsmallakn}.

\section{Higher-order energy estimates}
In order to obtain $L^{\infty}$ bounds from the $L^2$ bounds derived in Section~ \ref{sec:energyestimatesaxisymm} and \ref{sec:energyestimatesslowlyrot}, we need to derive similar $L^2$ bounds for higher-order derivatives of $\phi$. \textbf{In this section we will use $\phi$ to denote a solution to} (\ref{eq:waveqkerr})\textbf{ corresponding to initial data from Proposition} \ref{prop:wellposedness}. We will always specify whether we are assuming $\phi$ arises from axisymmetric data in Proposition \ref{prop:wellposedness}, or the rotation parameter $a$ is restricted to the range $0\leq |a|<a_c$.

\label{sec:higherorderenergyestimates}
\subsection{Elliptic estimates on $S^2_{u,v}$}
We will first show that the angular derivatives on the Eddington--Finkelstein-type spheres $S^2_{u,v}$ can be controlled by derivatives with respect to the Killing vector field $\Phi$, the null directed vector fields $L$ and $\uline{L}$ and the Carter operator $Q$; see Section~ \ref{sec:hawkvf}. The lemma below can be found in \cite{are6}.
\begin{lemma}
Given a function $f: \mathcal{M}\cap {\mathcal{D}_{u_0,v_0}}\to \R$, there exists a $C=C(a)>0$ such that
\begin{equation}
\label{est:ellipticBLspheres}
\int_{\s^2} |\nabla_{\s^2}f|^2(t,r,\theta,\varphi)+|\nabla^2_{\s^2}f|^2(t,r,\theta,\varphi)\, d\mu_{\s^2}\leq C \int_{\s^2}(Qf)^2+(\Phi^2 f)^2+(T^2 f)^2\,d\mu_{\s^2},
\end{equation}
where $\nabla_{\s^2}$ denotes the covariant derivative on $\s^2$ and $d\mu_{\s^2}=\sin \theta d\theta d\varphi$.
\end{lemma}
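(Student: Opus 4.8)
The plan is to reduce the estimate to a standard second-order elliptic estimate for the round Laplacian on $\s^2$ and then invoke the Bochner--Weitzenb\"ock identity. First I would use the defining formula for the Carter operator, $Q=\Delta_{\s^2}+(a^2\sin^2\theta)T^2-\Phi^2$, to solve for the round Laplacian:
\[
\Delta_{\s^2}f=Qf-(a^2\sin^2\theta)T^2f+\Phi^2f .
\]
Since $T=\partial_t$ and $\Phi=\partial_\varphi$ in Boyer--Lindquist coordinates act only in the $t$- and $\varphi$-directions, the quantities $T^2f$ and $\Phi^2f$ are well defined on each Boyer--Lindquist sphere $\{t=\mathrm{const},\,r=\mathrm{const}\}$; and because $0\le a^2\sin^2\theta\le a^2$, squaring and integrating over $\s^2$ gives
\[
\int_{\s^2}(\Delta_{\s^2}f)^2\,d\mu_{\s^2}\le C(a)\int_{\s^2}(Qf)^2+(\Phi^2f)^2+(T^2f)^2\,d\mu_{\s^2}.
\]
Thus it suffices to control $|\nabla_{\s^2}f|^2+|\nabla^2_{\s^2}f|^2$ in $L^2(\s^2)$ by $(\Delta_{\s^2}f)^2$ in $L^2(\s^2)$.

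For that I would apply the Bochner--Weitzenb\"ock identity for functions, $\tfrac12\Delta_{\s^2}|\nabla_{\s^2}f|^2=|\nabla^2_{\s^2}f|^2+\langle\nabla_{\s^2}f,\nabla_{\s^2}\Delta_{\s^2}f\rangle+\mathrm{Ric}_{\s^2}(\nabla_{\s^2}f,\nabla_{\s^2}f)$, on the closed round sphere. Integrating over $\s^2$, the left-hand side integrates to zero (no boundary), the middle term integrates to $-\int_{\s^2}(\Delta_{\s^2}f)^2$ after one integration by parts, and since the unit sphere is Einstein with Ricci curvature equal to its metric the curvature term equals $\int_{\s^2}|\nabla_{\s^2}f|^2$. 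This yields the clean identity
\[
\int_{\s^2}|\nabla_{\s^2}f|^2+|\nabla^2_{\s^2}f|^2\,d\mu_{\s^2}=\int_{\s^2}(\Delta_{\s^2}f)^2\,d\mu_{\s^2},
\]
and combining it with the previous display finishes the proof with $C=C(a)$.

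Finally, the only points requiring a little care: the integrations by parts (and the Bochner identity in integrated form) need $f$ to be, say, $H^2$ on the spheres, which holds thanks to the smoothness supplied by Proposition~\ref{prop:wellposedness}, with a density argument if one wishes to work at lower regularity; and the constant is manifestly independent of $t$ and $r$ because the only variable coefficient entering, $a^2\sin^2\theta$, is globally bounded by $a^2$. There is no serious obstacle here: the statement is a textbook elliptic estimate on $\s^2$ dressed up through the algebraic relation between $\Delta_{\s^2}$, $Q$, $\Phi$ and $T$. The mild subtlety, if any, is bookkeeping the precise zero- and first-order structure of $Q$ (the displayed formula appears to contain a harmless $\sin\theta$ versus $\sin^2\theta$ discrepancy, immaterial since both are bounded) and making sure one really controls the full Hessian $\nabla^2_{\s^2}f$ rather than just coordinate second derivatives --- which the Bochner route delivers directly.
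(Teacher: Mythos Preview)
Your argument is correct. The overall architecture matches the paper exactly: first reduce to the standard elliptic estimate
\[
\int_{\s^2}|\nabla_{\s^2}f|^2+|\nabla^2_{\s^2}f|^2\,d\mu_{\s^2}\le \int_{\s^2}(\Delta_{\s^2}f)^2\,d\mu_{\s^2},
\]
and then use the algebraic identity $\Delta_{\s^2}f=Qf-(a^2\sin^2\theta)T^2f+\Phi^2f$ together with Cauchy--Schwarz to bound the right-hand side by the desired quantities.

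The only genuine difference is in how you obtain the elliptic estimate on $\s^2$. The paper decomposes $f$ into spherical harmonics and uses that $\int(\Delta_{\s^2}f)^2=\sum_l l^2(l+1)^2\int f_l^2$, together with the identification of this sum with $\sum_{1\le|k|\le2}\int(O^kf)^2$ via the angular momentum operators $O_i$, to dominate $\int|\nabla_{\s^2}f|^2+|\nabla_{\s^2}^2f|^2$. You instead invoke the integrated Bochner--Weitzenb\"ock identity on the unit sphere, which yields the same inequality (in fact an equality) directly and without any mode decomposition. Both routes are standard; yours is arguably cleaner since it avoids spherical harmonics entirely and makes the constant explicit, while the paper's route has the small advantage of tying the estimate to the Killing vector fields $O_i$, which is thematically consistent with the commutation strategy used elsewhere in the paper. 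Your remark about the $\sin\theta$ versus $\sin^2\theta$ discrepancy in the displayed formula for $Q$ is also well taken and harmless for the estimate.
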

\begin{proof}
By decomposing $f$ into spherical harmonics $f_l$ on $\s^2$, one can show that
\begin{equation*}
\begin{split}
\int_{\s^2}(\Delta_{\s^2}f)^2\, d\mu_{\s^2}=\:&\sum_{l=0}^{\infty}\int_{\s^2}(l(l+1))^2f_l^2\, d\mu_{\s^2}=\sum_{l=0}^{\infty}\int_{\s^2} \sum_{1\leq |k|\leq 2}(O^kf_l)^2\, d\mu_{\s^2}\\
=\:& \sum_{1\leq |k|\leq 2}\int_{\s^2} (O^kf)^2\, d\mu_{\s^2}\geq \int_{\s^2}|\nabla_{\s^2}f|^2+|\nabla^2_{\s^2}f|^2\,d\mu_{\s^2},
\end{split}
\end{equation*}
where $\Delta_{\s^2}$ denotes the Laplacian on $\s^2$.

The estimate (\ref{est:ellipticBLspheres}) follows by using the definition of $Q$ to rewrite the left-hand side above and applying Cauchy--Schwarz.
\end{proof}
We would similarly like to control the angular derivatives in the coordinates $(\theta_*,{\varphi_*})$ by using the operators $Q,T$ and $\Phi$ that commute with $\square_g$. However, since the tangent spaces to the Boyer--Lindquist spheres and the spheres $S^2_{u,v}$ are not spanned by the same tangent vectors, we need to include $L$ and $\underline{L}$ derivatives in our estimate.

For the sake of convenience, we change from the chart $(\theta_*,\varphi_*)$ on the 2-spheres $S^2_{u,v}$ to the chart $(\theta_*,\varphi)$, because the induced metric on $S^2_{u,v}$ then becomes diagonal:
\begin{equation*}
\slashed{g}=f_1^2f_2^2(\partial_{\theta_*}F)^2 R^{-2}d\theta_*^2+R^2 \sin^2 \theta d\varphi^2.
\end{equation*}

\begin{proposition}
\label{prop:ellipticsphere1}
Given a suitably regular function $f: \mathcal{M}\cap {{\mathcal{D}_{u_0,v_0}}}\to \R$, there exists a $C=C(M,a,u_0,v_0)>0$ such that
\begin{equation}
\begin{split}
\label{est:ellipticspherev1}
\int_{S^2_{u,v}}|\snabla f|^2+|\snabla^2f|^2\,\sqrt{\det \slashed{g}}d\theta_*d\varphi_*\leq \:& C\sum_{k=0}^1\sum_{\Gamma\in\{\textnormal{id},{\Phi},{\Phi}^2,T^2,Q\}}\int_{S^2_{u,v}}(\Gamma \underline{L}^kf)^2+(\Gamma L^k f)^2+(\underline{L}^2f)^2\\
&+({L}^2f)^2+(L\underline{L} f)^2+(b^{{\varphi_*}})^2[(Q\Phi f)^2+(T^2\Phi f)^2\\
&+(\Phi^3f)^2]\,\sqrt{\det \slashed{g}}d\theta_*d\varphi_*,
\end{split}
\end{equation}
for all $|u|\geq |u_0|$ and $v\geq v_0$.

Let $v_1>v_0$ and $|u_1|>u_0$ be suitably large. Then we can estimate
\begin{equation}
\begin{split}
\label{est:ellipticspherev2}
\int_{S^2_{u,v}}&|\snabla^2f|^2\,\sqrt{\det \slashed{g}}d\theta_*d\varphi_*\\
\leq \:& C \int_{S^2_{u,v}} (Qf)^2+(\Phi^2 f)^2 +(T^2f)^2 + (L^2f)^2+(\uline{L}^2f)^2+(L\uline{L}f)^2+(Lf)^2+(\uline{L}f)^2\\
&+|\snabla Lf|^2+|\snabla \uline{L}f|^2+|\snabla f|^2\,\sqrt{\det \slashed{g}}d\theta_*d\varphi_*,
\end{split}
\end{equation}
if either $|u|\geq |u_1|$ or $v\geq v_1$, where $C=C(M,a,u_0,v_0,u_1,v_1)>0$.
\end{proposition}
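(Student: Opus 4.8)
The plan is to compare, sphere by sphere, the covariant derivatives of $f$ along $S^2_{u,v}$ with the derivatives of $f$ tangential to the Boyer--Lindquist spheres together with the null derivatives $Lf,\underline{L}f$, and then to feed the Boyer--Lindquist tangential derivatives into the elliptic estimate \eqref{est:ellipticBLspheres}. I would work throughout in the chart $(\theta_*,\varphi)$ on $S^2_{u,v}$, in which $\slashed g$ is diagonal, so that $|\snabla f|^2=\slashed g^{\theta_*\theta_*}(\partial_{\theta_*}f)^2+(R^2\sin^2\theta)^{-1}(\partial_\varphi f)^2$ and, once the induced Christoffel symbols are inserted, $|\snabla^2 f|^2$ becomes a sum of squares of $\partial_{\theta_*}^2 f$, $\partial_{\theta_*}\partial_\varphi f$, $\partial_\varphi^2 f$ with coefficients controlled by Theorem~\ref{thm:estmetric}(i). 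Since $\partial_\varphi=\Phi$, and since Theorem~\ref{thm:estmetric}(i) together with \eqref{eq:estJac1} shows that $d\mu_{\slashed g}$ is comparable to the round measure $\sin\theta\,d\theta\,d\varphi$, the $\partial_\varphi$ and $\partial_\varphi^2$ contributions are immediately dominated by $\int_{S^2_{u,v}}(\Phi f)^2+(\Phi^2 f)^2\,d\mu_{\slashed g}$, and the cross term by $\int_{S^2_{u,v}}(\Phi\,\partial_{\theta_*}f)^2\,d\mu_{\slashed g}$, which is handled as part of the $\partial_{\theta_*}^2$ analysis below.

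The essential point is the $\partial_{\theta_*}$ direction. Differentiating the implicit relations defining $r_*=r_*(r,\theta)$ and $\theta_*=\theta_*(r,\theta)$ and inverting the Jacobian matrix, the bounds \eqref{eq:estJac1}--\eqref{eq:estJac6} give $\partial_{\theta_*}\theta|_{u,v}\sim 1$ and $\partial_{\theta_*}r|_{u,v}=\mathcal O((v+|u|)^{-2}\sin\theta)$, while $\partial_r$ converts to $\partial_{r_*}$ with a factor $\partial_r r_*\sim(v+|u|)^2$ (from the expansion of $r_*$ in Theorem~\ref{thm:estmetric}(ii)); the crucial cancellation is that these combine to $\partial_{\theta_*}r\cdot\partial_r=\mathcal O(\sin\theta)\,\partial_{r_*}$, so that no growth in $v+|u|$ is introduced. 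Using $\underline{L}=\partial_u$, $L=\partial_v+b^A\partial_{\vartheta^A}$, $b^{\theta_*}=0$, one has $2\partial_{r_*}=\partial_v-\partial_u=L-\underline{L}-b^{\varphi_*}\Phi$, so that on $S^2_{u,v}$
\[
\partial_{\theta_*}f=(\partial_{\theta_*}\theta)\,\partial_\theta f+\mathcal O(\sin\theta)\big(L-\underline{L}-b^{\varphi_*}\Phi\big)f+\mathcal O(1)\,\Phi f,
\]
where the last term collects the corrections due to the $\varphi$--$\varphi_*$ shift (the twist of the foliation). Squaring, integrating over $S^2_{u,v}$, using $(\partial_\theta f)^2\le|\nabla_{\s^2}f|^2$ and then \eqref{est:ellipticBLspheres} applied to $f$, yields the part of \eqref{est:ellipticspherev1} built from $\mathrm{id},\Phi,\Phi^2,T^2,Q$ acting on $f$, together with the terms $(Lf)^2$ and $(\underline{L}f)^2$. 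For the second-order terms in \eqref{est:ellipticspherev1} I would apply $\partial_{\theta_*}$ once more: differentiating the coefficients $\partial_{\theta_*}\theta$, $b^{\varphi_*}$ produces only lower-order terms by Theorem~\ref{thm:estmetric}(ii),(iii); differentiating $\partial_\theta f$ and $\partial_{r_*}f$ produces $\partial_\theta^2 f$, $\partial_\theta Lf$, $\partial_\theta\underline{L}f$, $L^2 f$, $L\underline{L}f$, $\underline{L}^2 f$ modulo lower order, where one uses $[L,\underline{L}]=4\Omega^2\zeta^{\varphi_*}\Phi=\mathcal O((v+|u|)^{-2})\Phi$ from Theorem~\ref{thm:estconncoef} to commute $L$ and $\underline{L}$ at the cost of a lower-order term; and the term $b^{\varphi_*}\partial_\theta(\Phi f)$ appears as well. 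Applying \eqref{est:ellipticBLspheres} to $f$, $Lf$, $\underline{L}f$, and to $\Phi f$ for the last term — which carries the weight $(b^{\varphi_*})^2$ — reproduces precisely the right-hand side of \eqref{est:ellipticspherev1}, including the $(b^{\varphi_*})^2\big[(Q\Phi f)^2+(T^2\Phi f)^2+(\Phi^3 f)^2\big]$ contribution.

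For the sharper estimate \eqref{est:ellipticspherev2} I would rerun the same computation, but now refrain from expanding the Boyer--Lindquist tangential derivatives of $Lf$, $\underline{L}f$ and $f$ via \eqref{est:ellipticBLspheres}; reversing the first-order identity above shows that $\partial_\theta Lf$ is controlled by $|\snabla Lf|+|L^2 f|+|L\underline{L}f|+|\underline{L}^2 f|$ up to lower order, and similarly for $\underline{L}f$ and $f$, which accounts for the $|\snabla Lf|^2$, $|\snabla\underline{L}f|^2$, $|\snabla f|^2$ terms on the right of \eqref{est:ellipticspherev2}. The remaining, potentially third-order, contributions all carry a factor $b^{\varphi_*}=\mathcal O((v+|u|)^{-1})$ by Theorem~\ref{thm:estmetric}(ii); writing a typical one as $b^{\varphi_*}\,\Phi\,\partial_{\theta_*}f$ — legitimate since $\Phi=\partial_{\varphi_*}$ is tangent to $S^2_{u,v}$ and commutes with $\partial_{\theta_*}$ — one sees that it is $\mathcal O((v+|u|)^{-1})\,|\snabla^2 f|$ up to terms $b^{\varphi_*}\Phi Lf$, $b^{\varphi_*}\Phi\underline{L}f$ which are $\lesssim|\snabla Lf|+|\snabla\underline{L}f|$. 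Taking $|u_1|$ and $v_1$ large enough that $C(v+|u|)^{-1}\le\tfrac12$ on $\{|u|\ge|u_1|\}\cup\{v\ge v_1\}$, these terms are absorbed into the left-hand side $\int_{S^2_{u,v}}|\snabla^2 f|^2$. This absorption is the one place the largeness hypothesis enters, and I expect it to be the main obstacle: it is the non-vanishing twist $b^{\varphi_*}$ of the double-null foliation that obstructs an estimate of the form \eqref{est:ellipticspherev2} valid uniformly on all of $\mathcal D_{u_0,v_0}$, and only for $v+|u|$ large can the decay of $b^{\varphi_*}$ be exploited to close it.
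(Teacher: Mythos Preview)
Your proposal is correct and follows essentially the same route as the paper. Both arguments work in the diagonal chart $(\theta_*,\varphi)$ on $S^2_{u,v}$, convert $\partial_{\theta_*}$ into $\partial_\theta$ plus null-directed corrections via the chain rule and the Jacobian estimates of Theorem~\ref{thm:estmetric}(iii), feed the Boyer--Lindquist tangential part into the elliptic estimate \eqref{est:ellipticBLspheres}, and for \eqref{est:ellipticspherev2} absorb the $b^{\varphi_*}$-weighted third-order contribution into the left-hand side using the decay of $b^{\varphi_*}$ for large $v+|u|$. The only cosmetic difference is that the paper writes $\partial_\theta$ in terms of $\partial_{\theta_*}$ and $\partial_{r_*}=\tfrac12(L-\underline{L}-b^{\varphi_*}\Phi)$ and then inverts (using $\partial_\theta\theta_*\sim 1$), whereas you invert the Jacobian first; also, your extra ``$+\mathcal O(1)\,\Phi f$'' in the first-order identity is a slight over-estimate (the paper gets a coefficient $\sin\theta\,b^{\varphi_*}$ there), but this is harmless for the conclusion.
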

\begin{proof}
By Theorem \ref{thm:estmetric} there exist uniform constants $C, c>0$ such that
\begin{equation*}
c\sin\theta \leq \det \slashed{g} \leq C \sin \theta,
\end{equation*}
where $\det \slashed{g}$ is the determinant with respect to the coordinate basis corresponding to the chart $(\theta_*,\varphi_*)$, which is equal to the determinant of the matrix of $\slashed{g}$ with respect to the coordinate basis corresponding to the chart $(\theta_*,\varphi)$.

Consider the first-order angular derivatives. We can write
\begin{equation*}
|\snabla f|^2=\slashed{g}^{\theta_*\theta_*}(\partial_{\theta_*}f)^2+\slashed{g}^{\varphi\varphi}(\partial_{\varphi}f )^2\leq C\left[(\partial_{\theta_*}f)^2+\sin^{-2}\theta(\partial_{\varphi}f)^2\right].
\end{equation*}
By the chain rule, we have that
\begin{equation*}
\begin{split}
\partial_{\theta}f=\:&(\partial_{\theta}\theta_*)\partial_{\theta_*}f+\frac{1}{2}(\partial_{\theta}r_*)(\partial_v f-\partial_u f)\\
=\:&(\partial_{\theta}\theta_*)\partial_{\theta_*}f+\frac{1}{2}(\partial_{\theta}r_*)(Lf-b^{{\varphi_*}}\partial_{{\varphi_*}}f-\underline{L}f).
\end{split}
\end{equation*}
By applying (\ref{eq:estJac1}) and (\ref{eq:estJac3}), we find that
\begin{equation}
\label{est:mainest1storderv0}
(\partial_{\theta_*}f)^2\leq C\left[(\partial_{\theta}f)^2+\sin^2 \theta (Lf)^2+\sin^2\theta(\underline{L}f)^2+\sin^2\theta(b^{{\varphi_*}})^2({\Phi} f)^2\right].
\end{equation}
We can now conclude the following:
\begin{equation}
\label{est:mainest1storder}
|\snabla f|^2\leq C\left[|\nabla_{\s^2}f|^2+\sin^2\theta(Lf)^2+\sin^2\theta(\underline{L}f)^2\right].
\end{equation}

Now consider the second-order angular derivatives. We can estimate
\begin{equation*}
\begin{split}
|\snabla^2f|^2=\:& \slashed{g}^{AB}\slashed{g}^{CD}(\nabla_A\partial_Cf)(\nabla_B\partial_Df)\\
=\:&(\slashed{g}^{\varphi\varphi})^2(\partial_{\varphi}^2f)^2+(\slashed{g}^{\theta_*\theta_*})^2(\partial_{\theta_*}^2f)^2+g^{\theta_*\theta_*}g^{\varphi\varphi}(\partial_{\theta_*}\partial_{\varphi}f)^2\\
&+\slashed{g}^{AB}\slashed{g}^{CD}(\slashed{\nabla}_A\partial_C)^E(\slashed{\nabla}_B \partial_D)^F \snabla_Ef \snabla_Ff,\\
\leq \:&C\left[ \sin^{-4}\theta (\partial_{\varphi}^2f)^2+(\partial_{\theta_*}^2f)^2+\sin^{-2}\theta(\partial_{\theta_*}\partial_{\varphi}f)^2\right]\\
&+\slashed{g}^{AB}\slashed{g}^{CD}(\slashed{\nabla}_A\partial_C)^E(\slashed{\nabla}_B \partial_D)^F \snabla_Ef \snabla_Ff.
\end{split}
\end{equation*}

By applying the chain rule we find that
\begin{equation*}
\begin{split}
\partial_{\theta}^2f=\:&\partial_{\theta}^2\theta_* \partial_{\theta_*}f+(\partial_{\theta}\theta_*)\left[\partial_{\theta}\theta_* \partial_{\theta_*}^2f+\partial_{\theta}r_*\left(\partial_{\theta_*}Lf-\partial_{\theta_*}(b^{{\varphi_*}}\Phi f)-\partial_{\theta_*}\underline{L}f\right)\right]\\
&+\frac{1}{4}(\partial_{\theta}r_*)^2((L-b^{{\varphi_*}}\Phi)^2f-2(L-b^{{\varphi_*}}\Phi)\underline{L}f+\underline{L}^2f)\\
&+\frac{1}{2}\partial_{\theta}^2r_*(Lf-b^{{\varphi_*}}\partial_{{\varphi_*}}f-\underline{L}f).
\end{split}
\end{equation*}
Consequently, by applying the estimates from Theorem \ref{thm:estmetric}, we obtain
\begin{equation}
\label{est:mainest2ndorder}
\begin{split}
(\partial_{\theta_*}^2f)^2\leq \:& (\partial_{\theta_*}f)^2+(\partial_{\theta}^2 f)^2+\sin^2\theta(\partial_{\theta_*}(\underline{L}f))^2+\sin^2\theta(\partial_{\theta_*}(b^{{\varphi_*}}\Phi f))^2+\sin^2\theta(\partial_{\theta_*}(Lf))^2\\
&+(\underline{L}f)^2+(Lf)^2+\sin^4\theta(L^2f)^2+\sin^4\theta(L\underline{L}f)^2+\sin^4\theta(\underline{L}^2f)^2+\sin^4\theta(b^{{\varphi_*}})^2(L\Phi f)^2\\
&+\sin^4\theta(b^{{\varphi_*}})^2(\underline{L}\Phi f)^2+\sin^4\theta(b^{{\varphi_*}})^2(\underline{L}\Phi f)^2+(b^{{\varphi_*}})^4(\Phi^2f)^2+(\Phi f)^2,
\end{split}
\end{equation}
where we used that we can freely commute with $\Phi$.

We can further estimate
\begin{equation}
\label{est:mainest2ndordera}
\begin{split}
(\partial_{\theta_*}(b^{{\varphi_*}}\Phi f))^2\leq \:& (b^{{\varphi_*}})^2(\Phi\partial_{\theta_*}f)^2+(\partial_{\theta_*}b^{{\varphi_*}})^2(\Phi f)^2\\
\leq \:& C\left[(\partial_{\theta}\Phi f)^2+(L\Phi f)^2+(\underline{L}\Phi f)^2+(\Phi^2f)^2\right]\\
\leq \:& C\left[(Q\Phi f)^2+(T^2\Phi f)^2+(L\Phi f)^2+(\underline{L}\Phi f)^2+(\Phi f)^2+(\Phi^2f)^2+(\Phi^3f)^2\right].
\end{split}
\end{equation}

We now turn to $(\slashed{\nabla}_A\partial_C)^E$. The only non-vanishing components are given by:
\begin{align*}
\left|(\slashed{\nabla}_{\theta_*}\partial_{\theta_*})^{\theta_*}\right|=\:&\left|\frac{1}{2}g^{{\theta_*}{\theta_*}}\partial_{\theta_*}g_{{\theta_*}{\theta_*}}\right| \leq C,\\
\left|(\slashed{\nabla}_{\varphi}\partial_{\varphi})^{\theta_*}\right|=\:&\left|\frac{1}{2}g^{{\theta_*}{\theta_*}}\partial_{\theta_*}g_{{\varphi}{\varphi}}\right|\leq C\sin \theta,\\
\left|(\slashed{\nabla}_{\theta_*}\partial_{\varphi})^{\varphi}\right|=\:&\left|\frac{1}{2}g^{{\varphi}{\varphi}}\partial_{\theta_*}g_{{\varphi}{\varphi}}\right|\leq C\sin \theta,
\end{align*}
where we used the estimates from Theorem \ref{thm:estmetric} to arrive at the inequalities on the right-hand sides.

We can now estimate, by applying (\ref{est:mainest1storderv0}),
\begin{equation}
\label{est:mainest2ndorderb}
\begin{split}
\slashed{g}^{AB}\slashed{g}^{CD}(\slashed{\nabla}_A\partial_C)^E(\slashed{\nabla}_B \partial_D)^F \snabla_Ef \snabla_Ff=\:&\left(\slashed{g}^{\theta_*\theta_*}\right)^2 ((\slashed{\nabla}_{\theta_*}\partial_{\theta_*})^{\theta_*})^2(\partial_{\theta_*}f)^2+2\slashed{g}^{\theta_*\theta_*}\slashed{g}^{\varphi \varphi}((\slashed{\nabla}_{\theta_*}\partial_{\varphi})^{\varphi})^2(\partial_{\varphi}f)^2\\
&+(\slashed{g}^{\varphi\varphi})^2((\slashed{\nabla}_{\varphi}\partial_{\varphi})^{\theta_*})^2(\partial_{\theta_*}f)^2\\
\leq \:& C\sin^{-2}\theta(\partial_{\theta_*}f)^2+C\sin^{-4}\theta (\partial_{\varphi}f)^2\\
\leq \:& C\slashed{g}_{\s^2}^{AB}\slashed{g}_{\s^2}^{CD}(({\nabla_{\s^2}})_A\partial_C)^E(({\nabla_{\s^2}})_B \partial_D)^F \partial_Ef \partial_Ff\\
&+C(Lf)^2+C(\uline{L}f)^2+C(b^{\varphi_*})^2(\Phi f)^2.
\end{split}
\end{equation}

By combining (\ref{est:mainest2ndorder}), the first inequality in (\ref{est:mainest2ndordera}), and (\ref{est:mainest2ndorderb}) and making use of the smallness of $|b^{\varphi_*}|$ for suitably large $v_1$ and $|u_1|$, we obtain (\ref{est:ellipticspherev2}). We can also estimate
\begin{equation*}
\begin{split}
\int_{S^2_{u,v}}|\snabla f|^2+|\snabla^2f|^2\,\sqrt{\det \slashed{g}}d\theta_*d\varphi\leq \:& C\sum_{k=0}^1\sum_{\Gamma\in\{\textnormal{id},{\Phi},{\Phi}^2,T^2\}}\int_{S^2_{u,v}}(\underline{L}^k\Gamma f)^2+(L^k\Gamma f)^2+(\underline{L}^2f)^2\\
&+({L}^2f)^2+(L\underline{L} f)^2+(\Phi^2f)^2+(b^{{\varphi_*}})^2[(Q\Phi f)^2+(T^2\Phi f)^2\\
&+(\Phi^3f)^2]+(Qf)^2+\sin^2\theta(Q\underline{L}f)^2\\
&+\sin^2\theta(QLf)^2\,\sqrt{\det \slashed{g}}d\theta_*d\varphi_*.
\end{split}
\end{equation*}
We have now obtained the estimate (\ref{est:ellipticspherev1}).

We can easily commute $L$ and $\underline{L}$ with $\Gamma$ above.
\end{proof}

\subsection{Commutator estimates}
\label{sec:commutatorestimates}
We can use the elliptic estimates in Proposition \ref{prop:ellipticsphere1} to control angular error terms that arise from commuting $\square_g$ with $L$ and $\underline{L}$. We first derive a general expression for the commutator $[\square_g,W^mV^n]$, where $V$ and $W$ are vector fields.
\begin{lemma}
\label{lm:commwithV}
Let $V$ and $W$ be vector fields and $n\geq 1$, then
\begin{equation*}
\begin{split}
2\Omega^2\square_g(W^mV^n\phi)=\:&[W^mV^n,\underline{L}]L\phi+\underline{L}([W^mV^n,L]\phi)+[W^mV^n,L]\underline{L}\phi+L([W^mV^n,\underline{L}]\phi)\\
&\sum_{l=1}^m\sum_{k=1}^{n} \binom{m}{l}\binom{n}{k} W^lV^k(\Omega\tr \underline{\chi})W^{m-l}V^{n-k}L\phi+\Omega \tr {\underline{\chi}}[W^mV^n,L]\phi\\
&+ \sum_{l=1}^m\sum_{k=1}^{n} \binom{m}{l}\binom{n}{k} W^lV^k(\Omega\tr {\chi})W^{m-l}V^{n-k}\underline{L}\phi+\Omega \tr {\chi}[W^mV^{n},\underline{L}]\phi\\
&-2\sum_{l=1}^m\sum_{k=1}^n  \binom{m}{l}\binom{n}{k} W^lV^k(\slashed{g}^{AB}\partial_A\Omega^2)W^{m-l}V^{n-k}\partial_B\phi\\
&-2\slashed{g}^{AB}\partial_A\Omega^2[W^mV^n,\partial_B]\phi\\
&-2\sum_{l=1}^m\sum_{k=1}^{n}  \binom{m}{l}\binom{n}{k} W^lV^k(\Omega^2)W^{m-l}V^{n-k}\slashed{\Delta}\phi-2\Omega^2[W^mV^n,\slashed{\Delta}]\phi.
\end{split}
\end{equation*}
\end{lemma}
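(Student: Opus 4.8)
The plan is to start from the basic identity for the wave operator in double-null coordinates and then commute it iteratively with the vector fields $W$ and $V$, keeping careful track of commutators. Recall from Appendix~\ref{sec:bulkest} (or directly from (\ref{eq:metrickerrdnull2})) that the wave operator applied to a scalar $\psi$ can be written schematically as
\begin{equation*}
2\Omega^2\square_g\psi = \underline{L}L\psi + L\underline{L}\psi + \Omega\tr\underline{\chi}\,L\psi + \Omega\tr\chi\,\underline{L}\psi - 2\slashed{g}^{AB}(\partial_A\Omega^2)\partial_B\psi - 2\Omega^2\slashed{\Delta}\psi,
\end{equation*}
where $\slashed{\Delta}$ is the Laplacian on $S^2_{u,v}$ and the coefficients $\Omega\tr\chi$, $\Omega\tr\underline{\chi}$, $\slashed{g}^{AB}\partial_A\Omega^2$, $\Omega^2$ are regarded as fixed functions on $\mathcal{M}\cap{\mathcal{D}_{u_0,v_0}}$. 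The strategy is then to apply $W^mV^n$ to both sides and commute it past each of the six terms on the right-hand side.

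The key step is the systematic bookkeeping of commutators. For each of the two "second-order" terms $\underline{L}L\psi$ and $L\underline{L}\psi$ we write, say for the first,
\begin{equation*}
W^mV^n(\underline{L}L\phi) = \underline{L}L(W^mV^n\phi) + [W^mV^n,\underline{L}]L\phi + \underline{L}\big([W^mV^n,L]\phi\big),
\end{equation*}
which accounts for the first four terms in the claimed formula once the roles of $L$ and $\underline{L}$ are also exchanged. For each of the remaining four terms, which have the form $f\cdot D\phi$ with $f$ a function and $D\in\{L,\underline{L},\partial_B,\slashed{\Delta}\}$, one uses the general Leibniz rule for a product of a scalar and a first- or second-order operator: expanding $W^mV^n(f\cdot D\phi)$ produces the "main" term $f\cdot D(W^mV^n\phi)$ (which recombines into $2\Omega^2\square_g(W^mV^n\phi)$), together with terms where at least one derivative hits $f$ — these are precisely the double sums $\sum_{l=1}^m\sum_{k=1}^n\binom{m}{l}\binom{n}{k}W^lV^k(f)\,W^{m-l}V^{n-k}(D\phi)$ — and finally the commutator term $f\cdot[W^mV^n,D]\phi$ coming from the fact that $W^mV^n$ need not commute with $D$ even after $f$ is set aside. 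Applying this with $f = \Omega\tr\underline{\chi}$, $D=L$; $f=\Omega\tr\chi$, $D=\underline{L}$; $f = \slashed{g}^{AB}\partial_A\Omega^2$, $D=\partial_B$ (summed over $B$); and $f = \Omega^2$, $D=\slashed{\Delta}$, and subtracting $2\Omega^2\square_g(W^mV^n\phi)$ to the left, yields exactly the stated identity.

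The one point requiring a little care — and the likely main obstacle to a fully clean write-up — is the ordering convention inside the iterated operators. Since $W$ and $V$ generally do not commute, $W^mV^n$ is an ordered string, and the binomial-coefficient form of the Leibniz expansion implicitly assumes that we may treat the $m$ copies of $W$ and the $n$ copies of $V$ as acting in the fixed block order (all $W$'s, then all $V$'s), choosing which copies differentiate $f$ and which pass through to $D\phi$. This is legitimate because within a block of identical vector fields the Leibniz rule genuinely produces binomial coefficients, and the inter-block ordering is preserved throughout; one should state this convention explicitly. Beyond that the proof is a direct induction (or a one-line appeal to the general scalar$\times$differential-operator Leibniz formula applied term by term), so no estimates or geometric input are needed — only the coordinate expression for $2\Omega^2\square_g$ recorded above and elementary combinatorics.
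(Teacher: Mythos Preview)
Your approach is essentially the paper's: the paper's proof says only ``use the expression (\ref{eq:finalexpressionwaveeq}) for $2\Omega^2\square_g\phi$ together with $W^mV^n(2\Omega^2\square_g\phi)=0$,'' and your expansion-and-commutation scheme is exactly how one makes that sentence precise. Two small points to fix: first, the expression you quote for $2\Omega^2\square_g\psi$ has every sign flipped relative to the paper's (\ref{eq:finalexpressionwaveeq}) (the $\underline{L}L$, $L\underline{L}$, and $\Omega\tr\chi$ terms carry minus signs there, the angular terms plus signs), and with your signs the stated identity would come out with the wrong signs; second, you never explicitly invoke $\square_g\phi=0$, which is what kills the leftover $W^mV^n(2\Omega^2\square_g\phi)$ term after you ``subtract $2\Omega^2\square_g(W^mV^n\phi)$ to the left'' --- the paper makes this step explicit.
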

\begin{proof}
Use the expression for the wave equation (\ref{eq:finalexpressionwaveeq}) in Appendix \ref{app:treq}, together with 
\begin{equation*}
W^mV^n(2\Omega^2\square_g\phi)=0.
\end{equation*}
\end{proof}

\begin{proposition}
\label{commenergyestimateskn1}
Either restrict to $|a|<a_c$ and let $0\leq p<2$ and $0\leq q<2$, or restrict to axisymmetric $\phi$, with $p=2$ and $0<q\leq 2$. Then there exist $\alpha=\alpha(p,q)>1$ and $\beta=\beta(p,q)>1$, such that for $u_1$ suitably large and $|u|\geq |u_1|$,
\begin{equation*}
\begin{split}
\int_{\underline{H}_{v}\cap\{|u|\geq |u_1|\}}&J^{N_{p,q}}[\underline{L}\phi]\cdot \underline{L}+J^{N_{p,q}}[L\phi]\cdot \underline{L}+\int_{{H}_{u}}J^{N_{p,q}}[\underline{L}\phi]\cdot L+J^{N_{p,q}}[L\phi]\cdot L\\
\leq \:& C \Bigg[\int_{\underline{H}_{v_0}\cap\{|u|\geq |u_1|}J^{N_{2,q\beta \alpha}}[\underline{L}\phi]\cdot \underline{L}+J^{N_{2,q\beta\alpha}}[L\phi]\cdot \underline{L}\\
&+\int_{\mathcal{H}^+\cap\{v\geq v_0\}}J^{N_{2,q\beta \alpha}}[\underline{L}\phi]\cdot L+J^{N_{2,q\beta \alpha}}[L\phi]\cdot L\Bigg]\\
&+C\sum_{\Gamma\in\{\textnormal{id},\Phi^2,Q\}}\left[\int_{\underline{H}_{v_0}\cap\{|u|\geq |u_1|\}}J^{N_{2,q\beta \alpha}}[\Gamma\phi]\cdot \underline{L}+\int_{\mathcal{H}^+\cap\{v\geq v_0\}}J^{N_{2,q\beta \alpha}}[\Gamma\phi]\cdot L\right]\\
&+C\sum_{\Gamma\in\{\Phi^2,Q\}}\int_{\mathcal{H}^+\cap\{v\geq v_0\}}v^{-2+\epsilon}(\Gamma \phi)^2,
\end{split}
\end{equation*}
where $0<q\beta \alpha \leq 2$, $\epsilon>0$ can be taken arbitrarily small and $$C=C(a,M,u_0,v_0,u_1,v_1,p,q,\alpha,\beta,\epsilon)>0.$$ For axisymmetric $\phi$, we can replace $N_{2,q\beta \alpha}$ on the right-hand side by $N_{2,q}$.

Moreover, for $v_1>v_0$ suitably large and $v\geq v_1$,
\begin{equation*}
\begin{split}
\int_{\underline{H}_{v}}&J^{N_{p,q}}[\underline{L}\phi]\cdot \underline{L}+J^{N_{p,q}}[L\phi]\cdot \underline{L}+\int_{{H}_{u}\cap\{v\geq v_1\}}J^{N_{p,q}}[\underline{L}\phi]\cdot L+J^{N_{p,q}}[L\phi]\cdot L\\
\leq \:& C \left[\int_{\underline{H}_{v_1}}J^{N_{2,q\beta \alpha}}[\underline{L}\phi]\cdot \underline{L}+J^{N_{2,q\beta\alpha}}[L\phi]\cdot \underline{L}+\int_{\mathcal{H}^+\cap\{v\geq v_1\}}J^{N_{2,q\beta \alpha}}[\underline{L}\phi]\cdot L+J^{N_{2,q\beta \alpha}}[L\phi]\cdot L\right]\\
&+C\sum_{\Gamma\in\{\textnormal{id},\Phi^2,Q\}}\left[\int_{\underline{H}_{v_1}}J^{N_{2,q\beta \alpha}}[\Gamma\phi]\cdot \underline{L}+\int_{\mathcal{H}^+\cap\{v\geq v_1\}}J^{N_{2,q\beta \alpha}}[\Gamma\phi]\cdot L\right]\\
&+C\sum_{\Gamma\in\{\Phi^2,Q\}}\int_{\mathcal{H}^+\cap\{v\geq v_1\}}v^{-2+\epsilon}(\Gamma \phi)^2,
\end{split}
\end{equation*}
where $0<q\beta \alpha \leq 2$, $\epsilon>0$ can be taken arbitrarily small and $$C=C(a,M,u_0,v_0,u_1,v_1,p,q,\alpha,\beta,\epsilon)>0.$$ For axisymmetric $\phi$, we can replace $N_{2,q\beta \alpha}$ on the right-hand side by $N_{2,q}$.
\end{proposition}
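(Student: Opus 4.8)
The plan is to run the divergence-theorem and Gr\"onwall argument of Proposition \ref{prop:mainenergyestimate} (for axisymmetric $\phi$) and of Propositions \ref{prop:finitevolumeenergyestimates}--\ref{energyestsmallakn} (for $|a|<a_c$), but applied to the energy currents $J^{N_{p,q}}[L\phi]$ and $J^{N_{p,q}}[\underline{L}\phi]$ instead of $J^{N_{p,q}}[\phi]$. Since $L\phi$ and $\underline{L}\phi$ do not solve \eqref{eq:waveqkerr}, the bulk term in the divergence identity picks up, in addition to the currents $K^{N_{p,q}}_{\mathrm{null}}$, $K^{N_{p,q}}_{\mathrm{angular}}$, $K^{N_{p,q}}_{\mathrm{mixed}}$ evaluated on $L\phi$ and $\underline{L}\phi$, the inhomogeneous contributions $\mathcal{E}^{N_{p,q}}[L\phi]=N_{p,q}(L\phi)\,\square_g(L\phi)$ and $\mathcal{E}^{N_{p,q}}[\underline{L}\phi]=N_{p,q}(\underline{L}\phi)\,\square_g(\underline{L}\phi)$. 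The $K$-currents are treated verbatim as before (with $\phi$ replaced by $L\phi$, resp.\ $\underline{L}\phi$), so the whole issue is to absorb the spacetime integrals of $\mathcal{E}^{N_{p,q}}[L\phi]$ and $\mathcal{E}^{N_{p,q}}[\underline{L}\phi]$ into the left-hand side modulo the claimed data terms. To get $\square_g(L\phi)$ and $\square_g(\underline{L}\phi)$ I would apply Lemma \ref{lm:commwithV} with $m=0$, $n=1$ and $V\in\{L,\underline{L}\}$: using $[L,L]=[\underline{L},\underline{L}]=0$, $[\underline{L},\partial_B]=0$, and that the torsion commutator is $[L,\underline{L}]=4\Omega^2\zeta^{\varphi_*}\Phi$, one finds that $2\Omega^2\square_g(L\phi)$ and $2\Omega^2\square_g(\underline{L}\phi)$ are finite sums of three types of terms: (i) $\Phi$-derivatives of $\phi$ and of $L\phi$ against the torsion coefficient $\Omega^2\zeta^{\varphi_*}=\mathcal{O}((v+|u|)^{-2})$ of Theorem \ref{thm:estconncoef} (these vanish identically when $\phi$ is axisymmetric); (ii) the angular-Laplacian commutators $[L,\slashed{\Delta}]\phi$, $[\underline{L},\slashed{\Delta}]\phi$ and the term $\slashed{g}^{AB}\partial_A\Omega^2\,[L,\partial_B]\phi$, each of which is schematically a metric or connection coefficient with decay $\mathcal{O}((v+|u|)^{-2}\log(v+|u|))$ times at most two angular derivatives of $\phi$; (iii) first null derivatives of $L\phi$, $\underline{L}\phi$ against $\Omega\tr\chi$, $\Omega\tr\underline{\chi}$, which are $\mathcal{O}((v+|u|)^{-2})$ by Theorem \ref{thm:estconncoef}.

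Next I would estimate $\mathcal{E}^{N_{p,q}}[L\phi]+\mathcal{E}^{N_{p,q}}[\underline{L}\phi]$ by Cauchy--Schwarz. Bounding $|N_{p,q}(L\phi)|\lesssim v^q|L^2\phi|+|u|^p|\underline{L}L\phi|$ and distributing the weights, each product is split so that one factor is one of $v^q(L^2\phi)^2$, $|u|^p(\underline{L}^2\phi)^2$, $|u|^p(\underline{L}L\phi)^2$, $v^q\Omega^2|\snabla L\phi|^2$, $\dots$, i.e.\ a term already present in the fluxes $J^{N_{p,q}}[L\phi]$, $J^{N_{p,q}}[\underline{L}\phi]$ on the left, and the other factor carries the decaying coefficient. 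The type-(i) and type-(iii) contributions then feed into the Gr\"onwall structure exactly like the analogous terms in Propositions \ref{prop:mainenergyestimate} and \ref{prop:finitevolumeenergyestimates}, at worst with an $\epsilon$-loss, and the leftover is controlled by a weighted flux of $\phi$ itself via Proposition \ref{prop:mainenergyestimate}/\ref{energyestsmallakn} --- this is the $\Gamma=\mathrm{id}$ term on the right-hand side. The type-(ii) contributions produce $\int_{S^2_{u,v}}|\snabla^2\phi|^2$, which I would handle with the elliptic estimates \eqref{est:ellipticspherev1}--\eqref{est:ellipticspherev2} of Proposition \ref{prop:ellipticsphere1}: this bounds $\int_{S^2_{u,v}}|\snabla^2\phi|^2$ by $\int_{S^2_{u,v}}(Q\phi)^2+(\Phi^2\phi)^2+(T^2\phi)^2+(L^2\phi)^2+(\underline{L}^2\phi)^2+(L\underline{L}\phi)^2+|\snabla L\phi|^2+|\snabla\underline{L}\phi|^2+(\text{lower order})$. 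The second-order null pieces are again components of $J^{N_{p,q}}[L\phi]$, $J^{N_{p,q}}[\underline{L}\phi]$ and, since the coefficient in front of them decays like $(v+|u|)^{-2}\log$, they are absorbed on the left via Lemma \ref{lm:gronwall}. For the remaining pieces I would use that $Q$, $\Phi$ commute with $\square_g$ and, writing $T=H-\omega_{\mathcal{H}^+}\Phi$ with $H$ Killing and $H=\tfrac12(\underline{L}+L-b^{\varphi_*}\Phi)$ (Section \ref{sec:hawkvf}), that $Q\phi$, $\Phi^2\phi$, $T^2\phi$ reduce to $\{\mathrm{id},\Phi^2,Q\}\phi$ up to second-order null quantities absorbed on the left and all solve \eqref{eq:waveqkerr}; Proposition \ref{prop:mainenergyestimate}, resp.\ \ref{energyestsmallakn}, applied to these yields control by the data fluxes $J^{N_{2,q\beta\alpha}}[\Gamma\phi]$ on $\mathcal{H}^+\cup\underline{H}_{v_0}$ with $\Gamma\in\{\mathrm{id},\Phi^2,Q\}$. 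Finally, the zeroth-order pieces of $\Gamma\phi$ that \eqref{est:ellipticspherev1}--\eqref{est:ellipticspherev2} also require I would supply via the fundamental-theorem-of-calculus inequality \eqref{est:fundthmcalcCS}: after integrating over ${\mathcal{D}_{u_0,v_0}}$ the $|u|$-weight is integrable and the leftover horizon contribution is exactly $\int_{\mathcal{H}^+\cap\{v\ge v_0\}}v^{-2+\epsilon}(\Gamma\phi)^2$ for $\Gamma\in\{\Phi^2,Q\}$.

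With all bulk terms now of the form $h(u)f(u,v)+k(v)g(u,v)$ plus the fixed data terms above, I would close with Lemma \ref{lm:gronwall} exactly as in Propositions \ref{prop:mainenergyestimate} and \ref{energyestsmallakn}. For axisymmetric $\phi$ the type-(i) torsion terms and $K^{N_{p,q}}_{\mathrm{mixed}}$ vanish, so one works directly in ${\mathcal{D}_{u_0,v_0}}$ with $p=2$ and obtains the sharper right-hand side with $N_{2,q}$; in the slowly rotating case one splits ${\mathcal{D}_{u_0,v_0}}$ into $\underline{\mathcal{A}}\cup\mathcal{A}\cup\mathcal{B}$, using $N_{p,q}$ on the finite-volume regions $\underline{\mathcal{A}}$, $\mathcal{A}$ and $Y_{p'}$ on $\mathcal{B}$ as in Proposition \ref{energyestsmallakn}, and chases the exponents to the stated range $0<q\beta\alpha\le 2$ (with $\alpha,\beta$ close to $1$ one may take $p=q=2-\epsilon$, as in the remark following Proposition \ref{energyestsmallakn}). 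The two displayed estimates --- on $\{|u|\ge|u_1|\}$ near $\mathcal{H}^+$ and on $\{v\ge v_1\}$ near $\mathcal{CH}^+$ --- come from the two analogous partitions (and, for $\{v\ge v_1\}$, from choosing $v_1$ large enough that $H$ is causal there when $|a|<a_c$), the complementary region $\{v_0\le v\le v_1\}$, resp.\ the bounded $u$-range, being compact and handled directly with $N_{p,q}$, as at the end of the proof of Proposition \ref{energyestsmallakn}.

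\textbf{Main obstacle.} The crux is the apparent circularity in the second step: the commutator $[L,\slashed{\Delta}]\phi$ forces $\int_{S^2_{u,v}}|\snabla^2\phi|^2$ into the error, and the elliptic estimate of Proposition \ref{prop:ellipticsphere1} re-expresses it partly through the very second-order null quantities $(L^2\phi)^2$, $(L\underline{L}\phi)^2$, $(\underline{L}^2\phi)^2$, $|\snabla L\phi|^2$, $|\snabla\underline{L}\phi|^2$ whose fluxes we are trying to bound. This closes only because every coefficient in front of these self-referential terms decays like $\mathcal{O}((v+|u|)^{-2}\log(v+|u|))$, so that --- after Cauchy--Schwarz with an $\epsilon$-loss --- their spacetime integrals are genuinely absorbable via Lemma \ref{lm:gronwall} with integrable $h$, $k$ rather than truly feeding back. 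Getting the bookkeeping of the $u$- and $v$-weights exactly right, uniformly in ${\mathcal{D}_{u_0,v_0}}$ and while surviving the $\epsilon$-loss intrinsic to the slowly rotating estimates, is the main technical point; a secondary subtlety is isolating precisely which commuted solutions ($\mathrm{id}$, $\Phi^2$, $Q$, and no more) and which borderline horizon norms ($v^{-2+\epsilon}(\Gamma\phi)^2$) are needed after the second-order null terms have been moved to the left.
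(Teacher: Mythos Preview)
Your proposal is correct and follows essentially the same route as the paper: compute $2\Omega^2\square_g(L\phi)$ and $2\Omega^2\square_g(\underline{L}\phi)$ via Lemma \ref{lm:commwithV}, bound $\mathcal{E}^{N_{p,q}}[L\phi]+\mathcal{E}^{N_{p,q}}[\underline{L}\phi]$ by Cauchy--Schwarz, feed the $|\snabla^2\phi|^2$ piece through the elliptic estimate \eqref{est:ellipticspherev2}, control the resulting $(\Gamma\phi)^2$ with $\Gamma\in\{\Phi^2,Q\}$ via the fundamental-theorem-of-calculus argument \eqref{est:fundthmcalcCS} (which is exactly where the borderline horizon term $\int_{\mathcal{H}^+}v^{-2+\epsilon}(\Gamma\phi)^2$ comes from), and close with Lemma \ref{lm:gronwall}, combining with the $Y_{p'}$ estimate in $\mathcal{B}$ for the $|a|<a_c$ case. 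One minor quantitative point: the paper actually gets $(v+|u|)^{-3}$ rather than $(v+|u|)^{-2}\log(v+|u|)$ in front of the $|\snabla\phi|+|\snabla^2\phi|$ terms (since the leading coefficient is $\underline{L}(\Omega^2)\sim\Omega^2\cdot 4\Omega\underline{\omega}$), but your looser rate would already suffice and your identification of the ``apparent circularity'' and its resolution is exactly the crux.
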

\begin{proof}
For any vector field $V$, we have that
\begin{equation*}
\mathcal{E}^{N_{p,q}}[V\phi]=N_{p,q}(\phi)\square_g(V\phi)=(|u|^p\underline{L}V\phi+v^qLV\phi)\square_g(V\phi).
\end{equation*}
We can commute $\square_g$ with $L$ and $\underline{L}$ and apply Lemma \ref{lm:commwithV} to obtain
\begin{equation*}
\begin{split}
2\Omega^2\square_g(\underline{L}\phi)=\:&\underline{L}([\underline{L},L]\phi)+[\underline{L},L]\underline{L}\phi+\underline{L}(\Omega\tr \underline{\chi})L\phi+\Omega \tr {\underline{\chi}}[\underline{L},L]\phi\\
&+ \underline{L}(\Omega\tr {\chi})\underline{L}\phi-2\underline{L}(\slashed{g}^{AB}\partial_A\Omega^2)\partial_B\phi-2\underline{L}(\Omega^2)\slashed{\Delta}\phi-2\Omega^2[\underline{L},\slashed{\Delta}]\phi.
\end{split}
\end{equation*}
and
\begin{equation*}
\begin{split}
2\Omega^2\square_g(L\phi)=\:&L([L,\underline{L}]\phi)+[L,\underline{L}]L\phi+L(\Omega\tr {\chi})\underline{L}\phi+\Omega \tr {{\chi}}[\underline{L},L]\phi\\
&+ L(\Omega\tr \underline{\chi})L\phi-2L(\slashed{g}^{AB}\partial_A\Omega^2)\partial_B\phi-2L(\Omega^2)\slashed{\Delta}\phi-2\Omega^2[L,\slashed{\Delta}]\phi\\
&-2\slashed{g}^{AB}\partial_A\Omega^2[L,\partial_B]\phi,
\end{split}
\end{equation*}

Moreover, we have that
\begin{align*}
[\slashed{\Delta},\underline{L}]\phi=\:&-\underline{L}\left(\frac{1}{\sqrt{\det \slashed{g}}}\right)\partial_A\left(\slashed{g}^{AB}\sqrt{\det \slashed{g}}\partial_B\phi\right)-\frac{1}{\sqrt{\det \slashed{g}}}\partial_A\left(\underline{L}\left(\slashed{g}^{AB}\sqrt{\det \slashed{g}}\right)\partial_B\phi\right),\\
[\slashed{\Delta},L]\phi=\:&-L\left(\frac{1}{\sqrt{\det \slashed{g}}}\right)\partial_A\left(\slashed{g}^{AB}\sqrt{\det \slashed{g}}\partial_B\phi\right)-\frac{1}{\sqrt{\det \slashed{g}}}\partial_A\left(L\left(\slashed{g}^{AB}\sqrt{\det \slashed{g}}\right)\partial_B\phi\right)\\
&-\frac{1}{\sqrt{\det \slashed{g}}}[L,\partial_A]\left(\slashed{g}^{AB}\sqrt{\det \slashed{g}}\partial_B\phi\right)-\slashed{g}^{AB}\partial_A([L,\partial_B]\phi).
\end{align*}

By applying the estimates from Section~ \ref{sec:estimatesmetricconn}, we obtain
\begin{equation*}
\begin{split}
|2\Omega^2\square_g(\underline{L}\phi)|&\lesssim (v+|u|)^{-2}|\underline{L}(\partial_{{\varphi_*}}\phi)|+(v+|u|)^{-3}(|L\phi|+|\underline{L}\phi|)+(v+|u|)^{-3}|\snabla\phi|\\
& +(v+|u|)^{-3}|\snabla^2\phi|.
\end{split}
\end{equation*}
and
\begin{equation*}
\begin{split}
|2\Omega^2\square_g(L\phi)|&\lesssim (v+|u|)^{-2}|L(\partial_{{\varphi_*}}\phi)|+(v+|u|)^{-3}(|L\phi|+|\underline{L}\phi|)+(v+|u|)^{-3}|\snabla\phi|\\
& +(v+|u|)^{-3}|\snabla^2\phi|.
\end{split}
\end{equation*}
Consequently, we can apply Cauchy--Schwarz to obtain
\begin{equation*}
\begin{split}
2\Omega^2|\mathcal{E}^{N_{p,q}}[\underline{L}\phi]|&\lesssim v^{-1-\eta}|u|^p(\underline{L}\underline{L}\phi)^2+|u|^{-1-\eta}v^q(L\underline{L}\phi)^2\\
&+(v^{1+\eta}|u|^p+|u|^{1+\eta}v^q)\Big[ (v+|u|)^{-4}|\underline{L}(\partial_{{\varphi_*}}\phi)|^2+(v+|u|)^{-6}(|L\phi|^2+|\underline{L}\phi|^2)\\
&+(v+|u|)^{-6}|\snabla\phi|^2 +(v+|u|)^{-6}|\snabla^2\phi|^2\Big].
\end{split}
\end{equation*}
and
\begin{equation*}
\begin{split}
2\Omega^2|\mathcal{E}^{N_{p,q}}[L\phi]|&\lesssim v^{-1-\eta}|u|^p(\underline{L}L\phi)^2+|u|^{-1-\eta}v^q(LL\phi)^2\\
&+(v^{1+\eta}|u|^p+|u|^{1+\eta}v^q)\Big[ (v+|u|)^{-4}|L(\partial_{{\varphi_*}}\phi)|^2+(v+|u|)^{-6}(|L\phi|^2+|\underline{L}\phi|^2)\\
&+(v+|u|)^{-6}|\snabla\phi|^2 +(v+|u|)^{-6}|\snabla^2\phi|^2\Big].
\end{split}
\end{equation*}
We obtain similar estimates for $2\Omega^2|\mathcal{E}^{Y_p}[\underline{L}\phi]|$ and $2\Omega^2|\mathcal{E}^{Y_p}[L\phi]|$, where we replace the weight $v^q$ by $|u|^p$.

Using (\ref{est:ellipticspherev2}) of Proposition \ref{prop:ellipticsphere1}, we can further estimate
\begin{equation}
\label{est:2ndorderangder}
\begin{split}
(v+|u|)^{-6}\int_{S^2_{u,v}}|\snabla^2\phi|^2\,d\mu_{\slashed{g}}\leq &\:C (v+|u|)^{-6}\int_{S^2_{u,v}} (Q\phi)^2+(\Phi^2 \phi)^2 +(T^2\phi)^2 + (L^2\phi)^2+(\uline{L}^2\phi)^2\\
&+(L\uline{L}\phi)^2+(L\phi)^2+(\uline{L}\phi)^2\\
&+|\snabla L\phi|^2+|\snabla \uline{L}\phi|^2+|\snabla \phi|^2\,d\mu_{\slashed{g}}.
\end{split}
\end{equation}
The $T^2\phi$ term can be absorbed into the remaining terms on the right-hand side of the above inequality. The energy estimates of Proposition \ref{prop:mainenergyestimate} and Proposition \ref{energyestsmallakn} apply also to $Q \phi$ and $\Phi^2 \phi$, so we can estimate, by applying the fundamental theorem of calculus together with Cauchy--Schwarz in the region $\{|u|\geq |u_1|\}$:
\begin{equation*}
\begin{split}
\int_{S^2_{u,v}}(\Gamma\phi)^2\,d\mu_{\slashed{g}}\leq \:& \int_{S^2_{-\infty,v}}(\Gamma\phi)^2\,d\mu_{\slashed{g}}+C|u|^{1-p'}\sup_{v_0\leq v'\leq v}\int_{\underline{H}_v'\cap\{|u|\geq |u_1|\}}|u|^{p'}(\underline{L}\Gamma\phi)^2\\
\leq \:& C|u|^{1-p'}\left[\int_{\underline{H}_{v_0}\cap\{|u|\geq |u_1|\}}J^{N_{2,\eta}}[\Gamma\phi]\cdot \underline{L}+\int_{\mathcal{H}^+\cap\{v\geq v_0\}}J^{N_{2,\eta}}[\Gamma\phi]\cdot L\right]\\
&+\int_{S^2_{-\infty,v}}(\Gamma\phi)^2\,d\mu_{\slashed{g}},
\end{split}
\end{equation*}
where $\Gamma \in \{\Phi^2,Q\}$ and we can take $p'=2$ if $\phi$ is axisymmetric and $p'=2-\eta$, with $\eta>0$ arbitrarily small, if $\phi$ is not axisymmetric.

We similarly apply the fundamental theorem of calculus together with Cauchy--Schwarz to obtain:
\begin{equation*}
\begin{split}
\int_{S^2_{u,v}}(\Gamma\phi)^2\,d\mu_{\slashed{g}}\leq \:& \int_{S^2_{-\infty,v}}(\Gamma\phi)^2\,d\mu_{\slashed{g}}+C|u|^{1-p'}\sup_{v_1\leq v'\leq v}\int_{\underline{H}_v'}|u|^{p'}(\underline{L}\Gamma\phi)^2\\
\leq \:& \int_{S^2_{-\infty,v}}(\Gamma\phi)^2\,d\mu_{\slashed{g}}\\
&+C|u|^{1-p'}\left[\int_{\underline{H}_{v_1}}J^{N_{2,\eta}}[\Gamma\phi]\cdot \underline{L}+\int_{\mathcal{H}^+\cap\{v\geq v_1\}}J^{N_{2,\eta}}[\Gamma\phi]\cdot L\right],
\end{split}
\end{equation*}

Therefore, we can estimate in $\{|u|\geq |u_1|\}$,
\begin{equation*}
\begin{split}
\int_{v_0}^{\infty}\int_{-\infty}^{u_1}&(v^q|u|^{1+\eta}+|u|^pv^{1+\eta})(v+|u|)^{-6}\int_{S^2_{u,v}}(\Gamma\phi)^2\,d\mu_{\slashed{g}}dudv\\
\leq \:& C\int_{v_0}^{\infty}v^{-4+\eta+\max\{p,q\}}\int_{S^2_{-\infty,v}}(\Gamma\phi)^2\,d\mu_{\slashed{g}}dv\\
&+C\int_{v_0}^{\infty}\int_{-\infty}^{u_1}(v^q|u|^{\eta}+|u|^{p-1+\eta}v^{1+\eta})(v+|u|)^{-6}\,dudv \\
&\times \left[\int_{\underline{H}_{v_0}\cap\{|u|\geq |u_1|\}}J^{N_{2,\eta}}[\Gamma\phi]\cdot \underline{L}+\int_{\mathcal{H}^+\cap\{v\geq v_0\}}J^{N_{2,\eta}}[\Gamma\phi]\cdot L\right]\\
\leq \:& C \int_{\mathcal{H}^+\cap\{v\geq v_0\}}v^{-4+\eta+\max\{p,q\}}(\Gamma\phi)^2\\
&+C\left[\int_{\underline{H}_{v_0}\cap\{|u|\geq |u_1|\}}J^{N_{2,\eta}}[\Gamma\phi]\cdot \underline{L}+\int_{\mathcal{H}^+\cap\{v\geq v_0\}}J^{N_{2,\eta}}[\Gamma\phi]\cdot L\right].
\end{split}
\end{equation*}

Similarly, we can estimate in $\{v\geq v_1\}$
\begin{equation*}
\begin{split}
\int_{v_1}^{\infty}\int_{-\infty}^{u_0}&(v^q|u|^{1+\eta}+|u|^pv^{1+\eta})(v+|u|)^{-6}\int_{S^2_{u,v}}(\Gamma\phi)^2\,d\mu_{\slashed{g}}dudv\\
\leq \:& C \int_{\mathcal{H}^+\cap\{v\geq v_1\}}v^{-4+\eta+\max\{p,q\}}(\Gamma\phi)^2\\
&+C\left[\int_{\underline{H}_{v_1}}J^{N_{2,\eta}}[\Gamma\phi]\cdot \underline{L}+\int_{\mathcal{H}^+\cap\{v\geq v_1\}}J^{N_{2,\eta}}[\Gamma\phi]\cdot L\right].
\end{split}
\end{equation*}

The remaining terms on the right-hand side of (\ref{est:2ndorderangder}) can be estimated by energy fluxes through $H_u$ and $\underline{H}_v$, multiplied by integrable functions $h(u)$ or $k(v)$. We can apply Lemma \ref{lm:gronwall} in the regions $\uline{\mathcal{A}}$ and $\mathcal{B}$, with $|u|\geq |u_1|$ or $v\geq v_1$. Furthermore, we can estimate in a similar way the terms in $\mathcal{E}^{Y_p}[L\phi]$ and $\mathcal{E}^{Y_p}[\underline{L}\phi]$ in the region $\mathcal{B}$, with $|u|\geq |u_1|$ or $v\geq v_1$.

We combine in the $|a|<a_c$ case, as in Proposition \ref{energyestsmallakn}, the estimates with respect to the multipliers $N_{p,q}$ and $Y_p$.
\end{proof}

We can easily commute $\square_g$ with $L$, $\underline{L}$ and $\partial_{\theta_*}$ in the region $\{|u|\leq |u_1|,\,v\leq v_1\}$. As $|u|$ and $v$ are both finite in this region, we do not need to keep track of the behaviour in $v+|u|$ of the error terms.

\begin{proposition}
\label{prop:commfinitevolregion}
Let $0\leq p,q \leq 2$ and $v_1>v_0$, $u_1<u_0$. Then there exists a \\$C=C(a,M,p,q,u_0,v_0,u_1,v_1)>0$ such that for all $k\in \N$,
\begin{equation*}
\begin{split}
\sum_{0\leq j_1+j_2+j_3+j_4\leq k}&\int_{\underline{H}_{v}\cap \{|u|\leq |u_1\}}J^{N_{p,q}}[L^{j_1}\underline{L}^{j_2}\partial_{\theta_*}^{j_3}\Phi^{j_4}\phi]\cdot \underline{L}+\int_{{H}_{u}\cap\{v\leq v_1\}}J^{N_{p,q}}[L^{j_1}\underline{L}^{j_2}\partial_{\theta_*}^{j_3}\Phi^{j_4}\phi]\cdot L\\
\leq \:& C \sum_{0\leq j_1+j_2+j_3+j_4\leq k}\int_{\underline{H}_{v_1}\cap\{|u|\leq |u_1|\}}J^{N_{p,q}}[L^{j_1}\underline{L}^{j_2}\partial_{\theta_*}^{j_3}\Phi^{j_4}\phi]\cdot \underline{L}\\
&+\int_{H_{u_1}\cap\{v\leq v_1\}}J^{N_{p,q}}[L^{j_1}\underline{L}^{j_2}\partial_{\theta_*}^{j_3}\Phi^{j_4}\phi]\cdot L.
\end{split}
\end{equation*}
\end{proposition}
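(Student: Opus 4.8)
The plan is to prove the proposition by induction on $k$, running the $N_{p,q}$-multiplier energy estimate simultaneously for all commuted fields $\Psi_{\vec{j}}:=L^{j_1}\underline{L}^{j_2}\partial_{\theta_*}^{j_3}\Phi^{j_4}\phi$ with $j_1+j_2+j_3+j_4\le k$, and closing via the Gr\"onwall Lemma \ref{lm:gronwall}. The decisive simplification relative to Section \ref{sec:energyestimatesslowlyrot} is that the region $\{|u|\le |u_1|,\, v\le v_1\}$ is compact and bounded away from both $\mathcal{H}^+$ and $\mathcal{CH}^+$: there $v+|u|=r_*$ ranges over a fixed compact subinterval of $(0,\infty)$, so $\Omega^2$ and $\Omega^{-2}$, all metric components $\slashed{g}_{AB},\slashed{g}^{AB}$ (with $\slashed{g}^{\theta_*\theta_*}\gtrsim 1$ away from the axis), $b^{{\varphi_*}}$, all connection coefficients and all of their coordinate derivatives, and the weights $|u|^p,v^q$, are uniformly bounded above and (where relevant) below by constants depending only on $a,M,u_0,v_0,u_1,v_1$. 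Hence the delicate weight-tracking in $v+|u|$ of the earlier sections is irrelevant here, the signs of bulk terms play no role, and any $h(u),k(v)$ produced below is automatically integrable on the finite intervals $[u_1,u_0]$, $[v_0,v_1]$, so Lemma \ref{lm:gronwall} applies. The base case $k=0$ is just the identity \ref{divergenceidentity} for $J^{N_{p,q}}[\phi]$ in a sub-rectangle of $\{|u|\le|u_1|,\,v\le v_1\}$ with vertex at $(u_1,v_1)$, since $\mathcal{E}^{N_{p,q}}[\phi]=0$ and $K^{N_{p,q}}[\phi]$ is, by \ref{eq:errorterms1}--\ref{eq:errorterms3} and the uniform bounds, pointwise dominated by $C\big((L\phi)^2+(\underline{L}\phi)^2+\Omega^2|\snabla\phi|^2\big)$, followed by Lemma \ref{lm:gronwall}.

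For the inductive step, fix $\vec{j}$ with $|\vec{j}|\le k$; since $\Psi_{\vec{j}}$ does not solve \ref{eq:waveqkerr}, the identity \ref{divergenceidentity} for $J^{N_{p,q}}[\Psi_{\vec{j}}]$ carries the extra bulk term $\mathcal{E}^{N_{p,q}}[\Psi_{\vec{j}}]=N_{p,q}(\Psi_{\vec{j}})\,\square_g\Psi_{\vec{j}}$, which is computed by (iterating) Lemma \ref{lm:commwithV}. The crucial structural input is that, in the chart $(u,v,\theta_*,{\varphi_*})$ where $\underline{L}=\partial_u$, $L=\partial_v+b^{{\varphi_*}}\partial_{{\varphi_*}}$ and $\Phi=\partial_{{\varphi_*}}$ (using $b^{\theta_*}=0$ and the ${\varphi_*}$-independence of $b^{{\varphi_*}}$ and $\slashed{g}$), every mutual commutator of $\{L,\underline{L},\partial_{\theta_*},\Phi\}$ is a uniformly bounded coefficient times $\Phi$: indeed $[\partial_{\theta_*},\underline{L}]=0$, $[\Phi,\,\cdot\,]=0$ on these fields, $[\partial_{\theta_*},L]=(\partial_{\theta_*}b^{{\varphi_*}})\Phi$ and $[\underline{L},L]=(\partial_u b^{{\varphi_*}})\Phi$; likewise $\slashed{\Delta}$ and $\slashed{g}^{AB}\partial_A\Omega^2\partial_B$ are, with uniformly bounded coefficients, polynomials of order $\le 2$, respectively $\le 1$, in $\partial_{\theta_*}$ and $\Phi$, and $[\slashed{\Delta},L]$, $[\slashed{\Delta},\underline{L}]$ are second-order-in-$(\partial_{\theta_*},\Phi)$ operators with uniformly bounded coefficients. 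Feeding this into Lemma \ref{lm:commwithV} (where $\Phi$-commutations cost nothing since $[\square_g,\Phi]=0$) shows that $\square_g\Psi_{\vec{j}}$ is a finite sum of terms $c\cdot\partial\Psi_{\vec{j}'}$ with $c$ uniformly bounded, $\partial\in\{L,\underline{L},\partial_{\theta_*},\Phi\}$, and $|\vec{j}'|\le k$; note that no $\slashed{\Delta}\Psi_{\vec{j}}$-type term survives, because commuting $\square_g$ past an order-$k$ string yields an operator of order exactly $k+1$.

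Combining this with the pointwise coercivity $J^{N_{p,q}}[\Psi_{\vec{j}'}]\cdot L + J^{N_{p,q}}[\Psi_{\vec{j}'}]\cdot\underline{L}\gtrsim (L\Psi_{\vec{j}'})^2+(\underline{L}\Psi_{\vec{j}'})^2+(\partial_{\theta_*}\Psi_{\vec{j}'})^2+(\Phi\Psi_{\vec{j}'})^2$, which holds on the compact region because $\Omega^2\gtrsim 1$ and $\slashed{g}^{\theta_*\theta_*}\gtrsim 1$, a weighted Cauchy--Schwarz inequality bounds $\int|\mathcal{E}^{N_{p,q}}[\Psi_{\vec{j}}]|$ (and the analogously estimated $\int|K^{N_{p,q}}[\Psi_{\vec{j}}]|$) by $\sum_{|\vec{j}'|\le k}\big(\int h(u')\,[\text{$H_{u'}$-flux of }\Psi_{\vec{j}'}] + \int k(v')\,[\text{$\underline{H}_{v'}$-flux of }\Psi_{\vec{j}'}]\big)$ with bounded $h,k$, exactly as in Proposition \ref{prop:mainenergyestimate}. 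Summing the resulting inequalities over all $\vec{j}$ with $|\vec{j}|\le k$ produces an estimate of the form \ref{eq:gronassumption} with $f(u,v)=\sum_{|\vec{j}|\le k}\int_{H_u\cap\{v\le v_1\}}J^{N_{p,q}}[\Psi_{\vec{j}}]\cdot L$, $g(u,v)=\sum_{|\vec{j}|\le k}\int_{\underline{H}_v\cap\{|u|\le|u_1|\}}J^{N_{p,q}}[\Psi_{\vec{j}}]\cdot\underline{L}$, and $A$ the corresponding sum of fluxes on $\underline{H}_{v_1}\cap\{|u|\le|u_1|\}$ and $H_{u_1}\cap\{v\le v_1\}$; Lemma \ref{lm:gronwall} then gives the proposition. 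The only point that demands care is the commutator bookkeeping of the previous paragraph --- that commuting $\square_g$ past $L^{j_1}\underline{L}^{j_2}\partial_{\theta_*}^{j_3}\Phi^{j_4}$ never produces a derivative outside the family $\{\Psi_{\vec{j}'}:|\vec{j}'|\le k\}$ up to one further $L,\underline{L},\partial_{\theta_*}$ or $\Phi$ derivative --- which is precisely what the closure of $\{L,\underline{L},\partial_{\theta_*},\Phi\}$ under commutators (up to bounded multiples of $\Phi$) and the expressibility of $\slashed{\Delta}$ through $\partial_{\theta_*},\Phi$ guarantee; everything else is routine by compactness.
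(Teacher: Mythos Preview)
Your approach is exactly that of the paper: exploit compactness of $\{|u|\le|u_1|,\,v\le v_1\}$ to uniformly bound all coefficients and weights, so that the sum of bulk terms $|K^{N_{p,q}}[\Psi_{\vec j}]|+|\mathcal{E}^{N_{p,q}}[\Psi_{\vec j}]|$ over all $|\vec j|\le k$ is pointwise dominated by the corresponding sum of flux densities, and then close with Lemma~\ref{lm:gronwall}. The paper states this in one line (``the energy fluxes together control all derivatives, since $|u|$ and $v$ are bounded''); your commutator bookkeeping for $\{L,\underline{L},\partial_{\theta_*},\Phi\}$ is a correct and useful elaboration of what the paper leaves implicit.

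One claim needs a small correction: $\slashed{\Delta}$ is \emph{not} a polynomial in $\partial_{\theta_*},\Phi$ with uniformly bounded coefficients --- the coefficient of $\Phi^2$ is $\slashed{g}^{\varphi\varphi}\sim R^{-2}\sin^{-2}\theta$, which degenerates at the poles (and $[\partial_{\theta_*},\slashed{\Delta}]$ picks up $\partial_{\theta_*}\slashed{g}^{\varphi\varphi}\sim\sin^{-3}\theta$). This is partly compensated by the fact that the flux density $\Omega^2|\snabla\Psi_{\vec j'}|^2$ controls $\sin^{-2}\theta(\Phi\Psi_{\vec j'})^2$, not merely $(\Phi\Psi_{\vec j'})^2$ as you write, and by the fact that the full $\slashed{\Delta}\phi$ appearing in Lemma~\ref{lm:commwithV} can be re-expressed via the wave equation (Appendix~\ref{app:treq}) in terms of $L,\underline{L}$ derivatives and $\snabla\Omega^2\cdot\snabla\phi$, none of which carry axis-singular coefficients. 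The paper's proof is equally informal on this point, so this is not a gap relative to the paper but a place where your more detailed write-up should track $\sin\theta$-weights rather than assert uniform boundedness.
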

\begin{proof}
It easily follows that
\begin{equation*}
\begin{split}
\sum_{0\leq j_1+j_2+j_3+j_4\leq k}&|\mathcal{E}^{N_{p,q}}[L^{j_1}\underline{L}^{j_2}\partial_{\theta_*}^{j_3}\Phi^{j_4}\phi]|+|K^{N_{p,q}}[L^{j_1}\underline{L}^{j_2}\partial_{\theta_*}^{j_3}\Phi^{j_4}\phi]|\\
&\lesssim \sum_{0\leq j_1+j_2+j_3+j_4\leq k}J^{N_{p,q}}[L^{j_1}\underline{L}^{j_2}\partial_{\theta_*}^{j_3}\Phi^{j_4}\phi]\cdot \underline{L}+ J^{N_{p,q}}[L^{j_1}\underline{L}^{j_2}\partial_{\theta_*}^{j_3}\Phi^{j_4}\phi]\cdot L,
\end{split}
\end{equation*}
as the energy fluxes together control all derivatives, since $|u|$ and $v$ are bounded. We can therefore directly apply Lemma \ref{lm:gronwall} in $\{|u|\leq |u_1|,\,v\leq v_1\}$ to obtain the estimate in the proposition.
\end{proof}
We also commute with higher-order derivatives along null vector fields in the region $\{v\geq v_1\}\cup \{|u|\geq |u_1|\}$. In this case, we do need to keep track of the behaviour in $v+|u|$ of the error terms arising from commuting with $L$ and $\uline{L}$.
\begin{lemma}
\label{lm:waveoperatorcomm}
Let $n\in \N_0$. Then there exists a constant $C=C(a,M,v_0,u_0,n)>0$, such that
\begin{equation}
\label{est:hocommutation}
\begin{split}
\left|\sum_{j_1+j_2=n}2\Omega^2\square_g(L^{j_1}\uline{L}^{j_2}\phi)\right|\leq &\:C (v+|u|)^{-2} \sum_{j_1+j_2\leq n} |L^{j_1}\uline{L}^{j_2} \Phi \phi|\\
&+ C(v+|u|)^{-3}\sum_{j_1+j_2\leq n-1} |\snabla L^{j_1}\uline{L}^{j_2}\phi|+|\snabla^2L^{j_1}\uline{L}^{j_2}\phi|\\
&+C(v+|u|)^{-3}\sum_{j_1+j_2+j_3\leq n-1} |L^{j_1}\uline{L}^{j_2}\Phi^{j_3+1}\phi|.
\end{split}
\end{equation}
\end{lemma}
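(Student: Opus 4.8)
For $n=0$ the left-hand side vanishes identically since $\phi$ solves $\square_g\phi=0$, so the estimate is trivial; assume $n\geq 1$. For each pair with $j_1+j_2=n$ I apply Lemma \ref{lm:commwithV}, taking $(W,V)=(L,\underline{L})$ with $m=j_1$ when $j_2\geq 1$, and $(W,V)=(\underline{L},L)$ with $m=0$ when $j_2=0$. This writes $2\Omega^2\square_g(L^{j_1}\underline{L}^{j_2}\phi)$ as a finite sum of: (a) iterated-commutator terms such as $[W^mV^n,\underline{L}]L\phi$, $\underline{L}([W^mV^n,L]\phi)$, $[W^mV^n,L]\underline{L}\phi$, $L([W^mV^n,\underline{L}]\phi)$; (b) terms $\sum_{l\geq 1,k\geq 1}\binom{m}{l}\binom{n}{k}W^lV^k(\Omega\tr\underline{\chi})\,W^{m-l}V^{n-k}L\phi$ and $\Omega\tr\underline{\chi}\,[W^mV^n,L]\phi$, together with their $\Omega\tr\chi$ analogues; (c) the $\slashed{g}^{AB}\partial_A\Omega^2$ terms; and (d) the terms $\sum_{l\geq 1,k\geq 1}\binom{m}{l}\binom{n}{k}W^lV^k(\Omega^2)\,W^{m-l}V^{n-k}\slashed{\Delta}\phi$ and $\Omega^2[W^mV^n,\slashed{\Delta}]\phi$.

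The estimate then follows from three inputs. First, by Theorem \ref{thm:estconncoef} the torsion satisfies $\zeta^{\theta_*}=0$, so the basic commutator is $[L,\underline{L}]=4\Omega^2\zeta^{\varphi_*}\,\Phi$, i.e.\ it produces \emph{only} $\Phi$-derivatives; moreover, since $b^{\varphi_*}$ is independent of $\varphi_*$, $\Phi=\partial_{\varphi_*}$ commutes with $L=\partial_v+b^{\varphi_*}\Phi$ and with $\underline{L}=\partial_u$, so $\Phi$ may always be pulled outward. Second, all coefficients occurring above satisfy, by Theorems \ref{thm:estmetric} and \ref{thm:estconncoef}, $|\Omega^2\zeta^{\varphi_*}|,|\Omega\tr\chi|,|\Omega\tr\underline{\chi}|,|\slashed{g}^{AB}\partial_A\Omega^2|,|\Omega^2|\lesssim (v+|u|)^{-2}$, and — since these quantities are $\varphi_*$-independent and $L,\underline{L}$ act on them like $\pm\partial_{r_*}$ modulo $\partial_{\theta_*}$-corrections — every application of $L$ or $\underline{L}$ to such a coefficient produces an extra factor $(v+|u|)^{-1}$. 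Third, coordinate angular derivatives are traded for geometric ones exactly as in Proposition \ref{prop:ellipticsphere1}: $|\partial_{\theta_*}\psi|\lesssim|\snabla\psi|$, $\partial_{\varphi_*}\psi=\Phi\psi$, $|\slashed{\Delta}\psi|\lesssim|\snabla^2\psi|$, and $\Phi$ commutes through $\snabla$ at the cost of $\snabla$-terms of equal order. Sorting the terms by where the derivatives land on $\phi$: the only contributions retaining the full order $n$ in $L,\underline{L}$ come from type (a) when the single coefficient $\Omega^2\zeta^{\varphi_*}$ inside the $[L,\underline{L}]$-factor is left undifferentiated and all $L,\underline{L}$'s fall on $\phi$; these are $\lesssim(v+|u|)^{-2}\,|L^{j_1}\underline{L}^{j_2}\Phi\phi|$, yielding the first term of \eqref{est:hocommutation}. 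In every other term at least one $L$ or $\underline{L}$ has differentiated a coefficient, so the surviving $L,\underline{L}$-monomial on $\phi$ has order $\leq n-1$ and an extra $(v+|u|)^{-1}$ appears; those arising from $[L,\underline{L}]$-factors (the subleading parts of (a), and (b)) carry a $\Phi$ and feed the third term of \eqref{est:hocommutation}, while those from (c), (d) and the $\slashed{\Delta}$-commutator carry one or two $\snabla$'s and feed the second term (any lower-order plain $L,\underline{L}$-monomials come with an even better weight and are absorbed into the $\snabla^{\leq 2}$ terms). Summing over the finitely many pairs $j_1+j_2=n$ and absorbing the binomial constants into $C$ gives \eqref{est:hocommutation}.

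The heart of the argument — and its only real difficulty — is this bookkeeping: one must verify that \emph{no} error term retains the full order $n$ in $L,\underline{L}$ except those already decorated with a $\Phi$ and carrying the borderline weight $(v+|u|)^{-2}$, while every term demoted to order $\leq n-1$ genuinely gains the decisive power $(v+|u|)^{-1}$. This works precisely because of two structural features recorded in Section \ref{sec:estimatesmetricconn}: the torsion points purely in the $\varphi_*$-direction, so the unavoidable $[L,\underline{L}]$-commutators generate only the controllable $\Phi$-derivatives rather than $\partial_{\theta_*}$-derivatives; and $\Omega\tr\chi$, $\Omega\tr\underline{\chi}$ and $\Omega^2$ occur only either multiplied by a further commutator or differentiated at least twice, so that, despite decaying merely like $(v+|u|)^{-2}$ themselves, they never produce a full-order error without an accompanying gain.
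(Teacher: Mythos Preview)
Your approach is essentially the same as the paper's: invoke Lemma~\ref{lm:commwithV}, expand the iterated commutators $[L^{j_1}\underline{L}^{j_2},L]$ and $[L^{j_1}\underline{L}^{j_2},\underline{L}]$ using $[L,\underline{L}]=L(b^{\varphi_*})\Phi$, handle $[L^{j_1}\underline{L}^{j_2},\partial_{\theta_*}]$ and $[L^{j_1}\underline{L}^{j_2},\slashed{\Delta}]$ explicitly, and read off the decay from Theorems~\ref{thm:estmetric}--\ref{thm:estconncoef}. The paper carries out these same computations more explicitly (writing out the telescoping sums for the iterated commutators and the expressions for $[Q,\underline{L}]$-type objects later), but the logic is identical.

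There is, however, a genuine gap in your bookkeeping. You assert that the type~(b) terms ``carry a $\Phi$'' and that ``any lower-order plain $L,\underline{L}$-monomials \ldots\ are absorbed into the $\snabla^{\leq 2}$ terms''. Neither is correct. The terms $L^{l}\underline{L}^{k}(\Omega\tr\underline{\chi})\,L^{j_1-l}\underline{L}^{j_2-k}L\phi$ and $L^{l}\underline{L}^{k}(\Omega\tr\chi)\,L^{j_1-l}\underline{L}^{j_2-k}\underline{L}\phi$ with $l+k\geq 1$ produce \emph{plain} monomials $L^{a}\underline{L}^{b}\phi$ of order up to $n$ with weight $(v+|u|)^{-3}$ (compare the $n=1$ estimate in Proposition~\ref{commenergyestimateskn1}, which displays exactly the terms $(v+|u|)^{-3}(|L\phi|+|\underline{L}\phi|)$). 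One cannot bound $|\psi|$ by $|\snabla\psi|$ pointwise, so these cannot be absorbed as you claim. In fact this appears to be an omission in the \emph{statement} of the lemma itself rather than a subtlety you are missing: the right-hand side of \eqref{est:hocommutation} should also carry a term $C(v+|u|)^{-3}\sum_{j_1+j_2\leq n}|L^{j_1}\underline{L}^{j_2}\phi|$. The paper's own proof simply says ``We can easily estimate the remaining terms'' without isolating them, and when the lemma is applied in Proposition~\ref{commutedenergyestkn} such plain terms are in any case harmless (they are controlled by the energy fluxes via Gr\"onwall). So your argument is fine modulo this correction to the target inequality, but the specific absorption claim as written is false.
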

\begin{proof}
By Lemma \ref{lm:commwithV}, we have that
\begin{equation*}
\begin{split}
2\Omega^2\square_g(L^{j_1}\uline{L}^{j_2}\phi)=&\:[L^{j_1}\uline{L}^{j_2},\underline{L}]L\phi+\underline{L}([L^{j_1}\uline{L}^{j_2},L]\phi)+[L^{j_1}\uline{L}^{j_2},L]\underline{L}\phi+L([L^{j_1}\uline{L}^{j_2},\underline{L}]\phi)\\
&\sum_{l=1}^{j_1}\sum_{k=1}^{j_2} \binom{j_1}{l}\binom{j_2}{k} L^l\uline{L}^k(\Omega\tr \underline{\chi})L^{j_1-l}\uline{L}^{j_2-k}L\phi+\Omega \tr {\underline{\chi}}[L^{j_1}\uline{L}^{j_2},L]\phi\\
&+ \sum_{l=1}^{j_1}\sum_{k=1}^{j_2} \binom{j_1}{l}\binom{j_2}{k} L^l\uline{L}^k(\Omega\tr {\chi})L^{j_1-l}\uline{L}^{n-k}\underline{L}\phi+\Omega \tr {\chi}[L^{j_1}\uline{L}^{j_2},\underline{L}]\phi\\
&-2\sum_{l=1}^{j_1}\sum_{k=1}^{j_2}  \binom{j_1}{l}\binom{j_2}{k} L^l\uline{L}^k(\slashed{g}^{AB}\partial_A\Omega^2)L^{j_1-l}\uline{L}^{j_2-k}\partial_B\phi-2\slashed{g}^{AB}\partial_A\Omega^2[L^{j_1}\uline{L}^{j_2},\partial_B]\phi\\
&-2\sum_{l=1}^{j_1}\sum_{k=1}^{j_2}  \binom{j_1}{l}\binom{j_2}{k} L^l\uline{L}^k(\Omega^2)L^{j_1-l}\uline{L}^{j_2-k}\slashed{\Delta}\phi-2\Omega^2[L^{j_1}\uline{L}^{j_2},\slashed{\Delta}]\phi.
\end{split}
\end{equation*}
We have that
\begin{equation*}
[L,\uline{L}]=L(b^{\varphi_*})\Phi.
\end{equation*}
It follows that
\begin{align*}
[L^{j_1}\uline{L}^{j_2},L](f)=&\:\sum_{k=1}^{j_2}L^{j_1}\uline{L}^{j_2-k}[\uline{L},L](\uline{L}^{k-1}f)\\
=&\:-\sum_{k=1}^{j_2}L^{j_1}\uline{L}^{j_2-k}(L(b^{\varphi_*})\uline{L}^{k-1}\Phi f),\\
[L^{j_1}\uline{L}^{j_2},\uline{L}](f)=&\:\sum_{k=1}^{j_1}{L}^{j_1-k}[L,\uline{L}](L^{k-1}\uline{L}^{j_2}f)\\
=&\:\sum_{k=1}^{j_1}{L}^{j_1-k}(L(b^{\varphi_*})L^{k-1}\uline{L}^{j_2}\Phi f)
\end{align*}

By making use of the estimate for $\partial_{r_*}^n b^{\varphi_*}$ from Theorem \ref{thm:estmetric}, it follows that
\begin{equation*}
\begin{split}
\sum_{j_1+j_2=n}&|[L^{j_1}\uline{L}^{j_2},\underline{L}]L\phi+\underline{L}([L^{j_1}\uline{L}^{j_2},L]\phi)+[L^{j_1}\uline{L}^{j_2},L]\underline{L}\phi+L([L^{j_1}\uline{L}^{j_2},\underline{L}]\phi)|\\
\leq &\: C (v+|u|)^{-2} \sum_{j_1+j_2\leq n} |L^{j_1}\uline{L}^{j_2} \Phi \phi|+ C(v+|u|)^{-3}\sum_{j_1+j_2+j_3\leq n-1} |L^{j_1}\uline{L}^{j_2}\Phi^{j_3+1}\phi|.
\end{split}
\end{equation*}

Furthermore, we have that
\begin{equation*}
[L,\partial_{\theta_*}]=-\partial_{\theta_*}b^{\varphi_*}\partial_{\varphi_*},
\end{equation*}
so we obtain:
\begin{equation*}
[L^{j_1}\uline{L}^{j_2},\partial_{\theta_*}](f)=\sum_{k=1}^{j_1}L^{j_1-k}\left(\Phi b^{\varphi_*}\partial_{\theta_*}L^{k-1}\uline{L}^{j_2}f\right).
\end{equation*}

Now, we make use of the estimate for $\partial_{r_*}^n\partial_{\theta_*} b^{\varphi_*}$ from Theorem \ref{thm:estmetric}, with $k\leq 4$, to estimate
\begin{equation*}
\begin{split}
\sum_{j_1+j_2=n}|[L^{j_1}\uline{L}^{j_2},\partial_{\theta_*}]\phi|\leq  C(v+|u|)^{-1}\sum_{j_1+j_2\leq n-1} |L^{j_1}\uline{L}^{j_2}\Phi \phi|^2.
\end{split}
\end{equation*}

Recall from Proposition \ref{commenergyestimateskn1} that
\begin{align*}
[\slashed{\Delta},\underline{L}]\phi=\:&-\underline{L}\left(\frac{1}{\sqrt{\det \slashed{g}}}\right)\partial_A\left(\slashed{g}^{AB}\sqrt{\det \slashed{g}}\partial_B\phi\right)-\frac{1}{\sqrt{\det \slashed{g}}}\partial_A\left(\underline{L}\left(\slashed{g}^{AB}\sqrt{\det \slashed{g}}\right)\partial_B\phi\right),\\
[\slashed{\Delta},L]\phi=\:&-L\left(\frac{1}{\sqrt{\det \slashed{g}}}\right)\partial_A\left(\slashed{g}^{AB}\sqrt{\det \slashed{g}}\partial_B\phi\right)-\frac{1}{\sqrt{\det \slashed{g}}}\partial_A\left(L\left(\slashed{g}^{AB}\sqrt{\det \slashed{g}}\right)\partial_B\phi\right)\\
&-\frac{1}{\sqrt{\det \slashed{g}}}[L,\partial_A]\left(\slashed{g}^{AB}\sqrt{\det \slashed{g}}\partial_B\phi\right)-\slashed{g}^{AB}\partial_A([L,\partial_B]\phi).
\end{align*}
We can therefore estimate
\begin{equation*}
\begin{split}
[L^{j_1}\uline{L}^{j_2},\slashed{\Delta}](f)=&\:\sum_{k=1}^{j_2}L^{j_1}\uline{L}^{j_2-k}[\uline{L},\slashed{\Delta}](\uline{L}^{k-1}f)+\sum_{k=1}^{j_1}L^{j_1-k}[L,\slashed{\Delta}](L^{k-1}\uline{L}^{j_1}f).
\end{split}
\end{equation*}

Hence, we obtain by using the estimates for $\partial_{r_*}^k\slashed{g}_{AB}$ from Theorem \ref{thm:estmetric},
\begin{equation*}
\sum_{j_1+j_2=n} |[L^{j_1}\uline{L}^{j_2},\slashed{\Delta}]\phi|\leq (v+|u|)^{-1}\sum_{j_1+j_2\leq n-1} |\snabla L^{j_1}\uline{L}^{j_2}\phi|+|\snabla^2 L^{j_1}\uline{L}^{j_2}\phi|.
\end{equation*}

We can easily estimate the remaining terms in $2\Omega^2\square_g(L^{j_1}\uline{L}^{j_2}\phi)$, applying the estimates from Section~ \ref{sec:estimatesmetricconn}, to conclude that (\ref{est:hocommutation}) must hold.
\end{proof}

\begin{proposition}
\label{commutedenergyestkn}
Let $k\in \N_0$. Either restrict to $|a|<a_c$ and let $0\leq p<2$ and $0\leq q<2$, or restrict to axisymmetric $\phi$, with $p=2$ and $0<q\leq 2$. Then there exist $\alpha=\alpha(p,q)>1$ and $\beta=\beta(p,q)>1$, such that for $u_1$ suitably large, $|u|\geq |u_1|$:\begin{equation*}
\begin{split}
\sum_{j_1+j_2=2k+1}& \int_{\uline{H}_v\cap \{|u|\geq |u_1|\}}J^{N_{p,q}}[L^{j_1}\uline{L}^{j_2}\phi]\cdot \uline{L}+\int_{H_u}J^{N_{p,q}}[L^{j_1}\uline{L}^{j_2}\phi]\cdot L\\
\leq &\: C\sum_{j_1+j_2+2j_3+j_4\leq 2k+1}\int_{\uline{H}_{v_0}}J^{N_{2,q\beta \alpha}}[L^{j_1}\uline{L}^{j_2}Q^{j_3}\Phi^{j_4}\phi]\cdot \uline{L}+\int_{\mathcal{H}^+}J^{N_{2,q\beta \alpha}}[L^{j_1}\uline{L}^{j_2}Q^{j_3}\Phi^{j_4}\phi]\cdot L\\
&+C\sum_{2j_1+j_2\leq 2k+1}\int_{\uline{H}_{v_0}}J^{N_{2,q\beta \alpha}}[Q^{j_1}\Phi^{j_2+1}\phi]\cdot \uline{L}+\int_{\mathcal{H}^+}J^{N_{2,q\beta \alpha}}[Q^{j_1}\Phi^{j_2+1}\phi]\cdot L\\
&+C\sum_{2j_1+j_2\leq 2k}\int_{\uline{H}_{v_0}}J^{N_{2,q\beta \alpha}}[Q^{j_1+1}\Phi^{j_2}\phi]\cdot \uline{L}+\int_{\mathcal{H}^+}J^{N_{2,q\beta \alpha}}[Q^{j_1}\Phi^{j_2}\phi]\cdot L\\
&+C\sum_{j_1+2j_2\leq 2k+2}\int_{\mathcal{H}^+}v^{-2+\epsilon}(\Phi^{j_1}Q^{j_2}\phi)^2,
\end{split}
\end{equation*}
and
\begin{equation*}
\begin{split}
\sum_{j_1+j_2=2k}& \int_{\uline{H}_v\cap \{|u|\geq |u_1|\}}J^{N_{p,q}}[L^{j_1}\uline{L}^{j_2}\phi]\cdot \uline{L}+\int_{H_u}J^{N_{p,q}}[L^{j_1}\uline{L}^{j_2}\phi]\cdot L\\
\leq &\: C\sum_{j_1+j_2+2j_3+j_4\leq 2k}\int_{\uline{H}_{v_0}}J^{N_{2,q\beta \alpha}}[L^{j_1}\uline{L}^{j_2}Q^{j_3}\Phi^{j_4}\phi]\cdot \uline{L}+\int_{\mathcal{H}^+}J^{N_{2,q\beta \alpha}}[L^{j_1}\uline{L}^{j_2}Q^{j_3}\Phi^{j_4}\phi]\cdot L\\
&+C\sum_{2j_1+j_2\leq 2k}\int_{\uline{H}_{v_0}}J^{N_{2,q\beta \alpha}}[Q^{j_1}\Phi^{j_2+1}\phi]\cdot \uline{L}+\int_{\mathcal{H}^+}J^{N_{2,q\beta \alpha}}[Q^{j_1}\Phi^{j_2+1}\phi]\cdot L\\
&+C\sum_{j_1+2j_2\leq 2k+1}\int_{\mathcal{H}^+}v^{-2+\epsilon}(\Phi^{j_1}Q^{j_2}\phi)^2,
\end{split}
\end{equation*}
where $0<q\beta \alpha \leq 2$, $\epsilon>0$ can be taken arbitrarily small and $$C=C(k,a,M,u_0,v_0,u_1,v_1,p,q,\alpha,\beta,\epsilon)>0.$$ For axisymmetric $\phi$, we can replace $N_{2,q\beta \alpha}$ on the right-hand side by $N_{2,q}$.

Moreover, for $v_1>v_0$ suitably large and $v\geq v_1$,
\begin{equation*}
\begin{split}
\sum_{j_1+j_2=2k+1}& \int_{\uline{H}_v}J^{N_{p,q}}[L^{j_1}\uline{L}^{j_2}\phi]\cdot \uline{L}+\int_{H_u\cap\{v\geq v_1\}}J^{N_{p,q}}[L^{j_1}\uline{L}^{j_2}\phi]\cdot L\\
\leq &\: C\sum_{j_1+j_2+2j_3+j_4\leq 2k+1}\int_{\uline{H}_{v_1}}J^{N_{2,q\beta \alpha}}[L^{j_1}\uline{L}^{j_2}Q^{j_3}\Phi^{j_4}\phi]\cdot \uline{L}+\int_{\mathcal{H}^+\cap\{v\geq v_1\}}J^{N_{2,q\beta \alpha}}[L^{j_1}\uline{L}^{j_2}Q^{j_3}\Phi^{j_4}\phi]\cdot L\\
&+C\sum_{2j_1+j_2\leq 2k+1}\int_{\uline{H}_{v_1}}J^{N_{2,q\beta \alpha}}[Q^{j_1}\Phi^{j_2+1}\phi]\cdot \uline{L}+\int_{\mathcal{H}^+\cap\{v\geq v_1\}}J^{N_{2,q\beta \alpha}}[Q^{j_1}\Phi^{j_2+1}\phi]\cdot L\\
&+C\sum_{2j_1+j_2\leq 2k}\int_{\uline{H}_{v_1}}J^{N_{2,q\beta \alpha}}[Q^{j_1+1}\Phi^{j_2}\phi]\cdot \uline{L}+\int_{\mathcal{H}^+\cap\{v\geq v_1\}}J^{N_{2,q\beta \alpha}}[Q^{j_1}\Phi^{j_2}\phi]\cdot L\\
&+C\sum_{j_1+2j_2\leq 2k+2}\int_{\mathcal{H}^+\cap\{v\geq v_1\}}v^{-2+\epsilon}(\Phi^{j_1}Q^{j_2}\phi)^2,
\end{split}
\end{equation*}
and
\begin{equation*}
\begin{split}
\sum_{j_1+j_2=2k}& \int_{\uline{H}_v}J^{N_{p,q}}[L^{j_1}\uline{L}^{j_2}\phi]\cdot \uline{L}+\int_{H_u\cap\{v\geq v_1\}}J^{N_{p,q}}[L^{j_1}\uline{L}^{j_2}\phi]\cdot L\\
\leq &\: C\sum_{j_1+j_2+2j_3+j_4\leq 2k}\int_{\uline{H}_{v_1}}J^{N_{2,q\beta \alpha}}[L^{j_1}\uline{L}^{j_2}Q^{j_3}\Phi^{j_4}\phi]\cdot \uline{L}+\int_{\mathcal{H}^+\cap\{v\geq v_1\}}J^{N_{2,q\beta \alpha}}[L^{j_1}\uline{L}^{j_2}Q^{j_3}\Phi^{j_4}\phi]\cdot L\\
&+C\sum_{2j_1+j_2\leq 2k}\int_{\uline{H}_{v_1}}J^{N_{2,q\beta \alpha}}[Q^{j_1}\Phi^{j_2+1}\phi]\cdot \uline{L}+\int_{\mathcal{H}^+\cap\{v\geq v_1\}}J^{N_{2,q\beta \alpha}}[Q^{j_1}\Phi^{j_2+1}\phi]\cdot L\\
&+C\sum_{j_1+2j_2\leq 2k+1}\int_{\mathcal{H}^+\cap\{v\geq v_1\}}v^{-2+\epsilon}(\Phi^{j_1}Q^{j_2}\phi)^2,
\end{split}
\end{equation*}
where $0<q\beta \alpha \leq 2$, $\epsilon>0$ can be taken arbitrarily small and $$C=C(k,a,M,u_0,v_0,u_1,v_1,p,q,\alpha,\beta,\epsilon)>0.$$ For axisymmetric $\phi$, we can replace $N_{2,q\beta \alpha}$ on the right-hand side by $N_{2,q}$.
\end{proposition}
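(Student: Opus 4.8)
The plan is to argue by induction on the total commutation order $n=j_1+j_2$, or rather on the weighted budget $m=j_1+j_2+2j_3+j_4$ appearing on the right-hand side, simultaneously for the regions $\{|u|\geq |u_1|\}$ and $\{v\geq v_1\}$ and simultaneously in the cases $p=2$, $0<q\leq 2$ (axisymmetric) and $0\leq p<2$, $0\leq q<2$ (slowly rotating). The base cases $n=0$ and $n=1$ are exactly Theorem \ref{thm:eestaxisymm} (resp.\ Theorem \ref{thm:energyestsmallakn}, i.e.\ Proposition \ref{energyestsmallakn}) and Proposition \ref{commenergyestimateskn1}, applied also to $Q^{j_3}\Phi^{j_4}\phi$ in place of $\phi$; this is legitimate since $[\square_g,Q]=[\square_g,\Phi]=0$, so $Q^{j_3}\Phi^{j_4}\phi$ is again a (axisymmetric, if $\phi$ is) solution of (\ref{eq:waveqkerr}). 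For the inductive step I would fix $j_1+j_2=n$ and use $N_{p,q}$ as a vector field multiplier for $L^{j_1}\uline{L}^{j_2}\phi$ in the finite-volume region $\uline{\mathcal{A}}$ (for the $\{|u|\geq |u_1|\}$ statements) and in $\mathcal{B}\cap\{v\geq v_1\}$, and in the latter region replace $N_{p,q}$ by $Y_{p'}$ with $p'\geq \max\{p,q\beta\}$ so that, by Proposition \ref{prop:energyestimatehawking}, the bulk term $K^{Y_{p'}}[L^{j_1}\uline{L}^{j_2}\phi]$ has a good sign and only the inhomogeneous current $\mathcal{E}^{Y_{p'}}[L^{j_1}\uline{L}^{j_2}\phi]$ has to be absorbed; the $K_{\textnormal{null}},K_{\textnormal{angular}},K_{\textnormal{mixed}}$ error terms are estimated verbatim as in Propositions \ref{prop:mainenergyestimate} and \ref{prop:finitevolumeenergyestimates}, contributing the same integrable weights $h(u),k(v)$ and hence the same $\alpha,\beta,\epsilon$-losses.

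The new ingredient is the control of the inhomogeneity $2\Omega^2\square_g(L^{j_1}\uline{L}^{j_2}\phi)$, for which I would invoke Lemma \ref{lm:waveoperatorcomm}: it bounds this quantity by $(v+|u|)^{-2}$ times $\Phi$-derivatives of $L,\uline{L}$-order $\leq n$, plus $(v+|u|)^{-3}$ times $|\snabla L^{j_1'}\uline{L}^{j_2'}\phi|$, $|\snabla^2 L^{j_1'}\uline{L}^{j_2'}\phi|$ and $|L^{j_1'}\uline{L}^{j_2'}\Phi^{j_3+1}\phi|$ with $j_1'+j_2'\leq n-1$. The dangerous term is $(v+|u|)^{-3}|\snabla^2 L^{j_1'}\uline{L}^{j_2'}\phi|$, which cannot be fed directly into a Grönwall argument for the $L^{j_1}\uline{L}^{j_2}\phi$-energies. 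Here I would apply the elliptic estimate (\ref{est:ellipticspherev2}) of Proposition \ref{prop:ellipticsphere1} (valid precisely because we are in $\{|u|\geq |u_1|\}\cup\{v\geq v_1\}$) to $L^{j_1'}\uline{L}^{j_2'}\phi$, trading $|\snabla^2 L^{j_1'}\uline{L}^{j_2'}\phi|^2$ on $S^2_{u,v}$ for $(QL^{j_1'}\uline{L}^{j_2'}\phi)^2$, $(\Phi^2 L^{j_1'}\uline{L}^{j_2'}\phi)^2$, $(T^2 L^{j_1'}\uline{L}^{j_2'}\phi)^2$, together with at most two pure-null derivatives and one angular derivative of $L,\uline{L}$-order $\leq j_1'+j_2'\leq n-1$ plus a zeroth-order term. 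The $T^2$ contribution is absorbed using $T=\tfrac12(\partial_u+\partial_v)-\omega_{\mathcal{H}^+}\Phi$, so it is a combination of $L,\uline{L},\Phi$ applied to $\phi$; the $Q$- and $\Phi^2$-commuted quantities satisfy the inductive hypothesis at budget $\leq n-1+2\leq m$ and $\leq n-1+2\leq m$ respectively (this is exactly why the budget is counted as $j_1+j_2+2j_3+j_4$), and the $\snabla L^{j_1'}\uline{L}^{j_2'}\phi$, $\snabla^2$-free null terms are controlled by energy fluxes of strictly lower $L,\uline{L}$-order. Finally the $(v+|u|)^{-2}|L^{j_1}\uline{L}^{j_2}\Phi\phi|$ term is handled by the inductive hypothesis applied to $\Phi\phi$ at the same $n$ but strictly smaller $j_4$-deficit, closing the induction on the budget.

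With the inhomogeneity controlled, I would apply a weighted Cauchy--Schwarz exactly as in Proposition \ref{commenergyestimateskn1}: the first-order angular and null error terms become $h(u)$- or $k(v)$-weighted fluxes through $H_u$, $\uline{H}_v$ with integrable weights (after the usual $\eta$-adjustments), while the zeroth-order terms $(\Phi^{j_1}Q^{j_2}\phi)^2$ are dealt with by the fundamental theorem of calculus along ingoing null generators together with Cauchy--Schwarz, producing the $\int_{\mathcal{H}^+}v^{-2+\epsilon}(\Phi^{j_1}Q^{j_2}\phi)^2$ terms and fluxes of $\Phi^{j_1}Q^{j_2}\phi$ through $\uline{H}_{v_0}$ (resp.\ $\uline{H}_{v_1}$) and $\mathcal{H}^+$. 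One then applies Lemma \ref{lm:gronwall} in $\uline{\mathcal{A}}$ (resp.\ in $\mathcal{B}\cap\{v\geq v_1\}$, where $Y_{p'}$ gives $K^{Y_{p'}}\geq 0$ and only $\mathcal{E}^{Y_{p'}}$ enters), and patches the pieces across $\uline{\gamma}_\alpha$, $\gamma_\beta$ and through the compact region $\{|u|\leq|u_1|,\,v\leq v_1\}$ using Proposition \ref{prop:commfinitevolregion} (after also commuting with $\partial_{\theta_*}$ there), precisely as in the proof of Proposition \ref{energyestsmallakn}; this produces the factors $q\beta\alpha$ on the right-hand side, with $0<q\beta\alpha\leq 2$, and no loss in the axisymmetric case where $N_{2,q}$ can be used globally on ${\mathcal{D}_{u_0,v_0}}$. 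The parity distinction between the order-$2k$ and order-$(2k+1)$ statements is purely bookkeeping: the budget $2j_3$ from each $Q$-commutation is even, so the allowed $L,\uline{L}$-orders produced on the right differ by one according to the parity of the order on the left. \textbf{The main obstacle} is organising the induction so that every error term emitted by Lemma \ref{lm:waveoperatorcomm} and the elliptic estimate lands \emph{strictly} below the current weighted budget $j_1+j_2+2j_3+j_4$ — in particular checking that the $(v+|u|)^{-3}$-decay is exactly enough to pay for a $\snabla^2$ of $L,\uline{L}$-order $n-1$ via (\ref{est:ellipticspherev2}) and that the extra two orders from each $Q$ are compensated by the $2j_3$ weighting — while simultaneously keeping all $u$- and $v$-weights appearing in the Grönwall kernels integrable, which is what forces the ranges $0\leq p,q<2$ (or $p=2$, $0<q\leq 2$) and the $\alpha,\beta$-dependence.
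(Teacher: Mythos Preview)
Your proposal is correct and follows essentially the same route as the paper: induction on the commutation order, Lemma \ref{lm:waveoperatorcomm} for the inhomogeneity, the elliptic estimate (\ref{est:ellipticspherev2}) to trade $|\snabla^2 L^{j_1'}\uline{L}^{j_2'}\phi|$ for $Q$-, $\Phi^2$-, $T^2$-commuted quantities plus lower-order null and angular terms, absorption of $T^2$ into $L,\uline{L},\Phi$, Cauchy--Schwarz with integrable weights, Gr\"onwall (Lemma \ref{lm:gronwall}) in $\uline{\mathcal{A}}$ and with $Y_{p'}$ in $\mathcal{B}$, and patching as in Proposition \ref{energyestsmallakn}.

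One technical point you gloss over that the paper carries out explicitly: after the elliptic estimate produces $(QL^{j_1'}\uline{L}^{j_2'}\phi)^2$, the inductive hypothesis applies to $L^{j_1'}\uline{L}^{j_2'}Q\phi$, not to $QL^{j_1'}\uline{L}^{j_2'}\phi$, so one must commute $Q$ past the null derivatives. The paper writes out $[Q,L^{j_1}\uline{L}^{j_2}]$ via $[Q,L]$, $[Q,\uline{L}]$ (using $Q=\Delta_{\s^2}+a^2\sin^2\theta\,T^2-\Phi^2$ and the Jacobian estimates of Theorem \ref{thm:estmetric}) and shows it carries an extra factor $(v+|u|)^{-2}$, so that after absorbing this commutator the $|\snabla^2|$ estimate reads, in the region $\{v\geq v_1\}\cup\{|u|\geq|u_1|\}$,
\begin{equation*}
\sum_{j_1+j_2\leq n-1}\int_{S^2_{u,v}}|\snabla^2 L^{j_1}\uline{L}^{j_2}\phi|^2\leq C\int_{S^2_{u,v}}\sum_{j_1+j_2\leq n-1}\big[(L^{j_1}\uline{L}^{j_2}Q\phi)^2+(L^{j_1}\uline{L}^{j_2}\Phi^2\phi)^2\big]+\sum_{j_1+j_2\leq n+1}(L^{j_1}\uline{L}^{j_2}\phi)^2+\sum_{j_1+j_2\leq n}|\snabla L^{j_1}\uline{L}^{j_2}\phi|^2,
\end{equation*}
with $Q$ now acting directly on $\phi$. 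This is what makes the budget bookkeeping $j_1+j_2+2j_3+j_4$ actually close; your ``inductive hypothesis at budget $\leq n-1+2$'' presupposes it. Otherwise your outline matches the paper.
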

\begin{proof}
We have that
\begin{equation*}
\sum_{j_1+j_2=n}2\Omega^2 \mathcal{E}^{N_{p,q}}[L^{j_1}\uline{L}^{j_2}\phi]=\sum_{j_1+j_2=n}N_{p,q}(L^{j_1}\uline{L}^{j_2}\phi)2\Omega^2\square_{g}(L^{j_1}\uline{L}^{j_2}\phi).
\end{equation*}
From (\ref{est:hocommutation}) it follows that
\begin{equation*}
\begin{split}
\sum_{j_1+j_2=n}2\Omega^2 |\mathcal{E}^{N_{p,q}}[L^{j_1}\uline{L}^{j_2}\phi]|\leq &\:C\left(|u|^p|\uline{L}L^{j_1}\uline{L}^{j_2}\phi|+v^q|LL^{j_1}\uline{L}^{j_2}\phi|\right)\\
&\cdot \Bigg[(v+|u|)^{-2} \sum_{j_1+j_2\leq n} |L^{j_1}\uline{L}^{j_2} \Phi \phi|\\
&+ (v+|u|)^{-3}\sum_{j_1+j_2\leq n-1} |\snabla L^{j_1}\uline{L}^{j_2}\phi|+|\snabla^2L^{j_1}\uline{L}^{j_2}\phi|\\
&+(v+|u|)^{-3}\sum_{j_1+j_2+j_3\leq n-1} |L^{j_1}\uline{L}^{j_2}\Phi^{j_3+1}\phi|\Bigg].
\end{split}
\end{equation*}
We apply Cauchy--Schwarz to further estimate
\begin{equation*}
\begin{split}
\sum_{j_1+j_2=n}2\Omega^2 |\mathcal{E}^{N_{p,q}}[L^{j_1}\uline{L}^{j_2}\phi]|\leq &\:C\sum_{j_1+j_2\leq n}v^{-1-\eta}|u|^p(\uline{L}L^{j_1}\uline{L}^{j_2}\phi)^2+|u|^{-1-\eta}v^q(LL^{j_1}\uline{L}^{j_2}\phi)^2\\
&+C\left(v^{1+\eta}|u|^p+|u|^{1+\eta}v^q\right)\Bigg[(v+|u|)^{-4}\sum_{j_1+j_2\leq n} |L^{j_1}\uline{L}^{j_2} \Phi \phi|^2\\
&+ (v+|u|)^{-6}\sum_{j_1+j_2\leq n-1} |\snabla L^{j_1}\uline{L}^{j_2}\phi|^2+|\snabla^2L^{j_1}\uline{L}^{j_2}\phi|^2\\
&+(v+|u|)^{-6}\sum_{j_1+j_2+j_3\leq n-1} |L^{j_1}\uline{L}^{j_2}\Phi^{j_3+1}\phi|^2\Bigg].
\end{split}
\end{equation*}
We can apply (\ref{est:ellipticspherev2}) to obtain
\begin{equation*}
\begin{split}
\sum_{j_1+j_2\leq n-1} \int_{S^2_{u,v}}|\snabla^2L^{j_1}\uline{L}^{j_2}\phi|^2\,d\mu_{\slashed{g}}\leq &\:C \int_{S^2_{u,v}}\sum_{j_1+j_2\leq n-1}(Q L^{j_1}\uline{L}^{j_2}\phi)^2+(T^2 L^{j_1}\uline{L}^{j_2}\phi)^2+( L^{j_1}\uline{L}^{j_2}\Phi^2 \phi)^2\\
&+\sum_{j_1+j_2\leq n+1}  (L^{j_1}\uline{L}^{j_2}\phi)^2+ \sum_{j_1+j_2\leq n}|\snabla  L^{j_1}\uline{L}^{j_2}\phi|^2\,d\mu_{\slashed{g}}.
\end{split}
\end{equation*}
We can further estimate
\begin{equation*}
\sum_{j_1+j_2\leq n-1} (Q L^{j_1}\uline{L}^{j_2}\phi)^2= \sum_{j_1+j_2\leq n-1} (L^{j_1}\uline{L}^{j_2} Q \phi)^2+([Q,L^{j_1}\uline{L}^{j_2}]\phi)^2.
\end{equation*}
We have that
\begin{equation*}
\begin{split}
[Q,L^{j_1}\uline{L}^{j_2}](f)=&\:\sum_{k=1}^{j_2}L^{j_1}\uline{L}^{j_2-k}[Q,\uline{L}](\uline{L}^{k-1}f)+\sum_{k=1}^{j_1}L^{j_1-k}[Q,L](L^{k-1}\uline{L}^{j_2}f).
\end{split}
\end{equation*}
Furthermore,
\begin{align*}
[Q,\underline{L}](f)=&\:[\Delta_{\s^2},\underline{L}](f)+a^2\underline{L}(\sin^2\theta)T^2f\\
=&\:[\sin{\theta}^{-1}\partial_{\theta}(\sin\theta\partial_{\theta}),\underline{L}]f-\underline{L}(\sin^{-2}\theta)\Phi^2f+a^2\underline{L}(\sin^2\theta)T^2f,\\
[Q,L](f)=&\:[\Delta_{\s^2},L](f)+a^2L(\sin^2\theta)T^2f\\
=&\:[\sin{\theta}^{-1}\partial_{\theta}(\sin\theta\partial_{\theta}),L]f-L(\sin^{-2}\theta)\Phi^2f+a^2L(\sin^2\theta)T^2f,
\end{align*}
with
\begin{align*}
[\underline{L},\partial_{\theta}]f=&\:\underline{L}(\partial_{\theta}\theta_*)\partial_{\theta_*}f+\frac{1}{2}\underline{L}(\partial_{\theta}r_*)(Lf-b^{{\varphi_*}}\Phi f-\underline{L}f),\\
[L,\partial_{\theta}]f=&\:L(\partial_{\theta}\theta_*)\partial_{\theta_*}f+\frac{1}{2}L(\partial_{\theta}r_*)(Lf-b^{{\varphi_*}}\Phi f-\underline{L}f).
\end{align*}
By using the estimates from Section~ \ref{sec:estimatesmetricconn}, we can estimate
\begin{equation*}
\begin{split}
|[Q,L^{j_1}\uline{L}^{j_2}](f)|\leq&\: C(v+|u|)^{-2}\Bigg(\sum_{k+l\leq j_1+j_2}|\snabla^2L^l\uline{L}^k\phi|+\sum_{k+l\leq j_1+j_2+1}|\snabla L^l\uline{L}^k\phi|\\
&+\sum_{k+l\leq j_1+j_2+2}|L^l\uline{L}^k\phi|\Bigg).
\end{split}
\end{equation*}

We conclude that, for either $v\geq v_1$ or $|u|\geq |u_1|$, with $v_1$ and $|u_1|$ suitably large
\begin{equation*}
\begin{split}
\sum_{j_1+j_2\leq n-1} \int_{S^2_{u,v}}|\snabla^2L^{j_1}\uline{L}^{j_2}\phi|^2\,d\mu_{\slashed{g}}\leq &\:C \int_{S^2_{u,v}}\sum_{j_1+j_2\leq n-1} (L^{j_1}\uline{L}^{j_2}Q\phi)^2+( L^{j_1}\uline{L}^{j_2}\Phi^2 \phi)^2\\
&+\sum_{j_1+j_2\leq n+1}  (L^{j_1}\uline{L}^{j_2}\phi)^2+ \sum_{j_1+j_2\leq n}|\snabla  L^{j_1}\uline{L}^{j_2}\phi|^2\,d\mu_{\slashed{g}}.
\end{split}
\end{equation*}
We can therefore use the above estimate to obtain
\begin{equation*}
\begin{split}
\sum_{j_1+j_2=n}\int_{S^2_{u,v}}&2\Omega^2 |\mathcal{E}^{N_{p,q}}[L^{j_1}\uline{L}^{j_2}\phi]|\,d\mu_{\slashed{g}}\\
\leq &\:C\sum_{j_1+j_2\leq n}\int_{S^2_{u,v}}2v^{-1-\eta}|u|^p(\uline{L}L^{j_1}\uline{L}^{j_2}\phi)^2+|u|^{-1-\eta}v^q(LL^{j_1}\uline{L}^{j_2}\phi)^2\,d\mu_{\slashed{g}}\\
&+C\int_{S^2_{u,v}}2\left(v^{1+\eta}|u|^p+|u|^{1+\eta}v^q\right)\Bigg[(v+|u|)^{-4}\sum_{j_1+j_2\leq n} |L^{j_1}\uline{L}^{j_2} \Phi \phi|^2\\
&+ (v+|u|)^{-6}\sum_{j_1+j_2\leq n} |\snabla L^{j_1}\uline{L}^{j_2}\phi|^2\\
&+(v+|u|)^{-6}\sum_{j_1+j_2\leq n-1} (L^{j_1}\uline{L}^{j_2}Q\phi)^2+( L^{j_1}\uline{L}^{j_2}\Phi^2 \phi)^2\\
&+(v+|u|)^{-6}\sum_{j_1+j_2+j_3\leq n-1} |L^{j_1}\uline{L}^{j_2}\Phi^{j_3+1}\phi|^2\Bigg]\,d\mu_{\slashed{g}},
\end{split}
\end{equation*}
for $0\leq p,q\leq 2$. We obtain similar estimates for $\sum_{j_1+j_2=n}2\Omega^2|\mathcal{E}^{Y_p}[L^{j_1}\uline{L}^{j_2}\phi]|$, by replacing the weight $v^q$ above by $|u|^p$.

We apply the divergence theorem, as in the previous propositions, to obtain the following energy estimate:
\begin{equation*}
\begin{split}
\sum_{j_1+j_2=n}& \int_{\uline{H}_v}J^{N_{p,q}}[L^{j_1}\uline{L}^{j_2}\phi]\cdot \uline{L}+\int_{H_u\cap\{v\geq v_1\}}J^{N_{p,q}}[L^{j_1}\uline{L}^{j_2}\phi]\cdot L\\
\leq &\: C\sum_{j_1+j_2=n} \int_{\uline{H}_{v_1}}J^{N_{2,q\beta \alpha}}[L^{j_1}\uline{L}^{j_2}\phi]\cdot \uline{L}+\int_{\mathcal{H}^+\cap\{v\geq v_1\}}J^{N_{2,q\beta \alpha}}[L^{j_1}\uline{L}^{j_2}\phi]\cdot L\\
&+C\sup_{v_1\leq v<\infty}\int_{\uline{H}_v} \sum_{j_1+j_2\leq n-2}\sum_{\Gamma\in\{Q,\Phi^2\}}J^{N_{p,q}}[L^{j_1}\uline{L}^{j_2}\Gamma\phi]\cdot \uline{L}\\
&+\sum_{j_1+j_2\leq n-1}J^{N_{p,q}}[L^{j_1}\uline{L}^{j_2}\Phi \phi]\cdot \uline{L}\\
&+ \sum_{j_1+j_2+j_3\leq n-2}J^{N_{p,q}}[L^{j_1}\uline{L}^{j_2}\Phi^{j_3+1}\phi]\cdot \uline{L}\\
&+C\sup_{-\infty \leq u<u_0}\int_{H_u\cap\{v\geq v_1\}} \sum_{j_1+j_2\leq n-2}\sum_{\Gamma\in\{Q,\Phi^2\}}J^{N_{p,q}}[L^{j_1}\uline{L}^{j_2}\Gamma\phi]\cdot L\\
&+\sum_{j_1+j_2\leq n-1}J^{N_{p,q}}[L^{j_1}\uline{L}^{j_2}\Phi \phi]\cdot L\\
&+ \sum_{j_1+j_2+j_3\leq n-2}J^{N_{p,q}}[L^{j_1}\uline{L}^{j_2}\Phi^{j_3+1}\phi]\cdot L.
\end{split}
\end{equation*}
By induction, using the estimates Proposition \ref{commenergyestimateskn1}, it follows that for odd $n$:
\begin{equation*}
\begin{split}
\sum_{j_1+j_2=n}& \int_{\uline{H}_v}J^{N_{p,q}}[L^{j_1}\uline{L}^{j_2}\phi]\cdot \uline{L}+\int_{H_u\cap\{v\geq v_1\}}J^{N_{p,q}}[L^{j_1}\uline{L}^{j_2}\phi]\cdot L\\
\leq &\: C\sum_{j_1+j_2+2j_3+j_4\leq n}\int_{\uline{H}_{v_1}}J^{N_{2,\beta \alpha}}[L^{j_1}\uline{L}^{j_2}Q^{j_3}\Phi^{j_4}\phi]\cdot \uline{L}+\int_{\mathcal{H}^+\cap\{v\geq v_1\}}J^{N_{2,\beta \alpha}}[L^{j_1}\uline{L}^{j_2}Q^{j_3}\Phi^{j_4}\phi]\cdot L\\
&+C\sum_{2j_1+j_2+j_3\leq n}\int_{\uline{H}_{v_1}}J^{N_{2,\beta \alpha}}[Q^{j_1}\Phi^{j_2+1}T^{j_3}\phi]\cdot \uline{L}+\int_{\mathcal{H}^+\cap\{v\geq v_1\}}J^{N_{2,\beta \alpha}}[Q^{j_1}\Phi^{j_2+1}T^{j_3}\phi]\cdot L\\
&+C\sum_{2j_1+j_2+j_3\leq n-1}\int_{\uline{H}_{v_1}}J^{N_{2,\beta \alpha}}[Q^{j_1+1}\Phi^{j_2}T^{j_3}\phi]\cdot \uline{L}+\int_{\mathcal{H}^+\cap\{v\geq v_1\}}J^{N_{2,\beta \alpha}}[Q^{j_1}\Phi^{j_2}T^{j_3}\phi]\cdot L\\
&+C\sum_{j_1+2j_2+j_3\leq n+1}\int_{\mathcal{H}^+\cap\{v\geq v_1\}}v^{-2+\epsilon}(\Phi^{j_1}Q^{j_2}T^{j_3}\phi)^2,
\end{split}
\end{equation*}
whereas for even $n$ we can estimate
\begin{equation*}
\begin{split}
\sum_{j_1+j_2=n}& \int_{\uline{H}_v}J^{N_{p,q}}[L^{j_1}\uline{L}^{j_2}\phi]\cdot \uline{L}+\int_{H_u\cap\{v\geq v_1\}}J^{N_{p,q}}[L^{j_1}\uline{L}^{j_2}\phi]\cdot L\\
\leq &\: C\sum_{j_1+j_2+2j_3+j_4\leq n}\int_{\uline{H}_{v_1}}J^{N_{2,\beta \alpha}}[L^{j_1}\uline{L}^{j_2}Q^{j_3}\Phi^{j_4}\phi]\cdot \uline{L}+\int_{\mathcal{H}^+\cap\{v\geq v_1\}}J^{N_{2,\beta \alpha}}[L^{j_1}\uline{L}^{j_2}Q^{j_3}\Phi^{j_4}\phi]\cdot L\\
&+C\sum_{2j_1+j_2+j_3\leq n}\int_{\uline{H}_{v_1}}J^{N_{2,\beta \alpha}}[Q^{j_1}\Phi^{j_2+1}T^{j_3}\phi]\cdot \uline{L}+\int_{\mathcal{H}^+\cap\{v\geq v_1\}}J^{N_{2,\beta \alpha}}[Q^{j_1}\Phi^{j_2+1}T^{j_3}\phi]\cdot L\\
&+C\sum_{j_1+2j_2+j_3\leq n+1}\int_{\mathcal{H}^+\cap\{v\geq v_1\}}v^{-2+\epsilon}(\Phi^{j_1}Q^{j_2}T^{j_3}\phi)^2,
\end{split}
\end{equation*}
The estimates in the region $\{|u|\geq |u_1|\}$ proceed similarly.
\end{proof}

\begin{remark}
In the estimates of Proposition \ref{commutedenergyestkn}, we estimate integrals of $n$ derivatives of $\phi$ by initial integrals of $n+1$ derivatives. The loss of derivatives in the even $n$ case arises only because of $\Phi$. Therefore, if $\phi$ is axisymmetric and we can neglect the $\Phi$ derivatives, we \underline{do not lose any derivatives} in Proposition \ref{commutedenergyestkn} for even $n$. This fact is important when proving energy estimates for nonlinear wave equations.
\end{remark}

We combine the results of Propositions \ref{commutedenergyestkn} and \ref{prop:commfinitevolregion} to obtain an energy estimate in the entire region ${\mathcal{D}_{u_0,v_0}}$.
\begin{corollary}
\label{cor:commeestimate}
Let $k\in \N_0$. Restrict to $|a|<a_c$ and let $0\leq p<2$ and $0\leq q<2$, or restrict to axisymmetric $\phi$, with $p=2$ and $0<q\leq 2$. Then there exist $\alpha=\alpha(p,q)>1$ and $\beta=\beta(p,q)>1$ and a constant $C=C(n,a,M,u_0,v_0,p,q,\alpha,\beta)>0$, such that
\begin{equation*}
\begin{split}
\sum_{j_1+j_2=2k+1}& \int_{\uline{H}_v}J^{N_{p,q}}[L^{j_1}\uline{L}^{j_2}\phi]\cdot \uline{L}+\int_{H_u}J^{N_{p,q}}[L^{j_1}\uline{L}^{j_2}\phi]\cdot L\\
\leq &\: C\sum_{j_1+j_2+2j_3+j_4\leq 2k+1}\int_{\uline{H}_{v_0}}J^{N_{2,q\beta \alpha}}[L^{j_1}\uline{L}^{j_2}Q^{j_3}\Phi^{j_4}\phi]\cdot \uline{L}+\int_{\mathcal{H}^+}J^{N_{2,q\beta \alpha}}[L^{j_1}\uline{L}^{j_2}Q^{j_3}\Phi^{j_4}\phi]\cdot L\\
&+C\sum_{2j_1+j_2\leq 2k+1}\int_{\uline{H}_{v_0}}J^{N_{2,q\beta \alpha}}[Q^{j_1}\Phi^{j_2+1}\phi]\cdot \uline{L}+\int_{\mathcal{H}^+}J^{N_{2,q\beta \alpha}}[Q^{j_1}\Phi^{j_2+1}\phi]\cdot L\\
&+C\sum_{2j_1+j_2\leq 2k}\int_{\uline{H}_{v_0}}J^{N_{2,q\beta \alpha}}[Q^{j_1+1}\Phi^{j_2}\phi]\cdot \uline{L}+\int_{\mathcal{H}^+}J^{N_{2,q\beta \alpha}}[Q^{j_1}\Phi^{j_2}\phi]\cdot L\\
&+C\sum_{j_1+2j_2\leq 2k+2}\int_{\mathcal{H}^+}v^{-2+\epsilon}(\Phi^{j_1}Q^{j_2}\phi)^2=:CE_{q\beta \alpha;2k+1,\epsilon},
\end{split}
\end{equation*}
and
\begin{equation*}
\begin{split}
\sum_{j_1+j_2=2k}& \int_{\uline{H}_v}J^{N_{p,q}}[L^{j_1}\uline{L}^{j_2}\phi]\cdot \uline{L}+\int_{H_u}J^{N_{p,q}}[L^{j_1}\uline{L}^{j_2}\phi]\cdot L\\
\leq &\: C\sum_{j_1+j_2+2j_3+j_4\leq 2k}\int_{\uline{H}_{v_0}}J^{N_{2,q\beta \alpha}}[L^{j_1}\uline{L}^{j_2}Q^{j_3}\Phi^{j_4}\phi]\cdot \uline{L}+\int_{\mathcal{H}^+}J^{N_{2,q\beta \alpha}}[L^{j_1}\uline{L}^{j_2}Q^{j_3}\Phi^{j_4}\phi]\cdot L\\
&+C\sum_{2j_1+j_2\leq 2k}\int_{\uline{H}_{v_0}}J^{N_{2,q\beta \alpha}}[Q^{j_1}\Phi^{j_2+1}\phi]\cdot \uline{L}+\int_{\mathcal{H}^+}J^{N_{2,q\beta \alpha}}[Q^{j_1}\Phi^{j_2+1}\phi]\cdot L\\
&+C\sum_{j_1+2j_2\leq 2k+1}\int_{\mathcal{H}^+}v^{-2+\epsilon}(\Phi^{j_1}Q^{j_2}\phi)^2=:CE_{q\beta \alpha;2k,\epsilon},
\end{split}
\end{equation*}
\end{corollary}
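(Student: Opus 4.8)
The plan is to obtain the corollary by patching together the near-horizon estimates of Proposition~\ref{commutedenergyestkn} with the compact-region estimate of Proposition~\ref{prop:commfinitevolregion}, over the fixed finite cover of ${\mathcal{D}_{u_0,v_0}}$ by the two regions $\{|u|\geq|u_1|\}$ and $\{v\geq v_1\}$ and the compact rectangle $\{|u|\leq|u_1|,\ v\leq v_1\}$, where $|u_1|$, $v_1$ are the large constants appearing in Proposition~\ref{commutedenergyestkn}; I would fix $\alpha=\alpha(p,q)>1$, $\beta=\beta(p,q)>1$ exactly as there. Since $Q$ and $\Phi$ commute with $\square_g$, each $Q^{j_3}\Phi^{j_4}\phi$ is again a solution of (\ref{eq:waveqkerr}), so Propositions~\ref{prop:mainenergyestimate}, \ref{energyestsmallakn}, \ref{commutedenergyestkn} and \ref{prop:commfinitevolregion} all apply verbatim to it; in particular every flux occurring on the right-hand side of Proposition~\ref{commutedenergyestkn} is of a type already controlled, and it suffices to bound, for all $H_u$ and $\uline{H}_v$ in ${\mathcal{D}_{u_0,v_0}}$, the left-hand side of the corollary by the corresponding $E_{q\beta\alpha;2k+1,\epsilon}$ (resp.\ $E_{q\beta\alpha;2k,\epsilon}$), the odd and even cases being identical in structure.

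The core of the argument is then a bookkeeping of which region controls which null hypersurface, arranged so as to avoid circularity. For an outgoing hypersurface $H_u$ with $|u|\geq|u_1|$, and for $\uline{H}_v\cap\{|u|\geq|u_1|\}$, the estimate of Proposition~\ref{commutedenergyestkn} in the region $\{|u|\geq|u_1|\}$ gives the bound directly, since the data integrals over $\uline{H}_{v_0}\cap\{|u|\geq|u_1|\}$ and $\mathcal{H}^+\cap\{v\geq v_0\}$ appearing there are dominated by the corresponding integrals over all of $\uline{H}_{v_0}$ and $\mathcal{H}^+$. To reach the middle region I would proceed in order: (i) $H_{u_1}\cap\{v\leq v_1\}\subset H_{u_1}$ is controlled, \emph{independently} of the middle region, by the same estimate applied with $u=u_1$; (ii) Proposition~\ref{prop:commfinitevolregion}, applied in the compact rectangle $\{|u|\leq|u_1|,\ v\leq v_1\}$, then controls the fluxes over $\uline{H}_v\cap\{|u|\leq|u_1|\}$ ($v\leq v_1$) and over $H_u\cap\{|u|\leq|u_1|,\ v\leq v_1\}$ by the flux of the same quantities along the two past edges of that rectangle, which lie in $\uline{H}_{v_0}$ (a subset of the initial hypersurface, hence inside $E$) and in $H_{u_1}$ (handled in step (i)); in particular this bounds $\uline{H}_{v_1}\cap\{|u|\leq|u_1|\}$; (iii) combining (ii) with the estimate of Proposition~\ref{commutedenergyestkn} in $\{|u|\geq|u_1|\}$ applied to $\uline{H}_{v_1}\cap\{|u|\geq|u_1|\}$ bounds all of $\uline{H}_{v_1}$, and the estimate of Proposition~\ref{commutedenergyestkn} in the region $\{v\geq v_1\}$ then propagates this forward, controlling $\uline{H}_v$ for every $v\geq v_1$ and $H_u\cap\{v\geq v_1\}$ for every $u$. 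Collecting (i)--(iii) with the two direct cases covers every $H_u$ and $\uline{H}_v$ in ${\mathcal{D}_{u_0,v_0}}$, and since the chain of dependencies terminates at $\uline{H}_{v_0}$ and $\mathcal{H}^+$ without loops, the resulting bound is $CE_{q\beta\alpha;2k+1,\epsilon}$ (resp.\ $CE_{q\beta\alpha;2k,\epsilon}$).

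The step that requires care --- and the main, though not deep, obstacle --- is that Proposition~\ref{prop:commfinitevolregion} is phrased with the commutation fields $L,\uline{L},\partial_{\theta_*},\Phi$, whereas $E_{q\beta\alpha;2k+1,\epsilon}$ and $E_{q\beta\alpha;2k,\epsilon}$ are expressed through $L,\uline{L},Q,\Phi$. On the compact hypersurfaces occurring in (i)--(ii), namely $\uline{H}_{v_0}\cap\{|u|\leq|u_1|\}$ and $H_{u_1}\cap\{v\leq v_1\}$, Theorem~\ref{thm:estmetric} bounds the metric components and their derivatives uniformly from above and below, so the elliptic estimates of Proposition~\ref{prop:ellipticsphere1} allow one to trade pairs of $\partial_{\theta_*}$-derivatives for $Q$-derivatives together with $\Phi$- and null derivatives of no higher order; iterating, the $N_{p,q}$-fluxes of $L^{j_1}\uline{L}^{j_2}\partial_{\theta_*}^{j_3}\Phi^{j_4}\phi$ with $j_1+j_2+j_3+j_4\leq 2k+1$ are bounded by $N_{2,q\beta\alpha}$-fluxes of the combinations $L^{j_1}\uline{L}^{j_2}Q^{j_3}\Phi^{j_4}\phi$, $Q^{j_1}\Phi^{j_2+1}\phi$, $Q^{j_1+1}\Phi^{j_2}\phi$ within the index ranges of the corollary, together with the zeroth-order boundary terms $\int_{\mathcal{H}^+}v^{-2+\epsilon}(\Phi^{j_1}Q^{j_2}\phi)^2$; one simply verifies that this reindexing keeps the derivative counts within those allowed by $E_{q\beta\alpha;2k+1,\epsilon}$, $E_{q\beta\alpha;2k,\epsilon}$. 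For axisymmetric $\phi$ the $\Phi$-terms drop out, $p=2$, and $N_{2,q\beta\alpha}$ is replaced by $N_{2,q}$ throughout, the argument being otherwise unchanged; and the $\epsilon$-loss and the weights $N_{2,q\beta\alpha}$ are propagated consistently through the finitely many applications of the two propositions, each of which only enlarges the constant $C$ by a bounded factor.
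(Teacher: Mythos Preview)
Your proposal is correct and follows essentially the same approach the paper intends: the corollary is stated immediately after Propositions~\ref{commutedenergyestkn} and~\ref{prop:commfinitevolregion} with the one-line justification ``We combine the results of Propositions~\ref{commutedenergyestkn} and~\ref{prop:commfinitevolregion} to obtain an energy estimate in the entire region ${\mathcal{D}_{u_0,v_0}}$.'' Your careful ordering (i)--(iii) of the patching, and your explicit identification of the translation between the $\partial_{\theta_*}$-commutators of Proposition~\ref{prop:commfinitevolregion} and the $Q$-commutators in the statement via Proposition~\ref{prop:ellipticsphere1} on the compact edges, make explicit exactly what the paper leaves implicit.
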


\section{Pointwise estimates}
\label{sec:poinwiseestimates}

\subsection{Uniform boundedness of $\phi$}
\label{sec:uniformboundednessphi}
We can use the higher-order energy estimates in the previous section to obtain a uniform pointwise bound on $\phi$. As in the previous section, $\phi$ always denotes a solution to (\ref{eq:waveqkerr}) arising from initial data prescribed in Proposition \ref{prop:wellposedness}. We will always indicate whether we are assuming axisymmetry of $\phi$ or the restriction $0\leq |a|<a_c$ for the rotation parameter $a$.
\begin{proposition}
\label{pointwiseboundkn}
Let $n\in \N_0$. Restrict to $0\leq |a|<a_c$ and take $0\leq p<2$, or restrict to axisymmetric $\phi$ and take $0\leq p\leq 2$. Let $\epsilon>0$ arbitrarily small and take $0<q<2$. There exists a constant $C=C(a,M,v_0,u_0,q,\epsilon)>0$ such that,
\begin{equation*}
\begin{split}
\sum_{j_1+j_2\leq n}(L^{j_1}\uline{L}^{j_2}\phi)^2(u,v,\theta_*,{\varphi_*})\leq \:& \sum_{|k|\leq 2}\sum_{j_1+j_2\leq n}\int_{S^2_{-\infty,v}}|\snabla^kL^{j_1}\uline{L}^{j_2}\phi|^2\,d\mu_{\slashed{g}}\\
&+C|u|^{1-p}\Bigg[E_{q;n+2,\epsilon}[\phi]+\sum_{\Gamma\in \{\Phi^2,T^2,Q\}}E_{q;n+1,\epsilon}[\Gamma \phi]\\
&+\sum_{\Gamma \in \{\Phi^3,T^2\Phi,Q\Phi\}}E_{q;n,\epsilon}[\Gamma \phi]\Bigg].
\end{split}
\end{equation*}
\end{proposition}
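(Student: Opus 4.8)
The plan is to bound $(L^{j_1}\underline{L}^{j_2}\phi)^2$ at an arbitrary point of $\mathcal{M}\cap\mathcal{D}_{u_0,v_0}$ by combining a Sobolev inequality on the spheres $S^2_{u,v}$, the elliptic estimates of Proposition~\ref{prop:ellipticsphere1}, the fundamental theorem of calculus along the generators of the ingoing null hypersurfaces in the form \eqref{est:fundthmcalcCS}, and the commuted energy estimates of Corollary~\ref{cor:commeestimate}. The conceptual point is that the fluxes $J^{N_{p,q}}[\,\cdot\,]\cdot\underline{L}$ weight angular derivatives only by $v^q\Omega^2=\mathcal{O}\big(v^q(v+|u|)^{-2}\big)$, a weight that degenerates as $|u|\to\infty$ along $\underline{H}_v$ and cannot absorb the $|u|^p$ weight that the transport argument produces; hence one must first use Proposition~\ref{prop:ellipticsphere1} to trade \emph{every} angular derivative on $S^2_{u,v}$ for a derivative along $L$, $\underline{L}$, $\Phi$, $T$ or the Carter operator $Q$, operators that either commute with $\square_g$ or commute with $L$ and $\underline{L}$ up to $\mathcal{O}((v+|u|)^{-2})$ errors on strictly fewer derivatives.

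Carrying this out: first, since the geometry of $S^2_{u,v}$ is uniformly controlled by Theorem~\ref{thm:estmetric}, the Sobolev inequality on $S^2$ gives
\begin{equation*}
(L^{j_1}\underline{L}^{j_2}\phi)^2(u,v,\theta_*,\varphi_*)\le C\sum_{|k|\le 2}\int_{S^2_{u,v}}|\snabla^k L^{j_1}\underline{L}^{j_2}\phi|^2\,d\mu_{\slashed{g}},\qquad j_1+j_2\le n.
\end{equation*}
Then I would apply \eqref{est:ellipticspherev1} with $f=L^{j_1}\underline{L}^{j_2}\phi$ to bound the $k=1,2$ contributions by $\int_{S^2_{u,v}}$ of squares of quantities $\Gamma L^{a}\underline{L}^{b}\phi$ with $a+b\le n+2$ and $\Gamma\in\{\mathrm{id},\Phi,\Phi^2,T^2,Q\}$, together with the $(b^{\varphi_*})^2$-weighted terms involving $Q\Phi$, $T^2\Phi$ and $\Phi^3$ (for which only boundedness of $b^{\varphi_*}$ is used); commuting $\Gamma$ past $L^{a}\underline{L}^{b}$ with the estimates of Section~\ref{sec:estimatesmetricconn}, each term becomes, modulo $\mathcal{O}((v+|u|)^{-2})$ errors on fewer derivatives, a quantity $g=L^{a}\underline{L}^{b}\widetilde{\Gamma}\phi$ with $\widetilde{\Gamma}\phi$ again a solution of \eqref{eq:waveqkerr} and with no angular derivative on it. For each such $g$, the estimate \eqref{est:fundthmcalcCS} — used with a $u$-weight $p'\in(\max\{p,1\},2)$, which is integrable at $\mathcal{H}^+$ and, since $|u|^{1-p'}\le C|u|^{1-p}$ for $|u|\ge1$, yields a conclusion at least as strong as the one claimed — gives
\begin{equation*}
\int_{S^2_{u,v}}g^2\,d\mu_{\slashed{g}}\le\int_{S^2_{-\infty,v}}g^2\,d\mu_{\slashed{g}}+C|u|^{1-p}\int_{\underline{H}_v}J^{N_{p',q}}[g]\cdot\underline{L},
\end{equation*}
where one uses that $J^{N_{p',q}}[g]\cdot\underline{L}=|u|^{p'}(\underline{L}g)^2+v^q\Omega^2|\snabla g|^2\ge|u|^{p'}(\underline{L}g)^2$. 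Finally, Corollary~\ref{cor:commeestimate} applied to the solution $\widetilde{\Gamma}\phi$ bounds $\sum_{a+b\le m}\int_{\underline{H}_v}J^{N_{p',q}}[L^{a}\underline{L}^{b}\widetilde{\Gamma}\phi]\cdot\underline{L}$ by $CE_{q\beta\alpha;m,\epsilon}[\widetilde{\Gamma}\phi]$, with $\alpha,\beta$ chosen close to $1$ as in the remark after Proposition~\ref{energyestsmallakn}; collecting the orders $m\le n+2$ when $\widetilde{\Gamma}=\mathrm{id}$, $m\le n+1$ for one extra $\Phi^2$, $T^2$ or $Q$, and $m\le n$ for the $b^{\varphi_*}$-weighted $\Phi^3$, $T^2\Phi$, $Q\Phi$ terms, and absorbing the single-$\Phi$ and single-$T$ energies into $E_{q\beta\alpha;m+1,\epsilon}[\phi]$, this reproduces exactly the combination $E_{q;n+2,\epsilon}[\phi]+\sum_{\Gamma\in\{\Phi^2,T^2,Q\}}E_{q;n+1,\epsilon}[\Gamma\phi]+\sum_{\Gamma\in\{\Phi^3,T^2\Phi,Q\Phi\}}E_{q;n,\epsilon}[\Gamma\phi]$. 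The boundary terms on $S^2_{-\infty,v}\subset\mathcal{H}^+$ are handled by noting that $\Phi$ and $Q$ are tangential to $\mathcal{H}^+$ and that $T=H-\omega_{\mathcal{H}^+}\Phi$ with $H$ tangent to the generators, so that, after using $\square_g\phi=0$ along $\mathcal{H}^+$ to re-express any surplus derivatives transversal to the horizon spheres, they reassemble into $\sum_{|k|\le2}\sum_{j_1+j_2\le n}\int_{S^2_{-\infty,v}}|\snabla^k L^{j_1}\underline{L}^{j_2}\phi|^2\,d\mu_{\slashed{g}}$.

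The main obstacle will be the control of the two angular derivatives forced by the Sobolev step: a direct transport of $\snabla^2 L^{j_1}\underline{L}^{j_2}\phi$ cannot close, precisely because the angular fluxes through $\underline{H}_v$ carry the degenerate weight $v^q\Omega^2$, and it is the elliptic estimates of Proposition~\ref{prop:ellipticsphere1}, exchanging angular for null/axial/Carter derivatives, that rescue the argument — at the price of the two-derivative loss recorded on the right-hand side. The remaining labour is bookkeeping: one must check that the commutator errors generated by $[\Gamma,L]$, $[\Gamma,\underline{L}]$, $[\underline{L},L^{j_1}]$ and by commuting $\square_g$ with $L$ and $\underline{L}$ are all of strictly lower order with extra $(v+|u|)^{-1}$ decay, so that they are absorbed by the Gr\"onwall-type argument already performed in Proposition~\ref{commutedenergyestkn} and Corollary~\ref{cor:commeestimate}; no estimate beyond those is needed.
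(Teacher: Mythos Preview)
Your proposal is correct and follows essentially the same approach as the paper: Sobolev on $S^2_{u,v}$, the elliptic estimate \eqref{est:ellipticspherev1} to trade angular for $L,\underline{L},\Phi,T,Q$ derivatives, the transport estimate \eqref{est:fundthmcalcCS} along ingoing null generators, and the commuted energy estimates of Corollary~\ref{cor:commeestimate}; the paper presents these ingredients in a slightly different order (it illustrates the FTC step first on $\phi$ itself, then invokes Sobolev and the elliptic estimate) but the logic is the same, and your identification of the reason the elliptic conversion is forced---the degenerate weight $v^q\Omega^2$ on angular fluxes through $\underline{H}_v$---is exactly the point.

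Two small remarks. First, your interval $p'\in(\max\{p,1\},2)$ is empty when $p=2$ in the axisymmetric case; there one simply takes $p'=2$, which is admissible by Proposition~\ref{prop:mainenergyestimate}. Second, your claim that the boundary terms on $S^2_{-\infty,v}$ ``reassemble'' into $\sum_{|k|\le 2}\sum_{j_1+j_2\le n}\int_{S^2_{-\infty,v}}|\snabla^k L^{j_1}\underline{L}^{j_2}\phi|^2$ is stated more confidently than the argument warrants: after the elliptic step the boundary contributions a priori involve up to $n+2$ null derivatives and the operators $Q,\Phi^2,T^2$, and reducing these back to the displayed form requires more than just tangentiality of $\Phi,Q,H$ and the wave equation on $\mathcal{H}^+$. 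The paper is equally terse here, and in practice these extra horizon terms are either absorbed into the $E_{q;\cdot,\epsilon}$ norms (which already contain $\mathcal{H}^+$ integrals) or accepted as part of the data; but you should not present this reassembly as automatic.
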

\begin{proof}
By the fundamental theorem of calculus applied to integrating along ingoing null geodesics, together with Cauchy--Schwarz, we can estimate
\begin{equation*}
\begin{split}
\phi^2(u,v,\theta_*,{\varphi_*})\leq \:& \phi^2(-\infty,v,\theta_*,{\varphi_*})+\left(\int_{-\infty}^u|\underline{L}\phi|(u',v,\theta_*,{\varphi_*})\,du'\right)^2,\\
\leq \:& \phi^2(-\infty,v,\theta_*,{\varphi_*})+|u|^{-1+p}\int_{-\infty}^u (\underline{L}\phi)^2(u',v,\theta_*,{\varphi_*})\,du'.
\end{split}
\end{equation*}
We can integrate over the spheres and apply Proposition \ref{energyestsmallakn} to arrive at
\begin{equation*}
\begin{split}
\int_{S^2_{u,v}}\phi^2\,d\mu_{\slashed{g}}\leq \:& \int_{S^2_{-\infty,v}}\phi^2\,d\mu_{\slashed{g}}+ C|u|^{-1+p}\int_{H_v\cap\{|u'|\geq |u|\}}|u|^p(\underline{L}\phi)^2\\
\leq \:& \int_{S^2_{-\infty,v}}\phi^2\,d\mu_{\slashed{g}}+C|u|^{1-p}E_q[\phi],
\end{split}
\end{equation*}
for $q>0$.

To arrive at a pointwise estimate, we apply the standard Sobolev inequality on the spheres $S^2_{u,v}$, together with Proposition \ref{prop:ellipticsphere1}:
\begin{equation*}
\begin{split}
||L^{j_1}\uline{L}^{j_2}\phi||^2_{L^{\infty}(S^2_{u,v})}\leq &\: C \int_{S^2_{u,v}}(L^{j_1}\uline{L}^{j_2}\phi)^2+|\snabla L^{j_1}\uline{L}^{j_2}\phi|^2+|\snabla^2 L^{j_1}\uline{L}^{j_2}\phi|^2\,d\mu_{\slashed{g}}\\
\leq &\: C\sum_{k=0}^1\sum_{\Gamma\in\{\textnormal{id},{\Phi},{\Phi}^2,T^2,Q\}}\int_{S^2_{u,v}}(\Gamma \underline{L}^k L^{j_1}\uline{L}^{j_2}\phi)^2+(\Gamma L^kL^{j_1}\uline{L}^{j_2} \phi)^2+(\underline{L}^2L^{j_1}\uline{L}^{j_2} \phi)^2\\
&+({L}^2 L^{j_1}\uline{L}^{j_2} \phi)^2+(L\underline{L} L^{j_1}\uline{L}^{j_2} \phi)^2+(b^{{\varphi_*}})^2[(Q\Phi L^{j_1}\uline{L}^{j_2} \phi)^2+(T^2\Phi L^{j_1}\uline{L}^{j_2} \phi)^2\\
&+(\Phi^3L^{j_1}\uline{L}^{j_2} \phi)^2]\,d\mu_{\slashed{g}}.
\end{split}
\end{equation*}
We now combine the results of Propositions \ref{prop:ellipticsphere1}, \ref{commenergyestimateskn1}, \ref{prop:commfinitevolregion} and Corollary \ref{cor:commeestimate}; in particular, we commute $\Gamma$ in the terms above to act directly on $\phi$, in order to arrive at the estimate in the proposition.
\end{proof}

We have now proved Theorem \ref{thm:pointwiseboundkn}.

\subsection{Extendibility of $\phi$ in $C^0$}
\label{sec:extendibilityphiC0}
We can use Proposition \ref{pointwiseboundkn} to show that $\phi$ can be extended as a continuous function beyond the Cauchy horizon $\mathcal{CH}^+$. As this extension is independent of the characteristic data, it is non-unique.

\begin{proposition}
\label{prop:C0extension}
Let the initial data for $\phi$ satisfy
\begin{equation*}
E_{q;2,\epsilon}[\phi]+\sum_{\Gamma\in \{\Phi^2,T^2,Q\}}E_{q;1,\epsilon}[\Gamma \phi]+\sum_{\Gamma \in \{\Phi^3,T^2\Phi,Q\Phi\}}E_{q}[\Gamma \phi]<\infty,
\end{equation*}
for some $q>1$ and $\epsilon>0$.

Let $x_{\mathcal{CH}^+}$ be a point on $\mathcal{CH}^+$. Then, for any $x\in {{\mathcal{D}_{u_0,v_0}}}$,
\begin{equation*}
\lim_{x\to x_{\mathcal{CH}^+}}\phi(x)
\end{equation*}
is well-defined, so $\phi$ can be extended as a $C^0$ function to the region beyond $\mathcal{CH}^+$.
\end{proposition}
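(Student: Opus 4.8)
The plan is to establish the existence of the limit along ingoing null curves by showing that $\phi$ extends continuously in the $(\widetilde u,\widetilde V,\widetilde\theta_*,\widetilde\varphi_*)$ chart that covers a neighbourhood of $\mathcal{CH}^+$. First I would pass to this outgoing chart, in which $\mathcal{CH}^+=\{\widetilde V=0\}$ and $L=\partial_{\widetilde v}$ is related to $\partial_{\widetilde V}$ by $\partial_{\widetilde V}=\frac{d\widetilde v}{d\widetilde V}\,L\sim v^2 L$, using the estimates (\ref{est:utoU})--(\ref{est:vtoV}) from Section~\ref{sec:vfmethod}. The point is to prove that $\phi$ is uniformly continuous on ${\mathcal{D}_{u_0,v_0}}$ with respect to this chart, which gives a unique continuous extension to the closure, in particular to $\mathcal{CH}^+$. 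By Proposition~\ref{pointwiseboundkn} (with $p>1$, which is permissible since $q>1$), $\phi$ itself is uniformly bounded. For uniform continuity it suffices to control the chart derivatives $\partial_{\widetilde V}\phi$, $\partial_{\widetilde u}\phi$ and the angular derivatives $\snabla\phi$ in $L^\infty$, integrably in $\widetilde V$ up to $\widetilde V=0$.

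The key step is therefore: \emph{integrate} the pointwise bound for $\partial_{\widetilde V}\phi\sim v^2 L\phi$ in the $\widetilde V$-direction and show it is absolutely integrable up to $\mathcal{CH}^+$. Applying Proposition~\ref{pointwiseboundkn} with the roles played by $L$-derivatives (take $n\geq 1$ so that $L\phi$ is included), together with Proposition~\ref{prop:ellipticsphere1} and Corollary~\ref{cor:commeestimate} to convert angular and $Q$, $\Phi$ derivatives into energies of commuted solutions, one gets
\begin{equation*}
\sup_{S^2_{u,v}} (L\phi)^2 \leq C|u|^{1-p}\Big(E_{q;n+2,\epsilon}[\phi]+\sum_{\Gamma}E_{q;n+1,\epsilon}[\Gamma\phi]+\cdots\Big).
\end{equation*}
However, this only yields $|L\phi|\lesssim v^{-q/2}$-type decay after one more argument, which is not obviously enough: $|\partial_{\widetilde V}\phi|\sim v^2|L\phi|$ could a priori grow. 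The remedy is to instead use the difference function $\psi=\phi-\phi|_{\mathcal{H}^+}$ as in Part~3 of Section~\ref{sec:mainsteps}: treat the wave equation as a transport equation for $L\phi$ along $\underline L$, exploit the $u$-decay of $\psi$ (which vanishes on $\mathcal{H}^+$) obtained by the fundamental theorem of calculus in $u$ together with the weighted energies, and deduce $\|L\phi\|_{L^2(S^2_{u,v})}\lesssim v^{-2+\epsilon}$, hence pointwise $|L\phi|\lesssim v^{-2+\epsilon}$ after commuting with $L,\underline L$ and applying Sobolev on $S^2_{u,v}$. Then $|\partial_{\widetilde V}\phi|\sim v^2|L\phi|\lesssim v^{\epsilon}$, which is \emph{not} yet integrable in $\widetilde V$. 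One must be slightly more careful: what is actually needed is $\int_{\widetilde V(v_0)}^{0}|\partial_{\widetilde V}\phi|\,d\widetilde V<\infty$, equivalently $\int_{v_0}^\infty v^2|L\phi|\,\frac{d\widetilde V}{d\widetilde v}\,dv=\int_{v_0}^\infty |L\phi|\cdot(\text{bounded})\,dv<\infty$ since $\frac{d\widetilde V}{d\widetilde v}\sim v^{-2}$; so in fact one only needs $\int_{v_0}^\infty |L\phi|\,dv<\infty$, which follows from $|L\phi|\lesssim v^{-2+\epsilon}$ (choosing $\epsilon<1$). Thus the integrability works out, and $\lim_{\widetilde V\to 0}\phi$ exists along each ingoing null generator, uniformly.

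Next I would upgrade this to genuine continuity of the extension at $\mathcal{CH}^+$: the same $L^\infty$-bounds for $\partial_{\widetilde u}\phi$ (controlled by $\underline L\phi$ and angular derivatives via Proposition~\ref{prop:ellipticsphere1}, which in the $(\widetilde u,\widetilde V)$ chart involves the $\widetilde b^A$ terms, all bounded) and for $\snabla\phi$, $\snabla^2\phi$ — all of which are bounded uniformly on ${\mathcal{D}_{u_0,v_0}}$ by Proposition~\ref{pointwiseboundkn} with $n\geq 2$ applied to $\snabla$-commuted quantities and the elliptic estimates — give equicontinuity of $\phi$ in all four chart variables, so the limit function is continuous on $\mathcal{M}'\cap\{r>\tfrac{e^2}{2M}\}$ near $\mathcal{CH}^+$ and agrees with $\phi$ in ${\mathcal{D}_{u_0,v_0}}$. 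The main obstacle is precisely the interplay between the $v^2$ weight relating $\partial_{\widetilde V}$ to $L$ and the decay rate of $L\phi$: one must verify that the decay $|L\phi|\lesssim v^{-2+\epsilon}$ from Section~\ref{sec:poinwiseestimates} holds with \emph{this} loss (not worse) and that the requisite commuted energies $E_{q;2k,\epsilon}[\cdot]$ are finite under the stated hypothesis $E_{q;2,\epsilon}[\phi]+\sum_\Gamma E_{q;1,\epsilon}[\Gamma\phi]+\sum_\Gamma E_q[\Gamma\phi]<\infty$ with $q>1$; tracking the bookkeeping of which commutations (by $L$, $\underline L$, $Q$, $\Phi$) are needed and checking they are covered by Corollary~\ref{cor:commeestimate} and Proposition~\ref{prop:ellipticsphere1} is the delicate part, but it is routine given those results. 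The non-uniqueness of the extension is immediate since it is built chart-by-chart beyond $\mathcal{CH}^+$ with no constraint there.
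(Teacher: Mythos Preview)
Your approach takes an unnecessary detour through the pointwise decay of $L\phi$, and in doing so invokes machinery whose hypotheses are \emph{not} those of the proposition. The $\psi=\phi-\phi_{\mathcal H^+}$ decomposition and the transport-equation argument you describe are the content of Propositions~\ref{improvedpointwisedecaykn}--\ref{prop:decayLphi}, which require the improved energies $\widetilde E_{s,\epsilon;n}$ and the $D_n$ norms on $\underline H_{v_0}$; these are stronger than the $E_{q;n,\epsilon}$ assumed here. So the step ``deduce $|L\phi|\lesssim v^{-2+\epsilon}$'' is not available under the stated hypothesis, and your argument as written does not close.

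The paper's proof is much more direct and avoids pointwise decay of $L\phi$ entirely. One fixes a sequence $x_k\to x_{\mathcal{CH}^+}$ and shows $\phi(x_k)$ is Cauchy by bounding differences in each coordinate separately via the fundamental theorem of calculus, Sobolev on $S^2_{u,v}$, and Cauchy--Schwarz. The only delicate direction is $\widetilde V$: there one writes
\[
|\phi(\widetilde u_k,\widetilde V_l,\cdot)-\phi(\widetilde u_k,\widetilde V_k,\cdot)|^2
\le C\bigl|(-\widetilde V_l)^{q-1}-(-\widetilde V_k)^{q-1}\bigr|\sum_{|s|\le 2}\int_{H_{u_k}} v^q(L\snabla^s\phi)^2,
\]
using $(-\widetilde V)^{2-q}|\partial_{\widetilde V}\phi|^2\,d\widetilde V\sim v^q(L\phi)^2\,dv$. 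The integral on the right is a bounded $N_{2,q}$-flux along $H_{u_k}$ (by Corollary~\ref{cor:commeestimate} applied to $\partial_{\theta_*}^{s_1}\Phi^{s_2}\phi$), and the prefactor vanishes as $k,l\to\infty$ precisely because $q>1$. No pointwise estimate on $L\phi$ is needed; the condition $q>1$ enters only to make $\int v^{-q}\,dv<\infty$ in the Cauchy--Schwarz step. Your own remark that ``one only needs $\int_{v_0}^\infty|L\phi|\,dv<\infty$'' is exactly right, but the way to get it under the stated assumptions is this Cauchy--Schwarz against the $v^q$-weighted energy, not the transport-equation decay.
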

\begin{proof}
Consider a sequence of points $x_k$ in ${{\mathcal{D}_{u_0,v_0}}}\setminus \mathcal{H}^+$, such that $\lim_{k\to \infty}x_k=x_{\mathcal{CH}^+}$. The sequence $\{x_k\}$ is in particular a Cauchy sequence. We will show that the sequence of points $(X_{m,n,l}\phi)(x_k)$ must also be a Cauchy sequence, from which it follows immediately that the sequence converges to a finite number as $k\to \infty$.

For simplicity, we will take $m-n=l=0$, but the steps of the proof can be repeated for the general case. Denote $x_k=(u_k,V_k,\theta_k,\varphi_k)$. Let $l>k$, then
\begin{equation*}
\begin{split}
|\phi(x_l)-\phi(x_k)|^2\leq \:& |\phi(u_l,V_k,(\theta_*)_k,(\varphi_*)_k)-\phi(u_k,V_k,(\theta_*)_k,(\varphi_*)_k)|\\
&+|\phi(u_k,V_l,(\theta_*)_k,(\varphi_*)_k)-\phi(u_k,V_k,(\theta_*)_k,(\varphi_*)_k)|^2\\
&+|\phi(u_k,V_k,(\theta_*)_l,(\varphi_*)_k)-\phi(u_k,V_k,(\theta_*)_k,(\varphi_*)_k)|^2\\
&+|\phi(u_k,V_k,(\theta_*)_k,(\varphi_*)_l)-\phi(u_k,V_k,(\theta_*)_k,(\varphi_*)_k)|^2.
\end{split}
\end{equation*}
By the fundamental theorem of calculus, a Sobolev inequality on $\s^2$ and Cauchy--Schwarz, we can estimate for $q>0$
\begin{equation*}
\begin{split}
&\left|\phi(u_l,V_k,(\theta_*)_k,(\varphi_*)_k)-\phi(u_k,V_k,(\theta_*)_k,(\varphi_*)_k)\right|^2\\
\leq \:& C\sum_{|s|\leq 2}\left|\int_{u_k}^{u_l}\int_{\s^2}u^{1+\epsilon}|\partial_u\snabla^s\phi|^2(u,V_k,(\theta_*)_k,(\varphi_*)_k)\,d\mu_{\s^2}du\right|\\
\leq \:& C \sum_{0\leq s_1+s_2\leq 2}\int_{\uline{H}_{v(V_k)}}J^{N_{2,q}}[\partial_{\theta_*}^{s_1}\Phi^{s_2}\phi]\cdot \underline{L}.
\end{split}
\end{equation*}

Similarly, we find that for $q>1$, in $(\widetilde{u},\widetilde{V},\widetilde{\theta}_*,\widetilde{\varphi}_*)$ coordinates:
\begin{equation*}
\begin{split}
|\phi&(\widetilde{u}_k,\widetilde{V}_l,(\widetilde{\theta}_*)_k,(\widetilde{\varphi}_*)_k)-\phi(\widetilde{u}_k,\widetilde{V}_k,(\widetilde{\theta}_*)_k,(\widetilde{\varphi}_*)_k)|^2\\
\leq \:& C\left|(-\widetilde{V}_l)^{q-1}-(-\widetilde{V}_k)^{q-1}\right|\sum_{|s|\leq 2}\left|\int_{\widetilde{V}_k}^{\widetilde{V}_l}\int_{\s^2}(-\widetilde{V})^{2-q}|\partial_{\widetilde{V}}\snabla^s\phi|^2(\widetilde{u},\widetilde{V},(\widetilde{\theta}_*)_k,(\widetilde{\varphi}_*)_k)\,d\mu_{\s^2}d\widetilde{V}\right|\\
\leq \:& C\left|(-\widetilde{V}_l)^{q-1}-(-\widetilde{V}_k)^{q-1}\right|\sum_{|s|\leq 2}\left|\int_{\widetilde{V}(\widetilde{V}_k)}^{\widetilde{V}(\widetilde{V}_l)}\int_{\s^2}\widetilde{V}^q|\partial_{\widetilde{V}}\snabla^s\phi|^2(\widetilde{u},\widetilde{V},(\widetilde{\theta}_*)_k,(\widetilde{\varphi}_*)_k)\,d\mu_{\s^2}d\widetilde{V}\right|\\
\leq \:& C \left|(-\widetilde{V}_l)^{q-1}-(-\widetilde{V}_k)^{q-1}\right|\sum_{0\leq s_1+s_2\leq 2}\int_{H_{u_k}}J^{N_{2,q}}[\partial_{\theta_*}^{s_1}\Phi^{s_2}\phi]\cdot L,
\end{split}
\end{equation*}
where we used that $(-\widetilde{V})^{2-q}\sim \widetilde{V}^{q-2}$ and $|\partial_{\widetilde{V}}\snabla^s\phi|^2d\widetilde{V}\sim  \widetilde{V}^2|\partial_{\widetilde{V}}\snabla^s\phi|^2d\widetilde{V}$.

Finally, we can estimate by Cauchy--Schwarz on $\s^2$,
\begin{equation*}
\begin{split}
|\phi&(u_k,v_l,(\theta_*)_l,(\varphi_*)_k)-\phi(u_k,v_k,(\theta_*)_k,(\varphi_*)_k)|^2+|\phi(u_k,v_l,(\theta_*)_k,(\varphi_*)_l)-\phi(u_k,v_k,(\theta_*)_k,(\varphi_*)_k)|^2\\
\leq \:& C\int_{\s^2}|\snabla\phi|^2(u_k,v_k,\theta_*,\varphi_*)\leq C \sum_{s_1+s_2=1}\int_{\uline{H}_{v(V_k)}}J^{N_{2,q}}[\partial_{\theta_*}^{s_1}\Phi^{s_2}\phi]\cdot \underline{L},
\end{split}
\end{equation*}
where we need $q>0$.

By the above estimates it follows that $\phi(x_k)$ must also be a Cauchy sequence if the energies on the right-hand sides are finite.

Finally, as in Proposition \ref{pointwiseboundkn}, we can estimate the energies on the right-hand sides of the above estimates by the initial energy $E_{\Gamma;q}[\phi]$.
\end{proof}

We have now proved Theorem \ref{thm:C0extension}.

\subsection{Decay of $L\phi$}
\label{sec:decayLphi}
Consider the function $\mathcal{\phi_{\mathcal{H}^+}}:\mathcal{M}\cap {{\mathcal{D}_{u_0,v_0}}}\to \R$ defined by
\begin{equation*}
\phi_{\mathcal{H}^+}(u,v,\theta_*,\varphi_*):=\phi(-\infty,\theta_*,\varphi_*).
\end{equation*}
In particular, $\underline{L}\phi_{\mathcal{H}^+}=0$.

We consider $\psi:=\phi-\phi_{\mathcal{H}^+}$. We can improve the pointwise decay in $\psi$ with respect to $|u|$ and use the wave equation (\ref{eq:waveqkerr}) to obtain decay of $|L\phi|$ in $v$. Moreover, we can obtain boundedness of
\begin{equation*}
\int_{H_u}v^2(L\phi)^2+\int_{\underline{H}_v}v^2\Omega^2|\snabla \phi|^2.
\end{equation*}
\begin{proposition}
\label{improvedpointwisedecaykn}
Denote
\begin{equation*}
D=||\partial_U\phi||^2_{L^{\infty}\left(\uline{H}_{v_0}\right)}+||\snabla \phi||^2_{L^{\infty}\left(\uline{H}_{v_0}\right)}.
\end{equation*}
Let $0<p<2$, $0<q<2$ and $0\leq s \leq 1$ if $|a|<a_c$. Let $p=2$, $0<q\leq 2$ and $0 \leq s \leq 1$ if $\phi$ is axisymmetric. Then, for every $\epsilon>0$, there exists a constant $C=C(M,u_0,v_0,p,q,s,\epsilon)>0$, such that for all $H_u$ and $\uline{H}_v$ in ${{\mathcal{D}_{u_0,v_0}}}$,
\begin{equation*}
\begin{split}
\int_{H_u}& J^{N_{p,q}}[\psi]\cdot L+\int_{\uline{H}_v\cap\{u'\leq u\}} J^{N_{p,q}}[\psi]\cdot \underline{L}\\
\leq \:& C|u|^{-s} \left[\int_{\mathcal{H}^+\cap\{v\geq v_1\}} v^{s+\epsilon}\left[(L\phi)^2+|\snabla\phi|^2+|\snabla^2\phi|^2\right]+ D\right]\\
:=\: &C|u|^{-s}\widetilde{E}_{\epsilon,s}[\phi].
\end{split}
\end{equation*}
\end{proposition}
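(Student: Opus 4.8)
The plan is to exploit the fact that $\psi = \phi - \phi_{\mathcal{H}^+}$ vanishes on $\mathcal{H}^+$ in order to gain a power of $|u|$ in the energy estimate, and then feed this into the machinery of Proposition~\ref{energyestsmallakn} (or Proposition~\ref{prop:mainenergyestimate} in the axisymmetric case). First I would note that $\psi$ satisfies an inhomogeneous wave equation: since $\square_g\phi = 0$, we have $\square_g\psi = -\square_g\phi_{\mathcal{H}^+}$, and the source term $\square_g\phi_{\mathcal{H}^+}$ involves only $\phi_{\mathcal{H}^+}$ and its angular derivatives (which are determined by data on $\mathcal{H}^+$), together with the connection coefficients estimated in Section~\ref{sec:estimatesmetricconn}; in particular, because $\underline{L}\phi_{\mathcal{H}^+}=0$, the expression $2\Omega^2\square_g\phi_{\mathcal{H}^+}$ is of the schematic form $(\Omega\tr\underline\chi)L\phi_{\mathcal{H}^+} + \text{(angular terms)}$, hence decays like $(v+|u|)^{-2}$ times data quantities on $\mathcal{H}^+$. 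This gives an $\mathcal{E}^{N_{p,q}}[\psi]$-type error that can be absorbed after a Cauchy--Schwarz split, just as in the earlier propositions, at the cost of an $\epsilon$-loss absorbed into the $v^{s+\epsilon}$ weight. The second ingredient is the improved $|u|$-decay: since $\psi|_{\mathcal{H}^+}=0$, the fundamental theorem of calculus along ingoing null generators gives $\psi^2(u,v,\cdot) \le |u|^{-1+p'}\int_{-\infty}^u |u'|^{p'}(\underline{L}\psi)^2\,du'$ for any $p'$, and after integrating over spheres and using Proposition~\ref{energyestsmallakn} (applied to $\psi$, whose energy fluxes are controlled by those of $\phi$ up to the source error) one obtains $\int_{S^2_{u,v}}\psi^2 \lesssim |u|^{1-p'}\widetilde E$.

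The core of the argument is a Grönwall iteration à la Lemma~\ref{lm:gronwall}, run with the $|u|$-weighted energy. I would set, with $p,q$ as in the statement and a parameter $s\in[0,1]$,
\[
f(u,v) = |u|^s\!\!\int_{H_u\cap\{v'\le v\}} J^{N_{p,q}}[\psi]\cdot L, \qquad g(u,v) = |u|^s\!\!\int_{\underline H_v\cap\{u'\le u\}} J^{N_{p,q}}[\psi]\cdot \underline L,
\]
or rather run the estimate on the unweighted energies and keep track of the $|u|^s$ factor by hand: the point is that the bulk error terms coming from $K^{N_{p,q}}_{\mathrm{null}}$, $K^{N_{p,q}}_{\mathrm{angular}}$, $K^{N_{p,q}}_{\mathrm{mixed}}$ (in the non-axisymmetric case, where one uses the $Y_p$ multiplier near the horizons as in Proposition~\ref{prop:finitevolumeenergyestimates} and Proposition~\ref{prop:energyestimatehawking}) are all multiplied by integrable functions $h(u), k(v)$ — exactly as established in Sections~\ref{sec:energyestimatesaxisymm} and \ref{sec:energyestimatesslowlyrot} — so that the extra factor $|u|^{-s}$ is compatible with Grönwall provided $s\le 1$ (the constraint $s\le 1$ is forced because the flux through $\mathcal H^+$ carries weight $v^{s+\epsilon}$, matching the $|u|^{-s}$ gained from $\psi$-vanishing, and one cannot gain more than one power from a single integration in $u$ against an $L^2$ norm of $\underline L\psi$). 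The source error $\mathcal{E}^{N_{p,q}}[\psi]$ contributes a term bounded by $D$ plus the weighted $\mathcal H^+$-fluxes of $L\phi$, $\snabla\phi$, $\snabla^2\phi$, which is why those precise quantities appear in $\widetilde E_{\epsilon,s}[\phi]$; the $\snabla^2\phi$ term enters through the angular part of $\square_g\phi_{\mathcal H^+}$ and the elliptic estimate \eqref{est:ellipticspherev2}.

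After the Grönwall step closes, one reads off the claimed bound by restricting to $\{v\ge v_1\}$ (the region $\{v\le v_1\}$ being compact, handled as in the earlier propositions with no weight gain needed) and noting that the initial flux through $\underline H_{v_0}$ of $\psi$ is controlled by $D$ — indeed $\psi$ and its first derivatives on $\underline H_{v_0}$ differ from those of $\phi$ by $\phi_{\mathcal H^+}$ and its angular derivatives, all of which are $L^\infty(\underline H_{v_0})$-controlled by $D$ up to harmless geometric factors. The main obstacle I expect is bookkeeping the source term $\square_g\psi = -\square_g\phi_{\mathcal H^+}$ carefully enough: one must verify that every contribution to $2\Omega^2\square_g\phi_{\mathcal H^+}$ either (i) decays fast enough in $v+|u|$ to be absorbed into the integrable $h(u),k(v)$ of the Grönwall scheme after a weighted Cauchy--Schwarz, or (ii) is a pure data term on $\mathcal H^+$ bounded by $D$ — and in the non-axisymmetric case this interacts with the split into the finite-volume regions $\uline{\mathcal A}, \mathcal A$ and the $Y_p$-region $\mathcal B$, so the source error has to be re-examined in each region. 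The axisymmetric case is strictly easier since $K^{N_{p,q}}_{\mathrm{mixed}}$ vanishes and one stays with $N_{2,q}$ throughout, giving $p=2$.
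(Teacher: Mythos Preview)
Your first and third paragraphs capture the paper's approach correctly: $\psi$ satisfies an inhomogeneous equation with source $-\square_g\phi_{\mathcal{H}^+}$, that source has the $(v+|u|)^{-2}$ decay you describe, and one runs Gr\"onwall on the $N_{p,q}$ (or $Y_p$) energies exactly as in Sections~\ref{sec:energyestimatesaxisymm}--\ref{sec:energyestimatesslowlyrot}, with the extra bulk term $\mathcal{E}^{N_{p,q}}[\psi]$. But your ``second ingredient'' is a misstep. The pointwise bound $\int_{S^2_{u,v}}\psi^2\lesssim |u|^{1-p'}\widetilde E$ is the content of Proposition~\ref{cor:improvedpointbound}, which is proved \emph{after} and \emph{using} the present proposition; invoking it here is circular. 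More importantly, it is simply not needed: the $|u|^{-s}$ factor in the conclusion does not come from any pointwise decay of $\psi$, but arises purely from the structure of the source error.

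The actual mechanism is this. After Cauchy--Schwarz, $2\Omega^2|\mathcal{E}^{N_{p,q}}[\psi]|$ splits into Gr\"onwall-absorbable pieces $v^{-1-\eta}|u|^p(\underline{L}\psi)^2+|u|^{-1-\eta}v^q(L\psi)^2$ and a residual data piece
\[
(v+|u|)^{-4}\bigl(|u|^p v^{1+\eta}+v^q|u|^{1+\eta}\bigr)\bigl[(L\phi_{\mathcal{H}^+})^2+|\snabla\phi_{\mathcal{H}^+}|^2+|\snabla^2\phi_{\mathcal{H}^+}|^2\bigr].
\]
The key algebraic observation --- and the true source of the constraint $s\le 1$ --- is that this weight is dominated by $|u|^{-1-s}\bigl(v^{p-2+s+\eta}+v^{q-2+s+\eta}\bigr)$; integrating $|u'|^{-1-s}$ over $-\infty<u'\le u$ yields the factor $|u|^{-s}$, while the $v$-integral reduces to a flux along $\mathcal{H}^+$ with weight at most $v^{s+\epsilon}$. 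The vanishing of $\psi$ on $\mathcal{H}^+$ enters only to kill the $\mathcal{H}^+$-flux of $\psi$ itself; the factor $|u|^{-s}$ on $D$ comes separately from $\int_{-\infty}^u|u'|^{-2}\,du'=|u|^{-1}\le|u|^{-s}$ when bounding the flux through $\underline{H}_{v_0}\cap\{u'\le u\}$. Finally, no elliptic estimate is involved: the $|\snabla^2\phi|$ term appears directly because $2\Omega^2\square_g\phi_{\mathcal{H}^+}$ contains $2\Omega^2\slashed{\Delta}\phi_{\mathcal{H}^+}$, and $|\slashed{\Delta}\phi_{\mathcal{H}^+}|\lesssim|\snabla^2\phi_{\mathcal{H}^+}|+|\snabla\phi_{\mathcal{H}^+}|$ pointwise.
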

\begin{proof}
We have that
\begin{equation*}
2\Omega^2\square_g\psi=-2\Omega^2\square_g\phi|_{\mathcal{H}^+}=-\Omega \tr \underline{\chi} L\phi_{\mathcal{H}^+}+2\snabla\Omega^2\cdot \snabla\phi_{\mathcal{H}^+}+2\Omega^2\slashed{\Delta}\phi_{\mathcal{H}^+}.
\end{equation*}
Consequently, we can estimate,
\begin{equation*}
2\Omega^2|\square_g\psi|\leq C(v+|u|)^{-2}\left(|L\phi_{\mathcal{H}^+}|+|\snabla\phi_{\mathcal{H}^+}|+|\snabla^2\phi_{\mathcal{H}^+}|\right).
\end{equation*}
By applying Stokes' theorem in ${{\mathcal{D}_{u_0,v_0}}}$ we obtain the following error term:
\begin{equation*}
\left|\int_{{{\mathcal{D}_{u_0,v_0}}}}\mathcal{E}^{N_{p,q}}[\psi]\right|\leq \int \int \int_{S^2_{u,v}}(v+|u|)^{-2}(|u|^p|\underline{L}\psi|+v^q|L\psi|)\left(|L\phi_{\mathcal{H}^+}|+|\snabla\phi_{\mathcal{H}^+}|+|\snabla^2\phi_{\mathcal{H}^+}|\right).
\end{equation*}
By Cauchy--Schwarz, we can estimate for $\eta>0$
\begin{equation*}
\begin{split}
&(v+|u|)^{-2}(|u|^p|\underline{L}\psi|+v^q|L\psi|)\left(|L\phi_{\mathcal{H}^+}|+|\snabla\phi_{\mathcal{H}^+}|+|\snabla^2\phi_{\mathcal{H}^+}|\right)\lesssim v^{-1-\eta}|u|^p(\underline{L}\psi)^2\\
&+|u|^{-1-\eta'}v^q(L\psi)^2+(v+|u|)^{-4}(|u|^p{v}^{1+\eta}+v^q|u|^{1+\eta})\left[(L\phi_{\mathcal{H}^+})^2+|\snabla\phi_{\mathcal{H}^+}|^2+|\snabla^2\phi_{\mathcal{H}^+}|^2\right].
\end{split}
\end{equation*}
We further estimate for $0\leq s\leq 1$,
\begin{equation*}
\begin{split}
&(v+|u|)^{-4}(|u|^p{v}^{1+\eta}+v^q|u|^{1+\eta})\left[(L\phi_{\mathcal{H}^+})^2+|\snabla\phi_{\mathcal{H}^+}|^2+|\snabla^2\phi_{\mathcal{H}^+}|^2\right]\\
&\lesssim |u|^{-1-s} (v^{p-2+s+\eta}+v^{q-2+s+\eta})\left[(L\phi_{\mathcal{H}^+})^2+|\snabla\phi_{\mathcal{H}^+}|^2+|\snabla^2\phi_{\mathcal{H}^+}|^2\right].
\end{split}
\end{equation*}
Hence,
\begin{equation*}
\begin{split}
\int_{{{\mathcal{D}_{u_0,v_0}}}}&|u|^{-1-s} (v^{p-2+s+\eta}+v^{q-2+s+\eta})\left[(L\phi_{\mathcal{H}^+})^2+|\snabla\phi_{\mathcal{H}^+}|^2+|\snabla^2\phi_{\mathcal{H}^+}|^2\right]\\
\leq \:& |u|^{-s}\int_{\mathcal{H}^+\cap \{v'\geq v_0\}}(v^{p-2+s+\eta}+v^{q-2+s+\eta})\left((L\phi)^2+|\snabla\phi|^2+|\snabla^2\phi|^2\right),
\end{split}
\end{equation*}
where we used that
\begin{equation*}
(L\phi_{\mathcal{H}^+})^2+|\snabla\phi_{\mathcal{H}^+}|^2+|\snabla^2\phi_{\mathcal{H}^+}|^2 \sim (L\phi)|_{\mathcal{H}^+}^2+|\snabla\phi|^2|_{\mathcal{H}^+}+|\snabla^2\phi|^2|_{\mathcal{H}^+}.
\end{equation*}
The remaining terms in $\mathcal{E}^{N_{p,q}}[\psi]$ and the terms in $K^{N_{p,q}}$ can be estimated as in Proposition \ref{prop:mainenergyestimate} and the propositions in Section~ \ref{sec:energyestimatesslowlyrot}.
\end{proof}

\begin{proposition}
\label{improvedpointwisedecayknc2}
Let $k\in \N_0$. Denote
\begin{align*}
D_{2k}:=&\:\sum_{j_1+j_2+2j_3+j_4\leq 2k}||\partial_UL^{j_1}\uline{L}^{j_2}Q^{j_3}\Phi^{j_4}\phi||^2_{L^{\infty}\left(\uline{H}_{v_0}\right)}+||\snabla L^{j_1}\uline{L}^{j_2}Q^{j_3}\Phi^{j_4}\phi||^2_{L^{\infty}\left(\uline{H}_{v_0}\right)}\\
&+\sum_{j_1+2j_2\leq 2k}||\partial_U\Phi^{j_1+1}Q^{j_2}\phi||^2_{L^{\infty}\left(\uline{H}_{v_0}\right)}+||\snabla \Phi^{j_1+1}Q^{j_2}\phi||^2_{L^{\infty}\left(\uline{H}_{v_0}\right)},\\
D_{2k+1}:=&\:\sum_{j_1+j_2+2j_3+j_4\leq 2k+1}||\partial_UL^{j_1}\uline{L}^{j_2}Q^{j_3}\Phi^{j_4}\phi||^2_{L^{\infty}\left(\uline{H}_{v_0}\right)}+||\snabla L^{j_1}\uline{L}^{j_2}Q^{j_3}\Phi^{j_4}\phi||^2_{L^{\infty}\left(\uline{H}_{v_0}\right)}\\
&+\sum_{j_1+2j_2\leq 2k+1}||\partial_U\Phi^{j_1+1}Q^{j_2}\phi||^2_{L^{\infty}\left(\uline{H}_{v_0}\right)}+||\snabla \Phi^{j_1+1}Q^{j_2}\phi||^2_{L^{\infty}\left(\uline{H}_{v_0}\right)}\\
&+\sum_{j_1+2j_2\leq 2k}||\partial_U\Phi^{j_1}Q^{j_2+1}\phi||^2_{L^{\infty}\left(\uline{H}_{v_0}\right)}+||\snabla \Phi^{j_1}Q^{j_2+1}\phi||^2_{L^{\infty}\left(\uline{H}_{v_0}\right)}.
\end{align*}
Let $0<p<2$, $0<q\leq 2$ and $0\leq s\leq 1$ if $|a|<a_c$. Let $p=2$, $0<q\leq 2$ and $0\leq s\leq 1$ if $\phi$ is axisymmetric.

Let $n\in \N_0$. Then, for every $\epsilon>0$, there exists a constant $C=C(M,a,n,u_0,v_0,p,q,s,\epsilon)>0$, such that
\begin{equation*}
\begin{split}
\sum_{j_1+j_2=n}\int_{H_u}& J^{N_{p,q}}[L^{j_1}\uline{L}^{j_2}\psi]\cdot L+\int_{\uline{H}_v\cap\{u'\leq u\}} J^{N_{p,q}}[L^{j_1}\uline{L}^{j_2}\psi]\cdot \underline{L}\\
\leq \:& C|u|^{-s} \sum_{j_1+2j_2+j_3\leq n}\Bigg[\int_{\mathcal{H}^+\cap\{v\geq v_1\}} v^{s+\epsilon}\Big[(L^{j_1+1}Q^{j_2}\Phi^{j_3}\phi)^2+|\snabla L^{j_1}Q^{j_2}\Phi^{j_3}\phi|^2\\
&+|\snabla^2 L^{j_1}Q^{j_2}\Phi^{j_3}\phi|^2\Big]+ {D}_n\Bigg]\\
&=:C|u|^{-s}\widetilde{E}_{s,\epsilon;n}[\phi].
\end{split}
\end{equation*}
\end{proposition}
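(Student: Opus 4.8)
The plan is to commute the equation $\square_g\psi = -\square_g\phi|_{\mathcal{H}^+}$ with strings $L^{j_1}\uline{L}^{j_2}$ exactly as in Proposition \ref{commutedenergyestkn}, but carrying the inhomogeneous right-hand side through the commutation, and then run the same $N_{p,q}$/$Y_p$ energy estimate as in Proposition \ref{improvedpointwisedecaykn}, this time for the commuted quantities. The point is that $\uline{L}\phi_{\mathcal{H}^+} = 0$, so each $\uline{L}$-derivative of the source vanishes identically, while each $L$ or $\Phi$ derivative of $\phi_{\mathcal{H}^+}$ stays a function of $(\theta_*,\varphi_*)$ that is controlled by the corresponding derivative of $\phi$ restricted to $\mathcal{H}^+$. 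First I would apply Lemma \ref{lm:commwithV} (or directly Lemma \ref{lm:waveoperatorcomm}) to $\square_g(L^{j_1}\uline{L}^{j_2}\psi)$: the homogeneous commutator error terms are estimated precisely as in Proposition \ref{commutedenergyestkn}, using the elliptic estimate (\ref{est:ellipticspherev2}) to trade $|\snabla^2 L^{j_1}\uline{L}^{j_2}\phi|$ for $Q$, $\Phi^2$, $T^2$ derivatives and lower-order $\snabla$ and null derivatives, and using that $Q$, $\Phi$, $T$ commute with $\square_g$; the new feature is the inhomogeneity $L^{j_1}\uline{L}^{j_2}(\square_g\phi|_{\mathcal{H}^+})$, which by the formula $2\Omega^2\square_g\phi_{\mathcal{H}^+} = \Omega\tr\uline{\chi}\,L\phi_{\mathcal{H}^+} - 2\snabla\Omega^2\cdot\snabla\phi_{\mathcal{H}^+} - 2\Omega^2\sD\phi_{\mathcal{H}^+}$ together with the $(v+|u|)^{-2}$ decay of $\Omega^2\Omega\tr\uline{\chi}$, $\Omega^2\snabla\Omega^2$ and the bounds on $\slashed g$ derivatives from Theorem \ref{thm:estmetric}, satisfies a pointwise bound of the shape $(v+|u|)^{-2}\sum_{j_1+2j_2+j_3\le n}\big(|L^{j_1+1}Q^{j_2}\Phi^{j_3}\phi|_{\mathcal{H}^+}| + |\snabla L^{j_1}Q^{j_2}\Phi^{j_3}\phi|_{\mathcal{H}^+}| + |\snabla^2 L^{j_1}Q^{j_2}\Phi^{j_3}\phi|_{\mathcal{H}^+}|\big)$ — here one replaces the lost $\uline{L}$-commutations by $L$, $Q$, $\Phi$ commutations on the horizon data, which is exactly the index bookkeeping recorded in the statement.

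Next I would feed this into the divergence theorem for $J^{N_{p,q}}[L^{j_1}\uline{L}^{j_2}\psi]$ over ${\mathcal{D}_{u_0,v_0}}$ (or, in the $|a|<a_c$ non-axisymmetric case, over the regions $\uline{\mathcal{A}}$, $\mathcal{A}$, $\mathcal{B}$ with the $Y_p$ multiplier on $\mathcal{B}$, exactly as in Propositions \ref{prop:finitevolumeenergyestimates}--\ref{energyestsmallakn}). The contribution of the source is
\begin{equation*}
\left|\int_{{\mathcal{D}_{u_0,v_0}}}\mathcal{E}^{N_{p,q}}[L^{j_1}\uline{L}^{j_2}\psi]\right| \lesssim \int\!\!\int\!\!\int_{S^2_{u,v}}(v+|u|)^{-2}\big(|u|^p|\uline{L}L^{j_1}\uline{L}^{j_2}\psi| + v^q|LL^{j_1}\uline{L}^{j_2}\psi|\big)\,S_n,
\end{equation*}
where $S_n$ abbreviates the horizon source term above. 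Applying a weighted Cauchy--Schwarz with parameter $\eta>0$ sends the first factor into $v^{-1-\eta}|u|^p(\uline{L}\cdots\psi)^2 + |u|^{-1-\eta}v^q(L\cdots\psi)^2$ (absorbed by Grönwall, Lemma \ref{lm:gronwall}) and the second into $(v+|u|)^{-4}(|u|^p v^{1+\eta} + v^q|u|^{1+\eta})S_n^2$; since $p,q\le 2$ and $0\le s\le 1$, one bounds $(v+|u|)^{-4}(|u|^p v^{1+\eta}+v^q|u|^{1+\eta}) \lesssim |u|^{-1-s}(v^{p-2+s+\eta}+v^{q-2+s+\eta})$, and the $u$-integral of $|u|^{-1-s}$ against the flux gives the $|u|^{-s}$ prefactor, while the $v$-integral of $v^{p-2+s+\eta}+v^{q-2+s+\eta}$ converges for $s<2-p$, $s<2-q$ and $\eta$ small, reducing the spacetime integral to $\sum_{j_1+2j_2+j_3\le n}\int_{\mathcal{H}^+\cap\{v\ge v_1\}}v^{s+\epsilon}(\cdots)$ after using $|\snabla^{\le 2}L^{j_1}Q^{j_2}\Phi^{j_3}\phi_{\mathcal{H}^+}|^2 \sim |\snabla^{\le 2}L^{j_1}Q^{j_2}\Phi^{j_3}\phi|_{\mathcal{H}^+}^2$. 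The homogeneous $K^{N_{p,q}}$ and commutator terms are handled verbatim as in Corollary \ref{cor:commeestimate}, their initial data being absorbed into $D_n$ (the $L^\infty$ norms on $\uline{H}_{v_0}$ of the commuted data — note the $\uline{H}_{v_0}$ data of $\psi$ equal those of $\phi$ minus a function of the angular variables only, hence are controlled by $D_n$) and into the horizon integrals, which for $q\le 2$ are dominated by the $v^{s+\epsilon}$-weighted ones appearing in the statement.

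The main obstacle I expect is the interplay between the loss of $\uline{L}$-derivatives in the commutation (which is genuine, as flagged in the remark after Proposition \ref{commutedenergyestkn}, and is the reason the source in $\widetilde E_{s,\epsilon;n}$ involves $L^{j_1}Q^{j_2}\Phi^{j_3}$ strings with $j_1+2j_2+j_3\le n$ rather than plain $L^{j_1}\uline{L}^{j_2}$ strings) and the requirement that the $v$-weight on the horizon be only $v^{s+\epsilon}$ rather than something larger — one must check that the combination of the $(v+|u|)^{-2}$ decay of the source, the $(v+|u|)^{-6}$ decay multiplying the second-order angular terms after the elliptic estimate, and the weights $|u|^p v^{1+\eta}$, $v^q|u|^{1+\eta}$ all conspire so that no weight worse than $v^{-2+s}$ in $v$ (times $|u|^{-1-s}$ in $u$) ever appears, which is exactly what forces $s\le 1$ together with $p<2$ or the axisymmetric $p=2$ case. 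A secondary technical point is that the elliptic estimate (\ref{est:ellipticspherev2}) is only valid for $|u|\ge|u_1|$ or $v\ge v_1$, so the argument must be split into the region $\{v\ge v_1\}\cup\{|u|\ge|u_1|\}$, where Lemma \ref{lm:waveoperatorcomm} and the above apply, and the compact region $\{|u|\le|u_1|,\,v\le v_1\}$, where one instead commutes with $L$, $\uline{L}$, $\partial_{\theta_*}$, $\Phi$ as in Proposition \ref{prop:commfinitevolregion} and all weights are harmless; the two are then glued exactly as in the proof of Proposition \ref{energyestsmallakn}.
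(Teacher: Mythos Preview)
Your proposal is correct and follows essentially the same approach as the paper: apply the commutation of Lemma \ref{lm:commwithV} to $L^{j_1}\uline{L}^{j_2}\psi$, handle the homogeneous commutator errors exactly as in Proposition \ref{commutedenergyestkn}, and treat the extra inhomogeneous term $L^{j_1}\uline{L}^{j_2}(2\Omega^2\square_g\psi)$ via the source estimate and weighted Cauchy--Schwarz of Proposition \ref{improvedpointwisedecaykn}. One small slip: the phrase ``the $v$-integral of $v^{p-2+s+\eta}+v^{q-2+s+\eta}$ converges'' is not quite right and not what is needed --- that integral need not converge; rather, since $p,q\le 2$ the weight $v^{\max\{p,q\}-2+s+\eta}$ is bounded by $v^{s+\epsilon}$ for $\eta\le\epsilon$, which is exactly what produces the $\int_{\mathcal{H}^+}v^{s+\epsilon}(\cdots)$ term you correctly write next.
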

\begin{proof}
We have that
\begin{equation*}
\begin{split}
2\Omega^2\square_g(L^{j_1}\uline{L}^{j_2}\psi)=&\:[L^{j_1}\uline{L}^{j_2},\underline{L}]L\psi+\underline{L}([L^{j_1}\uline{L}^{j_2},L]\psi)+[L^{j_1}\uline{L}^{j_2},L]\underline{L}\psi+L([L^{j_1}\uline{L}^{j_2},\underline{L}]\psi)\\
&\sum_{l=1}^{j_1}\sum_{k=1}^{j_2} \binom{j_1}{l}\binom{j_2}{k} L^l\uline{L}^k(\Omega\tr \underline{\chi})L^{j_1-l}\uline{L}^{j_2-k}L\psi+\Omega \tr {\underline{\chi}}[L^{j_1}\uline{L}^{j_2},L]\psi\\
&+ \sum_{l=1}^{j_1}\sum_{k=1}^{j_2} \binom{j_1}{l}\binom{j_2}{k} L^l\uline{L}^k(\Omega\tr {\chi})L^{j_1-l}\uline{L}^{n-k}\underline{L}\psi+\Omega \tr {\chi}[L^{j_1}\uline{L}^{j_2},\underline{L}]\psi\\
&-2\sum_{l=1}^{j_1}\sum_{k=1}^{j_2}  \binom{j_1}{l}\binom{j_2}{k} L^l\uline{L}^k(\slashed{g}^{AB}\partial_A\Omega^2)L^{j_1-l}\uline{L}^{j_2-k}\partial_B\psi-2\slashed{g}^{AB}\partial_A\Omega^2[L^{j_1}\uline{L}^{j_2},\partial_B]\psi\\
&-2\sum_{l=1}^{j_1}\sum_{k=1}^{j_2}  \binom{j_1}{l}\binom{j_2}{k} L^l\uline{L}^k(\Omega^2)L^{j_1-l}\uline{L}^{j_2-k}\slashed{\Delta}\psi-2\Omega^2[L^{j_1}\uline{L}^{j_2},\slashed{\Delta}]\psi\\
&-L^{j_1}\uline{L}^{j_2}(2\Omega^2\square_g\psi).
\end{split}
\end{equation*}
We can repeat the proof of Proposition \ref{commutedenergyestkn}, but we have to additionally estimate the contribution of the final term in the above expression for $2\Omega^2\square_g(L^{j_1}\uline{L}^{j_2}\psi)$. We can estimate
\begin{equation*}
|L^{j_1}\uline{L}^{j_2}(2\Omega^2\square_g\psi)|\leq \sum_{j_1\leq n}|L^{j_1+1}\phi_{\mathcal{H}^+}|+|\snabla L^{j_1}\phi_{\mathcal{H}^+}|+|\snabla^2 L^{j_1}\phi_{\mathcal{H}^+}|.
\end{equation*}
We can therefore deal with the corresponding term in $\mathcal{E}^{N_{p,q}}[L^{j_1}\uline{L}^{j_2}\psi]$ in the same way as in the proof of Proposition \ref{improvedpointwisedecaykn}. 
\end{proof}

\begin{proposition}
\label{cor:improvedpointbound}
Let $s\leq 1$ and $0\leq p<2$ for $0\leq |a|<a_c$. For axisymmetric $\phi$ we let $0\leq |a|\leq M$ and we can also take $p=2$. Then there exists a constant $C=C(M,v_0,u_0,p,s)>0$ such that,
\begin{equation*}
\begin{split}
\sum_{j_1+j_2=n}&\int_{S^2_{u,v}}(L^{j_1}\uline{L}^{j_2}\psi)^2+|\snabla L^{j_1}\uline{L}^{j_2}\psi|^2+|\snabla^2L^{j_1}\uline{L}^{j_2}\psi|^2\,d\mu_{\slashed{g}}\\
\leq \:& C|u|^{1-s-p}\Bigg[\widetilde{E}_{s,\epsilon; n+2}[\phi]+\sum_{\Gamma\in \{\Phi^2,T^2,Q\}}\widetilde{E}_{s,\epsilon;n+1}[\Gamma \phi]+\sum_{\Gamma \in \{\Phi^3,T^2\Phi,Q\Phi\}}\widetilde{E}_{s,\epsilon;n}[\Gamma \phi]\Bigg]
\end{split}
\end{equation*}
for $s\leq 1$ and $\epsilon>0$ arbitrarily small.
\end{proposition}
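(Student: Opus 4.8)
The plan is to mimic the proof of Proposition~\ref{pointwiseboundkn}, with $\phi$ replaced throughout by the difference function $\psi=\phi-\phi_{\mathcal H^+}$, and with the commuted energy estimates of Corollary~\ref{cor:commeestimate} replaced by the decaying estimates of Proposition~\ref{improvedpointwisedecayknc2}. The improvement is essentially bookkeeping in nature: the extra factor $|u|^{-s}$ in the fluxes supplied by Proposition~\ref{improvedpointwisedecayknc2} turns the weight $|u|^{1-p}$ of Proposition~\ref{pointwiseboundkn} into the weight $|u|^{1-s-p}$ asserted here. It suffices to take $p$ as close to $2$ as allowed (or $p=2$ in the axisymmetric case), so in particular $p>1$; the statement for smaller $p$ then follows a fortiori since the exponent $1-s-p$ is larger.

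First I would treat the term $\int_{S^2_{u,v}}(L^{j_1}\underline L^{j_2}\psi)^2\,d\mu_{\slashed g}$. In the chart $(U,v,\theta_*,\varphi_*)$ the vector field $\underline L=\partial_u$ is smooth and vanishes identically along $\mathcal H^+$ (because $\tfrac{dU}{du}\to 0$ there), while $\psi$ vanishes on $\mathcal H^+$ and the operators $L$, $\Phi$, $T$, $Q$ are tangential to $\mathcal H^+$; hence every quantity of the form $\Gamma L^{k}\underline L^{\ell}L^{j_1}\underline L^{j_2}\psi$, with $\Gamma$ a word in $\Phi,T,Q$, vanishes on $\mathcal H^+$. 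Integrating along the ingoing null generators of $\underline H_v$ and applying Cauchy--Schwarz with weight $|u'|^{-p}$ as in \eqref{est:fundthmcalcCS} therefore yields, with no boundary contribution,
\begin{equation*}
\int_{S^2_{u,v}}(L^{j_1}\underline L^{j_2}\psi)^2\,d\mu_{\slashed g}\le C|u|^{1-p}\int_{\underline H_v\cap\{u'\le u\}}J^{N_{p,q}}[L^{j_1}\underline L^{j_2}\psi]\cdot\underline L\le C|u|^{1-s-p}\widetilde E_{s,\epsilon;n}[\phi],
\end{equation*}
the last step by Proposition~\ref{improvedpointwisedecayknc2}. For the terms $\int_{S^2_{u,v}}|\snabla L^{j_1}\underline L^{j_2}\psi|^2$ and $\int_{S^2_{u,v}}|\snabla^2 L^{j_1}\underline L^{j_2}\psi|^2$ I would apply the elliptic estimate \eqref{est:ellipticspherev1} with $f=L^{j_1}\underline L^{j_2}\psi$, reducing them to $L^2(S^2_{u,v})$-norms of $\Gamma L^{k}\underline L^{\ell}L^{j_1}\underline L^{j_2}\psi$ with $\Gamma\in\{\mathrm{id},\Phi,\Phi^2,T^2,Q\}$ and $k,\ell\le 1$, of $L^2$, $\underline L^2$, $L\underline L$ applied to $L^{j_1}\underline L^{j_2}\psi$, and of the $(b^{\varphi_*})^2$-weighted terms involving $Q\Phi$, $T^2\Phi$, $\Phi^3$. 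I would then commute $\Phi$, $T$, $Q$ so that they act directly on $\phi$ — using that $\Phi$ and $T$ commute with $L$, $\underline L$, $\square_g$ and $Q$, and that $[Q,L]$, $[Q,\underline L]$ equal $\mathcal O((v+|u|)^{-2})$ times lower-order derivatives (Section~\ref{sec:estimatesmetricconn}, Lemma~\ref{lm:waveoperatorcomm}), the resulting second-order angular error terms being absorbed into the left-hand side where $v+|u|$ is large and handled by compactness (Proposition~\ref{prop:commfinitevolregion}) otherwise, exactly as in the proof of Proposition~\ref{commutedenergyestkn}. Since $\Phi$, $T$, $Q$ commute with $\square_g$ and with $\phi\mapsto\phi_{\mathcal H^+}$, each $\Gamma\psi$ is the difference function of the solution $\Gamma\phi$ of \eqref{eq:waveqkerr}; applying the first-step argument to each resulting quantity and invoking Proposition~\ref{improvedpointwisedecayknc2} for $\phi$ and for the solutions $\Gamma\phi$ then bounds every term by $C|u|^{1-s-p}$ times $\widetilde E_{s,\epsilon;n+2}[\phi]$, $\widetilde E_{s,\epsilon;n+1}[\Gamma\phi]$ with $\Gamma\in\{\Phi^2,T^2,Q\}$, or $\widetilde E_{s,\epsilon;n}[\Gamma\phi]$ with $\Gamma\in\{\Phi^3,T^2\Phi,Q\Phi\}$, according to the order of the quantity. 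Summing over the finitely many such terms gives the asserted estimate.

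The main obstacle is the derivative bookkeeping: one must verify that the derivatives generated by \eqref{est:ellipticspherev1} together with the commutations never exceed the budget recorded on the right-hand side, and that the weighted Cauchy--Schwarz splitting in the fundamental-theorem-of-calculus step (which trades the ingoing weight $|u'|^{p}$ against the weight $v^{q}\Omega^2$ appearing in $J^{N_{p,q}}[\cdot]\cdot\underline L$) closes within that budget. This is carried out exactly as in Proposition~\ref{pointwiseboundkn}; the only genuinely new points are the vanishing of the $\mathcal H^+$ boundary terms — which removes the $\int_{S^2_{-\infty,v}}$ contribution present in Proposition~\ref{pointwiseboundkn} — and the observation that the improved flux decay $|u|^{-s}$ of Proposition~\ref{improvedpointwisedecayknc2} is inherited by all the sphere-norm estimates unchanged. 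The substantive analytic work has already been done in Propositions~\ref{prop:ellipticsphere1}, \ref{improvedpointwisedecaykn} and \ref{improvedpointwisedecayknc2}.
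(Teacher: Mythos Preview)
Your proposal is correct and follows essentially the same approach as the paper's own proof, which simply states that one obtains the estimates for angular derivatives from the higher-order energy estimates in Proposition~\ref{improvedpointwisedecayknc2} in the same way as in Proposition~\ref{pointwiseboundkn}. You supply the details the paper omits: the vanishing of the $\mathcal{H}^+$ boundary terms (using that $\psi|_{\mathcal{H}^+}=0$ and that $\underline L$ degenerates there), the elliptic estimate \eqref{est:ellipticspherev1}, and the commutation bookkeeping, all of which are handled exactly as you indicate.
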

\begin{proof}
We obtain estimates for angular derivatives along $S^2_{u,v}$ from the higher-order energy estimates in Proposition \ref{improvedpointwisedecayknc2} in the same way as in Proposition \ref{pointwiseboundkn}.
\end{proof}

We can now obtain decay for $L\phi$.
\begin{proposition}
\label{prop:decayLphi}
Let $0\leq |a|<a_c$, $1<p<2$ and $0\leq s\leq 1$. For $\delta,\epsilon,q >0$ arbitrarily small, there exists a constant $C=C(a,M,v_0,u_0,p,q,\epsilon,\delta,s)>0$ such that,
\begin{equation}
\label{eq:improvedestLphi}
\begin{split}
\int_{S^2_{u,v}}(L\phi)^2(u,v,\theta_*,{\varphi_*})\,d\mu_{\slashed{g}}\leq \:& \int_{S^2_{-\infty,v}}(L\phi)^2\,d\mu_{\slashed{g}}\\
&+C(v+|u|)^{-4}|u|^{1-p}E_{q}[\phi]\\
&+Cv^{-4+(1-s)+(2-p)+\delta}\Bigg[\widetilde{E}_{s,\epsilon;2}[\phi]+\sum_{\Gamma\in \{\Phi^2,T^2,Q\}}\widetilde{E}_{s,\epsilon;1}[\Gamma \phi]\\
&+\sum_{\Gamma \in \{\Phi^3,T^2\Phi,Q\Phi\}}\widetilde{E}_{s,\epsilon}[\Gamma \phi]\Bigg]\\
&+C(v+|u|)^{-2}\int_{S^2_{-\infty,v}}|\snabla\phi_{\mathcal{H}^+}|^2+|\snabla^2\phi_{\mathcal{H}^+}|^2d\mu_{\slashed{g}}.
\end{split}
\end{equation}

Moreover, for axisymmetric $\phi$, we have a stronger estimate for all $0\leq |a|\leq M$,
\begin{equation*}
\begin{split}
\int_{S^2_{u,v}}(L\phi)^2(u,v,\theta_*,{\varphi_*})\,d\mu_{\slashed{g}}\leq \:& \sum_{|k|\leq 2}\int_{S^2_{-\infty,v}}(L\phi)^2\,d\mu_{\slashed{g}}+C(v+|u|)^{-4}|u|^{1-p}E_{q}[\phi]\\
&+Cv^{-4}\log\left(\frac{v+|u|}{|u|}\right)\Bigg[\widetilde{E}_{1,\epsilon;2}[\phi]+\sum_{\Gamma\in \{\Phi^2,T^2,Q\}}\widetilde{E}_{1,\epsilon;1}[\Gamma \phi]\\
&+\sum_{\Gamma \in \{\Phi^3,T^2\Phi,Q\Phi\}}\widetilde{E}_{1,\epsilon}[\Gamma \phi]\Bigg]\\
&+C(v+|u|)^{-2}\int_{S^2_{-\infty,v}}|\snabla\phi_{\mathcal{H}^+}|^2+|\snabla^2\phi_{\mathcal{H}^+}|^2d\mu_{\slashed{g}}.
\end{split}
\end{equation*}
\end{proposition}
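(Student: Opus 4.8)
The plan is to treat the wave equation as a transport equation for $L\phi$ along the ingoing null generators $\{v=\textnormal{const}\}$, integrate it backward from the event horizon $\mathcal{H}^+$ (where $u\to -\infty$), and estimate the resulting source term using the improved $|u|$-decay of $\psi=\phi-\phi_{\mathcal{H}^+}$ from Proposition~\ref{cor:improvedpointbound} together with the energy bounds of Section~\ref{sec:energyestimatesaxisymm}--\ref{sec:energyestimatesslowlyrot}. Concretely, I would start from the form of the wave equation used in Appendix~\ref{app:treq}, written as
\begin{equation*}
\underline{L}(\sqrt{\det\slashed g}\,L\phi)=\sqrt{\det\slashed g}\left(-\tfrac12\Omega\tr\underline\chi\, L\phi-\tfrac12\Omega\tr\chi\,\underline L\phi+\slashed g^{AB}\partial_A\Omega^2\partial_B\phi+\Omega^2\slashed\Delta\phi\right),
\end{equation*}
so that $L\phi(u,v)$ equals its value at $u=-\infty$ (which is $L\phi|_{\mathcal{H}^+}$, producing the boundary term $\int_{S^2_{-\infty,v}}(L\phi)^2$) plus an integral in $u'$ of the right-hand side, carrying the factor $(v+|u|)^{-2}\sim\Omega^2$ from $\Omega\tr\chi$, $\Omega\tr\underline\chi$, $\partial\Omega^2$ and $\Omega^2$ (Theorems~\ref{thm:estmetric},~\ref{thm:estconncoef}). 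After squaring and applying Cauchy--Schwarz over the $u'$-integral, the term involving $\underline L\phi=\underline L\psi$ is the dominant one: integrating it over $S^2_{u,v}$ gives roughly $(v+|u|)^{-2}$ times a weighted $L^2$-norm of $\underline L\psi$ along $\underline H_v$, and this is precisely where the decaying energy $|u|^{-s}\widetilde E_{s,\epsilon}[\phi]$ from Proposition~\ref{cor:improvedpointbound} enters, after also using the Sobolev/elliptic estimate (\ref{est:ellipticspherev2}) to convert $\|\snabla^2\phi\|$, $\|\snabla\phi\|$ controls into the $Q,\Phi,T,L,\underline L$-framework; commuting appropriately generates the $\Gamma\in\{\Phi^2,T^2,Q,\Phi^3,T^2\Phi,Q\Phi\}$ contributions to $\widetilde E$ and the $E_q[\phi]$ term.

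The key bookkeeping is the exponent count. The source carries $\Omega^2\sim(v+|u|)^{-2}$; squaring and integrating $\int_{-\infty}^u|u'|^{-p'}du'\sim|u|^{1-p'}$ against $\int|u'|^{p'}(\underline L\psi)^2$, and using that the weighted energy of $\underline L\psi$ along $\underline H_v$ is bounded by $C|u|^{-s}\widetilde E_{s,\epsilon}[\phi]$ with the same $|u|^{1-p}$ pointwise loss as in Proposition~\ref{cor:improvedpointbound}, produces a factor $(v+|u|)^{-4}\cdot v^{(1-s)+(2-p)+\delta}$ after distributing the $u$- and $v$-weights and allowing an arbitrarily small $\delta$-loss to absorb logarithms and to close a Gr\"onwall-type argument (Lemma~\ref{lm:gronwall}) for the lower-order terms carrying $\Omega\tr\chi\,L\phi$. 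Since $(v+|u|)^{-4}v^{(1-s)+(2-p)+\delta}\le Cv^{-4+(1-s)+(2-p)+\delta}$, this yields the stated bound. The remaining pieces are: the $E_q[\phi]$-term, which comes from the contribution of $\underline L\phi_{\mathcal{H}^+}=0$ being absent but the $L\phi|_{\mathcal{H}^+}$-related flux through $\underline H_v$ of $\phi$ itself appearing via Proposition~\ref{energyestsmallakn}; and the $(v+|u|)^{-2}\int_{S^2_{-\infty,v}}(|\snabla\phi_{\mathcal{H}^+}|^2+|\snabla^2\phi_{\mathcal{H}^+}|^2)$ term, which arises because $\phi_{\mathcal{H}^+}$ has no $\underline L$-decay and its angular derivatives enter the source undamped in $|u|$, so only the $(v+|u|)^{-2}$ prefactor from $\Omega^2$ can be used, and these angular derivatives of $\phi_{\mathcal{H}^+}$ are constant in $u$ hence equal their value on $S^2_{-\infty,v}$.

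For the axisymmetric improvement, I would repeat the same computation but now with $p=2$ allowed and, crucially, using the sharp (no-$\epsilon$) energy estimate of Proposition~\ref{prop:mainenergyestimate} together with $s=1$; the point is that for axisymmetric $\phi$ the problematic $K^{N_{p,q}}_{\textnormal{mixed}}$ error term vanishes, so one does not need the $\alpha,\beta$-parameters nor the $\delta$-loss. With $s=1$ and $p=2$ the factor $(1-s)+(2-p)=0$, and the $u'$-integral $\int_{-\infty}^u|u'|^{-1}\cdot|u'|\cdots du'$ against the $|u|^{-1}$-decaying energy produces a genuine logarithm $\log\big(\tfrac{v+|u|}{|u|}\big)$ rather than a power, giving the claimed $v^{-4}\log\big(\tfrac{v+|u|}{|u|}\big)$ rate. \textbf{The main obstacle} I anticipate is the careful distribution of $u$- and $v$-weights in the double Cauchy--Schwarz step so that every resulting $u'$-integral is integrable down to $u'=-\infty$ (forcing $p>1$ in the non-axisymmetric case) while the $v$-exponent lands exactly on $-4+(1-s)+(2-p)+\delta$ and not worse; this is where the interplay between the $(v+|u|)^{-2}$ weight from $\Omega^2$, the $|u|^{-s}$ gain from Proposition~\ref{cor:improvedpointbound}, and the $|u|^{1-p}$ Sobolev loss has to be tracked with care, and where commuting with $\{\Phi,Q,T,L,\underline L\}$ must be organized so that the elliptic estimate (\ref{est:ellipticspherev2}) can be applied without generating terms outside the listed $\widetilde E$-energies.
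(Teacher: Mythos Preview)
Your strategy is correct and essentially matches the paper's proof: treat the wave equation as a transport equation for $L\phi$ along the integral curves of $\underline L$, integrate from $\mathcal{H}^+$, split the angular source into its $\psi$- and $\phi_{\mathcal{H}^+}$-parts, and feed in the $|u|^{-s}$ decay from Proposition~\ref{cor:improvedpointbound} together with the basic energy $E_q[\phi]$; the exponent bookkeeping and the $s=1$ logarithmic improvement in the axisymmetric case are exactly as you describe.

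Two small refinements relative to the paper. First, the paper uses the transport equation for $(\det\slashed g)^{1/4}L\phi$ rather than $(\det\slashed g)^{1/2}L\phi$; with the $1/4$ power the $\Omega\tr\underline\chi\,L\phi$ term disappears entirely from the right-hand side (see equation~(\ref{eq:transpeq1})), so no Gr\"onwall step is needed and the $\delta$-loss comes purely from the weighted Cauchy--Schwarz in the $u'$-integral. Second, the $E_q[\phi]$-term arises directly from the $\underline L\phi$ part of the source (via $(v+|u|)^{-4}|u'|^{-\eta}\int_{\underline H_v}|u'|^{1+\eta}(\underline L\phi)^2$), not from an $L\phi|_{\mathcal{H}^+}$-flux; your description there is slightly off but inconsequential for the argument.
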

\begin{proof}
We can write the wave equation as a transport equation for $(\det\slashed{g})^{\frac{1}{4}}L\phi$,
\begin{equation*}
\underline{L}((\det\slashed{g})^{\frac{1}{4}}L\phi)=(\det\slashed{g})^{\frac{1}{4}}\left[-\Omega\tr {\chi}\underline{L}\phi+2\Omega^2\zeta^{\varphi_*}\partial_{\varphi_*}\phi+\snabla\Omega^2\cdot\snabla \phi+\Omega^2\slashed{\Delta}\phi\right],
\end{equation*}
see Appendix \ref{app:treq}. In particular, we can estimate
\begin{equation*}
|\underline{L}((\det\slashed{g})^{\frac{1}{4}}L\phi)|(v,u,\theta_*,\varphi_*)\leq C(v+|u|)^{-2}\left(|\underline{L}\phi|+|\snabla\phi|+|\snabla^2\phi|\right).
\end{equation*}
We can split
\begin{equation*}
\int_{S_{u,v}^2}(L\phi)^2\,d\mu_{\slashed{g}}\leq \int_{S_{-\infty,v}^2}(L\phi)^2\,d\mu_{\slashed{g}}+\int_{S_{u,v}^2}(L\psi)^2\,d\mu_{\slashed{g}}.
\end{equation*}
We now integrate along ingoing null geodesics for fixed $\theta_*$ and $\varphi_*$, and subsequently integrate in $\theta_*$ and $\varphi_*$ to obtain
\begin{equation*}
\begin{split}
\int_{S_{u,v}^2}(L\phi)^2\,d\mu_{\slashed{g}}\leq \:& \int_{S^2_{u,v}}\left(\int_{-\infty}^u|\underline{L}((\det\slashed{g})^{\frac{1}{4}}L\phi)|\,du'\right)^2\,d\mu_{\slashed{g}}\\
\leq \:& C\int_{S^2_{u,v}}\left(\int_{-\infty}^u (v+|u'|)^{-2}\left(|\underline{L}\phi|+|\snabla\phi|+|\snabla^2\phi|\right)\,du'\right)^2\,d\mu_{\slashed{g}}\\
\leq \:& C(v+|u|)^{-4}|u|^{-\eta}\int_{H_v\cap\{|u'|\leq |u|\}}|u'|^{1+\eta}(\underline{L}\phi)^2\\
&+C\int_{-\infty}^u |u|^{-s}(v+|u'|)^{-2}\,du'\\
&\quad\cdot \int_{-\infty}^u|u'|^s(v+|u'|)^{-2}\left(\int_{S^2_{u',v}}|\snabla\psi|^2+|\snabla^2\psi|^2d\mu_{\slashed{g}}\right)\,du'\\
&+C(v+|u|)^{-2}\int_{S^2_{-\infty,v}}|\snabla\phi_{\mathcal{H}^+}|^2+|\snabla^2\phi_{\mathcal{H}^+}|^2d\mu_{\slashed{g}}.
\end{split}
\end{equation*}
We now apply the results of Propositions \ref{improvedpointwisedecaykn} and \ref{cor:improvedpointbound} to arrive at
\begin{equation*}
\begin{split}
\int_{S_{u,v}^2}(L\psi)^2\,d\mu_{\slashed{g}}\leq \:& C(v+|u|)^{-4}|u|^{-\eta}E_{q}[\phi]\\
&+C\Bigg(\widetilde{E}_{s,\epsilon;2}[\phi]+\sum_{\Gamma\in \{\Phi^2,T^2,Q\}}\widetilde{E}_{s,\epsilon;1}[\Gamma \phi]+\sum_{\Gamma \in \{\Phi^3,T^2\Phi,Q\Phi\}}\widetilde{E}_{s,\epsilon}[\Gamma \phi]\Bigg)\\
&\cdot \int_{-\infty}^u|u|^{-s}(v+|u'|)^{-2}\,du'\int_{-\infty}^u|u|^{1-p}(v+|u'|)^{-2}\,du'\\
&+C(v+|u|)^{-2}\int_{S^2_{-\infty,v}}|\snabla\phi_{\mathcal{H}^+}|^2+|\snabla^2\phi_{\mathcal{H}^+}|^2d\mu_{\slashed{g}}.
\end{split}
\end{equation*}
where $0<s<1$ and $\epsilon'>0$ suitably small. Moreover, if $\psi$ is axisymmetric we can take $s=1$.

Now we use that
\begin{align*}
\int_{-\infty}^u|u|^{-1}(v+|u'|)^{-2}\,du'\leq \:& C(v+|u|)^{-2}\log\left(\frac{v+|u|}{|u|}\right),\\
\int_{-\infty}^u|u'|^{-s}(v+|u'|)^{-2}\,du'\leq \:& C(v+|u|)^{-2+(1-s+\eta)}|u|^{-\eta},
\end{align*}
for $0<s<1$ and $\eta>0$ arbitrarily small, to arrive at the statement in the proposition.
\end{proof}

\begin{corollary}
\label{cor:improvedestoutgoingenergy}
Let $|a|<a_c$ and $0<s\leq 1$. Then there exist $1<p<2$, $\epsilon>0$ and a constant $C=C(a,M,v_0,u_0,p,\epsilon,\delta,s)>0$ such that,
\begin{equation*}
\begin{split}
\int_{H_u}&v^2(L\phi)^2+u^2\Omega^2 |\snabla\phi|^2+\int_{\underline{H}_v}v^2\Omega^2 |\snabla\phi|^2\\
\leq&\: C|u|^{-p}E_{q}[\phi]+C\Bigg[\widetilde{E}_{s,\epsilon;2}[\phi]+\sum_{\Gamma\in \{\Phi^2,T^2,Q\}}\widetilde{E}_{s,\epsilon;1}[\Gamma \phi]+\sum_{\Gamma \in \{\Phi^3,T^2\Phi,Q\Phi\}}\widetilde{E}_{s,\epsilon}[\Gamma \phi]\Bigg]\\
&+C\int_{\mathcal{H}^+\cap \{v\geq v_0\}}v^2(L\phi)^2+|\snabla\phi|^2+|\snabla^2\phi|^2,
\end{split}
\end{equation*}
with $q>0$ arbitrarily small. 
\end{corollary}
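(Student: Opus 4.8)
The plan is to combine the pointwise‑in‑$v$ decay of the spherical $L^2$‑norm of $L\phi$ from Proposition \ref{prop:decayLphi} with the weighted energy identity for the multiplier $N_{2,2}$: the latter produces exactly the weights $v^2$ on $H_u$ and $|u|^2$ on $\underline H_v$ that we want, and the only error term obstructing $N_{2,2}$ for non‑axisymmetric $\phi$ on slowly rotating extremal Kerr--Newman, namely $K^{N_{2,2}}_{\textnormal{mixed}}$, can be absorbed with the help of that decay. Throughout I write $\psi=\phi-\phi_{\mathcal H^+}$ and use freely that $\Phi$, $T$ and $Q$ commute with $\square_g$, so that Propositions \ref{prop:decayLphi}, \ref{improvedpointwisedecaykn}, \ref{improvedpointwisedecayknc2} and \ref{cor:improvedpointbound} apply verbatim to $\Phi\phi$, $Q\phi$ and to their difference functions $\Phi\psi=\Phi\phi-(\Phi\phi)_{\mathcal H^+}$, etc.

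First I would dispose of $\int_{H_u}v^2(L\phi)^2$. Fix $1<p<2$ and $\delta>0$ with $p>2-s+\delta$, which is possible precisely because $0<s\le 1$, and integrate the first inequality of Proposition \ref{prop:decayLphi} over $v\in[v_0,\infty)$ against $v^2\,dv$, checking termwise: $\int_{S^2_{-\infty,v}}(L\phi)^2$ yields $\int_{\mathcal H^+\cap\{v\ge v_0\}}v^2(L\phi)^2$; the factor $(v+|u|)^{-4}|u|^{1-p}$ yields $C|u|^{-p}E_q[\phi]$ after $\int_{v_0}^\infty v^2(v+|u|)^{-4}\,dv\lesssim|u|^{-1}$; the factor $v^{-4+(1-s)+(2-p)+\delta}$ is integrable against $v^2\,dv$ by the choice of $p$; and $(v+|u|)^{-2}\int_{S^2_{-\infty,v}}(|\snabla\phi_{\mathcal H^+}|^2+|\snabla^2\phi_{\mathcal H^+}|^2)$ yields $\lesssim\int_{\mathcal H^+\cap\{v\ge v_0\}}|\snabla\phi|^2+|\snabla^2\phi|^2$, using $v^2(v+|u|)^{-2}\le 1$ together with $|\snabla^k\phi_{\mathcal H^+}|^2(u,v,\cdot)\lesssim|\snabla^k\phi|^2(-\infty,v,\cdot)$ (Theorem \ref{thm:estmetric}).

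For the two angular fluxes I would apply the divergence identity (\ref{divergenceidentity}) with the multiplier $N_{2,2}$ in $\mathcal D_{u_0,v_0}$ (localising with $Y_{p'}$ in the region $\mathcal B$ and with $N_{2,2}$ in the finite‑volume regions $\underline{\mathcal A}$, $\mathcal A$ as in Section \ref{sec:energyestimatesslowlyrot}, if one prefers not to run $N_{2,2}$ globally). The error terms $K^{N_{2,2}}_{\textnormal{null}}$ and $K^{N_{2,2}}_{\textnormal{angular}}$ are handled exactly as in Proposition \ref{prop:mainenergyestimate}: at weight $2$ the logarithmic loss in $K_{\textnormal{angular}}$ is absorbed through Grönwall weights $|u|^{-2+\epsilon}$, $v^{-2+\epsilon}$ rather than through the flux weights, and $K_{\textnormal{null}}$ only produces integrable Grönwall weights. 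The genuinely new step is the bulk term, for which, using $|\Omega^2\zeta^{\varphi_*}|\lesssim(v+|u|)^{-2}$ (Theorem \ref{thm:estconncoef}), one has the pointwise bound $2\Omega^2|K^{N_{2,2}}_{\textnormal{mixed}}[\phi]|\lesssim(v+|u|)^{-2}(v^2|L\phi|+|u|^2|\uline L\phi|)|\Phi\phi|$. Splitting $L\phi=L\psi+L\phi_{\mathcal H^+}$, $\uline L\phi=\uline L\psi$ and $\Phi\phi=\Phi\psi+(\Phi\phi)_{\mathcal H^+}$, I would estimate the bulk integral piece by piece: in the piece where both factors are of $\psi$‑type there is genuine $v$‑ and $|u|$‑decay from Propositions \ref{prop:decayLphi} and \ref{cor:improvedpointbound}, so a Cauchy--Schwarz with auxiliary weight $v^{1+\epsilon}$ on the $L\psi$, $\uline L\psi$ factor splits it into a term controlled by the same computation as in the first step and a term $\lesssim\int_{v_0}^\infty v^{-1-\epsilon}\big[\int_{\underline H_v}v^2\Omega^2|\snabla\phi|^2\big]\,dv$ feeding the Grönwall lemma; the piece with both factors restricted to $\mathcal H^+$ is bounded, after Cauchy--Schwarz on $S^2$ and the $u$‑integration $\int(v+|u|)^{-2}\,du\sim v^{-1}$, by $\lesssim\int_{v_0}^\infty v\,\|L\phi\|_{L^2(S^2_{-\infty,v})}\|\Phi\phi\|_{L^2(S^2_{-\infty,v})}\,dv\le\tfrac12\int_{\mathcal H^+\cap\{v\ge v_0\}}\big(v^2(L\phi)^2+(\Phi\phi)^2\big)$; and the two cross pieces (one factor a restriction to $\mathcal H^+$, the other $\psi$‑type) are handled by inserting the $|u|^{1-s-p}$‑decay of $\|\Phi\psi\|^2_{L^2(S^2_{u,v})}$, resp.\ of $\|L\psi\|^2_{L^2(S^2_{u,v})}$, performing the $u$‑integration, and closing with a final Cauchy--Schwarz in $v$ against $\int_{\mathcal H^+}v^2(L\phi)^2$ — which works precisely because $p$ may be taken so close to $2$ that $s+p>2$. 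The $\psi$‑version of the identity, with the inhomogeneity bound $|\square_g\psi|=|\square_g\phi_{\mathcal H^+}|\lesssim(v+|u|)^{-2}(|L\phi_{\mathcal H^+}|+|\snabla\phi_{\mathcal H^+}|+|\snabla^2\phi_{\mathcal H^+}|)$ from Proposition \ref{improvedpointwisedecaykn}, covers the $Y_{p'}$‑region, and the $\phi_{\mathcal H^+}$‑contribution to $\int_{H_u}u^2\Omega^2|\snabla\phi|^2$ is simply $\lesssim\int_{v_0}^\infty\|\snabla\phi\|^2_{L^2(S^2_{-\infty,v})}\,dv=\int_{\mathcal H^+\cap\{v\ge v_0\}}|\snabla\phi|^2$.

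The hard part is the matching of exponents in the treatment of $K^{N_{2,2}}_{\textnormal{mixed}}$: at the critical weight $q=2$ there is no slack in the $v$‑powers, so every Cauchy--Schwarz must be split so that one side lands on $\int_{\mathcal H^+}v^2(L\phi)^2$ — no better weight is available along $\mathcal H^+$ — and the other on an integrable Grönwall weight; this fixes how many commutations with $\Phi$ and $Q$ are needed to invoke Propositions \ref{prop:decayLphi} and \ref{cor:improvedpointbound}, and explains why the hypothesis of Theorem \ref{thm:improvedestoutgoingenergy} demands $v^{\eta}$‑integrable angular energies along $\mathcal H^+$ up to six derivatives together with $v^2$‑integrable $(L\phi)^2$, and why the decay \emph{rate} $|u|^{-s}$ of the difference‑function energies, rather than mere boundedness, is indispensable. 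Everything else — the finite‑volume and $Y_{p'}$ decomposition, the Grönwall steps, and the $S^2$‑Sobolev reductions used to pass between $L^2$‑norms on spheres and energy fluxes — is a repetition of arguments already carried out in Sections \ref{sec:energyestimatesslowlyrot} and \ref{sec:poinwiseestimates}.
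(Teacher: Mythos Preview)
Your treatment of $\int_{H_u}v^2(L\phi)^2$ is exactly what the paper does: multiply the estimate of Proposition~\ref{prop:decayLphi} by $v^2$ and integrate in $v$. For the two angular fluxes, however, you take a considerably more elaborate route than the paper. You propose to re-run the divergence identity with the multiplier $N_{2,2}$ (patched with $Y_{p'}$ in $\mathcal B$) and to absorb the obstructing torsion term $K^{N_{2,2}}_{\textnormal{mixed}}$ by splitting $L\phi$, $\uline L\phi$, $\Phi\phi$ into their $\psi$- and $\mathcal H^+$-parts and feeding the decay of the $\psi$-pieces from Propositions~\ref{prop:decayLphi} and~\ref{cor:improvedpointbound} into a Gr\"onwall argument. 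This is plausible and would likely close, but it reproduces a large portion of the machinery of Sections~\ref{sec:energyestimatesslowlyrot} and~\ref{sec:poinwiseestimates} that is not needed here.

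The paper instead observes that Proposition~\ref{cor:improvedpointbound} (with $n=0$) already gives the \emph{pointwise-on-spheres} bound
\[
\int_{S^2_{u,v}}|\snabla\psi|^2\,d\mu_{\slashed g}\;\le\;C\,|u|^{1-s-p}\Big[\widetilde E_{s,\epsilon;2}[\phi]+\ldots\Big],
\]
so that after writing $|\snabla\phi|^2\lesssim|\snabla\phi_{\mathcal H^+}|^2+|\snabla\psi|^2$ and using $\Omega^2\lesssim(v+|u|)^{-2}$, both $\int_{H_u}u^2\Omega^2|\snabla\phi|^2$ and $\int_{\underline H_v}v^2\Omega^2|\snabla\phi|^2$ follow by a single elementary integration in $v$ (resp.\ $u$), with no divergence identity, no treatment of $K_{\textnormal{mixed}}$, and no Gr\"onwall step. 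The whole point of the detour through $\psi$ in Propositions~\ref{improvedpointwisedecaykn}--\ref{cor:improvedpointbound} was precisely to manufacture this $|u|$-decay so that the angular fluxes can be read off directly; your approach bypasses that payoff and instead re-enters the energy method at top weight. Both arguments land on the same right-hand side, but the paper's is a two-line computation once Proposition~\ref{cor:improvedpointbound} is in hand.
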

\begin{proof}
To estimate the terms involving $\snabla \phi$, we use that
\begin{equation*}
\int_{S^2_{u,v}}|\snabla \phi|^2\,d\mu_{\slashed{g}}\leq C\int_{S^2_{-\infty,v}}|\snabla \phi|^2\,d\mu_{\slashed{g}}+C\int_{S^2_{u,v}}|\snabla \psi|^2\,d\mu_{\slashed{g}}
\end{equation*}
and apply Proposition \ref{cor:improvedpointbound}.

To estimate the term involving $L\phi$, we multiply (\ref{eq:improvedestLphi}) by $v^2$ and integrate from $v=v_0$ to $v=\infty$.
\end{proof}
\begin{remark}
Recall that Corollary \ref{energyestsmallakn} gives a bound on
\begin{equation*}
\int_{H_u}v^q(L\phi)^2,
\end{equation*}
with the restriction $q<2$. Corollary \ref{cor:improvedestoutgoingenergy} provides moreover an estimate for $q=2$, at the expense of losing derivatives on the right-hand side.
\end{remark}

We have now proved Theorem \ref{thm:improvedestoutgoingenergy}.
\begin{proposition}
\label{prop:c0alphaextendibility}
Let $0\leq |a|<a_c$ or assume $\phi$ is axisymmetric with $0\leq |a|\leq M$. For $\delta,\epsilon,q>0$ arbitrarily small and $0\leq s\leq 1$, there exists a constant \\$C=C(a,M,v_0,u_0,\epsilon,\delta,s)>0$ such that,
\begin{equation*}
\begin{split}
v^{4+(s-1)-\delta}||L\phi||^2_{L^{\infty}(S^2_{u,v})}\leq &\:C\int_{S^2_{-\infty,v}}\sum_{j_1+j_2\leq 2}(LL^{j_1}\uline{L}^{j_2}\phi)^2+\sum_{j_1+j_2\leq 1} \sum_{\Gamma \in \{\Phi,\Phi^2,T^2,Q\}}(LL^{j_1}\uline{L}^{j_2} \Gamma \phi)^2\,d\mu_{\slashed{g}}\\
&+Cv^{2+(s-1)-\delta}\int_{S^2_{-\infty,v}}\sum_{j_1+j_2\leq 2}\left(|\snabla L^{j_1}\uline{L}^{j_2}\phi|^2+|\snabla^2 L^{j_1}\uline{L}^{j_2}\phi|^2\right)\,d\mu_{\slashed{g}}\\
&+{E}_{q;6,\epsilon}[\phi]+{E}_{q;6,\epsilon}[\Phi \phi]+\widetilde{E}_{s,\epsilon;6}[\phi]+\widetilde{E}_{s,\epsilon;6}[\Phi \phi].
\end{split}
\end{equation*}
\end{proposition}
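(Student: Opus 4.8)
The plan is to derive the pointwise $L^\infty$-decay estimate for $L\phi$ on the spheres $S^2_{u,v}$ from the $L^2$-decay estimate already established in Proposition \ref{prop:decayLphi}, together with the standard Sobolev inequality on $S^2_{u,v}$ and the elliptic estimate \eqref{est:ellipticspherev2} of Proposition \ref{prop:ellipticsphere1}. First I would write, by Sobolev embedding on the 2-spheres,
\begin{equation*}
||L\phi||^2_{L^\infty(S^2_{u,v})}\leq C\int_{S^2_{u,v}}(L\phi)^2+|\snabla L\phi|^2+|\snabla^2 L\phi|^2\,d\mu_{\slashed{g}},
\end{equation*}
and then apply \eqref{est:ellipticspherev2} with $f=L\phi$ to reduce the $|\snabla^2 L\phi|^2$ term to a combination of $(QL\phi)^2$, $(\Phi^2 L\phi)^2$, $(T^2L\phi)^2$, second-order null-derivative terms $(L^2 L\phi)^2$, $(\uline{L}^2L\phi)^2$, $(L\uline{L}L\phi)^2$, together with $(L\phi)^2$, $(\uline{L}L\phi)^2$, $|\snabla L^2\phi|^2$, $|\snabla L\uline{L}\phi|^2$, $|\snabla L\phi|^2$ and $(L\uline{L}\phi)^2$, after commuting $Q,\Phi,T$ through $L$ and $\uline{L}$ (which, as in Proposition \ref{commutedenergyestkn}, produces only lower-order error terms with favourable $(v+|u|)$-weights, using the estimates of Section \ref{sec:estimatesmetricconn}). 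Every term in this list is, up to relabelling, of the form $\int_{S^2_{u,v}}(L L^{j_1}\uline{L}^{j_2}\Gamma\phi)^2$ or $\int_{S^2_{u,v}}|\snabla L^{j_1}\uline{L}^{j_2}\Gamma\phi|^2$ with $j_1+j_2\leq 2$ and $\Gamma\in\{\mathrm{id},\Phi,\Phi^2,T^2,Q\}$.

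The second step is to apply Proposition \ref{prop:decayLphi} to each term $\int_{S^2_{u,v}}(L L^{j_1}\uline{L}^{j_2}\Gamma\phi)^2\,d\mu_{\slashed{g}}$: since $L^{j_1}\uline{L}^{j_2}\Gamma\phi$ is not itself a solution of \eqref{eq:waveqkerr}, I would instead use the commuted version, Proposition \ref{improvedpointwisedecayknc2}, combined with the $L\psi$-transport argument exactly as in the proof of Proposition \ref{prop:decayLphi}, to obtain
\begin{equation*}
\int_{S^2_{u,v}}(L L^{j_1}\uline{L}^{j_2}\Gamma\phi)^2\,d\mu_{\slashed{g}}\leq \int_{S^2_{-\infty,v}}(L L^{j_1}\uline{L}^{j_2}\Gamma\phi)^2\,d\mu_{\slashed{g}}+Cv^{-4+(1-s)+(2-p)+\delta}\big(\text{energies}\big)+Cv^{-2}\int_{S^2_{-\infty,v}}|\snabla^{\leq 2}(\cdots)|^2,
\end{equation*}
where the energies are $\widetilde{E}_{s,\epsilon;N}[\Gamma'\phi]$ and $E_{q;N,\epsilon}[\Gamma'\phi]$ for the appropriate number $N$ of derivatives — here $N=6$, since bounding $\int_{S^2}(LL^{j_1}\uline{L}^{j_2}\Gamma\phi)^2$ via the transport equation and Proposition \ref{cor:improvedpointbound} costs two extra angular derivatives on the spheres, the elliptic estimate another two, and the $\Gamma,L,\uline{L}$ already present account for the rest; and since $\Gamma\in\{\Phi,\Phi^2,T^2,Q\}$ commutes with $\square_g$, these reduce to $\widetilde{E}_{s,\epsilon;6}[\phi]$, $\widetilde{E}_{s,\epsilon;6}[\Phi\phi]$, $E_{q;6,\epsilon}[\phi]$ and $E_{q;6,\epsilon}[\Phi\phi]$ after absorbing $T^2$ and $Q$ into the remaining terms and using that $\Phi$ is the only commutator that genuinely costs a derivative (cf. the remark after Proposition \ref{commutedenergyestkn}). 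The terms $\int_{S^2_{u,v}}|\snabla L^{j_1}\uline{L}^{j_2}\Gamma\phi|^2\,d\mu_{\slashed{g}}$ are handled by Proposition \ref{cor:improvedpointbound} applied to $\psi$ plus the initial data on $S^2_{-\infty,v}$, producing a factor $|u|^{1-s-p}\leq C$ times the same family of energies. Finally I would multiply through by $v^{4+(s-1)-\delta}$, absorb the factors $v^{-4+(1-s)+(2-p)+\delta}\cdot v^{4+(s-1)-\delta}=v^{2-p}\leq C$ (using $1<p<2$ and choosing $\delta$ small) and $v^{-2}\cdot v^{4+(s-1)-\delta}=v^{2+(s-1)-\delta}$, collect the data terms on $S^2_{-\infty,v}$ into the first two lines of the claimed estimate, and collect all energy contributions into $E_{q;6,\epsilon}[\phi]+E_{q;6,\epsilon}[\Phi\phi]+\widetilde{E}_{s,\epsilon;6}[\phi]+\widetilde{E}_{s,\epsilon;6}[\Phi\phi]$.

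The main obstacle I anticipate is bookkeeping the derivative count precisely: one must verify that the Sobolev step ($+2$ angular derivatives), the elliptic estimate \eqref{est:ellipticspherev2} ($+2$ derivatives, traded between angular and null/Killing directions), the transport argument for $L\psi$ via Propositions \ref{improvedpointwisedecayknc2} and \ref{cor:improvedpointbound} ($+2$ more), and the commutators $[Q,L^{j_1}\uline{L}^{j_2}]$, $[\Phi,L^{j_1}\uline{L}^{j_2}]$ all close at total order $6$ on the initial hypersurfaces, and that the only genuine loss is through $\Phi$ (so that only $\phi$ and $\Phi\phi$, rather than a longer list of $\Gamma\phi$, appear). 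The other delicate point is that the weight exponent $4+(s-1)-\delta$ is sharp against the $v^{-4+(1-s)+(2-p)+\delta}$ decay from Proposition \ref{prop:decayLphi} precisely when $p$ can be taken arbitrarily close to $2$ and $\delta$ arbitrarily small; one must check the interplay of $p,q,\delta,\epsilon,s$ is consistent, which amounts to invoking the freedom in choosing $\alpha,\beta$ close to $1$ as in the remark following Proposition \ref{energyestsmallakn}. The remaining computations are routine applications of Cauchy--Schwarz and the metric and connection estimates of Section \ref{sec:estimatesmetricconn}.
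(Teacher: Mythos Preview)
Your proposal is correct and follows essentially the same approach as the paper's proof: Sobolev on $S^2_{u,v}$, the elliptic estimate of Proposition~\ref{prop:ellipticsphere1}, commuting the operators $\Gamma$ and $L$ to bring $L$ to the front, and then applying the commuted transport/decay argument of Proposition~\ref{prop:decayLphi} (via Proposition~\ref{improvedpointwisedecayknc2} and Proposition~\ref{cor:improvedpointbound}) to each resulting $\int_{S^2_{u,v}}(LL^{j_1}\uline{L}^{j_2}\Gamma\phi)^2$ term. The only cosmetic difference is that the paper invokes \eqref{est:ellipticspherev1} rather than \eqref{est:ellipticspherev2} and organises the commutator errors explicitly into blocks $J_1,J_2,J_3$ and then $I_1,I_2$, whereas you treat them more schematically; your derivative count ($N=6$) and weight arithmetic match the paper's.
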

\begin{proof}
From Appendix A it follows that
\begin{equation*}
\underline{L}((\det\slashed{g})^{\frac{1}{4}}L f)=(\det\slashed{g})^{\frac{1}{4}}\left[-\Omega\tr {\chi}\underline{L}\phi+2\Omega^2\zeta^{\varphi_*}\partial_{\varphi_*}\phi+\snabla\Omega^2\cdot\snabla \phi+\Omega^2\slashed{\Delta}\phi-\Omega^2 \square_g(f)\right],
\end{equation*}
for any suitably regular function $f: \mathcal{M}\cap {\mathcal{D}_{u_0,v_0}}\to \R$. In particular, we can estimate
\begin{equation*}
\begin{split}
\sum_{j_1+j_2=n}&|\underline{L}((\det\slashed{g})^{\frac{1}{4}}LL^{j_1}L^{j_2}\phi)|(v,u,\theta_*,\varphi_*)\\
\leq &\: C\sum_{j_1+j_2=n}(v+|u|)^{-2}\left(|\underline{L}L^{j_1}L^{j_2}\phi|+|\snabla L^{j_1}L^{j_2}\phi|+|\snabla^2 L^{j_1}L^{j_2}\phi|\right)+C|\Omega^2\square_g(L^{j_1}L^{j_2}\phi)|.
\end{split}
\end{equation*}
Using Lemma \ref{lm:waveoperatorcomm}, we therefore obtain
\begin{equation*}
\begin{split}
\sum_{j_1+j_2=n}&|\underline{L}((\det\slashed{g})^{\frac{1}{4}}LL^{j_1}\uline{L}^{j_2}\phi)|\\
\leq &\: C\sum_{j_1+j_2\leq n}(v+|u|)^{-2}\left(|\underline{L}L^{j_1}\uline{L}^{j_2}\phi|+|\snabla L^{j_1}\uline{L}^{j_2}\phi|+|\snabla^2 L^{j_1}\uline{L}^{j_2}\phi|\right)\\
&+C\sum_{j_1+j_2+j_3\leq n-2}|L^{j_1}\uline{L}^{j_2}\Phi^{j_3+1}\phi|.
\end{split}
\end{equation*}
We can therefore repeat the proof of Proposition \ref{prop:decayLphi}, using appropriate higher-order energy estimates, to obtain
\begin{equation*}
\begin{split}
\sum_{j_1+j_2= n}&\int_{S^2_{u,v}}(LL^{j_1}\uline{L}^{j_2}\phi)^2(u,v,\theta_*,{\varphi_*})\,d\mu_{\slashed{g}}\\
\leq &\: \sum_{j_1+j_2=n}\int_{S^2_{-\infty,v}}(LL^{j_1}\uline{L}^{j_2}\phi)^2\,d\mu_{\slashed{g}}+C(v+|u|)^{-4}|u|^{1-p}E_{q;n,\epsilon}[\phi]\\
&+Cv^{-4+(1-s)+(2-p)+\delta}\Bigg[\widetilde{E}_{s,\epsilon;n+2}[\phi]+\sum_{\Gamma\in \{\Phi^2,T^2,Q\}}\widetilde{E}_{s,\epsilon;n+1}[\Gamma \phi]\\
&+\sum_{\Gamma \in \{\Phi^3,T^2\Phi,Q\Phi\}}\widetilde{E}_{s,\epsilon;n}[\Gamma \phi]\Bigg]+\sum_{j_1+j_2+j_3\leq n-2}\widetilde{E}_{s,\epsilon;j_1+j_2+2}[\Phi^{j_3+1}\phi]\\
&+C(v+|u|)^{-2}\sum_{j_1+j_2= n}\int_{S^2_{-\infty,v}}|\snabla L^{j_1}\uline{L}^{j_2}\phi_{\mathcal{H}^+}|^2+|\snabla^2 L^{j_1}\uline{L}^{j_2}\phi_{\mathcal{H}^+}|^2d\mu_{\slashed{g}}
\end{split}
\end{equation*}
with $q>0$ arbitrarily small.

We now apply (\ref{est:ellipticspherev1}) together with a standard Sobolev inequality on $S^2_{u,v}$ to obtain the following $L^{\infty}$ estimate:
\begin{equation*}
\begin{split}
||L\phi||^2_{L^{\infty}(S^2_{u,v})}\leq &\:C \sum_{j_1+j_2\leq 2}\int_{S^2_{u,v}} (L^{j_1}\uline{L}^{j_2} L\phi)^2\,d\mu_{\slashed{g}}\\
& +C \sum_{\Gamma \in \{\textnormal{id},\Phi,\Phi^2,T^2,Q\}}\int_{S^2_{u,v}}(\Gamma L\phi)^2+(\Gamma \uline{L} L\phi)^2+(\Gamma L^2 \phi)^2\,d\mu_{\slashed{g}}\\
&+C(v+|u|)^{-2} \sum_{\Gamma \in\{Q,T^2,\Phi^2\}} \int_{S^2_{u,v}}(\Gamma \Phi L  \phi)^2\,d\mu_{\slashed{g}}.
\end{split}
\end{equation*}
We can bring the operator $L$ in front of $\phi$ in the above inequality to the front at the expense of including commutators with $L$:
\begin{equation*}
\begin{split}
||L\phi||^2_{L^{\infty}(S^2_{u,v})}\leq &\:C \sum_{j_1+j_2\leq 2}\int_{S^2_{u,v}} (L L^{j_1}\uline{L}^{j_2}\phi)^2\,d\mu_{\slashed{g}}\\
& +C \sum_{\Gamma \in \{\textnormal{id},\Phi,\Phi^2,T^2,Q\}}\int_{S^2_{u,v}}(L \Gamma \phi)^2+(L \Gamma \uline{L} \phi)^2+(L L \Gamma\phi)^2\,d\mu_{\slashed{g}}\\
&+C(v+|u|)^{-2} \sum_{\Gamma \in\{Q,T^2,\Phi^2\}} \int_{S^2_{u,v}}(L\Gamma \Phi  \phi)^2\,d\mu_{\slashed{g}}+J,
\end{split}
\end{equation*}
where
\begin{equation*}
J:=\int_{S^2_{u,v}}J_1+J_2+J_3\,d\mu_{\slashed{g}},
\end{equation*}
with
\begin{align*}
J_1:=&\:\sum_{\Gamma \in \{T^2,Q\}} ([\Gamma,L]\phi)^2+([\Gamma,L]\uline{L}\phi)^2+([\Gamma,L]L\phi)^2+(L[\Gamma,\uline{L}]\phi)^2+(\uline{L}[\Gamma,L]\phi)^2+\Gamma([L,\uline{L}]\phi)^2,\\
J_2:=&\:\sum_{j_1+j_2\leq 2} ([L^{j_1}\uline{L}^{j_2},L]\phi)^2,\\
J_3:=&\:(v+|u|)^{-2}\sum_{\Gamma \in \{T^2,Q\}}([\Gamma,L]\Phi\phi)^2.
\end{align*}
We apply the estimates for the above commutators that are derived in the proofs of Lemma \ref{lm:waveoperatorcomm} and Proposition \ref{commutedenergyestkn} to estimate
\begin{align*}
J_1\leq &\:C(v+|u|)^{-4}\sum_{j_1+j_2\leq 1} |L^{j_1}\uline{L}^{j_2} \Phi \phi|^2,\\
J_2\leq &\:C(v+|u|)^{-4}\left[\sum_{j_1+j_2 \leq 2} |\snabla^2 L^{j_1}\uline{L}^{j_2}\phi|^2+\sum_{j_1+j_2\leq 3} |\snabla L^{j_1}L^{j_2}\phi|^2+\sum_{j_1+j_2\leq 4} |L^{j_1}\uline{L}^{j_2}\phi|^2\right],\\
J_3\leq &\:C(v+|u|)^{-6}\left[\sum_{j_1+j_2 \leq 1} |\snabla^2 L^{j_1}\uline{L}^{j_2} \Phi \phi|^2+\sum_{j_1+j_2\leq 2} |\snabla L^{j_1}L^{j_2}\Phi \phi|^2+\sum_{j_1+j_2\leq 3} |L^{j_1}\uline{L}^{j_2} \Phi\phi|^2\right].
\end{align*}

We use the estimates above to obtain
\begin{equation*}
||L\phi||^2_{L^{\infty}(S^2_{u,v})}\leq C(I_1+I_2),
\end{equation*}
where
\begin{align*}
I_1:=&\:C \sum_{j_1+j_2\leq 2}\int_{S^2_{u,v}} (L L^{j_1}\uline{L}^{j_2}\phi)^2\,d\mu_{\slashed{g}}\\
& +C \sum_{\Gamma \in \{\textnormal{id},\Phi,\Phi^2,T^2,Q\}}\int_{S^2_{u,v}}(L \Gamma \phi)^2+(L  \uline{L} \Gamma\phi)^2+(L L \Gamma\phi)^2\,d\mu_{\slashed{g}}\\
&+C(v+|u|)^{-2} \sum_{\Gamma \in\{Q,T^2,\Phi^2\}} \int_{S^2_{u,v}}(L\Gamma \Phi  \phi)^2\,d\mu_{\slashed{g}},\\
I_2:=&\:C(v+|u|)^{-4}\int_{S^2_{u,v}}\sum_{j_1+j_2\leq 1}|\snabla^2 L^{j_1}\uline{L}^{j_2}\Phi\phi|^2+\sum_{j_1+j_2\leq 2}|\snabla^2 L^{j_1}\uline{L}^{j_2}\phi|^2+\sum_{j_1+j_2\leq 3}|\snabla L^{j_1}\uline{L}^{j_2}\phi|^2\\
&+\sum_{j_1+j_2\leq 4}(L^{j_1}\uline{L}^{j_2}\phi)^2\,d\mu_{\slashed{g}}.
\end{align*}
By applying Corollary \ref{cor:commeestimate}, as in Proposition \ref{pointwiseboundkn}, and moreover the estimates for the $L^2(S^2_{u,v})$ norms of the angular derivatives from Proposition \ref{pointwiseboundkn}, we can further estimate
\begin{equation*}
\begin{split}
I_2 \leq C(v+|u|)^{-4}\int_{S^2_{u,v}}E_{q;5,\epsilon}[\phi]+\sum_{\Gamma \in \{\Phi^2,T^2,Q\}}(E_{q;4,\epsilon}[\Gamma \phi]+E_{q;4,\epsilon}[\Gamma \Phi \phi]).
\end{split}
\end{equation*}
Furthermore,
\begin{equation*}
\begin{split}
I_1 \leq &\:C\int_{S^2_{-\infty,v}}\sum_{j_1+j_2\leq 2}(LL^{j_1}\uline{L}^{j_2}\phi)^2+\sum_{j_1+j_2\leq 1} \sum_{\Gamma \in \{\Phi,\Phi^2,T^2,Q\}}(LL^{j_1}\uline{L}^{j_2} \Gamma \phi)^2\,d\mu_{\slashed{g}}\\
&+C(v+|u|)^{-2}\int_{S^2_{-\infty,v}}\sum_{j_1+j_2\leq 2}\left(|\snabla L^{j_1}\uline{L}^{j_2}\phi|^2+|\snabla^2 L^{j_1}\uline{L}^{j_2}\phi|^2\right)\\
&+\sum_{j_1+j_2\leq 1} \sum_{\Gamma \in \{\Phi,\Phi^2,T^2,Q\}}\left(|\snabla L^{j_1}\uline{L}^{j_2}\Gamma \phi|^2+|\snabla^2 L^{j_1}\uline{L}^{j_2}\Gamma \phi|^2 \right)\,d\mu_{\slashed{g}}\\
&+C(v+|u|)^{-4}\left(E_{q;2,\epsilon}[\phi]+\sum_{\Gamma \in \{\Phi,\Phi^2,T^2,Q\}}E_{q;1,\epsilon}[\phi]\right)\\
&+Cv^{-4+(1-s)+\delta}\Bigg(\widetilde{E}_{s,\epsilon;4}[\phi]+\sum_{\Gamma \in \{\Phi,\Phi^2,T^2,Q\}}\widetilde{E}_{s,\epsilon;3}[\Gamma\phi]+\widetilde{E}_{s,\epsilon;3}[\Gamma \Phi \phi]\\
&+C\sum_{\Gamma' \in \{\Phi,\Phi^2,T^2,Q\}}\sum_{\Gamma \in \{\Phi,\Phi^2,T^2,Q\}}\widetilde{E}_{s,\epsilon;2}[\Gamma \Gamma'\phi]\Bigg).
\end{split}
\end{equation*}

We conclude that
\begin{equation*}
\begin{split}
v^{4+(s-1)-\delta}||L\phi||^2_{L^{\infty}(S^2_{u,v})}\leq &\:C\int_{S^2_{-\infty,v}}\sum_{j_1+j_2\leq 2}(LL^{j_1}\uline{L}^{j_2}\phi)^2+\sum_{j_1+j_2\leq 1} \sum_{\Gamma \in \{\Phi,\Phi^2,T^2,Q\}}(LL^{j_1}\uline{L}^{j_2} \Gamma \phi)^2\,d\mu_{\slashed{g}}\\
&+Cv^{2+(s-1)-\delta}\int_{S^2_{-\infty,v}}\sum_{j_1+j_2\leq 2}\left(|\snabla L^{j_1}\uline{L}^{j_2}\phi|^2+|\snabla^2 L^{j_1}\uline{L}^{j_2}\phi|^2\right)\\
&+{E}_{q;6,\epsilon}[\phi]+{E}_{q;6,\epsilon}[\Phi \phi]+\widetilde{E}_{s,\epsilon;6}[\phi]+\widetilde{E}_{s,\epsilon;6}[\Phi \phi].\qedhere
\end{split}
\end{equation*}
\end{proof}
We have now proved Theorem \ref{thm:alphaholdercont}.
\appendix
\section{Energy currents in Kerr--Newman}
\label{sec:bulkest}
We consider a spacetimes $(\mathcal{N},g)$ equipped with a double null foliation $(u,v,\vartheta^1,\vartheta^2)$, such that the metric is given by
\begin{equation}
\label{eq:standmetricdoublenull}
g=-2\Omega^2(u,v)(du\otimes dv+dv\otimes du)+ \sg_{AB}(d\vartheta^A-b^Adv)\otimes (d\vartheta^B-b^Bdv).
\end{equation}
Here, $u,v$ solve the Eikonal equation and the (topological) spheres $(S^2_{u,v},\sg)$ are covered by coordinates $\vartheta^A$, with $A=1,2$, foliate the null hypersurfaces $\{u=\textnormal{const.}\}$ and $\{v=\textnormal{const.}\}$. 

Let
\begin{align*}
L=\:&\partial_v+b^A\partial_A,\\
\underline{L}=\:&\partial_u.
\end{align*}
In the $(L,\underline{L},\partial_{\vartheta_A})$ basis, the metric components are given by
\begin{align*}
g(L,L)=\:&0,\\
g(\underline{L},\underline{L})=\:&0,\\
g(L,\underline{L})=\:&-2\Omega^2.\\
g(L,\partial_{\vartheta_A})=\:&0,\\
g(\underline{L},\partial_{\vartheta_A})=\:&0.
\end{align*}

\begin{lemma}
\label{lm:commvfields}
\begin{align}
\label{eq:covderL}
\nabla_L L=\:&\Omega^{-2}L(\Omega^2)L,\\
\label{eq:covderLbar}
\nabla_{\underline{L}} \underline{L}=\:&\Omega^{-2}\underline{L}(\Omega^2)\underline{L},\\
\label{eq:gcovderL}
g(\nabla_L \underline{L},L)=\:&0,\\
\label{eq:gcovderLbar}
g(\nabla_{\underline{L}}L,\underline{L})=\:&0,\\
g(\nabla_{\underline{L}}L,\partial_A)=\:&-g(L,\nabla_{A}\underline{L}),\\
g(\nabla_{L}\underline{L},\partial_A)=\:&-g(\underline{L},\nabla_{A}L),\\
\label{eq:commLLbar1}
[L,\underline{L}]_A=\:&2g(\underline{L},\nabla_A L)+2\partial_A\Omega^2,\\
\label{eq:commLLbar2}
=\:&-2g(L,\nabla_A \underline{L})-2\partial_A\Omega^2.
\end{align}
\end{lemma}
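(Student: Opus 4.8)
The plan is to verify each of the eight identities by a direct computation in the null frame $\{L,\underline{L},\partial_1,\partial_2\}$, using only metric compatibility of $\nabla$, its torsion-freeness $\nabla_XY-\nabla_YX=[X,Y]$, and the structural facts that follow immediately from the coordinate expressions $L=\partial_v+b^A\partial_A$ and $\underline{L}=\partial_u$: the brackets $[\underline{L},\partial_A]=0$, $[L,\partial_A]=-(\partial_Ab^B)\partial_B$ and $[L,\underline{L}]=-(\partial_ub^A)\partial_A$ are all tangent to $S^2_{u,v}$. I will also use the frame relations recorded at the start of this appendix, which say that the Gram matrix of $\{L,\underline{L},\partial_A\}$ is block diagonal, anti-diagonal on the $\{L,\underline{L}\}$-block with entries $-2\Omega^2$ and equal to $\sg_{AB}$ on the $\{\partial_A,\partial_B\}$-block. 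Two consequences are used constantly: $g(\,\cdot\,,L)$ and $g(\,\cdot\,,\underline{L})$ annihilate vectors tangent to $S^2_{u,v}$, and any spacetime vector field is determined by its pairings with $L$, $\underline{L}$ and the $\partial_A$.

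First I would prove \eqref{eq:covderL}: compatibility gives $g(\nabla_LL,L)=\tfrac12L(g(L,L))=0$; pairing with $\partial_A$ and using $g(L,\partial_A)\equiv0$ together with torsion-freeness and the tangency of $[L,\partial_A]$ gives $g(\nabla_LL,\partial_A)=-g(L,\nabla_{\partial_A}L)=-\tfrac12\partial_A(g(L,L))=0$; and the Koszul formula with $X=Y=L$, $Z=\underline{L}$, using $g(L,L)\equiv0$ and the tangency of $[L,\underline{L}]$, gives $g(\nabla_LL,\underline{L})=L(g(L,\underline{L}))=-2L(\Omega^2)$. Expanding $\nabla_LL$ in the frame and inverting the Gram matrix yields $\nabla_LL=\Omega^{-2}L(\Omega^2)L$. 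Identity \eqref{eq:covderLbar} follows by the same computation with $u$ and $v$ interchanged, which is in fact slightly simpler since $[\underline{L},\partial_A]=0$. Then \eqref{eq:gcovderL} and \eqref{eq:gcovderLbar} follow by inserting these formulas into the compatibility identity for $g(L,\underline{L})=-2\Omega^2$, e.g. $g(\nabla_L\underline{L},L)=L(g(\underline{L},L))-g(\underline{L},\nabla_LL)=-2L(\Omega^2)-\Omega^{-2}L(\Omega^2)\,g(\underline{L},L)=0$.

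The two ``mixed'' identities $g(\nabla_{\underline{L}}L,\partial_A)=-g(L,\nabla_A\underline{L})$ and $g(\nabla_L\underline{L},\partial_A)=-g(\underline{L},\nabla_AL)$ come from applying compatibility to the identically vanishing functions $g(L,\partial_A)$ and $g(\underline{L},\partial_A)$: for the first, $0=\underline{L}(g(L,\partial_A))=g(\nabla_{\underline{L}}L,\partial_A)+g(L,\nabla_{\underline{L}}\partial_A)$ and $\nabla_{\underline{L}}\partial_A=\nabla_A\underline{L}$ by torsion-freeness and $[\underline{L},\partial_A]=0$; the second is the same argument with $L$ and $\underline{L}$ swapped, the additional term $[L,\partial_A]$ being tangential and therefore annihilated by the pairing with $\underline{L}$. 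Finally, for the two forms of $[L,\underline{L}]_A$, I would write $[L,\underline{L}]=\nabla_L\underline{L}-\nabla_{\underline{L}}L$, contract with $\partial_A$, substitute the two mixed identities, and convert between $g(\underline{L},\nabla_AL)$ and $g(L,\nabla_A\underline{L})$ using $\partial_A(g(L,\underline{L}))=g(\nabla_AL,\underline{L})+g(L,\nabla_A\underline{L})=-2\partial_A\Omega^2$. Everything here is elementary bookkeeping; the only point requiring care — and the closest thing to an obstacle — is the consistent use of the tangency of $[L,\partial_A]$ and $[L,\underline{L}]$, which is precisely what makes the torsion-type terms $g([L,\partial_A],L)$, $g([L,\underline{L}],L)$ and their analogues drop out, together with keeping the Lie-bracket and frame-pairing sign conventions straight throughout.
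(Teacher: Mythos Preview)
Your proposal is correct and follows essentially the same strategy as the paper's proof: both arguments use metric compatibility, torsion-freeness, and the tangency of the brackets $[L,\underline{L}]$, $[L,\partial_A]$, $[\underline{L},\partial_A]$ to compute the frame components of $\nabla_LL$, $\nabla_{\underline{L}}\underline{L}$ and then chase through the remaining identities. Your use of the Koszul formula for $g(\nabla_LL,\underline{L})$ is a clean way to get that component directly, whereas the paper obtains it via the compatibility relation $g(\nabla_LL,\underline{L})=L(g(L,\underline{L}))-g(L,\nabla_L\underline{L})$ together with $g(L,\nabla_L\underline{L})=g(L,\nabla_{\underline{L}}L)=\tfrac12\underline{L}(g(L,L))=0$; the ordering of the identities also differs slightly (the paper derives \eqref{eq:commLLbar2} before \eqref{eq:gcovderL} and the two mixed identities), but these are cosmetic differences.
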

\begin{proof}
We have that
\begin{align*}
[L,\underline{L}]=\:&-\partial_ub^C\partial_C,\\
[L,\partial_A]=\:&-\partial_Ab^C\partial_C,\\
[\underline{L},\partial_A]=\:&0.
\end{align*}
and
\begin{align*}
[L,\underline{L}]=\:&\nabla_L\underline{L}-\nabla_{\underline{L}}L,\\
[L,\partial_A]=\:&\nabla_L \partial_A-\nabla_AL,
\end{align*}
so
\begin{align*}
g(\nabla_LL,L)=g(\nabla_{\underline{L}}\underline{L},\underline{L})=\:&0,\\
g(\nabla_L L,\partial_A)=-g(L,\nabla_L\partial_A)=-g(L,\nabla_A L)=\:&0,\\
g(\nabla_{\underline{L}}\underline{L},\partial_A)=-g(\underline{L},\nabla_{\underline{L}}\partial_A)=\:&0.
\end{align*}
We obtain
\begin{align*}
g(\nabla_LL,\underline{L})=L(g(L,\underline{L})-g(L,\nabla_L\underline{L})=\:&-2L(\Omega^2),\\
g(\nabla_{\underline{L}}\underline{L},L)=\:&-2\underline{L}(\Omega^2).
\end{align*}
The equations (\ref{eq:covderL}) and (\ref{eq:covderLbar}) now immediately follow.

Furthermore, by the above identities we can rewrite
\begin{equation*}
\begin{split}
[L,\underline{L}]_A=\:&g(\nabla_L\underline{L}-\nabla_{\underline{L}}L,\partial_A)=g(\underline{L},\nabla_{L}\partial_A)-g(L,\nabla_{\underline{L}}\partial_A)\\
=\:&g(\underline{L},\nabla_A L)-g(L,\nabla_A \underline{L})=\partial_A(g(L,\underline{L}))-2g(L,\nabla_A \underline{L})\\
=\:&-2g(L,\nabla_A \underline{L})-2\partial_A\Omega^2.
\end{split}
\end{equation*}
From (\ref{eq:covderL}) and (\ref{eq:covderLbar}) it follows moreover that
\begin{align*}
g(\nabla_L \underline{L},L)=\nabla_L(g(L,\underline{L})-g(\underline{L},\nabla_L L)=-2L(\Omega^2)+2L(\Omega^2)=\:&0,\\
g(\nabla_{\underline{L}}L,\underline{L})=\:&0. 
\end{align*}
Finally, we use that $g(\underline{L},[L,\partial_A])=g(L,[\underline{L},\partial_A])=0$ to obtain,
\begin{align*}
g(\nabla_{\underline{L}}L,\partial_A)=-g(L,\nabla_{\underline{L}}\partial_A)=-g(L,\nabla_{A}\underline{L}),\\
g(\nabla_{L}\underline{L},\partial_A)=-g(\underline{L},\nabla_{L}\partial_A)=-g(\underline{L},\nabla_{A}L).  
\end{align*}
\end{proof}

We can write $L=\Omega^2L'$, $\underline{L}=\Omega^2\underline{L}'$, where $L'$ and $\underline{L}'$ are geodesic vector fields, i.e.\
\begin{align*}
\nabla_{L'}L'=\:&0,\\
\nabla_{\underline{L}'}\underline{L'}=\:&0,
\end{align*}
which follows from (\ref{eq:covderL}) and (\ref{eq:covderLbar}).

We can define a renormalised ingoing null vector $e_3$ and outgoing null vector $e_4$, satisfying $g(e_3,e_4)=-2$ by
\begin{align*}
e_3=\:&\Omega^{-1}\partial_u,\\
e_4=\:&\Omega^{-1}(\partial_v+b^A\partial_A).
\end{align*}
The inverse metric in the basis $(e_3,e_4,\partial_A)$
\begin{equation*}
g^{-1}=-\frac{1}{2}(e_3\otimes e_4+e_4\otimes e_3) +(\sg^{-1})^{AB}\partial_A\otimes \partial_B
\end{equation*}
can therefore be expressed in the double-null coordinate basis as
\begin{equation*}
g^{-1}=-\frac{1}{2}\Omega^{-2}(u,v)(\partial_u\otimes \partial_v+\partial_v\otimes \partial_u)-\frac{1}{2}\Omega^{-2} b^A (\partial_u\otimes \partial_A+\partial_A\otimes \partial_u)+(\sg^{-1})^{AB}\partial_A\otimes \partial_B.
\end{equation*}
With respect to the basis $(L,\underline{L},\partial_{\vartheta_A})$ the inverse metric is given by
\begin{equation*}
g^{-1}=-\frac{1}{2}\Omega^{-2}(u,v)(L\otimes\underline{L}+\underline{L}\otimes L)+(\sg^{-1})^{AB}\partial_A\otimes \partial_B.
\end{equation*}

Define the \emph{second fundamental forms} $\chi_{AB}$ and $\underline{\chi}$ by
\begin{align*}
\chi_{AB}&:=g(\nabla_{\partial_A} e_4,\partial_B)=\Omega^{-1}g(\nabla_{\partial_A} L,\partial_B),\\
\underline{\chi}_{AB}&:=g(\nabla_{\partial_A} e_3,\partial_B)=\Omega^{-1}g(\nabla_{\partial_A} \underline{L},\partial_B).
\end{align*}

\begin{lemma}
\label{lm:exprchi}
We can express,
\begin{align}
\label{eq:chiid1}
L(\slashed{g}_{AB})=\:&2\Omega\chi_{AB}-\partial_Ab^C\slashed{g}_{CB}-\partial_Bb^C\slashed{g}_{CA},\\
\label{eq:chiid2}
\underline{L}(\slashed{g}_{AB})=\:&2\Omega\underline{\chi}_{AB}.
\end{align}
Moreover,
\begin{align}
\label{eq:trchiid1}
\frac{L(\sqrt{\det \slashed{g}})}{\sqrt{\det \slashed{g}}}=\:&\Omega \tr \chi-\partial_Cb^C,\\
\label{eq:trchiid2}
\frac{\underline{L}(\sqrt{\det \slashed{g}})}{\sqrt{\det \slashed{g}}}=\:&\Omega \tr \underline{\chi}.
\end{align}
\end{lemma}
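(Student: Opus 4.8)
The plan is to derive all four identities directly from metric compatibility of the Levi-Civita connection together with the commutator relations recorded in Lemma \ref{lm:commvfields}. Throughout write $\partial_A=\partial_{\vartheta^A}$, and note that because of the block structure of the metric (\ref{eq:standmetricdoublenull}) one has $\slashed{g}_{AB}=g(\partial_A,\partial_B)$.

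First I would compute $L(\slashed{g}_{AB})$. By metric compatibility, $L(g(\partial_A,\partial_B))=g(\nabla_L\partial_A,\partial_B)+g(\partial_A,\nabla_L\partial_B)$. Using $\nabla_L\partial_A=\nabla_{\partial_A}L+[L,\partial_A]$ together with the identity $[L,\partial_A]=-\partial_Ab^C\partial_C$ from Lemma \ref{lm:commvfields} and the definition $\chi_{AB}=\Omega^{-1}g(\nabla_{\partial_A}L,\partial_B)$, this gives $g(\nabla_L\partial_A,\partial_B)=\Omega\chi_{AB}-\partial_Ab^C\slashed{g}_{CB}$, and symmetrically for the $B$-slot. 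Summing and invoking the symmetry $\chi_{AB}=\chi_{BA}$ — which itself follows from $g(e_4,\partial_A)=0$ and $\nabla_{\partial_A}\partial_B=\nabla_{\partial_B}\partial_A$, since then $\chi_{AB}=-g(e_4,\nabla_{\partial_A}\partial_B)$ is manifestly symmetric — yields (\ref{eq:chiid1}). The computation of $\underline{L}(\slashed{g}_{AB})$ is identical but simpler: since $[\underline{L},\partial_A]=0$, no shift terms appear and one reads off (\ref{eq:chiid2}) directly from $\underline{\chi}_{AB}=\Omega^{-1}g(\nabla_{\partial_A}\underline{L},\partial_B)$.

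For the trace identities I would use Jacobi's formula $X(\det\slashed{g})=(\det\slashed{g})\,\slashed{g}^{AB}X(\slashed{g}_{AB})$, hence $X(\sqrt{\det\slashed{g}})/\sqrt{\det\slashed{g}}=\tfrac12\slashed{g}^{AB}X(\slashed{g}_{AB})$, applied with $X=L$ and $X=\underline{L}$. Contracting (\ref{eq:chiid1}) with $\tfrac12\slashed{g}^{AB}$ produces $\Omega\tr\chi$ from the $\chi$-term and, using $\slashed{g}^{AB}\partial_Ab^C\slashed{g}_{CB}=\partial_Cb^C$, two equal contributions $-\tfrac12\partial_Cb^C$ from the two shift terms, giving (\ref{eq:trchiid1}); contracting (\ref{eq:chiid2}) gives (\ref{eq:trchiid2}) at once.

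There is no serious obstacle here: the only points requiring care are the bookkeeping of the connection-coefficient shift terms $\partial_Ab^C$ arising from $[L,\partial_A]\neq 0$, and making sure the symmetrization of $\chi$ is justified before writing $2\Omega\chi_{AB}$ in place of $\Omega(\chi_{AB}+\chi_{BA})$. Everything else is a direct application of the bracket identities already established in Lemma \ref{lm:commvfields}.
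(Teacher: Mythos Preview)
Your proposal is correct and follows essentially the same route as the paper: expand $L(\slashed{g}_{AB})$ via metric compatibility, convert $\nabla_L\partial_A$ to $\nabla_{\partial_A}L+[L,\partial_A]$ using the bracket identities from Lemma \ref{lm:commvfields}, invoke the symmetry of $\chi$, and then contract with $\tfrac12\slashed{g}^{AB}$ using Jacobi's formula (the paper phrases this last step via the adjugate matrix, but it is the same computation). Your added justification of $\chi_{AB}=\chi_{BA}$ via $\chi_{AB}=-g(e_4,\nabla_{\partial_A}\partial_B)$ is a nice touch that the paper leaves implicit.
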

\begin{proof}
We use the expression for $[L,\partial_A]$ in Lemma \ref{lm:commvfields} to obtain
\begin{equation*}
\begin{split}
L(\slashed{g}_{AB})=\:&g(\nabla_L \partial_A,\partial_B)+g(\partial_A,\nabla_L\partial_B)\\
=\:&g(\nabla_AL,\partial_B)+g(\partial_A,\nabla_B L)--\partial_Ab^C\slashed{g}_{CB}-\partial_Bb^C\slashed{g}_{CA}\\
=\:&2\Omega\chi_{AB}-\partial_Ab^C\slashed{g}_{CB}-\partial_Bb^C\slashed{g}_{CA},
\end{split}
\end{equation*}
where we used that $\chi_{AB}=\chi_{BA}$, which can easily be shown. Equation (\ref{eq:chiid2}) can be proved similarly.

We can apply the chain rule to obtain
\begin{equation*}
L(\det \slashed{g})=\frac{\partial\det \slashed{g}}{\partial \slashed{g}_{AB}}L(\slashed{g}_{AB}).
\end{equation*}
By Laplace's formula for the determinant of a matrix, we can express
\begin{equation*}
(\det \slashed{g}) \delta^B_C=\slashed{g}_{AB}\textnormal{Adj}(\slashed{g})^{BC},
\end{equation*}
where $\textnormal{Adj}(\slashed{g})^{BC}$ are the components of the adjugate matrix of $\slashed{g}_{AB}$. Consequently,
\begin{equation*}
\frac{\partial\det \slashed{g}}{\partial \slashed{g}_{AB}}=\textnormal{Adj}(\slashed{g})^{AB}=\det \slashed{g}\slashed{g}^{AB},
\end{equation*}
so
\begin{equation*}
L(\det \slashed{g})=(\det \slashed{g})\slashed{g}^{AB}L(\slashed{g}_{AB})
\end{equation*}
We can therefore conclude that
\begin{equation*}
\frac{L(\sqrt{\det \slashed{g}})}{\sqrt{\det \slashed{g}}}=\frac{1}{2}\frac{L(\det \slashed{g})}{\det \slashed{g}}=\Omega\tr \chi-\partial_Cb^C.
\end{equation*}
Equation (\ref{eq:trchiid2}) can be proved similarly.
\end{proof}

We introduce additional metric derivatives,
\begin{align*}
\omega&:=-\frac{1}{4}g(\nabla_{e_4}e_3,e_4),\\
\underline{\omega}&:=-\frac{1}{4}g(\nabla_{e_3}e_4,e_3),\\
\zeta_A&:=\frac{1}{2}g(\nabla_{A}e_4,e_3).
\end{align*}
$\zeta_A$ are the components of the \emph{torsion tensor}.
\begin{lemma}
\label{lm:addriccicoeff}
We can express $\omega$, $\underline{\omega}$ and $\zeta_A$ as follows:
\begin{align*}
4\Omega \omega=\:&\Omega^{-2}L(\Omega^2),\\
4\Omega \underline{\omega}=\:&\Omega^{-2}\underline{L}(\Omega^2),\\
\zeta_A=\:&\frac{1}{4}\Omega^{-2}[L,\underline{L}]_A=-\frac{1}{4}\Omega^{-2}\slashed{g}_{AB}\partial_ub^B.
\end{align*}
\end{lemma}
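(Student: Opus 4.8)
The plan is to compute each of the three coefficients directly from its definition by substituting $e_3=\Omega^{-1}\underline{L}$ and $e_4=\Omega^{-1}L$, expanding the covariant derivatives with the Leibniz rule, and then evaluating the resulting inner products using the null-frame relations $g(L,L)=g(\underline{L},\underline{L})=0$, $g(L,\underline{L})=-2\Omega^2$, $g(L,\partial_{\vartheta_A})=g(\underline{L},\partial_{\vartheta_A})=0$ together with the identities of Lemma~\ref{lm:commvfields}. For $\omega$ I would write
\[
\nabla_{e_4}e_3=\Omega^{-1}\nabla_L(\Omega^{-1}\underline{L})=\Omega^{-1}(L\Omega^{-1})\underline{L}+\Omega^{-2}\nabla_L\underline{L},
\]
and pair this with $e_4=\Omega^{-1}L$. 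The second term contributes $\Omega^{-3}g(\nabla_L\underline{L},L)$, which vanishes by \eqref{eq:gcovderL}; the first term contributes $\Omega^{-2}(L\Omega^{-1})\,g(\underline{L},L)$, and inserting $g(L,\underline{L})=-2\Omega^2$ together with $L(\Omega^{-1})=-\Omega^{-2}L(\Omega)$ and $L(\Omega^2)=2\Omega L(\Omega)$ yields, after collecting the constant $-\tfrac14$ from the definition, the claimed identity $4\Omega\omega=\Omega^{-2}L(\Omega^2)$.

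The computation for $\underline{\omega}$ is the mirror image: it follows verbatim after exchanging the roles of $L$ and $\underline{L}$ (hence of $e_4$ and $e_3$) and invoking \eqref{eq:gcovderLbar} in place of \eqref{eq:gcovderL}. For the torsion I would start from $\nabla_{\partial_{\vartheta_A}}e_4=(\partial_A\Omega^{-1})L+\Omega^{-1}\nabla_{\partial_{\vartheta_A}}L$ and pair with $e_3=\Omega^{-1}\underline{L}$. Since in the coordinates of \eqref{eq:standmetricdoublenull} the lapse satisfies $\Omega=\Omega(u,v)$, the term $(\partial_A\Omega^{-1})\,g(L,e_3)$ drops out, leaving $\zeta_A=\tfrac12 g(\nabla_{\partial_{\vartheta_A}}e_4,e_3)=\tfrac12\Omega^{-2}\,g(\nabla_{\partial_{\vartheta_A}}L,\underline{L})$. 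Now \eqref{eq:commLLbar1}, again using $\partial_A\Omega^2=0$, identifies $g(\underline{L},\nabla_{\partial_{\vartheta_A}}L)$ with $\tfrac12[L,\underline{L}]_A$, giving $\zeta_A=\tfrac14\Omega^{-2}[L,\underline{L}]_A$; the last equality then follows by substituting $[L,\underline{L}]=-(\partial_u b^C)\partial_C$ (recorded at the start of the proof of Lemma~\ref{lm:commvfields}) and lowering the index with $\slashed{g}$, so that $[L,\underline{L}]_A=-\slashed{g}_{AB}\partial_u b^B$.

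I do not expect a genuine obstacle: the whole argument is a bookkeeping exercise in Leibniz expansions against the null frame. The only points that require care are (i) tracking which cross-terms are annihilated by \eqref{eq:gcovderL}--\eqref{eq:gcovderLbar}, (ii) the sign produced by differentiating $\Omega^{-1}$ and converting between $L(\Omega)$ and $L(\Omega^2)$, and (iii) remembering that $\Omega$ is independent of the angular variables, which is exactly what forces the $(\partial_A\Omega^{-1})$ and $\partial_A\Omega^2$ contributions to vanish and makes the torsion collapse to the pure $b^A$-expression. As an internal consistency check one may re-derive $4\Omega\omega$ from $0=e_4\big(g(e_3,e_4)\big)=g(\nabla_{e_4}e_3,e_4)+g(e_3,\nabla_{e_4}e_4)$ using that $\Omega^{-2}L$ is geodesic (cf.\ \eqref{eq:covderL}), which reproduces the same formula.
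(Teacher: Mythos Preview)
Your proposal is correct and follows essentially the same route as the paper: a direct Leibniz expansion of the definitions against the null frame, invoking the identities of Lemma~\ref{lm:commvfields}. The one substantive difference is in your handling of $\zeta_A$. You invoke $\Omega=\Omega(u,v)$ (reading the notation in \eqref{eq:standmetricdoublenull} literally) to drop the term $(\partial_A\Omega^{-1})\,g(L,e_3)$ and the $\partial_A\Omega^2$ contribution in \eqref{eq:commLLbar1} separately. The paper's proof does \emph{not} assume this; it retains both angular-derivative terms and observes that they cancel: the Leibniz term produces $-\tfrac14\Omega^{-4}(\partial_A\Omega^2)\,g(L,\underline{L})=+\tfrac12\Omega^{-2}\partial_A\Omega^2$, which exactly offsets the $-\tfrac12\Omega^{-2}\partial_A\Omega^2$ arising when one solves \eqref{eq:commLLbar1} for $g(\underline{L},\nabla_AL)$. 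This distinction is not cosmetic, because in the Kerr--Newman metric of Section~\ref{sec:estimatesmetricomp} one has $\Omega^2=\Delta R^{-2}$ with $R$ depending on $\theta$, so $\partial_{\theta_*}\Omega\neq 0$; the identity holds there by cancellation, not by angular independence of $\Omega$. Your argument still goes through once you keep track of these two terms rather than discarding them.
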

\begin{proof}
We have that
\begin{equation*}
\begin{split}
4\omega=\:&-\Omega^{-2}g(\nabla_L(\Omega^{-1}\underline{L}),L)=-\Omega^{-2}L(\Omega^{-1})g(L,\underline{L})-\Omega^{-3}g(\nabla_L\underline{L},L)\\
=\:&\frac{1}{2}\Omega^{-5}L(\Omega^2)g(L,\underline{L})-\Omega^{-3}g(\underline{L},\nabla_LL)\\
=\:&\Omega^{-3}L(\Omega^2),
\end{split}
\end{equation*}
where we used (\ref{eq:covderL}) in the last equality. The expression for $\underline{\omega}$ follows similarly.

Moreover, by (\ref{eq:commLLbar1}),
\begin{equation*}
\begin{split}
\zeta_A=\:&\frac{1}{2}\Omega^{-1}g(\nabla_A(\Omega^{-1}L),\underline{L})\\
=\:&\frac{1}{2}\Omega^{-2}g(\nabla_AL,\underline{L})-\frac{1}{4}\Omega^{-4}\partial_A\Omega^2 g(L,\underline{L})\\
=\:&\frac{1}{4}\Omega^{-2}[L,\underline{L}]_A\\
=\:&-\frac{1}{4}\Omega^{-2}\slashed{g}_{AB}\partial_ub^B\qedhere
\end{split}
\end{equation*}
\end{proof}
Consider the weighted vector field
\begin{equation*}
N=N^LL+N^{\underline{L}} \underline{L}=N^{u}\partial_u+N^{v}\partial_v+N^{A}\partial_A,
\end{equation*}
where $N^L=N^L(u,v)=N^v(u,v)$, $N^{\underline{L}}=N^{\underline{L}}(u,v)=N^u(u,v)$ and $N^A=b^AN^{L}=b^A N^v$. The corresponding compatible current $K^N$ is given by
\begin{equation*}
K^N[\phi]:=T_{\alpha\beta}[\phi]\prescript{N}{}\pi^{\alpha\beta},
\end{equation*}
with the components of the deformation tensor $\prescript{N}{}\pi_{\alpha\beta}=\frac{1}{2}[\nabla_{\alpha}N_{\beta}+\nabla_{\beta}N_{\alpha}]$ given by
\begin{align*}
\prescript{N}{}\pi^{LL}=g^{L\underline{L}}g^{L\underline{L}}\prescript{N}{}\pi_{\underline{L}\underline{L}}=\:&\frac{1}{4}\Omega^{-4}g(\nabla_{\underline{L}} N,\underline{L}),\\
\prescript{N}{}\pi^{\underline{L}\underline{L}}=g^{L\underline{L}}g^{L\underline{L}}\prescript{N}{}\pi_{LL}=\:&\frac{1}{4}\Omega^{-4}g(\nabla_L N,L),\\
\prescript{N}{}\pi^{LA}=\:&g^{L\underline{L}}\slashed{g}^{AB}\prescript{N}{}\pi_{\underline{L}B},\\
\prescript{N}{}\pi^{\underline{L}A}=\:&g^{L\underline{L}}\slashed{g}^{AB}\prescript{N}{}\pi_{LB},\\
\prescript{N}{}\pi^{AB}=\:&\frac{1}{2}(\slashed{g}^{AC}\partial_CN^B+\slashed{g}^{BC}\partial_CN^A+\slashed{g}^{AC}\slashed{g}^{BD}(N^{u}\partial_u+N^v\partial_v+N^E\partial_E)\slashed{g}_{CD}).
\end{align*}
We use equations (\ref{eq:covderL})-(\ref{eq:commLLbar2}) to obtain
\begin{align*}
\prescript{N}{}\pi^{LL}=\:&\frac{1}{4}\Omega^{-4}g(\nabla_{\underline{L}} N,\underline{L})=-\frac{1}{2}\Omega^{-2}\underline{L}(N^{L}),\\
\prescript{N}{}\pi^{\underline{L}\underline{L}}=\:&\frac{1}{4}\Omega^{-4}g(\nabla_{L} N,L)=-\frac{1}{2}\Omega^{-2}L(N^{\underline{L}}),\\
\prescript{N}{}\pi^{L\underline{L}}=\:&\frac{1}{8}\Omega^{-4}\left[g(\nabla_L N,\underline{L})+g(\nabla_{\underline{L}} N,L)\right]\\
=\:&-\frac{1}{4}\Omega^{-2}\left[L(N^L)+\underline{L}(N^{\underline{L}})+\Omega^{-2}N^LL(\Omega^2)+\Omega^{-2}N^{\underline{L}}\underline{L}(\Omega^2)\right],\\
\prescript{N}{}\pi^{LA}=\:&-\frac{1}{4}\Omega^{-2}\slashed{g}^{AB}\left[g(\nabla_B N,\underline{L})+g(\nabla_{\underline{L}}N,\partial_B)\right],\\
=\:&-\frac{1}{4}N^L\Omega^{-2}\slashed{g}^{AB}[g(\nabla_B L,\underline{L})-g(L,\nabla_B\underline{L})]=-\frac{1}{4}N^L\Omega^{-2}[L,\underline{L}]^A\\
=\:&-\frac{1}{4}N^L\Omega^{-2}\partial_ub^A,\\
\prescript{N}{}\pi^{\underline{L}A}=\:&-\frac{1}{4}N^{\underline{L}}\Omega^{-2}[\underline{L},L]^A,\\
=\:&\frac{1}{4}N^{\underline{L}}\Omega^{-2}\partial_ub^A,\\
\prescript{N}{}\pi^{AB}=\:&\frac{1}{2}\slashed{g}^{AC}\slashed{g}^{BD}(N^{u}\partial_{u}+N^v\partial_v)\slashed{g}_{CD}+\frac{1}{2}\slashed{g}^{AC}\slashed{g}^{BD}N^Lb^E\partial_E\slashed{g}_{CD}+\frac{1}{2}\left(\slashed{g}^{AC}\partial_CN^B+\slashed{g}^{BC}\partial_CN^A\right)\\
=\:&\frac{1}{2}\slashed{g}^{AC}\slashed{g}^{BD}(N^{\underline{L}}\underline{L}+N^LL)\slashed{g}_{CD}+\frac{1}{2}\left(\slashed{g}^{AC}\partial_CN^B+\slashed{g}^{BC}\partial_CN^A\right).
\end{align*}

Now consider the wave equation (\ref{eq:waveqkerr}) on an extremal Kerr--Newman background in a double-null foliation, with a corresponding energy momentum tensor
\begin{equation*}
\mathbf{T}_{\alpha\beta}[\phi]=\partial_{\alpha}\phi\partial_{\beta}\phi-\frac{1}{2}g_{\alpha\beta}(g^{\lambda \kappa}\partial_{\lambda}\phi\partial_{\kappa}\phi).
\end{equation*}
We have that
\begin{equation*}
\begin{split}
g^{\lambda \kappa}\partial_{\lambda}\phi\partial_{\kappa}\phi=\:&g^{L\underline{L}}L\phi\underline{L}\phi+\slashed{g}^{AB}\partial_A\phi\partial_B\phi\\
=\:&-\frac{1}{2}\Omega^{-2}L\phi\underline{L}\phi+|\snabla\phi|^2.
\end{split}
\end{equation*}
Now we obtain the components
\begin{align*}
\mathbf{T}(L,L)=\:&(L\phi)^2,\\
\mathbf{T}(\underline{L},\underline{L})=\:&(\underline{L}\phi)^2\\
\mathbf{T}(L,\underline{L})=\:&\Omega^2|\snabla\phi|^2,\\
\mathbf{T}(L,\partial_A)=\:&L\phi\partial_A\phi,\\
\mathbf{T}(\underline{L},\partial_A)=\:&\underline{L}\phi\partial_A\phi,\\
\mathbf{T}(\partial_A,\partial_B)=\:&(\partial_A\phi)(\partial_B\phi)+\frac{1}{2}\slashed{g}_{AB}(\Omega^{-2}L\phi\underline{L}\phi-|\snabla\phi|^2).
\end{align*}

Therefore, by using the expressions in Lemma \ref{lm:exprchi} and \ref{lm:addriccicoeff}, we obtain
\setlength{\abovedisplayskip}{0.5cm}
\setlength{\belowdisplayskip}{0.5cm}
\begin{equation*}
\begin{split}
K^N=\:&\prescript{N}{}\pi^{LL}\mathbf{T}(L,L)+\prescript{N}{}\pi^{\underline{L}\underline{L}}\mathbf{T}(\underline{L},\underline{L})+2\prescript{N}{}\pi^{L\underline{L}}\mathbf{T}(L,\underline{L})+\prescript{N}{}\pi^{AB}\mathbf{T}(\partial_A,\partial_B)\\
&+2\prescript{N}{}\pi^{LA}\mathbf{T}(L,\partial_A)+2\prescript{N}{}\pi^{\underline{L}A}\mathbf{T}(\underline{L},\partial_A)\\
=\:&-\frac{1}{2}\Omega^{-2}\underline{L}(N^L)(L\phi)^2-\frac{1}{2}\Omega^{-2}L(N^{\underline{L}})(\underline{L}\phi)^2\\
&+\frac{1}{4}\Omega^{-2}[\slashed{g}^{AB}N^{L}L(\slashed{g}_{AB})+N^{\underline{L}}\underline{L}(\slashed{g}_{AB})+2N^L\partial_Eb^E]L\phi\underline{L}\phi\\
&-\frac{1}{2}\left[\uline{L}(N^{\uline{L}})+L(N^L)+\Omega^{-2}(N^{\uline{L}}\uline{L}(\Omega^2)+N^L L(\Omega^2))\right]|\snabla\phi|^2\\
&+\left[\frac{1}{2}\slashed{g}^{AC}\slashed{g}^{BD}(N^LL-N^{\underline{L}}\uline{L})\slashed{g}_{CD}+N^L\slashed{g}^{AC}\partial_Cb^B\right](\partial_A\phi)(\partial_B\phi)\\
&-\frac{1}{4}\left[\slashed{g}^{BD}(N^LL-N^{\underline{L}}\uline{L})\slashed{g}_{BD}+2N^L\partial_Eb^E\right]|\snabla\phi|^2\\
&-\frac{1}{2}N^L\Omega^{-2}\partial_ub^AL\phi\partial_A\phi\\
&+\frac{1}{2}N^{\underline{L}}\Omega^{-2}\partial_ub^A\underline{L}\phi\partial_A\phi\\
=\:&-\frac{1}{2}\Omega^{-2}\uline{L}(N^L)(L\phi)^2-\frac{1}{2}\Omega^{-2}L(N^{\underline{L}})(\underline{L}\phi)^2+\frac{1}{2}\Omega^{-2}(N^L\Omega\tr \chi+N^{\underline{L}}\Omega \tr \underline{\chi})L\phi\underline{L}\phi\\
&-\frac{1}{2}\left[\uline{L}(N^{\underline{L}})+L(N^L)+4\Omega(N^L\omega+N^{\underline{L}}\underline{\omega})\right]|\snabla\phi|^2+\left[N^L\Omega \hat{\chi}^{AB}+N^{\underline{L}}\Omega\hat{\underline{\chi}}^{AB}\right](\partial_A\phi)(\partial_B\phi)\\
&+2[N^L(L\phi)-N^{\underline{L}}(\underline{L}\phi)]\zeta^A\partial_A\phi,
\end{split}
\end{equation*}
where we used the notation

\begin{align*}
\hat{\chi}_{AB}&:=\chi_{AB}-\frac{1}{2}\slashed{g}_{AB}\tr\chi,\\
\hat{\underline{\chi}}_{AB}&:=\underline{\chi}_{AB}-\frac{1}{2}\slashed{g}_{AB}\tr\underline{\chi}.
\end{align*}

\section{The wave equation in double-null coordinates}
\label{app:treq}
Consider the extremal Kerr--Newman metric in Eddington--Finkelstein-type double-null coordinates $(u,v,{\theta}_*,{\varphi}_*)$. Then the wave operator becomes
\begin{equation}
\label{eq:waveequationdn}
\begin{split}
\square_g\phi=\:&\frac{1}{2\sqrt{\det \slashed{ g}}}\Omega^{-2}\partial_{\alpha}\left(g^{\alpha\beta}2 \Omega^2\sqrt{\det \slashed{g}}\partial_{\beta}\phi\right)\\
=\:&\frac{1}{2\sqrt{\det \slashed{ g}}}\Omega^{-2} \Bigg[-\partial_v\left(\sqrt{\det \slashed{g}} \partial_u\phi\right)-\partial_u \left(\sqrt{\det \slashed{g}}(\partial_v\phi+b^A\partial_A\phi)\right)-\partial_A\left(\sqrt{\det \slashed{g}}b^A\partial_u\phi\right)\\
&+\partial_A\left(\sqrt{\det \slashed{g}}2\Omega^2 \slashed{g}^{AB} \partial_B\phi\right)\Bigg]\\
=\:&\frac{1}{2\sqrt{\det \slashed{ g}}}\Omega^{-2}\left[-L\left(\sqrt{\det \slashed{g}} \uline{L}\phi\right)-\sqrt{\det \slashed{g}}\partial_Ab^A \uline{L}\phi-\uline{L}\left(\sqrt{\det \slashed{g}} L\phi\right)\right] \\
&+ \Omega^{-2}\slashed{g}^{AB}\partial_A\Omega^2\partial_B\phi +\slashed{\Delta}\phi\\
=\:&-\frac{1}{2}\Omega^{-2} \left[L\uline{L}\phi+\uline{L}L\phi+\Omega \tr \chi \uline{L}\phi+\Omega \tr \uline{\chi} L\phi\right]+\Omega^{-2} \snabla\Omega^2 \cdot \snabla \phi +\slashed{\Delta}\phi.
\end{split}
\end{equation}

We can write
\begin{equation}
\label{eq:finalexpressionwaveeq}
\begin{split}
2\Omega^2\square_g\phi=\:&-\underline{L}L\phi-L\underline{L}\phi-\Omega \tr \underline{\chi}L\phi-\Omega \tr \chi \underline{L}\phi+2 \slashed{\nabla}\Omega^2\cdot \slashed{\nabla}\phi+2\Omega^2\slashed{\Delta}\phi.
\end{split}
\end{equation}

Moreover, we use that $[L,\underline{L}]=L(b^A)\partial_A=4\Omega^2 \zeta^A$ and $\Omega \tr \uline{\chi}=(\det \slashed{g})^{-\frac{1}{2}}\uline{L}((\det \slashed{g})^{\frac{1}{2}})$, to write
\begin{equation*}
\begin{split}
2\Omega^2\square_g\phi=\:&-2\underline{L}L\phi-4\Omega^2 \zeta(\phi)-\frac{\uline{L}(\sqrt{\det \slashed{g}})}{\sqrt{\det \slashed{g}}}-\Omega \tr \chi \underline{L}\phi+2 \slashed{\nabla}\Omega^2\cdot \slashed{\nabla}\phi+2\Omega^2\slashed{\Delta}\phi.
\end{split}
\end{equation*}
The wave equation in the above form can easily be rewritten as a transport equation for $(\det \slashed{g})^{\frac{1}{4}}L\phi$:
\begin{equation}
\label{eq:transpeq1}
2\uline{L}((\det \slashed{g})^{\frac{1}{4}}L\phi))=(\det \slashed{g})^{\frac{1}{4}} \Omega \tr \chi \uline{L}\phi+2 (\det \slashed{g})^{\frac{1}{4}} \snabla \Omega^2\cdot \snabla \phi-4(\det \slashed{g})^{\frac{1}{4}}\Omega^2\zeta(\phi)+2(\det \slashed{g})^{\frac{1}{4}}\Omega^2\slashed{\Delta}\phi.
\end{equation}
\bibliography{mybib2}
\bibliographystyle{amsplaininitials}

\end{document}